\definecolor{Blue}{HTML}{2171b5}
\tikzset{meter/.append style={draw, inner sep=10, rectangle, font=\vphantom{A}, minimum width=30, line width=.8,
 path picture={\draw[black] ([shift={(.1,.3)}]path picture bounding box.south west) to[bend left=50] ([shift={(-.1,.3)}]path picture bounding box.south east);\draw[black,-latex] ([shift={(0,.1)}]path picture bounding box.south) -- ([shift={(.3,-.1)}]path picture bounding box.north);}}}
\newtheorem{definition}{Definition}
\newtheorem{corollary}{Corollary}
\newtheorem{theorem}{Theorem}
\newtheorem{lemma}{Lemma}
\newtheorem{proposition}{Proposition}
\newtheorem{example}{Example}
\newtheorem{conjecture}{Conjecture}
\newtheorem{fact}{Fact}
\def\Tr{{\rm Tr}}
\def\op{{\cal O}}
\def\C{\mathbb{C}}
\def\EE{\mathbb{E}}
\def\vev#1{\langle{#1}\rangle}
\def\ketbra#1{ |{#1}\rangle\!\langle{#1}| }
\def\iden{\mathbb{I}}
\def\1den{\hbox{$1\hskip -1.2pt\vrule depth 0pt height 1.53ex width 0.7pt \vrule depth 0pt height 0.3pt width 0.12em$}}
\def\with{\quad {\rm with} \quad}
\def\for{\quad {\rm for} \quad}
\def\and{\quad {\rm and} \quad}
\def\ni{\noindent}
\def\nn{\nonumber\\}
\def\ra{\rightarrow}
\def\ie{{\rm i.e.\ }}
\def\CC{{\cal C}}
\def\CE{{\cal E}}
\def\CJ{{\cal J}}
\def\CT{{\cal T}}
\def\Wg{{\cal W}\! g}
\def\ktitle{\ensuremath{k}}
\title{Models of quantum complexity growth}
\author[a,b,c]{Fernando G.S.L. Brand\~ao,}
\author[a]{Wissam Chemissany,}
\author[d,a]{Nicholas Hunter-Jones,*\,}
\author[a,b]{\\ Richard Kueng,*\,}
\author[a,b,c]{John Preskill}
\affiliation[a]{Institute for Quantum Information and Matter,\\
California Institute of Technology, Pasadena, CA 91125}
\affiliation[b]{Department of Computing and Mathematical Sciences,\\ California Institute of Technology, Pasadena, CA 91125}
\affiliation[c]{Walter Burke Institute for Theoretical Physics,\\
California Institute of Technology, Pasadena, CA 91125}
\affiliation[d]{Perimeter Institute for Theoretical Physics, Waterloo, ON N2L 2Y5}
\affiliation[]{\vspace*{4pt}
{\bf \hspace*{-4pt}*Corresponding authors:
\href{mailto:nickrhj@pitp.ca}{nickrhj@pitp.ca} and
\href{mailto:rkueng@caltech.edu}{rkueng@caltech.edu}
}}
\abstract{
The concept of quantum complexity has far-reaching implications spanning theoretical computer science, quantum many-body physics, and high energy physics.
The quantum complexity of a unitary transformation or quantum state is defined as the size of the shortest quantum computation that executes the unitary or prepares the state. It is reasonable to expect that the complexity of a quantum state governed by a chaotic many-body Hamiltonian grows linearly with time for a time that is exponential in the system size; however, because it is hard to rule out a short-cut that improves the efficiency of a computation, it is notoriously difficult to derive lower bounds on quantum complexity for particular unitaries or states without making additional assumptions. To go further, one may study more generic models of complexity growth.
We provide a rigorous connection between complexity growth and unitary $k$-designs, ensembles which capture the randomness of the unitary group.
This connection allows us to leverage existing results about design growth to draw conclusions about the growth of complexity.
We prove that local random quantum circuits generate unitary transformations whose complexity grows linearly for a long time, mirroring the behavior one expects in chaotic quantum systems and verifying conjectures by Brown and Susskind.
Moreover, our results apply under a strong definition of quantum complexity based on optimal distinguishing measurements.
}
\begin{document} 
\maketitle
\flushbottom

\section{Motivation and overview}

The \textit{complexity} of a computation is a measure of the resources needed to perform the computation. In a classical model of computation, the complexity of a Boolean function may be defined as the minimal number of elementary steps needed to evaluate the function. The precise number of steps needed depends on how the model is chosen, but this notion of complexity provides a useful way to quantify the hardness of a computational problem because how the number of steps scales with the size of the input to the problem has only weak dependence on the choice of model. By broad consensus, a computational task is considered to be feasible if its complexity grows no faster than a power of the input size, and intractable otherwise; using this criterion, all classical models of computation agree about which problems are  (classically) ``easy'' and which ones are ``hard.'' 

Likewise, we may separate computational tasks into those that are easy or hard for a quantum computer. The circuit model of quantum computation provides a convenient way to quantify quantum complexity --- namely, the quantum complexity of a Boolean function is the minimal size of a quantum circuit which computes the function and outputs the right answer with high success probability. Here by the size of the circuit we mean the number of quantum gates in the circuit. These gates are chosen from a universal set of gates, where each gate in the set is a unitary transformation acting on a constant number of qubits or qudits. Though there are many ways to choose the universal gate set, any set of universal gates can simulate another accurately and efficiently, so that circuit size provides a useful model-independent measure of complexity. From a physicist's perspective, a quantum computation is a process governed by a local time-dependent Hamiltonian, and an intractable computation is a process that requires a time which grows superpolynomially with the system size. Such intractable processes are not expected to be observed in Nature. 

Furthermore, in quantum physics, in contrast to classical digital computation, there is a meaningful notion of complexity not only for processes, but also for quantum states. Starting from a state in which all the bits are set to 0, any string of $n$ classical bits can be prepared by flipping at most $n$ bits. But the time needed to prepare a pure $n$-qubit quantum state, starting from a product state, even if we are permitted to use any time-dependent Hamiltonian which is a sum of terms with constant weight and bounded norm, can be exponential in $n$. In fact, because the volume of the $n$-qubit Hilbert space is \textit{doubly exponential} in $n$, while the number of quantum circuits with $T$ gates is merely exponential in $T$, \textit{most} $n$-qubit pure quantum states have exponentially large complexity. That is, for a typical pure state in the $n$-qubit Hilbert space, the time needed to prepare the state with some small constant error $\delta$, starting from a product state, grows exponentially with $n$. Thus, nearly all quantum states of any macroscopic system will forever be far beyond the grasp of the quantum engineers \cite{poulin2011quantum}. 

While the complexity of quantum \textit{circuits} has long been a foundational concept in quantum information theory \cite{bernstein_complexity_1997}, appreciation that quantum \textit{state} complexity is an important concept has blossomed relatively recently. For example, the complexity of ground-state wave functions may be used to classify topological phases of matter at zero temperature \cite{Chen2010}. Furthermore, a chaotic quantum Hamiltonian $H$ can be usefully characterized by saying that evolution governed by $H$ over a long time period generates highly complex states. A particularly intriguing proposal is that, in the context of the AdS/CFT correspondence, the complexity of a quantum state of the boundary theory corresponds to the volume in the bulk geometry which is hidden behind the event horizon of a black hole \cite{SusskindCCBH14,SScomp14,CABH15,brown_second_2018}. 

When we say a quantum state is highly complex, we mean there is no easy way to prepare the state, but how can we be sure? Perhaps we were not clever enough to think of an ingenious short-cut that prepares the state efficiently. It's not possible in practice to enumerate all the quantum circuits that approximate a specified state to find one of minimal size. For that reason, it is quite difficult to obtain a useful lower bound on the complexity of the quantum state prepared by a specified many-body Hamiltonian in a specified time. It is reasonable to expect that, for a chaotic Hamiltonian $H$ and an unentangled initial state, the complexity grows linearly in time for an exponentially long time, but we do not have the tools to prove it from first principles for any particular $H$.

One possible approach is to rely on highly plausible complexity theory assumptions to derive nontrivial conclusions about the complexity of states generated by particular circuits or Hamiltonians \cite{susskind2018black,aaronson2016complexity,bohdanowicz2017universal}. Another is to consider ensembles of circuits, and to derive lower bounds on complexity which hold with high probability when samples are selected from these ensembles. We follow the latter approach here, drawing inspiration from recent work by Susskind \cite{susskind2018black} and Brown and Susskind \cite{brown_second_2018}. These authors state a conjecture about the complexity growth of geometrically local random quantum circuits (see \autoref{fig:complexity-growth}): 

\begin{conjecture}[Brown, Susskind \cite{brown_second_2018}; Susskind \cite{susskind2018black}] \label{conj:lenny}
Most local random circuits of size $T$ have a complexity that scales linearly in $T$ for an exponentially long time.
\end{conjecture}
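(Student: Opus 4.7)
The plan is to reduce Conjecture \ref{conj:lenny} to two separable ingredients: (i) a structural statement that approximate unitary $k$-designs are necessarily concentrated on high-complexity unitaries, and (ii) existing dynamical results asserting that local random quantum circuits of size $T$ form approximate $k$-designs for $k$ that grows with $T$. Composing the two yields the conjecture. The first ingredient is where the genuinely new work lies; the second I would import essentially as a black box.

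First I would fix a rigorous notion of complexity: let $\mathcal{C}_\delta(U)$ be the minimum number of two-qubit gates in a circuit whose unitary approximates $U$ to diamond-norm error $\delta$. A simple counting argument shows that the ``low-complexity set'' $S_{\le C} = \{U : \mathcal{C}_\delta(U) \le C\}$ satisfies $\log |S_{\le C}| = O(C \log n)$ after covering by a $\delta$-net --- exponentially smaller than the full unitary-group dimension $d^2 = 4^n$. The core lemma then states that for any $\epsilon$-approximate $k$-design $\nu$ on $U(d)$, with approximation measured in diamond norm between $k$-fold twirling channels,
\begin{equation}
\Pr_{U \sim \nu}\bigl[\mathcal{C}_\delta(U) \le C\bigr] \;\lesssim\; |S_{\le C}|\,\Bigl(\frac{k!}{d^{2k}} + \epsilon\Bigr).
\end{equation}
To prove it, I would exhibit, for each fixed target $V \in S_{\le C}$, an explicit $k$-copy distinguishing test --- the overlap with the $k$-fold Choi state of $V$ is the simplest choice: this test accepts $U=V$ with probability $1$ and has Haar-average acceptance probability $k!/d^{2k}$ by a standard Weingarten computation. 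The $k$-design property transfers the Haar estimate to $\nu$ up to additive $\epsilon$, a union bound runs over $V \in S_{\le C}$, and the $\delta$-net handles approximate rather than exact matches.

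Finally I would invoke existing circuit-design growth results, which show that a geometrically local random circuit of size $T$ is an $\epsilon$-approximate $k$-design provided $T \gtrsim \mathrm{poly}(k,n)\log(1/\epsilon)$. Choosing $\epsilon \ll k!/d^{2k}$ so the two terms in the lemma balance, and setting $k \sim T/\mathrm{poly}(n)$, forces $|S_{\le C}| \cdot k!/d^{2k} \ll 1$ whenever $C \lesssim kn/\log n \sim T/\mathrm{poly}(n)$. Thus, with overwhelming probability, a random circuit of size $T$ has complexity at least linear in $T$ (up to polynomial-in-$n$ factors), and since the argument remains valid as long as $k \lesssim d = 2^n$, the linear lower bound persists for $T$ all the way up to exponential in $n$ --- exactly as Conjecture \ref{conj:lenny} predicts. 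The principal obstacle is the core lemma itself: it requires the design approximation in \emph{diamond} norm (weaker norms would admit hidden low-complexity pockets that $k$-copy tests could still detect) and at exponentially small error $\epsilon \sim d^{-2k}$, so matching this demand to currently available design-growth bounds --- which typically control weaker norms at only polynomial error --- will be the main technical hurdle.
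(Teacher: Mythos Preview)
Your proposal follows essentially the same two-step strategy as the paper: (i) show that approximate $k$-designs are concentrated on high-complexity unitaries via a $k$-th-moment/union-bound argument over the low-complexity set, and (ii) invoke known results that local random circuits of size $T$ form approximate $k$-designs. Your core lemma is exactly the paper's Appendix~B computation for the weak (minimal-circuit-size) complexity; the paper's main text pushes further to a \emph{strong} operational complexity, which requires a harder moment bound (\autoref{thm:main-technical-intro}) but follows the same outline.

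One correction on the obstacle you flag. The diamond-norm-at-exponentially-small-error requirement is \emph{not} the bottleneck: the paper's \autoref{def:approximate_design} already normalizes the diamond-norm error by $k!/d^{2k}$, and the Brand\~ao--Harrow--Horodecki result (\autoref{thm:brandao}) does deliver this strong notion with only a $\log(1/\epsilon)$ overhead in circuit size. The genuine obstruction to the full conjecture is the \emph{polynomial} relation $T \gtrsim n^2 k^{11}$ in that result, which yields only $\mathcal{C}_\delta(U) \gtrsim T^{1/11}$ rather than linear growth. The paper obtains linear growth only by invoking a separate result (\autoref{thm:lineark}) that gives $T = O(n^2 k)$ in the large-local-dimension limit; for qubits, \autoref{conj:lenny} remains open in the paper as well, and is reduced there to \autoref{conj:lineark} (linear design growth). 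Your writeup should reflect that the conjecture is not fully established by this route.
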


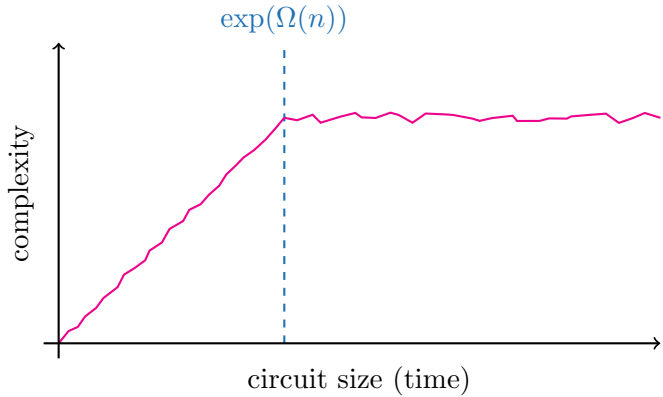
\begin{figure}
\centering
\begin{tikzpicture}
\draw[thick,dashed,color=Blue] (3,0) -- (3,4);
\node at (3,4.3) {\textcolor{Blue}{$
\exp (\Omega(n))$}};
\draw[magenta,thick, decorate, decoration={random steps,segment length=5pt,amplitude=1pt}] (0,0) -- (3,3);
\draw[magenta,thick, decorate, decoration={random steps,segment length=5pt,amplitude=2pt}] (3,3) -- (8,3);
\draw[thick,->] (-0.2,0) -- (8,0);
\node[rotate=90] at (-0.5,2) {complexity};
\draw[thick,->] (0,-0.2) -- (0,4);
\node at (4,-0.5) {circuit size (time)};
\end{tikzpicture}
\caption{\emph{Expected complexity growth in random circuits.} 
\autoref{conj:lenny} states that, for random quantum circuits acting on $n$ qubits, the circuit complexity grows linearly with circuit size (time) until it saturates at a value exponentially large in $n$.
Our work provides rigorous evidence supporting this picture for quantum systems with sufficiently large local dimension; see \autoref{cor:RQCcomp}.
}

\label{fig:complexity-growth}
\end{figure}

\noindent Our goal is to strengthen the evidence supporting this conjecture.

Brown and Susskind provided evidence for this scaling law by means of a simple counting argument; see also \cite{ChaosDesign}. For a fixed finite set of universal quantum gates, consider the ensemble of all circuits with size $T$. By definition, this ensemble accurately approximates (to within a specified error $\delta$) all unitary transformations with complexity $T$ or less. Furthermore, the number of circuits increases exponentially with $T$, and, because the unitary group has a very large volume, it seems reasonable to assume that ``collisions'' between circuits are rare unless $T$ is very large; that is, that the number of distinct unitary transformations realized by this ensemble (where ``distinct'' means more than distance $\delta$ apart) is comparable to the number of circuits. This means that the number of circuits with size $T'$ is too small to account for more than a small fraction of the unitary transformations realized by circuits of size $T$ if $T'$ is much smaller than $T$. In other words, most random circuits with size $T$ have complexity at least $T'$, where $T'$ is comparable to $T$.

This argument hinges on a crucial assumption, which sounds plausible but is hard to prove: \emph{collisions between circuits of subexponential size are rare}.
Collisions certainly occur for any circuit size $T$, and necessarily become common for circuits of exponential size, where $T$ is comparable to the Hilbert space dimension so that the exponential of $T$ is comparable to the Hilbert space volume. Thus an analytic treatment of complexity growth seems like a daunting combinatorial task. 

The work \cite{brandao_local_2016} provides some rigorous support for \autoref{conj:lenny}. There, the authors show that local random circuits can
``fool'' short measurement procedures. That is, a typical quantum state prepared by the local random circuit, acting on an initial product state, cannot be distinguished from a maximally mixed state by any procedure that is much simpler than running the circuit backwards and verifying that the initial product state is recovered. 
Although not stated in this fashion in \cite{brandao_local_2016}, their results imply that, with high probability, a local random circuit of size $T$ has complexity $\Omega(T^{1/11})$. While this argument rigorously proves a weakened version of Conjecture~\ref{conj:lenny}, there are still issues we wish to address:
\begin{enumerate}
\item[(i)] \emph{Restricted notion of complexity:} The authors implicitly define complexity as the capability of fooling short measurement protocols. While this operational notion of complexity is well motivated, the actual measurement procedures considered are quite restrictive. In particular, they do not take into account ancilla-assisted measurements --- a mainstay of modern quantum information.
\item[(ii)] \emph{Collisions are not treated explicitly:} 
The ensemble of local random circuits of size $T$ defines a probability distribution on the $n$-qubit unitaries. If we are only interested in specifying unitary transformations up to some specified error $\delta$, collisions occur, so that some unitaries are more likely than others. The arguments in \cite{brandao_local_2016} show that the unitaries sampled from this distribution typically have complexity $\Omega(T^{1/11})$, but do not rule out the possibility that the distribution is highly nonuniform. It is at least a logical possibility, compatible with the findings of \cite{brandao_local_2016}, that the ensemble contains only a small number of unitaries which have high complexity, all of which occur with relatively high probability. To conclude that the ensemble contains many high-complexity unitaries, we need to know more about the properties of the probability distribution governing the ensemble.

\item[(iii)] \emph{Polynomial relation between circuit size and complexity:} 
The relation between circuit size $T$ and expected minimal complexity $T^{1/11}$ is polynomial, not (yet) linear as required by Conjecture \ref{conj:lenny}.

\end{enumerate}

In this work we make progress toward a rigorous proof of Conjecture~\ref{conj:lenny} by developing a general framework which addresses some of the shortcomings of the previously known rigorous evidence in favor of the conjecture \cite{brandao_local_2016}. 
In particular, we define and use a \emph{strong} notion of complexity, which captures the difficulty of distinguishing a given circuit from the most useless possible quantum channel: the completely depolarizing channel $\mathcal{D}(\rho)=\frac{\Tr(\rho)}{d} \iden$ that maps any state to the maximally mixed state.

\begin{definition}[strong complexity: informal definition]\label{def:strong-complexity}
The complexity of a quantum circuit $U$ is the minimal circuit size required to implement an ancilla-assisted measurement that is capable of distinguishing $\rho \mapsto U \rho U^\dagger$ from the completely depolarizing channel $\rho \mapsto \frac{1}{d} \mathbb{I}$.
\end{definition}

\noindent We refer to Sec.~\ref{sec:defcomp} for a more detailed motivation and a precise statement of this definition. For now, we emphasize that this strong definition of complexity implies other (weaker) definitions, such as the minimal circuit size required to approximate $U$.

Our first main contribution is a rigorous connection between complexity growth and the notion of \emph{approximate unitary $k$-designs} \cite{Dankert09,gross_evenly_2007}. We use the notation $\left\{p_i,U_i \right\}$ for an ensemble of unitary transformations in which the unitary $U_i$ occurs with probability $p_i$.
A unitary $k$-design is an ensemble with strong pseudo-random properties; 
an approximate $k$-design accurately approximates the first $k$ moments of the Haar measure on the unitary group. 
Hence a $k$-design with large $k$ behaves essentially like a Haar-random ensemble of unitaries, while a small-$k$ design can be highly structured. For instance, the $n$-qubit Pauli group forms a 1-design, while the $n$-qubit Clifford group is a 3-design \cite{webb_clifford_2015,zhu_multiqubit_2017,Kueng15}.
The design order $k$ allows us to interpolate between these two very different regimes. Intuitively, we would expect that the complexity of a $k$-design grows with $k$. Our first technical contribution makes this intuition precise: a linear growth in design implies a linear growth in (strong) complexity.

\begin{theorem}[informal statement] \label{thm:main1_intro}
Let $\left\{ p_i, U_i \right\}$ be an approximate unitary $k$-design. 
Then, a randomly selected (according to the weights) element is very likely to have strong circuit complexity $\approx k$.
\end{theorem}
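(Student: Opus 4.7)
My plan is to convert the complexity lower bound into a tail bound for a random distinguishing probability, then exploit the $k$-design property to control a high moment of this distinguishing probability, and finally apply a union bound over an $\varepsilon$-net of low-complexity distinguishing protocols.

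First, I would recast \autoref{def:strong-complexity} quantitatively. If $U$ has strong complexity at most $r$, then by definition there is a distinguishing protocol $\Pi=(|\psi\rangle,V,M)$ --- input state on system plus ancilla, post-processing unitary $V$ realized by at most $r$ gates, and a two-outcome POVM $\{M,I-M\}$ --- such that
\[
f_\Pi(U)\;:=\;\Tr\!\bigl[M\,V(\mathcal{U}\otimes\mathrm{id})(\ketbra{\psi})V^\dagger\bigr]-\Tr\!\bigl[M\,V(\mathcal{D}\otimes\mathrm{id})(\ketbra{\psi})V^\dagger\bigr]
\]
satisfies $|f_\Pi(U)|\geq\beta$ for some constant advantage $\beta$, with $\mathcal{U}(\rho)=U\rho U^\dagger$ and $\mathcal{D}$ the depolarizing channel. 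Thus the event ``strong complexity $\leq r$'' is contained in $\bigcup_\Pi\{U:|f_\Pi(U)|\geq\beta\}$, with $\Pi$ ranging over size-$r$ protocols and at most $r$ ancilla qubits.

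Next, for any fixed $\Pi$ the function $f_\Pi$ is real and is a balanced polynomial of bidegree $(1,1)$ in the entries of $U$ and $\overline{U}$, so $f_\Pi^{2k}$ has bidegree $(2k,2k)$ --- precisely the order controlled by a $2k$-design. Because $\EE_{\mathrm{Haar}}\,U\rho U^\dagger=\mathcal{D}(\rho)$, we have $\EE_{\mathrm{Haar}}\,f_\Pi=0$; and either by Levy's concentration (using that $f_\Pi$ is $O(1)$-Lipschitz in Frobenius norm) or by a direct Weingarten contraction, $\EE_{\mathrm{Haar}}\,f_\Pi^{2k}\leq(Ck/d)^k$ for a universal constant $C$. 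The approximate $2k$-design hypothesis then transfers this moment bound to the ensemble $\nu=\{p_i,U_i\}$ up to a small design error, and Markov's inequality yields the pointwise tail estimate $\Pr_{U\sim\nu}[|f_\Pi(U)|\geq\beta]\leq(C'k/(\beta^2d))^k$.

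The final step is a union bound over an $\varepsilon$-net of size-$r$ protocols. Discretizing each of the $r$ elementary gates to operator-norm precision $\varepsilon$, together with the polynomially many choices of ancilla wiring and measurement implementation, produces a net of cardinality $\exp(O(r\log(1/\varepsilon)))$; choosing $\varepsilon=\mathrm{poly}(1/d)$ so that the induced perturbation of $f_\Pi$ is much smaller than $\beta$ keeps the net size at $\exp(O(r\log d))$. The total failure probability is then at most $\exp(O(r\log d))\cdot(C'k/d)^k$, which is $o(1)$ as soon as $r$ is a small constant multiple of $k$ (for sufficiently large local dimension), proving strong complexity $\gtrsim k$ for a typical draw from $\nu$. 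The hardest step, I expect, is calibrating the $\varepsilon$-net granularity against the design error so that the scaling $r\sim k$ survives: a $k$-design only controls exact degree-$k$ polynomials, so the discretization error and the Haar-to-design transfer must both be kept well below $\beta^{2k}$ without forcing $\varepsilon$ to be exponentially small in $d$, which would degrade the exponent. Relatedly, the ancilla dimension entering the Weingarten bound must be parametrized by $2^{O(r)}$ rather than $2^n$, so that the effective Hilbert-space dimension in the net count is driven by the circuit size $r$ and not by the full system size. Getting these estimates to line up so the final conclusion reads ``strong complexity $\gtrsim k$'' --- rather than a sublinear power of $k$ as in \cite{brandao_local_2016} --- is where the bulk of the technical work will live.
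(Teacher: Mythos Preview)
Your overall architecture---Markov's inequality applied to the $2k$-th moment of the distinguishing bias $f_\Pi$, a Haar moment bound, transfer to the approximate design, and a union bound over low-complexity protocols---is exactly the skeleton of the paper's proof of \autoref{thm:main_circuit}. Two differences in execution are worth noting.

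First, the paper avoids your $\varepsilon$-net calibration entirely by fixing a \emph{finite} universal gate set $\mathsf{G}\subset U(q^2)$ from the outset. The collection $\mathsf{M}_r$ of size-$r$ distinguishing procedures is then already finite with $|\mathsf{M}_r|\leq(2d^2+1)n^{2r}|\mathsf{G}|^r$, and the union bound is over this finite set; no discretization, no interplay between net granularity and design error. Your worry about keeping the perturbation of $f_\Pi$ below $\beta^{2k}$ without forcing $\varepsilon$ exponentially small is a real obstacle for continuous gate sets, but it simply does not arise in the paper's framework. Relatedly, your concern that the ancilla dimension must be held to $2^{O(r)}$ is unfounded: the paper fixes the memory register at dimension $d$, and the Haar moment bound (\autoref{thm:master}) is proved uniformly over \emph{all} bipartite inputs $|\phi\rangle\in\mathbb{C}^d\otimes\mathbb{C}^d$ and measurements $M$, with the only dimension appearing being that of the register on which $U$ acts.

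Second, the Haar moment estimate you dispatch in one line is the paper's main technical contribution. \autoref{thm:master} establishes $\mathbb{E}_U\bigl[(f_\Pi(U))^k\bigr]\leq C_k(k!)^2 d^{-k/2}$ via Weingarten calculus, a convexity argument reducing to extremal input states, and a tensor-network $2$-norm bound; the nontrivial part is controlling the ``dangerous'' partial-trace contractions $\Tr_1(\bar M_\Phi)$ against the subleading Weingarten terms. Your proposed alternative via Levy concentration on $U(d)$ (using that $f_\Pi$ is $O(1)$-Lipschitz in Frobenius norm) is a legitimate and arguably cleaner route that would yield $\mathbb{E}_U[f_\Pi^{2k}]\lesssim(Ck/d)^k$---tighter in the $k$-dependence than the paper's Weingarten bound---but the paper does not take it.
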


\noindent We refer to \autoref{thm:main_circuit} for a more detailed, quantitative statement. This result strengthens the assertions in \cite{brandao_local_2016} by allowing ancilla-assisted measurement procedures. To do so we prove novel bounds on Haar moments, see Sec.~\ref{sub:technical} for details.
Our second technical contribution 
shows that the $k$-design property alone severely limits the likelihood of collisions.

\begin{lemma} \label{lem:weights-intro}
Let $\left\{ p_i, U_i \right\}$ be an approximate $k$-design. Then, the associated weight distribution cannot be too spiky: $\max_i \,p_i \lesssim k! d^{-2k}$.
\end{lemma}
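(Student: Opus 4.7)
The plan is to probe the ensemble with a carefully chosen degree-$(k,k)$ moment that is tailored to isolate the contribution of the maximizer $p_{i_*} := \max_i p_i$ itself.

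Let $i_*$ achieve the maximum and set
\[
 f(U) \;=\; \bigl|\tr(U_{i_*}^\dagger U)\bigr|^{2k} \;=\; \bigl(\tr(U_{i_*}^\dagger U)\bigr)^{k}\,\bigl(\tr(U^\dagger U_{i_*})\bigr)^{k}.
\]
This is a non-negative polynomial of degree $k$ in the matrix entries of $U$ and degree $k$ in those of $U^*$, so it lies exactly in the class of moments that an (approximate) unitary $k$-design is required to reproduce. Its Haar average is immediate from left-invariance of the Haar measure: substituting $V = U_{i_*}^\dagger U$ gives $\int f(U)\,dU = \int |\tr V|^{2k}\,dV = k!$, where the last equality is the standard Weingarten/permutation identity, valid whenever $d \ge k$.

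Next, invoke the $k$-design property to conclude that $\sum_j p_j\, f(U_j)$ approximates $\int f(U)\,dU = k!$ up to an error controlled by the approximation parameter of the design. The crucial point is that $f\ge 0$, so the sum is bounded below by any single term. Keeping only $j=i_*$ and using $f(U_{i_*}) = |\tr I|^{2k} = d^{2k}$ yields
\[
 p_{i_*}\, d^{2k} \;\le\; \sum_j p_j\, f(U_j) \;\lesssim\; k!,
\]
and rearranging gives $p_{i_*} \lesssim k!\,d^{-2k}$, as claimed. The structural content is that we have chosen a single polynomial which simultaneously (i) is a bona-fide $k$-design moment, (ii) has Haar average $k!$, and (iii) when evaluated at $U_{i_*}$ produces the telling factor $d^{2k}$ on the diagonal.

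The only real subtlety — and where I would be careful — is translating ``approximate $k$-design'' into a quantitative control on this particular moment. Under the relative-error formulation (in which the discrete and Haar $k$-th moment superoperators compare as completely positive maps) the bound is automatic; under an additive-error formulation one instead estimates the discrepancy $\bigl|\sum_j p_j\, f(U_j) - k!\bigr|$ via the operator-norm distance between the two $k$-th moment operators applied to test vectors built from $U_{i_*}^{\otimes k}$. Either way the slack is a lower-order multiplicative correction and is absorbed into the $\lesssim$ notation of the statement.
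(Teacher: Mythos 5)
Your proposal is correct and follows essentially the same route as the paper's proof: test the design against the moment $\bigl|\Tr(U_{i_*}^\dagger U)\bigr|^{2k}$, use the Haar identity $\mathbb{E}_U\bigl[|\Tr U|^{2k}\bigr]=k!$, and exploit non-negativity to retain only the diagonal term $p_{i_*}d^{2k}$. The paper simply makes your final "subtlety" explicit, controlling the design-vs-Haar discrepancy by writing $\Tr(U)=d\langle\Omega|U\otimes\mathbb{I}|\Omega\rangle$ and applying the diamond-norm definition of an $\epsilon$-approximate $k$-design to the state $|\Omega\rangle^{\otimes k}$, which yields the multiplicative error $(1+\epsilon)k!$ you anticipated.
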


\noindent This result formalizes the intuitive idea that giving unusually high weight to some unitaries cannot be compatible with the $k$-design property, but we are not aware of any precise statements along these lines in the existing literature. 
Importantly, because \autoref{lem:weights-intro} establishes that the distribution is nearly flat, knowing that sampling from a $k$-design yields a high-complexity unitary with high probability (as stated in \autoref{thm:main1_intro}) allows us to infer that there must be many distinct high-complexity unitaries in the ensemble. Here our reasoning is based on an approximate version of Laplace's definition of probability: if events are assigned nearly equal probabilities, then the probability of property $X$ is approximately the number of events with property $X$ divided by the total number of events. Together, \autoref{thm:main1_intro} and \autoref{lem:weights-intro} imply the following corollary:

\begin{corollary} \label{cor:main_complexity_growth}
\emph{Any} approximate $k$-design contains exponentially many (in $k$) unitaries that have circuit complexity $\Omega(k)$.
\end{corollary}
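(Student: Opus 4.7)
The plan is to combine \autoref{thm:main1_intro} and \autoref{lem:weights-intro} into a short pigeonhole-style counting argument. The two results provide complementary information: the first tells us that (most of the) probability mass in any approximate $k$-design sits on high-complexity unitaries, while the second tells us that this mass cannot be carried by just a few heavy atoms. Dividing the former by the latter must therefore produce many distinct high-complexity unitaries.

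Concretely, I would fix an approximate $k$-design $\left\{ p_i, U_i \right\}$ and introduce the set $S = \{ i : U_i \text{ has strong complexity at least } ck \}$, where $c$ is the constant extracted from the quantitative version of \autoref{thm:main1_intro}. That theorem yields
\begin{equation*}
\sum_{i \in S} p_i \;\geq\; 1 - \delta
\end{equation*}
for a small constant $\delta$. \autoref{lem:weights-intro} then bounds every individual weight by $p_i \lesssim k!\, d^{-2k}$. Inserting this uniform bound into the left-hand side and rearranging gives
\begin{equation*}
|S| \;\gtrsim\; (1-\delta)\,\frac{d^{2k}}{k!}.
\end{equation*}
A single application of Stirling's formula $k! \leq (k/e)^{k}\sqrt{2\pi k}$ converts this into $|S| \gtrsim (e d^{2}/k)^{k}/\sqrt{k}$, which is exponentially large in $k$ throughout the regime $k \ll d^{2}$ --- precisely the window in which nontrivial $k$-designs on an $n$-qubit system can exist (since $d=2^{n}$, so $d^{2}$ is doubly exponential in $n$).

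I do not anticipate a genuine technical obstacle, since both quantitative inputs are stated earlier in the paper. The only real subtlety to track is that \autoref{thm:main1_intro} is a statement about \emph{weighted} probability, whereas the corollary asserts an \emph{unweighted} cardinality; \autoref{lem:weights-intro} is exactly the bridge needed to cross this gap. Without the flatness bound, the weighted statement would be compatible with a distribution supported on only a handful of high-complexity unitaries, each occurring with unusually high probability --- precisely the loophole (ii) flagged in the introduction. The main work of the proof is therefore bookkeeping: propagating the approximation parameters $\delta$ (from the design) and $c$ (from the complexity threshold) through Stirling, and verifying that the resulting bound $|S| \geq e^{\Omega(k)}$ indeed holds uniformly over the claimed parameter range.
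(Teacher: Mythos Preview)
Your proposal is correct and matches the paper's own argument essentially line for line: the paper likewise bounds $\Pr[\mathcal{C}_\delta(U)>r]=\sum_j p_j \1den\{\mathcal{C}_\delta(U)>r\}$ from above using the weight bound $p_j\leq (1+\epsilon)k!/d^{2k}$ from \autoref{lem:weights-intro}, and from below via the probabilistic statement in \autoref{thm:main1_intro}, yielding exactly $N\geq \frac{d^{2k}}{k!}\bigl(\frac{1}{1+\epsilon}-\text{small}\bigr)$. Your identification of the weighted-versus-unweighted gap and of \autoref{lem:weights-intro} as the bridge is precisely the logic the paper spells out.
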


\noindent
While \autoref{cor:main_complexity_growth} does not by itself strongly constrain how these high-complexity unitary transformations are distributed geometrically within the $n$-qudit unitary group, we are also able to prove a stronger result: \emph{An approximate $k$-design contains exponentially many (in $k$) high-complexity unitaries whose pairwise distance \textnormal{(i.e.\ the distance between any pair of unitaries)} is almost maximal in the diamond norm.} This stronger statement rules out the possibility that most of the high-complexity unitaries reside inside a few tightly packed clusters within $U(d)$.

Approximate unitary $k$-designs are a central concept in quantum information, where their
pseudo-random properties have found extensive application across subfields, e.g. state distinguishability \cite{Ambainis2007}, decoupling \cite{szehr2013}, state tomography \cite{scott_tight_2006,kueng_low_2017}, randomized benchmarking \cite{emerson05}, equilibration \cite{brandao_local_2016} (and references therein),
information scrambling \cite{HaydenPreskill,ChaosDesign}, and many more. As a result, several probabilistic constructions are known. Applying  \autoref{cor:main_complexity_growth} to any of these constructions establishes a rigorous model for quantum complexity growth. In particular,
\begin{itemize}
\item \emph{Local random quantum circuits with polynomial design growth:} Ref.~\cite{brandao_local_2016} proves that the set of all geometrically local circuits of size $T=O(n^2 k^{11})$ forms an approximate unitary $k$-design.\footnote{Note that here we discuss the size of the circuit, if we parallelize the application of gates, the depth of the circuit required to form an approximate design scales linearly in $n$.} \autoref{cor:main_complexity_growth} therefore implies that local circuits of size $T$ contain at least $\exp (\Omega(T^{1/11}))$ elements with strong complexity $\Omega(T^{1/11})$. 

\item \emph{Stochastic quantum Hamiltonians with polynomial design growth:} 
One can study the growth of complexity in continuous-time models of chaotic dynamics, rather than the discrete-time dynamics embodied by random circuits \cite{FastScrambling,onorati_mixing_2017,Nakata16}.
Stochastic Hamiltonian dynamics, in which a local Hamiltonian fluctuates as a function of time, has been shown to realize approximate $k$-designs \cite{onorati_mixing_2017} with a relationship between $k$ and the evolution time similar to what was established in \cite{brandao_local_2016} for local random circuits.
Further progress achieved in \cite{Nakata16} shows that, for a particular class of stochastic Hamiltonians, evolution time linear in $k$ suffices to generate approximate $k$-designs for $k=o(\sqrt{n})$. \autoref{cor:main_complexity_growth} therefore implies that with high probability the complexity grows linearly in time, at least for a while. 

\item \emph{Local random circuits with linear design growth:} Very recently, one of the authors substantially improved on the results of \cite{brandao_local_2016} using an exact mapping from random circuits to the statistical mechanics of a lattice model \cite{NHJ19}, showing that local circuits of size $T=O(n^2 k)$ form approximate $k$-designs in the limit of large local dimension (Hilbert space dimension $d=q^n$ with $q$ large). Combined with \autoref{cor:main_complexity_growth} this establishes a \emph{linear} relation between circuit size and complexity. Thus we can prove the following statement analogous to \autoref{conj:lenny}: 
\end{itemize}

\begin{corollary} \label{cor:RQCcomp}
The set of all local circuits of size $T$ contains at least $\exp (\Omega(T))$  elements with strong complexity $\Omega (T)$, provided that the local dimension is sufficiently large: $q \geq q_0 (T)$.
\end{corollary}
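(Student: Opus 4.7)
The plan is to deduce the corollary by direct composition of two ingredients already in hand: the linear-in-$k$ design-growth result of \cite{NHJ19} for local random circuits at large local dimension, and \autoref{cor:main_complexity_growth}, which extracts exponentially many high-complexity unitaries from any approximate $k$-design.

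First, I would fix the circuit size $T$ and consider the natural probability distribution $\{p_i, U_i\}$ on size-$T$ local circuits obtained by drawing $T$ independent Haar-random two-qudit gates placed on a fixed sequence of $T$ nearest-neighbor positions (e.g.\ a brickwork pattern on $n$ qudits with local dimension $q$). The statistical-mechanics mapping of \cite{NHJ19} shows that, provided $q \geq q_0(T)$, this ensemble is an approximate unitary $k$-design with design order $k$ scaling linearly in $T$; any residual $n$-dependent prefactor in the relation $T = O(n^2 k)$ can be absorbed into the implicit constant or into the threshold $q_0(T)$.

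Second, I would invoke \autoref{cor:main_complexity_growth} for this ensemble. It guarantees that $\{p_i, U_i\}$ contains at least $\exp(\Omega(k))$ distinct unitaries with strong complexity $\Omega(k)$. Because the support of the ensemble is, by construction, a subset of the set of all size-$T$ local circuits, these high-complexity unitaries are themselves realized by local circuits of size $T$. Substituting the linear scaling $k = \Omega(T)$ converts the bound into $\exp(\Omega(T))$ elements of strong complexity $\Omega(T)$, as claimed.

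The genuinely hard steps have already been carried out upstream: the exact large-$q$ mapping of \cite{NHJ19} is what delivers the linear rather than polynomial design growth, and \autoref{cor:main_complexity_growth} itself rests on the two main technical contributions of this paper (\autoref{thm:main1_intro} and \autoref{lem:weights-intro}). The remaining work is essentially bookkeeping, and this is where I would expect the only subtlety: one must verify that the design order $k$ and approximation error $\varepsilon$ guaranteed by \cite{NHJ19} are compatible with the quantitative hypotheses behind \autoref{thm:main1_intro}, and then carefully track how the two linear relations (between $T$ and $k$, and between $k$ and complexity) compose into the single linear relation claimed in the corollary, including the precise form of $q_0(T)$.
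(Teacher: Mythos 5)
Your overall composition is exactly the paper's intended route: \autoref{thm:lineark} (the large-$q$ statistical-mechanics result of \cite{NHJ19}, giving design order $k$ linear in the circuit size $T$) fed into the counting statement \autoref{cor:main_complexity_growth}/\autoref{thm:main_circuit}, with the two linear relations composed to give $\exp(\Omega(T))$ elements of strong complexity $\Omega(T)$ for $q\geq q_0(T)$.

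There is, however, one genuine gap that the paper treats explicitly and your proposal skips: the ensemble covered by \cite{NHJ19} consists of circuits built from \emph{Haar-random} two-qudit gates, so it is a continuous ensemble, whereas \autoref{thm:main_circuit} is stated for a \emph{discrete} approximate $2k$-design and its counting step really uses discreteness --- it bounds $\Pr[\mathcal{C}_\delta(U)>r]=\sum_j p_j \1den\{\mathcal{C}_\delta(U_j)>r\}\leq \max_j p_j \cdot N$ via the weight bound of \autoref{lem:weights-intro}. For a continuous measure the weights $p_j$ are not defined (every atom has measure zero), the sum becomes an integral, and the argument yields a probability/volume statement rather than a cardinality bound; conversely, any positive-probability event trivially contains uncountably many unitaries, so ``$\exp(\Omega(T))$ distinct elements'' is not a meaningful conclusion without fixing a resolution. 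The paper resolves this by passing to an $\varepsilon$-covering of the set of size-$T$ local random circuits (Sec.~\ref{sec:epcover}): the net has cardinality at most $n^T(10T/\varepsilon)^{Tq^4}$, it inherits the design property with error $\epsilon'=\epsilon+2d^{2k}\varepsilon$, and \autoref{thm:main_circuit} applied to this finite ensemble gives exponentially many genuinely distinct high-complexity circuits. This also forces the bookkeeping you allude to: the net parameter $\varepsilon$ must be taken exponentially small in $nk$ so that $\epsilon'$ stays $O(1)$, and this interacts with the threshold $q_0(T)$. So your proof proposal is correct in spirit but needs the discretization step inserted before \autoref{cor:main_complexity_growth} can be invoked.
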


More precise statements of our main results, and a more detailed comparison to previous work, can be found in Sec.~\ref{sec:quantum-complexity}.

\section{Quantum complexity and unitary designs}
\label{sec:quantum-complexity}

\subsection{Operational definitions of complexity}
\label{sec:defcomp}

\subsubsection{State complexity}

We consider systems comprised of $n$ qudits with local dimension $q$: $d = q^n$.
Existing works on complexity typically start with identifying a class of states that are \emph{useful} starting states for quantum computations. In this work we will take a reverse approach and start with identifying a \emph{useless} state. The maximally mixed state
\begin{equation}
\rho_0 = \frac{\iden}{d}
\end{equation}
is unique in the sense that it is invariant under arbitrary unitary evolutions, including any quantum circuit.
Intuitively, useful starting states should be as far away from this useless state as possible. 
If we measure distance in trace norm, this intuition is true to some extent. Any pure state $|\psi \rangle \! \langle \psi|$ obeys 
\begin{equation}
\frac{1}{2}\left\| | \psi \rangle \! \langle \psi | - \rho_0 \right\|_1 = 1- \frac{1}{d}\,.
\end{equation}
But this is clearly too coarse for distinguishing the usefulness of different states.
In order to achieve such a task, we recall the operational interpretation of the trace distance. 
It corresponds the optimal bias achievable in distinguishing the state $|\psi \rangle \! \langle \psi|$ from $\rho_0$ using a single measurement \cite{Holevo1973,Helstrom1976}. We refer to Sec.~\ref{sub:distinguishing_states} for a more detailed exposition.
The optimal measurement achieving this bias is $M = | \psi \rangle \! \langle \psi| $ and \emph{does} depend on the state in question. Such a measurement may be challenging to implement for states that we would intuitively assign a high complexity to (such as random states) and very easy for states that we consider useful (such as computational basis states). 
We can interpolate between these extreme cases by limiting the resources available to implement distinguishing measurements. Let $\mathbb{H}_d$ denote the space of $d\times d$ Hermitian matrices. 
For fixed $r \in \mathbb{N}$, we consider the class of measurements $\mathsf{M}_r(d) \subset \mathbb{H}_d$ that can be implemented by combining (at most) $r$ 2-local gates from a fixed, universal gate set $\mathsf{G} \subset U(4)$. We refer to Sec.~\ref{sub:easy_measurement} for further details and justification.
The maximal bias achievable with such a restricted set of measurements is the solution to the following optimization problem:
\begin{align}
\beta_{\mathrm{qs}}^\sharp \left(r, | \psi \rangle \right) 
=\,\, &\textrm{maximize} \quad \left| \Tr \left( M \left( | \psi \rangle \! \langle \psi| -\rho_0\right)\right)\right| \\
&\textrm{subject to} \quad M \in \mathsf{M}_r (d)\,. \nonumber
\end{align}
We may decompose the true optimal measurement as $| \psi \rangle \! \langle \psi| = U |0 \rangle \! \langle 0| U^\dagger$ for some $U \in U(d)$. The unitary $U$ may be approximated to arbitrary precision by 2-local circuits chosen from a universal gate set \cite{dawson_kitaev_2005}. This ensures 
\begin{equation}
\beta^\sharp (r, | \psi \rangle) \longrightarrow \frac{1}{2} \| | \psi \rangle \! \langle \psi| - \rho_0 \|_1 = 1 - \frac{1}{d}\quad \textrm{as} \quad r \to \infty\,.
\end{equation}
For simple states, like computational basis states, this convergence happens rapidly, while generic states $| \psi \rangle$ require exponentially large circuit depths. This observation is the motivation for the following definition of complexity.

\begin{definition}[Strong state complexity] \label{def:state_complexity}
Fix $r \in \mathbb{N}$ and $\delta \in (0,1)$. We say that a pure state $| \psi \rangle $ has strong $\delta$-state complexity at most $r$ if
\begin{equation}
\beta_{\mathrm{qs}}^\sharp (r, | \psi \rangle) \geq 1-\frac{1}{d}-\delta\,, \quad \textrm{which we denote as} \quad \mathcal{C}_\delta \left( | \psi \rangle \right) \leq r\,.
\end{equation}
\end{definition}

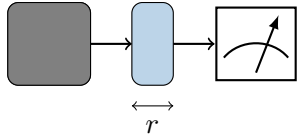
\begin{figure}
\begin{center}
\begin{tikzpicture}[baseline,scale=0.55]
\draw[rounded corners] (-2,0.5) rectangle (0,2.5);
\draw[rounded corners, fill = gray] (-2,0.5) rectangle (0,2.5);
\draw[rounded corners] (1,0.5) rectangle (2,2.5);
\draw[rounded corners,fill=Blue,opacity=0.3] (1,0.5) rectangle (2,2.5);
\draw[<->] (1,0) -- (2,0);
\node at (1.5,-0.5) {$r$};
\draw[->,thick] (0,1.5) -- (1,1.5);
\draw[->,thick] (2,1.5) -- (3,1.5);
\node[meter] at (4,1.5) {};
\end{tikzpicture}
\end{center}
\caption{\emph{Pictographic illustration of strong state complexity (\autoref{def:state_complexity}).} A black-box either outputs a (known) pure state $\rho = | \psi \rangle \! \langle \psi|$, or the maximally mixed state $\rho_0=\frac{1}{d} \mathbb{I}$. The task is to correctly guess which one it produced by applying a pre-processing circuit $V$ (blue) of limited size $r$ and performing a simple measurement (right).
We say that $| \psi \rangle$ has \emph{strong state complexity} less than $r$ if the probability of correctly distinguishing both possibilities is close to optimal.
} \label{fig:state_complexity}
\end{figure}

This definition has a ready operational interpretation that is illustrated in \autoref{fig:state_complexity}.
The following result relates it to more traditional definitions.

\begin{lemma} \label{lem:state_comp_stronger}
 Suppose that $| \psi \rangle \in \mathbb{C}^d$ obeys $\mathcal{C}_\delta (| \psi \rangle) \geq r+1$ for some $\delta \in (0,1)$ and $r \in \mathbb{N}$.
Then,
\begin{equation}
\min_{\textrm{size}(V) \leq r}
\frac{1}{2} \left\| | \psi \rangle \! \langle \psi| - V |0 \rangle \! \langle 0| V^\dagger \right\|_1 > \sqrt{\delta}\,,
\label{eq:simple_state_complexity}
\end{equation}
i.e.\ it is impossible to accurately produce $\ket{\psi}$ with fewer than $r$ elementary gates.
\end{lemma}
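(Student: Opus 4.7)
I would prove the contrapositive: suppose there exists a circuit $V$ of size at most $r$ such that $\frac{1}{2}\||\psi\rangle\!\langle \psi| - V|0\rangle\!\langle 0|V^\dagger\|_1 \leq \sqrt{\delta}$, and show that this forces $\mathcal{C}_\delta(|\psi\rangle) \leq r$, contradicting the hypothesis. The natural candidate distinguishing measurement is the rank-one projector $M := V|0\rangle\!\langle 0|V^\dagger$, implemented operationally by applying $V^\dagger$ and then testing whether the computational-basis outcome equals $|0\rangle$. Since this circuit uses only the $r$ gates comprising $V$, we have $M \in \mathsf{M}_r(d)$, so $M$ is an admissible measurement for $\beta^\sharp_{\mathrm{qs}}(r,|\psi\rangle)$.

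\textbf{Computing the bias.} With this choice of $M$, a direct calculation gives
\begin{equation}
\Tr\!\left( M\bigl(|\psi\rangle\!\langle \psi| - \rho_0\bigr)\right) = |\langle \psi|V|0\rangle|^2 - \tfrac{1}{d} = F - \tfrac{1}{d},
\end{equation}
where $F = |\langle\psi|V|0\rangle|^2$ is the fidelity between $|\psi\rangle$ and $V|0\rangle$. Now I would invoke the Fuchs--van de Graaf identity for pure states, namely $\tfrac{1}{2}\||\psi\rangle\!\langle \psi| - |\phi\rangle\!\langle \phi|\|_1 = \sqrt{1 - |\langle \psi|\phi\rangle|^2}$, applied to $|\phi\rangle = V|0\rangle$. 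Our assumption then reads $\sqrt{1 - F} \leq \sqrt{\delta}$, i.e.\ $F \geq 1 - \delta$.

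\textbf{Assembling the contradiction.} Combining these two displays yields
\begin{equation}
\beta^\sharp_{\mathrm{qs}}(r,|\psi\rangle) \;\geq\; \Tr\!\left(M(|\psi\rangle\!\langle\psi| - \rho_0)\right) \;\geq\; 1 - \tfrac{1}{d} - \delta,
\end{equation}
which is exactly the condition $\mathcal{C}_\delta(|\psi\rangle) \leq r$ from \autoref{def:state_complexity}. This contradicts the assumption $\mathcal{C}_\delta(|\psi\rangle) \geq r+1$, completing the contrapositive. The inequality in the lemma statement is strict because the hypothesis $\mathcal{C}_\delta(|\psi\rangle) \geq r+1$ forbids equality $\mathcal{C}_\delta(|\psi\rangle) = r$; more carefully, any $V$ achieving $\tfrac{1}{2}\||\psi\rangle\!\langle\psi| - V|0\rangle\!\langle 0|V^\dagger\|_1 \leq \sqrt{\delta}$ would witness complexity at most $r$, so the infimum over size-$r$ circuits must strictly exceed $\sqrt{\delta}$.

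\textbf{Anticipated difficulty.} The argument is essentially a one-line reduction via Fuchs--van de Graaf; there is no real obstacle beyond verifying that the projector $V|0\rangle\!\langle 0|V^\dagger$ genuinely lies in the measurement class $\mathsf{M}_r(d)$ as defined in Sec.~\ref{sub:easy_measurement}. If the class $\mathsf{M}_r(d)$ were more restrictive than ``a size-$r$ circuit followed by a computational-basis measurement'' (for instance, requiring a fixed final single-qubit readout), one would need to adjust the construction slightly, but the underlying fidelity-to-trace-distance bookkeeping would be unchanged.
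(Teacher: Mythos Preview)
Your proposal is correct and matches the paper's proof essentially line for line: both argue by contraposition, choose the measurement $M = V|0\rangle\!\langle 0|V^\dagger \in \mathsf{M}_r$, convert the trace-distance assumption into the fidelity bound $|\langle 0|V^\dagger|\psi\rangle|^2 \geq 1-\delta$ via the pure-state identity, and read off $\beta^\sharp_{\mathrm{qs}}(r,|\psi\rangle) \geq 1 - \tfrac{1}{d} - \delta$. The only cosmetic difference is that the paper computes the rank-two trace distance directly rather than citing Fuchs--van de Graaf by name.
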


The converse is false in general. To see this, select a generic state $\ket{\widetilde\psi}$ on $(n-1)$ qudits and set $\ket\psi = \ket{0}\otimes \ket{\widetilde\psi} $. 
Then, the quantity in Eq.~\eqref{eq:simple_state_complexity} is dominated by the (traditional) complexity of $\ket{\widetilde\psi}$, which may be very high. Nonetheless, the simple distinguishing measurement $M = \ketbra{0} \otimes \iden$ (ignore everything but the first qudit) achieves
\begin{equation}
\Tr \left( M ( \ketbra{\psi} - \rho_0 ) \right) = \Tr \left( \ketbra{0} \big(\ketbra{0} - \tfrac{1}{q} \iden\big) \right)
= 1 - \frac{1}{q}\,,
\end{equation}
which is high, especially for large local dimension $q$. This example highlights that
\autoref{def:state_complexity} is indeed more stringent than traditional definitions of state complexity.


\begin{proof}[Proof of \autoref{lem:state_comp_stronger}]

By contraposition. 
Let $\mathsf{G}_r \subset U(d)$ denote the class of unitary circuits that are comprised of at most $r$ 2-local gates chosen from a universal gate set $\mathsf{G}$.
Suppose there exists a size-$r$ circuit $V \in \mathsf{G}_r$ such that
$
\frac{1}{2} \left\| | \psi \rangle \! \langle \psi| - V |0 \rangle \! \langle 0| V^\dagger \right\|_1 \leq \sqrt{\delta}.
$
The state difference in question has rank two which allows for explicitly computing the trace distance:
$
\frac{1}{2} \| | \psi \rangle \! \langle \psi| - V |0 \rangle \! \langle 0| V^\dagger \|_1
= \sqrt{1- | \langle 0| V^\dagger| \psi \rangle|^2}$.
The assumption is therefore equivalent to $| \langle 0| V^\dagger | \psi \rangle|^2 \geq 1-\delta$  and we conclude
\begin{align}
\beta^\sharp (r, | \psi \rangle) \geq \Tr \left( V| 0 \rangle \! \langle 0| V^\dagger \left(| \psi \rangle \! \langle \psi| - \rho_0 \right) \right) = | \langle 0| V^\dagger| \psi \rangle|^2 - \frac{1}{d}  \geq 1- \frac{1}{d} - \delta\,,
\end{align}
because $V | 0 \rangle \! \langle 0| V^\dagger \in \mathsf{M}_r$. This in turn implies
$\mathcal{C}_\delta (| \psi \rangle) \leq r$
and the claim follows.
\end{proof}

\subsubsection{Unitary complexity}

We define the complexity of unitary channels $\mathcal{U}(\rho) = U \rho U^\dagger$ in a fashion similar to state complexity. 
We start with identifying the completely depolarizing channel as the most \emph{useless} channel conceivable:
\begin{equation}
\mathcal{D}(\rho) = \rho_0= \frac{\mathbb{I}}{d} \quad \textrm{for all states} ~ \rho\,. 
\end{equation}
The \emph{diamond distance} between $\mathcal{D}$ and any unitary channel is close to maximal:
\begin{equation}
\frac{1}{2} \left\| \mathcal{U} - \mathcal{D} \right\|_\diamond = 1 - \frac{1}{d^2}\,.
\end{equation}
As detailed in Sec.~\ref{sub:distinguishing_channels}, the diamond distance also has an appealing operational definition \cite{Watrous2018}. 
It corresponds to the maximal bias achievable for the task of distinguishing the channel $\mathcal{U}$ from $\mathcal{D}$ with a single channel use.
The optimal strategy may involve a quantum memory. Choose a state in the doubled Hilbert space $| \phi \rangle \! \langle \phi|$, with $\ket\phi \in \C^d\otimes \C^d$ and input one half into the unknown channel, while the other half remains unchanged in the quantum memory. Subsequently, perform a two-outcome measurement on the output state to distinguish both channels. 

An optimal strategy for distinguishing $\mathcal{U}$ from $\mathcal{D}$ corresponds to choosing a maximally entangled (Bell) state $| \Omega \rangle \in \mathbb{C}^d \otimes \mathbb{C}^d$ as input and measuring $M= (U \otimes \mathbb{I}) \ketbra{\Omega} (U^\dagger \otimes \mathbb{I})$. Equivalently, choose $(U^\dagger \otimes \mathbb{I}) |\Omega \rangle$ as input and measure $M = \ketbra{\Omega}$ on the output. 
Similar to the state complexity argument, the optimal input state, or the optimal outcome measurement (or both) depend on the unitary $U \in U(d)$ describing the channel $\mathcal{U}$. 
This may be challenging to implement, especially if $U$ corresponds to a complicated circuit. 
We restrict apparatus power by bounding the total circuit sizes that are allowed to implement such a measurement procedure.
Let $\mathsf{G}_{r'} \subset U(d^2)$ be the set of all unitary circuits on $2n$ qudits (register+memory) that are comprised of at most $r'$ elementary gates. Likewise, let $\mathsf{M}_{r''} \subset \mathbb{H}_d^{\otimes 2}$ denote the class of all two-outcome measurements on $2n$ qudits that require circuit size at most $r''$ to implement. 
The optimal bias achievable under such restrictions is
\begin{align}
\beta^\sharp_{\mathrm{qc}}(r,U) =\,\, &\textrm{maximize} \quad \big| \Tr \left( M \left(\left( \mathcal{U} \otimes \mathcal{I} \right) (| \phi \rangle \! \langle \phi|) - \left( \mathcal{D} \otimes \mathcal{I} \right)(| \phi \rangle \! \langle \phi|)\right) \right) \big| \\
&\textrm{subject to} \quad M \in \mathsf{M}_{r'}\,,~ | \phi \rangle = V |0 \rangle\,,~ V \in \mathsf{G}_{r''}\,,~ r = r'+r''\,, \nonumber
\end{align}
where the identity channel $\mathcal{I}: \mathbb{H}_d \to \mathbb{H}_d$ indicates that the memory is left unchanged. 
As $r$ increases, more complicated measurements and state preparations become possible. At some point this will include ever more precise approximations of the optimal measurement \cite{dawson_kitaev_2005}:
\begin{equation}
\beta^\sharp (r, U) \longrightarrow \frac{1}{2} \| \mathcal{U} - \mathcal{D} \|_\diamond = 1 - \frac{1}{d^2} \quad \textrm{as} \quad r \to + \infty\,.
\end{equation}
Similar to the state case, the rate of convergence does depend on the complexity of the unknown unitary $U$. This is the basis for our operational definition of unitary complexity.

\begin{definition}[Strong unitary complexity] \label{def:unitary_complexity}
Fix $r \in \mathbb{N}$ and $\delta \in (0,1)$. We say that a unitary $U \in U(d)$ has strong $\delta$-unitary complexity at most $r$ if
\begin{equation}
\beta^\sharp_{\mathrm{qc}} (r,U) \geq 1- \frac{1}{d^2} - \delta\,,
\quad \textrm{which we denote as} \quad \mathcal{C}_\delta (U) \leq r\,.
\end{equation}
\end{definition}

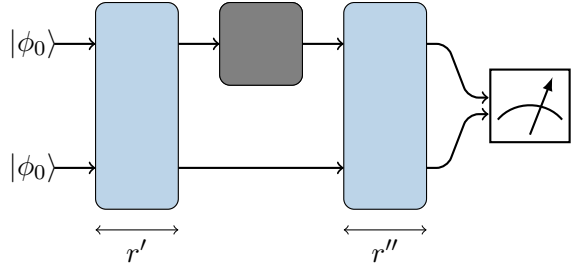
\begin{figure}
\begin{center}
\begin{tikzpicture}[baseline,scale=0.55]
\draw[rounded corners,fill=Blue,opacity=0.3] (-5,-2.5) rectangle (-3,2.5);
\draw[rounded corners] (-5,-2.5) rectangle (-3,2.5);
\draw[<->] (-5,-3) -- (-3,-3);
\node at (-4,-3.5) {$r'$};
\draw[rounded corners] (-2,0.5) rectangle (0,2.5);
\draw[rounded corners, fill = gray] (-2,0.5) rectangle (0,2.5);
\draw[rounded corners] (1,-2.5) rectangle (3,2.5);
\draw[rounded corners,fill=Blue,opacity=0.3] (1,-2.5) rectangle (3,2.5);
\draw[<->] (1,-3) -- (3,-3);
\node at (2,-3.5) {$r''$};
\draw[->, thick, rounded corners] (-6,1.5) -- (-5,1.5);
\node at (-6.5,1.5) {$| \phi_0 \rangle$};
\node at (-6.5,-1.5) {$| \phi_0 \rangle$};
\draw[->,thick, rounded corners] (-6,-1.5) -- (-5,-1.5);
\draw[->,thick] (-3,1.5) -- (-2,1.5);
\draw[->,thick] (0,1.5) -- (1,1.5);
\draw[->,thick] (-3,-1.5) -- (1,-1.5);
\draw[->, thick, rounded corners] (3,1.5) -- (3.5,1.5) -- (4,0.2) -- (4.5,0.2);
\draw[->,thick, rounded corners] (3,-1.5) -- (3.5,-1.5) -- (4,-0.2) -- (4.5,-0.2);
\node[meter] at (5.5,0) {};
\end{tikzpicture}
\end{center}
\caption{\emph{Pictographic illustration of strong unitary complexity (\autoref{def:unitary_complexity}).} A black box (center) takes quantum states as inputs and applies either a unitary channel $\mathcal{U}(\rho) = U \rho U^\dagger$, or the depolarizing channel $\mathcal{D}(\rho) = \rho_0$.
The task is to correctly guess which evolution occurred. 
The rules of the game allow short pre- and post-processing circuits (blue) that may involve a quantum memory. The final guess must be based on a simple measurement (right).
We say that $U$ has complexity less than $r=r'+r''$ if the probability of correctly distinguishing both options is close to optimal.
}
\label{fig:unitary_complexity}
\end{figure}

The operational motivation for this definition is sketched in \autoref{fig:unitary_complexity}.
Strong unitary complexity 
(\autoref{def:unitary_complexity}) is more stringent than traditional definitions that use approximation errors in some norm.

\begin{lemma} \label{lem:unitary_complexity_stronger}
Suppose that $U \in U(d)$ obeys $\mathcal{C}_\delta(U) \geq r+1$ for some $\delta \in (0,1)$, $r \in \mathbb{N}$ and measurement procedures that include the Bell-measurement $\ketbra{\Omega}$.
 Then,
\begin{equation}
\min_{\textrm{size}(V) \leq r}
\frac{1}{2} \left\| \mathcal{U} - \mathcal{V} \right\|_\diamond > \sqrt{\delta}\,,
\end{equation}
i.e.\ it is impossible to accurately approximate $U$ by circuits comprised of fewer than $r$ elementary gates.
\end{lemma}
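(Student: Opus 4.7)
The plan is to proceed by contraposition, closely mirroring the proof of \autoref{lem:state_comp_stronger}. Assuming that some circuit $V$ of size at most $r$ satisfies $\tfrac{1}{2}\|\mathcal{U}-\mathcal{V}\|_\diamond\leq\sqrt{\delta}$, I would exhibit an explicit ancilla-assisted procedure, feasible within the total budget of $r$ elementary gates, that distinguishes $\mathcal{U}$ from $\mathcal{D}$ with bias at least $1 - d^{-2} - \delta$. This forces $\mathcal{C}_\delta(U)\leq r$, contradicting the hypothesis.

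For the procedure itself, the natural candidate is the Bell distinguisher. Prepare the maximally entangled state $|\Omega\rangle\in\mathbb{C}^d\otimes\mathbb{C}^d$ (its preparation is taken to be part of the Bell-measurement apparatus), apply the unknown channel on the first register, and then perform the two-outcome POVM with accepting element $M = (V\otimes\iden)\ketbra{\Omega}(V^\dagger\otimes\iden)$. Implementing $M$ reduces to applying $V^\dagger\otimes\iden$ (size at most $r$) followed by the Bell-basis readout, which the lemma's hypothesis explicitly permits at no additional cost. Using the identity $\langle\Omega|(A\otimes\iden)|\Omega\rangle = d^{-1}\Tr(A)$, a short computation gives acceptance probabilities $|d^{-1}\Tr(V^\dagger U)|^2$ on input $\mathcal{U}$ and $d^{-2}$ on input $\mathcal{D}$, hence bias $|d^{-1}\Tr(V^\dagger U)|^2 - d^{-2}$.

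The remaining step converts the diamond-norm hypothesis into a lower bound on $|d^{-1}\Tr(V^\dagger U)|^2$. I would use that the Choi states of $\mathcal{U}$ and $\mathcal{V}$, namely $(U\otimes\iden)\ketbra{\Omega}(U^\dagger\otimes\iden)$ and $(V\otimes\iden)\ketbra{\Omega}(V^\dagger\otimes\iden)$, are both pure with mutual inner product $d^{-1}\Tr(V^\dagger U)$. Their half trace-distance therefore equals $\sqrt{1-|d^{-1}\Tr(V^\dagger U)|^2}$, and by the variational definition of the diamond norm this is upper bounded by $\tfrac{1}{2}\|\mathcal{U}-\mathcal{V}\|_\diamond\leq\sqrt{\delta}$. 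Rearranging yields $|d^{-1}\Tr(V^\dagger U)|^2\geq 1-\delta$, and substitution into the bias expression from the previous paragraph produces $\beta^\sharp_{\mathrm{qc}}(r,U) \geq 1 - d^{-2} - \delta$, which concludes the contraposition.

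I expect the only nontrivial ingredients to be the pure-state trace-distance identity and the inequality relating the diamond norm to the trace distance between Choi states, both of which are entirely standard. The one piece of bookkeeping to verify carefully is the split of the budget as $r = r'+r''$: one needs to confirm that the Bell-measurement convention built into the lemma really does absorb the preparation of $|\Omega\rangle$ and the Bell-basis readout, so that the full $r$ gates remain available for implementing the single pre-measurement unitary $V^\dagger\otimes\iden$. Beyond this accounting, I do not anticipate any genuine obstacle.
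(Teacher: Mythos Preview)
Your proposal is correct and follows essentially the same route as the paper's proof. Both argue by contraposition, feed the Bell state $\ket{\Omega}$ into the unknown channel, use the measurement $M=(V\otimes\iden)\ketbra{\Omega}(V^\dagger\otimes\iden)$, convert the diamond-norm hypothesis into the overlap bound via the pure-state trace-distance identity applied to the Choi states, and then read off $\beta^\sharp_{\mathrm{qc}}(r,U)\geq 1-d^{-2}-\delta$; the bookkeeping caveat you flag about absorbing the Bell preparation into the allowed measurement class is exactly the assumption the paper invokes.
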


Again, the converse relation is false in general.
Including the Bell measurement $\ketbra{\Omega}$ simplifies the proof of this claim considerably. Also, relatively short circuits allow for transforming computational basis states into Bell states. For $q=2$, a depth-two circuit comprised of $n$ Hadamard gates and $n$ CNOTs suffices.

\begin{proof}[Proof of \autoref{lem:unitary_complexity_stronger}]
By contraposition. Assume there exists $V \in U(d)$ with $\mathrm{size}(V) \leq r$
 such that
$
\tfrac{1}{2} \left\| \mathcal{U} - \mathcal{V} \right\|_\diamond \leq \sqrt{\delta}$. Then,
\begin{align}
\sqrt{\delta} &\geq \frac{1}{2} \| \mathcal{U} - \mathcal{V} \|_\diamond \geq \frac{1}{2} \left\| (U \otimes \mathbb{I}) | \Omega \rangle \! \langle \Omega| (U^\dagger \otimes \mathbb{I}) - (V \otimes \mathbb{I}) | \Omega \rangle \! \langle \Omega |(V^\dagger \otimes \mathbb{I}) \right\|_1 \nn
&= \sqrt{1- | \langle \Omega| V^\dagger U \otimes \mathbb{I} | \Omega \rangle|^2}\,,
\end{align}
as the second expression involves a trace distance of two pure states which can be computed explicitly.
Next, note that $M = (V \otimes \mathbb{I}) | \Omega \rangle \! \langle \Omega| (V^\dagger \otimes \mathbb{I})$ is a legitimate distinguishing measurement, because $\mathrm{size}(V) \leq r$ and $M=\ketbra{\Omega}$ is permitted by assumption. 
Together with $\mathcal{D} \otimes \mathcal{I} (| \Omega \rangle \! \langle \Omega|) = \rho_0^{\otimes 2}$ we conclude
\begin{align}
\beta^\sharp_{\mathrm{qc}} (r, U) &\geq \Tr \left(  (V \otimes \mathbb{I}) | \Omega \rangle \! \langle \Omega| (V^\dagger \otimes \mathbb{I})
\left( (U \otimes \mathbb{I}) | \Omega \rangle \! \langle \Omega| (U^\dagger \otimes \mathbb{I}) - \rho_0^{\otimes 2} \right)\right) \nn
&= \big| \langle \Omega| V^\dagger U \otimes \mathbb{I} | \Omega \rangle \big|^2 -\langle \Omega | V^\dagger  \rho_0 V \otimes \rho_0| \Omega \rangle 
\geq  1- \delta^2 - \frac{1}{d^2}\,.
\end{align}
\end{proof}

\subsection{Approximate unitary designs}

The concept of \emph{unitary $k$-designs} \cite{Dankert09,gross_evenly_2007} provides an interpolation between two extreme cases: (i) small collections of highly structured unitaries that form the basic building blocks of quantum computing devices (e.g. local Pauli gates, or elements of the Clifford group). (ii) generic (Haar random) unitaries that lack any sort of structure and require circuits of exponential size to approximate. 

Roughly speaking, an ensemble $\mathcal{E}=\left\{p_i, U_i \right\}$ of unitaries is a unitary $k$-design if it exactly reproduces the first $k$ moments of the Haar measure over the unitary group.
More precisely, given the twirling channels $\mathcal{T}^{(k)}_{U}(X) = \int \mathrm{d}U U^{\otimes k} X (U^\dagger)^{\otimes k}$ and $\mathcal{T}_{\mathcal{E}}^{(k)}(X) = \sum_i p_i U_i^{\otimes k} X (U_i^\dagger)^{\otimes k}$, an ensemble $\mathcal{E}$ is a unitary design with order $k$ if
\begin{equation}
\mathcal{T}_{\mathcal{E}}^{(k)}(X) =  \mathcal{T}_{U}^{(k)} (X) \quad \textrm{for all $X$ in the $k$-fold tensor product}. \label{eq:k-design_intro}
\end{equation}
Although seemingly abstract, this notion captures important physical concepts. 1-designs are in one-to-one correspondence with unitary operator frames, while 2-designs sufficiently capture the notion of \emph{scrambling} \cite{HaydenPreskill,ChaosDesign}. 

Unitary $k$-designs are known to exist for any dimension $d$ and any order $k$. Nevertheless, explicit constructions are notoriously difficult to find. This challenge can be overcome by relaxing the notion of a $k$-design. Indeed, for most applications it is sufficient to ensure that Eq.~\eqref{eq:k-design_intro} is only approximately true, see \autoref{def:approximate_design} for a precise statement.
Several conventions for choosing an appropriate distance measure $\| \cdot \|$ have been put forth, but here we opt for the diamond distance $\| \cdot \|_\diamond$ which quantifies the distinguishability of two ensembles.

In contrast to exact $k$-designs, several explicit constructions for approximate $k$-designs have been established \cite{brandao_local_2016,onorati_mixing_2017,Nakata16,NHJ19,HM18,HL09}, including local random circuits and various Brownian circuits/stochastic quantum Hamiltonians. 
These constructions allow us to relate abstract insights about complexity growth in designs to concrete random circuit models.

\subsection{Complexity by design}

\subsubsection{State complexity growth}

\begin{theorem} \label{thm:main_state}
Consider the set of (pure) states in $d=q^n$ dimensions that results from applying all unitaries associated with an $\epsilon$-approximate $2k$-design to a fixed starting state $| \psi_0 \rangle$.
Then, this set contains at least
\begin{equation}
\binom{d+k-1}{k} \left( \frac{1}{1+\epsilon}- 2 d n^r |\mathsf{G}|^r \left( \frac{16k^2}{d(1-\delta)^2} \right)^{k} \right)
\end{equation}
\emph{distinct} states that obey $\mathcal{C}_\delta (| \psi \rangle) \geq r+1$ each.
Qualitatively, this number is of order $(d/k)^k$ as long as $r$ obeys
\begin{equation}
r \lesssim \frac{k (n - 2\log (k))}{\log (n)}\,.
\end{equation}
\end{theorem}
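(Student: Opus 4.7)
The plan is to combine two ingredients: (i) a tail bound on the probability that a random element of the $2k$-design has low strong state complexity, obtained by a union bound over ``easy'' distinguishing measurements combined with Haar-moment estimates; and (ii) an upper bound on the maximum weight any single state can receive in the induced ensemble, i.e.\ the state-space analogue of \autoref{lem:weights-intro}. The counting conclusion then follows from the elementary pigeonhole $|S|\geq\bigl(\sum_{i\in S}p_i\bigr)/\max_i p_i$.

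\textbf{Reduction to a rank-$\ell$ projector.} By \autoref{def:state_complexity}, $\mathcal{C}_\delta(|\psi\rangle)\leq r$ holds iff some $M\in\mathsf{M}_r(d)$ achieves bias at least $1-1/d-\delta$. Since the bias is linear in $M$ on $\{0\preceq M\preceq\iden\}$, I restrict to the extreme points --- projectors $M=VPV^\dagger$ with $V\in\mathsf{G}_r$ and $P$ a computational-basis projector of rank $\ell$. Rearranging the bias condition and using $\Tr(M|\psi\rangle\!\langle\psi|)\leq 1$ forces $\ell\leq 1+\delta d$ and $\Tr(M|\psi\rangle\!\langle\psi|)\geq 1-\delta$.

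\textbf{Moment-bounded union bound.} For a fixed rank-$\ell$ projector $M$ and $U$ drawn from the $\epsilon$-approximate $2k$-design, the induced state ensemble $\{U|\psi_0\rangle\}$ is an approximate state $k$-design, and the Schur--Weyl identity $\Tr(M^{\otimes k}\Pi_{\mathrm{sym}})=\binom{\ell+k-1}{k}$ yields
\begin{equation*}
\EE_U\!\bigl[\Tr(M|\psi\rangle\!\langle\psi|)^k\bigr]\;\leq\;(1+\epsilon)\frac{\binom{\ell+k-1}{k}}{\binom{d+k-1}{k}}.
\end{equation*}
Applying Markov's inequality together with Stirling-type bounds on the binomials collapses this, uniformly in $\ell\leq 1+\delta d$, to at most $(1+\epsilon)\bigl(16k^2/(d(1-\delta)^2)\bigr)^k$ per $M$. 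A union bound over the at most $d\cdot(n|\mathsf{G}|)^r$ pairs $(V,P)$, together with a factor $2$ for the sign of the bias, gives
\begin{equation*}
p_{\mathrm{low}}:=\Pr\bigl[\mathcal{C}_\delta(U|\psi_0\rangle)\leq r\bigr]\;\leq\;2d\,n^r|\mathsf{G}|^r\!\left(\frac{16k^2}{d(1-\delta)^2}\right)^{\!k}(1+\epsilon).
\end{equation*}

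\textbf{Max-weight bound, counting, and the main obstacle.} The ensemble $\{p_i,\,|\psi_i\rangle=U_i|\psi_0\rangle\}$ inherits an approximate state $k$-design property, so its $k$-th moment operator is dominated in operator norm by $(1+\epsilon)\Pi_{\mathrm{sym}}/\binom{d+k-1}{k}$. Since $\EE[(|\psi\rangle\!\langle\psi|)^{\otimes k}]\succeq(\max_i p_i)(|\psi_{i^*}\rangle\!\langle\psi_{i^*}|)^{\otimes k}$, this forces $\max_i p_i\leq(1+\epsilon)/\binom{d+k-1}{k}$. Letting $S$ denote the set of distinct high-complexity states and using $\sum_{i\in S}p_i\geq 1-p_{\mathrm{low}}$,
\begin{equation*}
|S|\;\geq\;\binom{d+k-1}{k}\!\left(\frac{1}{1+\epsilon}-2d\,n^r|\mathsf{G}|^r\!\left(\frac{16k^2}{d(1-\delta)^2}\right)^{\!k}\right),
\end{equation*}
which is the claim. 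The qualitative range of $r$ follows by requiring the second term to be a bounded fraction of the first: taking logarithms with $d=q^n$ and $q,|\mathsf{G}|$ constant gives $r\lesssim k(n-2\log k)/\log n$. The main obstacle is the moment step: tracking how both $\ell\leq 1+\delta d$ and the approximate-design error $\epsilon$ propagate through the Schur--Weyl estimate, so as to collapse the $\ell$-dependence and avoid paying a prohibitive $2^d$ in the union bound by naively enumerating all subsets of the computational basis, is where most of the analytic work resides.
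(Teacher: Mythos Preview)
Your overall architecture --- union bound over $\mathsf{M}_r$, Markov on a $k$-th moment, then max-weight pigeonhole --- matches the paper. The gap is in the moment step, precisely where you flag the ``main obstacle'': your uncentered estimate does \emph{not} collapse to $\bigl(16k^2/(d(1-\delta)^2)\bigr)^k$ uniformly in $\ell\le 1+\delta d$.

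Take $\ell=\lfloor\delta d\rfloor$ (which your constraint allows). Then
\[
\frac{\binom{\ell+k-1}{k}}{\binom{d+k-1}{k}}\;=\;\prod_{j=0}^{k-1}\frac{\ell+j}{d+j}\;\approx\;\delta^{k}\qquad(k\ll d),
\]
so Markov at your threshold $1-\delta$ gives only $\Pr[\,\cdot\,]\lesssim\bigl(\delta/(1-\delta)\bigr)^k$. Even if you keep the sharper $\ell$-dependent threshold $\tau_\ell=1-\tfrac1d-\delta+\tfrac{\ell}{d}\approx 1$, the bound is still $\approx\delta^k$. For constant $\delta$ this is $O(c^k)$ with $c<1$ independent of $d$, not the $d^{-k}$ decay the theorem claims. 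Feeding $c^k$ into the union bound over $2d\,n^r|\mathsf{G}|^r$ measurements yields only $r\lesssim k\log(1/\delta)/\log n$, a factor $n$ short of the stated $r\lesssim k(n-2\log k)/\log n$. The failure is structural: for a rank-$\Theta(d)$ projector the random variable $\Tr(M\ketbra{\psi})$ has mean $\Theta(1)$, so raw moments cannot produce $d^{-k}$ tails.

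The paper's remedy is to \emph{center}: work with $\bar M=M-\tfrac{\Tr M}{d}\iden$ and bound $\mathbb{E}\bigl[\Tr(\bar M\ketbra{\psi})^{2k}\bigr]$. Tracelessness is exactly what makes $\Tr\bigl(\Pi_{\mathrm{sym}}\bar M^{\otimes 2k}\bigr)\le\|\bar M\|_2^{2k}\le d^{k}$ (see \autoref{prop:state_moments}), which after dividing by $\binom{d+2k-1}{2k}$ gives the needed $(k^2/d)^k$ scaling. Your reduction to extreme-point projectors and your max-weight/counting steps are fine (the latter is \autoref{lem:state_weights}); only the moment inequality needs to be replaced by its centered version.
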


Because of collisions, the emphasis on distinct is justified; two or more distinct unitaries can lead to the same final state.

\subsubsection{Unitary complexity growth}

\begin{theorem} \label{thm:main_circuit}
A discrete approximate $2k$-design in $d=q^n$ dimension contains at least
\begin{equation*}
\frac{d^{2k}}{k!} \left( \frac{1}{1+\epsilon} - 3d^2 n^{2r} |\mathsf{G}|^r \left( \frac{1024 k^4}{d(1-\delta)^2}\right)^k \right)\,
\end{equation*}
distinct unitaries that obey $\mathcal{C}_\delta (U) \geq r+1$ each.
Qualitatively, this number is of order $(d^2/k)^k$ as long as $r$ obeys
\begin{equation}
r \lesssim \frac{k (n- 4 \log (k))}{\log (n)}\,.
\end{equation}
\end{theorem}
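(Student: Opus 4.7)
The plan is to parallel the proof of \autoref{thm:main_state}, adapted to the channel-distinguishing setting of \autoref{def:unitary_complexity}. Three ingredients combine: (i) a high-moment bound on the strong bias $\beta^\sharp_{\mathrm{qc}}(r,U)$ under the Haar measure, (ii) transfer of this bound to any $\epsilon$-approximate $2k$-design, and (iii) \autoref{lem:weights-intro}, which converts the resulting probability statement into a count of \emph{distinct} unitaries.

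First, for a fixed admissible pair $(M,V)$ with $M\in \mathsf{M}_{r'}$, $V\in\mathsf{G}_{r''}$, $r'+r''\le r$, I would analyze the single-pair bias
\[
F(U;M,V) \;=\; \Tr\!\Big(M\Big[(U\otimes\iden)V|0\rangle\!\langle 0|V^\dagger (U^\dagger\otimes \iden) - \rho_0^{\otimes 2}\Big]\Big),
\]
which is a scalar polynomial of bidegree $(1,1)$ in $(U,U^\dagger)$, so that $F^{2k}$ has bidegree $(2k,2k)$. The Haar expectation $\mathbb{E}\,F(U;M,V)^{2k}$ thus reduces to contracting the twirl $\mathbb{E}\,U^{\otimes 2k}\otimes (U^\dagger)^{\otimes 2k}$ against a fixed tensor built from $M$ and $|\phi\rangle\!\langle\phi|$. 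Using $\|M\|_{\mathrm{op}}\le 1$, unit input norm, the centering by $\rho_0^{\otimes 2}$, and standard Weingarten estimates (Collins--\'{S}niady), the goal is the moment inequality $\mathbb{E}_{\text{Haar}}[F^{2k}]\le (Ck^4/d)^k$ for an absolute constant $C$; the exponent $4$ on $k$ (as opposed to the $k^2$ in the state case) reflects that the measurement acts on $2n$ qudits, so that the relevant Weingarten sums run over permutations of twice as many indices. Markov's inequality with threshold $t=1-1/d^2-\delta$ then yields the per-pair tail bound $\big(Ck^4/(d(1-\delta)^2)\big)^k$.

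Second, I would union-bound over the admissible $(M,V)$. The number of size-$r$ 2-local circuits on $n$ qudits is at most $n^{2r}|\mathsf{G}|^r$; accounting for the split $r=r'+r''$, both pre- and post-processing roles, and the outcome-basis choice for $M$, the total count is absorbed into a prefactor $3d^2 n^{2r}|\mathsf{G}|^r$ (with $C=1024$), yielding
\[
\Pr_{U\sim\text{Haar}}\!\big[\mathcal{C}_\delta(U)\le r\big] \;\le\; 3d^2 n^{2r}|\mathsf{G}|^r\,\Big(\frac{1024\,k^4}{d(1-\delta)^2}\Big)^{k}.
\]
Because $F^{2k}$ is a polynomial of bidegree $(2k,2k)$ in $(U,U^\dagger)$, the $\epsilon$-approximate $2k$-design property transfers this estimate from Haar to $\mathcal{E}$ at a multiplicative cost of $1+\epsilon$, giving
\[
\Pr_{U\sim\mathcal{E}}\!\big[\mathcal{C}_\delta(U)\ge r+1\big] \;\ge\; \frac{1}{1+\epsilon} - 3d^2 n^{2r}|\mathsf{G}|^r\,\Big(\frac{1024\,k^4}{d(1-\delta)^2}\Big)^{k}.
\]
\autoref{lem:weights-intro} applied at design order $2k$ then forces $\max_i p_i \lesssim k!/d^{2k}$ on the unitary weights. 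Dividing the preceding lower bound by this maximal probability produces the advertised count of $\tfrac{d^{2k}}{k!}(\cdots)$ distinct high-complexity unitaries. The qualitative threshold $r\lesssim k(n-4\log k)/\log n$ follows by insisting that the union-bound correction be at most $\tfrac{1}{2(1+\epsilon)}$ and solving for $r$ after substituting $d=q^n$.

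The main obstacle is the sharp Haar moment bound in the first step. A naive estimate relying on $\|U\|=1$ and the triangle inequality does not suffice: achieving the $(Ck^4/d)^k$ rate requires exploiting non-trivial cancellations among Weingarten permutations of different cycle types, and in particular the fact that $F$ is centered against $\rho_0^{\otimes 2}$ so that its variance is of order $1/d^2$ rather than $1/d$. Once this estimate is secured, the remaining union-bound, design-transfer, and counting steps follow the same template as \autoref{thm:main_state}.
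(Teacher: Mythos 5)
Your overall architecture coincides with the paper's: Markov's inequality applied to the $2k$-th centered moment of the bias for each fixed distinguishing procedure, a union bound over the at most $3d^2 n^{2r}|\mathsf{G}|^r$ admissible procedures, transfer from Haar to the design ensemble, and \autoref{lem:weights-intro} to convert the resulting probability into a count of distinct unitaries, with constants ($1024k^4$, the threshold $1-d^{-2}-\delta$) that come out as in the statement. The genuine gap is that the single ingredient carrying all of the technical weight --- the moment bound $\mathbb{E}\big[\bar S_U(M,\phi)^{2k}\big]\lesssim (C k^4/d)^k$ --- is only announced as ``the goal'' and never established. This is exactly the paper's \autoref{thm:main-technical-intro} (restated as \autoref{thm:master}) together with its design version \autoref{cor:master}, and its proof is not a routine Weingarten computation: it needs the 2-norm bounds $\|M_\Phi\|_2\le\sqrt d$, $\|\Tr_2(M_\Phi)\|_2\le \sqrt d$, $\|\Tr_1(M_\Phi)\|_2\le d$ obtained by a convexity argument (\autoref{lem:norm_bound}), a tensor-network bound to evaluate each contraction without self-contractions, Catalan-number bounds on Weingarten functions, and the combinatorial fact that the number of ``dangerous'' $\Tr_1(\bar M_\Phi)$ factors in a diagram $N_{\sigma,\tau}$ is at most $2(k-\ell(\sigma^{-1}\tau))$, so that each such factor is compensated by the Weingarten suppression. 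Gesturing at ``cancellations among permutations of different cycle types'' does not substitute for this; naive estimates using only $\|M\|_\infty\le 1$ do not produce the $d^{-k}$ decay. Since the estimate exists as a standalone result in the paper, citing \autoref{cor:master} would have closed the argument; as written, the core of your proof is missing.

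Two further slips are worth fixing, though they are repairable. First, you center the bias against $\rho_0^{\otimes 2}$, but the depolarizing channel acts only on the register: $(\mathcal{D}\otimes\mathcal{I})(|\phi\rangle\!\langle\phi|)=\tfrac{1}{d}\mathbb{I}\otimes\Tr_1(|\phi\rangle\!\langle\phi|)$, which equals $\rho_0\otimes\rho_0$ only for maximally entangled inputs; the correct centering (which is simultaneously the Haar mean, \autoref{lem:expectation}) is required both to match \autoref{def:unitary_complexity} and to make the Markov step bound the right event. Second, your side remarks are internally inconsistent: the worst-case variance of the centered bias is of order $1/d$ (e.g.\ $M=P\otimes\mathbb{I}$ with $P$ a rank-$d/2$ projector and a product input), which agrees with the $k=1$ case of your own claimed bound but contradicts your ``variance of order $1/d^2$'' justification; and under \autoref{def:approximate_design} the Haar-to-design transfer is additive (an extra term of size $\epsilon\,(2k)!\,d^{-4k}$ times $\|\bar M_\Phi\|_\infty^{2k}$, as in \autoref{cor:master}) rather than a clean multiplicative $1+\epsilon$ --- the paper's choice of scaling for $\epsilon$ is what makes it behave like a relative error and lets the final expression take the form $\tfrac{1}{1+\epsilon}-\cdots$.
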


\subsection{Moment bounds} \label{sub:technical}

Both \autoref{thm:main_state} and \autoref{thm:main_circuit} follow from an initial probabilistic statement combined with
relatively straightforward counting arguments.
These probabilistic statements highlight that it is very unlikely to distinguish random $k$-design elements from their average
with a fixed measurement procedure. Markov's inequality --- $\mathrm{Pr}\left[S \geq \tau \right] = \mathrm{Pr}\left[S^k \geq \tau^k\right] \leq \mathbb{E}[S^k]/\tau^k$ for nonnegative random variables $S$ --- reduces this probabilistic assertion to a question about moment growth. The larger the moments we can control, the stronger this assertion becomes. Designs appropriately capture this feature: a $k$-design accurately approximates Haar-random moments up to order $k$.
This is why designs with growing $k$ become increasingly complex.

For state complexity, the associated Haar moment computation is relatively straightforward:
\begin{equation}
\mathbb{E}_{\ket{\psi}} \left[ \Big(\Tr( M \ketbra{\psi}) - \mathbb{E}_{\ket{\psi}} \big[ \Tr (M \ketbra{\psi}) \big] \Big)^{k} \right]  \leq \left(\frac{k^2}{d}\right)^{k/2}
\label{eq:state-moment-intro}
\end{equation}
for any fixed measurement $M$, see e.g.\ \autoref{prop:state_moments} below.

However such simple moments do not cover strong unitary complexity. Quantum channels allow for more sophisticated measurement procedures that render the associated Haar moment computations nontrivial. Our main technical contribution is a novel bound that addresses this setting.

\begin{theorem} \label{thm:main-technical-intro}
Fix a bipartite input state $|\phi \rangle \in \mathbb{C}^d \otimes \mathbb{C}^d$ and a measurement $M$ of compatible dimension. Then,
\begin{equation*}
\mathbb{E}_U \left[ \bigg(\Tr \Big( M \big(U \otimes \mathbb{I}\big) \ketbra{\phi} \big(U^\dagger \otimes \mathbb{I}\big) \Big) - \mathbb{E}_U \Big[ \Tr \Big( M \big(U \otimes \mathbb{I}\big) \ketbra{\phi} \big(U^\dagger \otimes \mathbb{I}\big) \Big)\Big]\bigg)^k\right] \leq \frac{C_k (k!)^2}{d^{k/2}}\;, \label{eq:unitary-moment-intro}
\end{equation*}
where $C_k=\tfrac{1}{k+1}\binom{2k}{k} < \tfrac{4^k}{k}$ denotes the $k$-th Catalan number.
\end{theorem}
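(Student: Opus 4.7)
The plan is to apply the unitary Weingarten formula to the $k$-th central moment after a convenient recentering, exploit a vanishing property that restricts the resulting sum to derangements, and finally bound the surviving terms via a cycle-wise H\"older analysis.

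I would first recast the centered statistic. Let $\bar{\rho}:=\mathbb{E}_U[(U\otimes\iden)\ketbra{\phi}(U^\dagger\otimes\iden)]=\tfrac{\iden}{d}\otimes\rho_2$ with $\rho_2=\Tr_1[\ketbra{\phi}]$. Since $\bar{\rho}$ commutes with every $U\otimes\iden$, the centered statistic takes the form $g(U)=f(U)-\mathbb{E}_U f(U)=\Tr[M(U\otimes\iden)\Delta(U^\dagger\otimes\iden)]$ with $\Delta:=\ketbra{\phi}-\bar{\rho}$, and crucially $\Tr_1[\Delta]=\rho_2-\rho_2=0$. Raising to the $k$-th power and applying the Schur--Weyl/Weingarten identity $\mathbb{E}_U[U^{\otimes k}XU^{\dagger\,\otimes k}]=\sum_{\sigma,\tau\in S_k}\Wg(\sigma\tau^{-1},d)\Tr[P_\sigma X]P_\tau$ yields
\[ \mathbb{E}_U[g(U)^k]=\sum_{\sigma,\tau\in S_k}\Wg(\sigma\tau^{-1},d)\,\Tr\bigl[M^{\otimes k}\bigl(P_\tau^{(1)}\otimes\Tr_1[P_\sigma^{(1)}\Delta^{\otimes k}]\bigr)\bigr]. \]
The key structural step is now that the partial trace $\Tr_1[P_\sigma^{(1)}\Delta^{\otimes k}]$ factorizes along the cycles of $\sigma$, and any fixed point of $\sigma$ contributes the zero factor $\Tr_1[\Delta]$. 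Hence only $\sigma\in D_k$ (derangements of $[k]$) contribute, and every such $\sigma$ has all cycles of length $\geq 2$, so $|\sigma|=k-c(\sigma)\geq k/2$.

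To conclude, I would combine Collins--Matsumoto-type bounds $|\Wg(\pi,d)|\lesssim d^{-k-|\pi|}\prod_c C_{\ell_c-1}$ (with the per-cycle Catalan product dominated by $C_k$) with cycle-by-cycle H\"older estimates on $|\Tr[M^{\otimes k}(P_\tau^{(1)}\otimes\Tr_1[P_\sigma^{(1)}\Delta^{\otimes k}])]|$. The sharp per-cycle estimate is that $\Tr_1[P_c\Delta^{\otimes\ell_c}]$ has trace norm of order $1$---for instance, the ``pure'' term evaluates to $\Tr_1[P_c\ketbra{\phi}^{\otimes \ell}]=\rho_2^{\otimes\ell}Q_c^{(2)}$, whose trace norm is exactly $1$---rather than the naive $(2d)^{\ell_c}$. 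Carefully redistributing the Schatten norms over the $M^{\otimes k}$ and $P_\tau^{(1)}$ contractions produces a net $d^{-k/2}$ suppression per term; summing over $|D_k|\cdot|S_k|\leq(k!)^2$ permutation pairs then yields the claimed $C_k(k!)^2/d^{k/2}$.

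The main obstacle is the tightness of the combinatorial bound. A naive global H\"older estimate such as $|N(\sigma,\tau)|\leq\|M\|^k_\infty\|P_\tau^{(1)}\|_1\|\Tr_1[P_\sigma^{(1)}\Delta^{\otimes k}]\|_1\leq(2d)^k$ loses a factor $d^{k/2}$ and fails to produce the required decay. The correct bookkeeping redistributes Schatten norms cycle-by-cycle, exploiting both the density-matrix structure of $\Delta$ and the derangement constraint on $\sigma$. The factor $(k!)^2$ is a crude overcount of surviving permutation pairs, while $C_k$ arises from the worst-case cycle-wise Weingarten coefficient.
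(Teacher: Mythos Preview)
Your centering on the state side, $\Delta=\ketbra{\phi}-\tfrac{\iden}{d}\otimes\rho_2$ with $\Tr_1\Delta=0$, and the resulting derangement restriction on $\sigma$ are correct and constitute a genuinely different opening from the paper. The paper instead centers on the \emph{measurement} side: it vectorizes $\ket\phi\leftrightarrow\Phi$, absorbs $\Phi$ into $M_\Phi=(\iden\otimes\Phi^\dagger)M(\iden\otimes\Phi)$, and works with the traceless $\bar M_\Phi=M_\Phi-\tfrac{\Tr M_\Phi}{d}\iden$. After this reformulation the $k$-th moment becomes a sum of tensor networks built from the single block $\bar M_\Phi$, and the vanishing mechanism is $\Tr\bar M_\Phi=0$ (killing self-contractions), rather than your $\Tr_1\Delta=0$.

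The gap in your proposal is precisely the step you flag as ``the main obstacle.'' Your derangement constraint acts on $\sigma$, but the Weingarten weight depends on $\sigma\tau^{-1}$; these are decoupled, so the bound $|\sigma|\geq k/2$ does not by itself buy extra Weingarten decay. You therefore need $|N_{\sigma,\tau}|\lesssim d^{k/2}$ from the tensor contraction alone, and ``cycle-by-cycle H\"older'' does not deliver this as stated. Your trace-norm bound $\|\Tr_1[P_c\Delta^{\otimes\ell}]\|_1=O(1)$ is correct, but pairing it with $\|M^{\otimes k}\|_\infty\leq 1$ still leaves a factor $\|(P_\tau)_1\|_1=d^k$ (or, dually, $\|\Tr_1[M^{\otimes k}(P_\tau)_1]\|_\infty$, which can be of order $d^{c(\tau)}$); concretely, for a product input state and $\tau=\sigma$ a product of $k/2$ transpositions one gets $|N_{\sigma,\sigma}|\asymp d^{k/2}$, so no further slack is available and the bookkeeping must be tight. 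For $\tau\neq\sigma$ the cycle structures do not align and your per-cycle factorization breaks down entirely.

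The paper closes this gap by a different mechanism. A convexity argument (its Lemma~6, using that $\Phi^\dagger\Phi$ is a density matrix and $h(X)=\Tr(XAXA)$ is convex for $A\succeq 0$) yields the sharp $2$-norm bounds $\|\bar M_\Phi\|_2,\|\Tr_2\bar M_\Phi\|_2\leq\sqrt d$ and $\|\Tr_1\bar M_\Phi\|_2\leq d$. A generic tensor-network inequality (value $\leq$ product of constituent $2$-norms) then gives $|N_{\sigma,\tau}|\leq d^{k/2+\nu_3/2}$, where $\nu_3$ counts occurrences of the ``dangerous'' block $\Tr_1\bar M_\Phi$. Finally the combinatorial observation $\nu_3\leq 2\,d(\sigma,\tau)=2(k-\ell(\sigma^{-1}\tau))$ ties $\nu_3$ directly to the Weingarten suppression $|\Wg|\lesssim C_k d^{-(2k-\ell(\sigma^{-1}\tau))}$, producing the uniform $d^{-k/2}$. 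None of these three ingredients---the vectorization, the convexity-based $2$-norm bounds, or the $\nu_3$--Weingarten matching---appears in your sketch, and at least one of them (or an equivalent) is needed to turn your derangement reduction into a proof.
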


This bound is considerably more general than existing ones in the literature and may be more generally applicable. Ref.~\cite{brandao_local_2016}, for instance, utilizes Eq.~\eqref{eq:state-moment-intro} only.
We establish this result by combining Schur-Weyl duality \cite{Fulton91,christandl_phd_2006} with Weingarten calculus \cite{Weingarten78,Collins04} and auxiliary arguments from tensor network theory \cite{bridgeman_handwaving_2017,kliesch_guaranteed_2017} and convex optimization \cite{Roc70:Convex-Analysis,Bar02:Course-Convexity}.
We believe that the dimensional scaling in the final bound is tight, but there may be room for further improving the $k$-dependent constants. In particular, we do not know if the Catalan number is necessary, or merely an artifact of our proof technique.

\subsection{Relation to previous work} \label{sec:previous}

Quantum complexity has recently become a popular subject in high energy physics. A considerable amount of attention has been devoted to understanding the complexity accumulated after an exponentially long time. 
A joint work by Susskind and Aaronson \cite{SusskindCCBH14,susskind2018black,aaronson2016complexity} points to an intriguing connection to theoretical computer science: unless $\textsc{PSPACE}\subseteq\textsc{BQP}/\textrm{poly}$ (a hypothetical relation between different computational complexity classes that is widely believed to be false),
the circuit complexity of certain Hamiltonian evolutions $U=\exp (-iHt)$ achieves super-polynomial values for exponentially long time scales $t$.
In a similar vein, Bohdanowicz and Brand\~{a}o \cite{bohdanowicz2017universal} constructed a family of Hamiltonians that provably achieves superpolynomial complexity in exponential time, unless $\textsc{PSPACE}=\textsc{BQP}$.

These arguments address late-time complexity and therefore do not yield insights regarding early-time complexity growth. In this regard, relations between complexity growth and approximate $k$-designs have recently been pointed out in \cite{ChaosDesign,ChaosRMT}.
Specifically, \cite{ChaosDesign} defined a notion of the complexity of generating an ensemble of unitaries and gave a lower bound on the ensemble complexity in terms of the distance to forming a unitary design. They also argued that the lower bound of the complexity of a $k$-design is linear in $k$.
Our arguments and results may be regarded as a substantial refinement of these ideas.

The notion of strong complexity put forward in our work has its conceptual roots in quantum information.
Encompassing this mindset is the statement from Ref.~\cite{gross_most_2009}: ``most states are too entangled to be useful as computational resources.''
At the core of this argument is the following observation. Haar-random pure states are 
so highly entangled that \emph{local} measurements 
yield almost uniformly random outcomes. 
In turn, any quantum device that relies on local measurements and uses known, but Haar-random, states could be efficiently simulated by tossing classical coins! This prevents any genuine quantum advantage for computation.

Strong state complexity (\autoref{def:state_complexity}) may be thought as a formal version of this observation. Measuring the maximally mixed state $\rho_0$ always results in a uniform outcome distribution. Moreover, Ref.~\cite{gross_most_2009} makes essential use of the fact that the measurements are constrained to be ``simple'' (in their case: local measurements augmented by classical post-processing).
The core of their argument may be summarized as follows: low complexity measurements do not allow for distinguishing a Haar-random state from the maximally mixed state. 
We present a variant of this argument in Sec.~\ref{sub:haar_complexity} below.

While \cite{gross_most_2009} only considers Haar-random pure states, similar arguments have been established for states that are less generic, see e.g.\  \cite[Section~3]{brandao_local_2016}.
Although not stated at this level of generality,
\cite[Corollary~10]{brandao_local_2016} effectively points out that states generated by approximate $k$-designs fool short quantum circuits: with high probability they cannot be distinguished from the maximally mixed state by means of any measurement with small circuit size.
They also extend this result to circuits \cite[Corollary~11]{brandao_local_2016}.
With high probability, a randomly selected (according to the weights) $k$-design element cannot be approximated by any short-sized circuit $V$ in the sense that $\| U - V \|_\infty$ is small. 

The second main result of our work, \autoref{thm:main_circuit}, improves upon this result in two ways. Firstly, the strong unitary complexity (\autoref{def:unitary_complexity}) is more stringent than their more traditional definition. While \autoref{thm:main_circuit} does imply \cite[Corollary~11]{brandao_local_2016}, the converse is not necessarily true.

Secondly, and more importantly, both Corollary~10 and 11 in \cite{brandao_local_2016} are probabilistic. 
While this is enough to deduce average-case behavior, a strong quantitative statement about the number of $k$-design elements with high circuit complexity remains beyond the scope of these assertions. 
A worst case caricature may help to illustrate this subtlety.
Suppose that the weights accompanying a unitary $k$-design are extremely spiky. 
A single high-complexity unitary, say $U_1 \in U(d)$ is accompanied by an exceedingly large weight $p_1 \simeq 1$, while all other design unitaries $U_i$ have low complexity and almost vanishing weights $p_i \simeq 0$.
Such a weight distribution would not contradict the assertion of \cite[Corollary~11]{brandao_local_2016}. The single high complexity circuit occurs with high probability (over the weights). Nonetheless, the hypothetical $k$-design does only contains a single high-complexity element.

Here we overcome this issue by explicitly ruling out the possibility of such extreme cases ever occurring. The definition of an approximate $k$-design alone implies that the weights cannot be too spiky, see \autoref{lem:weights-intro}.
This bound on the weights allows us to convert probabilistic (average case) statements into quantitative ones. Not only does the average circuit complexity grow linearly with the order $k$ of an approximate design, the absolute number of distinct circuits that have high complexity must also grow \emph{exponentially} with $k$.

Interest in state complexity has been stimulated by its potential role in quantum gravity and the AdS/CFT correspondence; see Sec.~\ref{sec:holography} for further discussion.
Recently, the relevance to holographic duality of \emph{computational pseudorandomness} has been emphasized. Specifically, the authors of  \cite{bouland2019computational} argue that one can construct two \emph{mixed} quantum states on the boundary ($A$ and $B$) such that both $A$ and $B$ can be efficiently prepared, yet $A$ and $B$ cannot be distinguished from maximally mixed states by polynomial-size quantum circuits. Furthermore, the corresponding bulk states ($A'$ and $B'$) \emph{can} be distinguished efficiently from one another. This observation indicates that the holographic dictionary which relates bulk and boundary states must have high computational complexity. 

We stress that this concept of \emph{pseudorandom quantum states}, which can be efficiently prepared yet cannot be distinguished from random by computationally bounded observers, is applicable to mixed states, or ensembles of pure states, but not to individual pure quantum states. If a particular pure state can be prepared efficiently by a quantum circuit, that state can always be distinguished efficiently from a maximally mixed state by running the circuit backwards. An ensemble of pure states can be pseudorandom only if it contains superpolynomially many pure states, where the observer who draws a sample from the ensemble and attempts to distinguish this sampled state from a maximally mixed state has no information about which sample was chosen. In contrast, in our definition of complexity for pure states, the observer is permitted to use a different distinguishing circuit for each possible pure state. 
On the other hand, the existence of pseudorandom quantum states \cite{ji2017pseudorandom} indicates that, for mixed states, our definition of state complexity, namely the computational cost of \emph{distinguishing} the state from a maximally mixed state, can differ substantially from another natural definition, the computational cost of \emph{preparing} the state. 

\section{Complexity growth in random circuits}
\label{sec:models}

The rigorous statements put forward in \autoref{thm:main_state} and \autoref{thm:main_circuit} gain additional meaning when applied to concrete examples. The literature contains several proofs of design growth in random circuits. Combining these with our rigorous insights yields a number of concrete models for complexity growth.

\subsection{Local random circuits}
\label{sub:local}

For concreteness, we focus here on systems comprised of $n$ qubits, i.e.\ $q=2$ and $d=2^n$.
Let $\mathsf{G} \subset U(4)$ be a (finite) universal gate set containing inverses, i.e.\ $g^{-1}=g^\dagger \in \mathsf{G}$ whenever $g \in \mathsf{G}$.
We can generate \emph{$\mathsf{G}$-local random circuits} by sequentially applying a random gate $g \in \mathsf{G}$ to a randomly selected pair of neighboring qubits.
Repeating this procedure independently for $T$ steps results in random circuits of size $T$.
We refer to the application of each gate as a time step, such that size $T$ circuits are of depth $T$ and have thus evolved to time $T$.
Intuitively, the larger $T$, the more random the circuit becomes. 
A seminal result by Brand\~{a}o, Harrow, and Horodecki confirms this intuition in a precise sense.

\begin{theorem}[Corollary~7 in \cite{brandao_local_2016}] \label{thm:brandao} \label{thm:local_random_circuits}
Fix $d=2^n$, $\epsilon >0$, $k \leq \sqrt{d}$, and let $\mathsf{G} \subset U(d^2)$ be a universal gate set containing inverses.\footnote{In addition to containing inverses, \cite{brandao_local_2016} also required that the gate set $\mathsf{G}$ be comprised of algebraic entries, but recent results suggest that both these restrictions may be relaxed \cite{RQCseeds19,RQCepnets19}.}
Then, the set of all $\mathsf{G}$-local random circuits of size $T$ forms an $\epsilon$-approximate $k$-design if
\begin{equation}
T \geq C n \lceil \log_2 (k) \rceil^2 k^{9.5} \big(nk+ \log \left(1/\epsilon\right)\big)\,, \label{eq:local_random_circuit}
\end{equation}
where $C>0$ is a (large) constant which depends on $\mathsf{G}$.
\end{theorem}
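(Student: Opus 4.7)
The plan is to prove the design property via a spectral gap argument on the moment operator of a single step of the circuit. Concretely, define
\[
M_k(\nu) = \mathbb{E}_{U \sim \nu}\big[U^{\otimes k} \otimes (U^*)^{\otimes k}\big]
\]
for any distribution $\nu$ on $U(d)$. Since the $T$ steps of the local random circuit are i.i.d., $M_k(\nu_T) = M_k(\nu_1)^T$, and $M_k(\mathrm{Haar})$ is the orthogonal projection onto the $+1$-eigenspace of $M_k(\nu_1)$. Using the standard bridge between distance-to-design in diamond norm and operator-norm closeness of moment operators (which costs at most a factor $d^{2k}$), it suffices to lower bound the spectral gap $\Delta_k(n)$ of $M_k(\nu_1)$ on the orthogonal complement of its fixed space: then $\|M_k(\nu_1)^T - M_k(\mathrm{Haar})\|_\diamond \leq d^{2k}(1-\Delta_k(n))^T$, so $T \gtrsim \Delta_k(n)^{-1}\bigl(nk + \log(1/\epsilon)\bigr)$ already suffices.

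Next, $M_k(\nu_1)$ has nearest-neighbor local structure. Because each step of the circuit selects a uniformly random adjacent pair of qubits and applies a gate from $\mathsf{G}$, one obtains $M_k(\nu_1) = \tfrac{1}{n-1}\sum_{i=1}^{n-1} P_i$, where $P_i = \mathbb{E}_{g \sim \mathsf{G}}[g^{\otimes k}\otimes (g^*)^{\otimes k}]$ is a contraction acting non-trivially only on sites $(i,i+1)$ of the $2k$-fold replicated Hilbert space. Equivalently, $H_n := I - M_k(\nu_1)$ is a frustration-free nearest-neighbor Hamiltonian in one spatial dimension whose ground space coincides with the Haar fixed-point subspace of $M_k(\nu_1)$. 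The problem is thus reduced to a well-studied many-body question: bounding the spectral gap of a 1D frustration-free local Hamiltonian.

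The technical heart is then a local-to-global reduction combined with a two-site gap estimate. For the reduction, I would invoke a Nachtergaele-style lemma (or, equivalently, the detectability lemma) to show $\Delta_k(n) \geq \Omega\!\bigl(\Delta_k(2)/n\bigr)$, where $\Delta_k(2)$ is the spectral gap of a single local term $I - P_i$. In parallel, I would bound $\Delta_k(2) \geq \Omega(k^{-C})$ for some explicit exponent $C$ by exploiting universality of $\mathsf{G}$: decompose the $k$-fold tensor representation via Schur-Weyl duality into isotypical components and bound the expansion of the $\mathsf{G}$-random walk on each block using a quantitative Kesten-McKay / Harish-Chandra type estimate (this is where the algebraic-entries hypothesis on $\mathsf{G}$ enters, yielding polynomial, rather than merely positive, lower bounds). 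Combining the two ingredients gives $\Delta_k(n) \geq \Omega\bigl(1/(n\,k^{C})\bigr)$, and inserting this into the convergence bound from the first paragraph yields $T = O\bigl(n\cdot \mathrm{poly}(k) \cdot(nk + \log(1/\epsilon))\bigr)$, with the specific exponent $9.5$ and the $\log(k)^2$ factor emerging from the combined polynomial losses in both steps.

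The principal obstacle is the local-to-global reduction. A naive Nachtergaele bound gives a substantially worse $n$-dependence; achieving the near-optimal $\Omega(1/n)$ scaling requires an iterative/recursive analysis that exploits the overlapping structure of the projectors $P_i$ together with careful tracking of how the effective gap of blocks of sites shrinks under coarse-graining, and this is precisely where the $\log_2(k)^2$ factor tends to appear. Nearly as delicate is the quantitative control of $\Delta_k(2)$: as $k$ grows, the relevant representation of $U(4)$ decomposes into more and more irreducible components, and one must rule out the possibility that $\mathsf{G}$-expansion degrades super-polynomially on higher irreps; this is a nontrivial harmonic-analysis input that ultimately relies on the algebraic-entries hypothesis. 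Converting an operator-norm gap back into diamond-norm distance incurs the $d^{2k}$ factor, which is what forces the explicit $nk$ inside the logarithm in the final circuit-size bound.
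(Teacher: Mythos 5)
First, a structural point: the paper does not prove this statement at all --- it is quoted verbatim as Corollary~7 of \cite{brandao_local_2016} and used as an imported ingredient, so there is no ``paper proof'' to match; what you have written is an attempted reconstruction of the Brand\~{a}o--Harrow--Horodecki argument. Your overall architecture (moment operator $M_k(\nu_1)=\tfrac{1}{n-1}\sum_i P_i$, frustration-free 1D Hamiltonian $I-M_k(\nu_1)$, spectral gap $\Rightarrow$ design with a $d^{2k}$ norm-conversion overhead that puts $nk$ inside the logarithm) is indeed the right skeleton.

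However, the technical core of your sketch contains a genuine gap: you locate the $k$-dependence in the wrong place. You propose the reduction $\Delta_k(n)\geq \Omega(\Delta_k(2)/n)$ with a two-site gap $\Delta_k(2)\geq\Omega(k^{-C})$ obtained from harmonic analysis of the $\mathsf{G}$-walk on $U(4)$. Both halves are off. For Haar two-qudit gates the local term $P_i$ is itself a projection, so the two-site gap equals $1$ for every $k$; and for a universal gate set with algebraic entries the Bourgain--Gamburd-type input gives a gap that is bounded below by a \emph{constant independent of $k$} (that is precisely what the algebraic-entries hypothesis buys), not a $k^{-C}$ bound. Consequently, if a local-to-global lemma of the form $\Delta_k(n)\geq\Omega(\Delta_k(2)/n)$ held with $k$-uniform constants, the design size would be $O(n(nk+\log(1/\epsilon)))$ with no $\mathrm{poly}(k)$ factor, contradicting the lower bound $T\gtrsim nk/\log k$ quoted in Sec.~3.1 (Proposition~8 of \cite{brandao_local_2016}). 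So that key lemma, as stated, is false: the $k$-dependence necessarily enters through the many-body reduction itself, because the overlaps between ground spaces of neighbouring local projectors degrade with $k$. The actual proof handles this by (i) comparing the $\mathsf{G}$-gate walk to the Haar-gate walk using the $k$-independent gate-set gap, (ii) applying Nachtergaele's martingale method to reduce the $n$-site Hamiltonian gap not to two sites but to blocks of size $\ell=\Theta(\log k)$ (this is where the $\lceil\log_2 k\rceil^2$ arises and why the Hamiltonian gap ends up $n$-independent, the $1/n$ in the per-step operator being just the random choice of gate location), and (iii) separately lower-bounding the gap of the $\Theta(\log k)$-site block by an inverse polynomial in $k$, which is the true origin of the $k^{9.5}$. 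Without step (ii)--(iii), or some substitute that makes the $k$-dependence of the local-to-global step explicit, your outline cannot yield the claimed bound.
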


We emphasize that the weights associated with each unitary in this ensemble are defined implicitly by this random procedure. Several different $T$-sized circuits may give rise to the same final unitary, say $U_1$, while another one, say $U_2$, may exclusively be obtained from a single circuit geometry. The weights associated with $U_1$ and $U_2$ take into account this behavior, i.e.\ $p_1 \geq 2 p_2$ for our example. However, the fact that the entire ensemble still forms an approximate $k$-design limits potential fluctuations.
This in turn imposes lower bounds on the minimal number of distinct unitaries and severely limits the potential for collisions. It cannot be too likely that two or more different random circuits coincide. 
These features were conjectured by Brown and Susskind \cite[Sec.~6.5]{brown_second_2018} who, in turn, base their counting argument that relates circuit size and complexity on an extreme version of this conjecture: \emph{collisions do not occur at all}. 
One of the main results of this work is rigorous proof for a weaker version of their conjectured relation between circuit size and complexity. It is an immediate consequence of \autoref{thm:main_circuit} and \autoref{thm:brandao}.

\begin{corollary}[Polynomial relation between circuit size and circuit complexity for local random circuits] \label{cor:local_random_complexity}
Fix $\delta \in (0,1)$, $r \leq 2^{n/2}$ and set
$
T \geq C n^2 
\left( \frac{\log_2 (n) r}{n} \right)^{11}.
$
Then, the set of all $\mathsf{G}$-local circuits of size $T$ contains at least 
$ \tilde{C} 2^{\log (n) r}$ 
unitaries that obey $\mathcal{C}_\delta (U) >r$.
Here, $C, \tilde{C} >0$ are constants that implicitly depend on $\delta$ and $\mathsf{G}$.
\end{corollary}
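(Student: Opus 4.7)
The plan is to combine \autoref{thm:main_circuit} with the Brand\~ao--Harrow--Horodecki design-growth result (\autoref{thm:brandao}), using design order $k$ as the glue between circuit size $T$ and complexity threshold $r$. Morally, the conclusion falls out once one matches the dependence $T \sim n^2 k^{11}$ of \autoref{thm:brandao} to the range $r \lesssim k(n-4\log k)/\log n$ permitted by \autoref{thm:main_circuit}, i.e.\ one picks the smallest $k$ that makes \autoref{thm:main_circuit} nontrivial for the target $r$ and then reads off the circuit size from BHH.

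First I would set $k = \lceil \alpha \log_2(n)\, r/n \rceil$, where $\alpha>0$ is a sufficiently large constant depending only on $\delta$ and $|\mathsf{G}|$. The purpose of this choice is to render the error term
\begin{equation*}
3 d^2 n^{2r} |\mathsf{G}|^r \left(\frac{1024\, k^4}{d(1-\delta)^2}\right)^{k}
\end{equation*}
in \autoref{thm:main_circuit} smaller than, say, $1/2$: taking $\log_2$, the dominant negative contribution $-kn$ must outweigh $2r\log_2 n + r\log_2|\mathsf{G}| + O(k\log k)$, which forces $\alpha$ to exceed a finite threshold depending only on $|\mathsf{G}|$ and $\delta$. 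The same $k$ automatically satisfies the qualitative hypothesis $r \lesssim k(n-4\log k)/\log n$, while the assumption $r \leq 2^{n/2}$ ensures $k \leq \sqrt{d}$, keeping us inside the validity range of \autoref{thm:brandao}. Next, I would invoke \autoref{thm:brandao} at order $2k$ with fixed precision $\epsilon = 1/2$. Its requirement $T \geq C_0\, n \lceil \log_2(2k)\rceil^2 (2k)^{9.5}(2nk + \log(1/\epsilon))$ reduces, after absorbing the logarithmic factor into one extra power of $k$, to $T \gtrsim n^2 k^{11}$; substituting $k \sim \log_2(n) r / n$ reproduces the advertised threshold $T \geq C n^2 (\log_2(n) r / n)^{11}$. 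Finally, \autoref{thm:main_circuit} applied to this $2k$-design outputs at least $\Omega(d^{2k}/k!)$ distinct unitaries with $\mathcal{C}_\delta(U) \geq r+1$, and since $d^{2k} = 2^{2nk} \geq 2^{2\alpha \log_2(n) r}$ while $\log_2(k!) = O(k\log k) = o(\log_2(n) r)$, this comfortably dominates $\tilde{C}\, 2^{\log_2(n) r}$ for some $\tilde{C} = \tilde{C}(\delta,|\mathsf{G}|)$.

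The hard part is really just the careful tracking of constants rather than any single conceptually deep step. In particular, the seemingly innocuous factor $|\mathsf{G}|^r$ in the error term of \autoref{thm:main_circuit} is exponentially large in $r$, and absorbing it forces $\alpha$ --- and hence the constants $C$ and $\tilde{C}$ appearing in the corollary --- to depend on the universal gate set and on $\delta$. Beyond this, the proof is an exercise in parameter matching: the exponent $11$ inherited from \autoref{thm:brandao} and the $\log(n)/n$ scaling of the optimal $k$ in \autoref{thm:main_circuit} conspire precisely so that the ``$n^2 k^{11}$'' gate budget of BHH translates into the advertised ``$n^2(\log_2(n) r / n)^{11}$'' budget in terms of the target complexity $r$.
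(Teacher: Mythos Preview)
Your proposal is correct and follows precisely the route the paper indicates: the paper states only that the corollary is ``an immediate consequence of \autoref{thm:main_circuit} and \autoref{thm:brandao}'' and gives no further details, so your parameter-matching argument (choosing $k \sim \log_2(n)\,r/n$, invoking BHH at order $2k$ with fixed $\epsilon$, and reading off the count from \autoref{thm:main_circuit}) is exactly what the authors leave implicit.
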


This result establishes a polynomial relation between the length $T$ of $\mathsf{G}$-local circuits and the strong $\delta$-unitary complexity  that may be achieved in such a model.\footnote{Recently, \cite{HM18} showed that higher dimensional local random quantum circuits form approximate designs in $O(n^{1/D}{\rm poly}(k))$ depth, with some (high degree) polynomial dependence on $k$. \autoref{thm:main_circuit} then gives a polynomial growth of complexity for these higher dimensional circuits.}
The relation $T \simeq r^{11}$ a consequence of \autoref{thm:brandao},
which features a similar relation between the degree $2k$ of an approximate $2k$-design and the  circuit size $T$ required to implement it. This relation between complexity and circuit size can certainly be improved, which we will soon discuss, but there are fundamental limits: a lower bound on on the design depth for random circuits is known. A converse result (Proposition~8 in \cite{brandao_local_2016}) states that for $\epsilon\leq 1/4$ and $k\leq d^{1/2}$, the size of random circuits on $n$ qudits must be at least 
\begin{equation}
T \geq \frac{2kn \log q}{q^4 \log k}
\quad
\textrm{to form an $\epsilon$-approximate $k$-design}\,.
\end{equation}
See Sec.~\ref{sec:epcover} for a rederivation of this claim. 

\subsection{Relating two conjectures}

Fix $q=2$, $d=2^n$ ($n$ qubits) and 
suppose that the aforementioned lower bound were not only necessary, but also (approximately) sufficient: $\mathsf{G}$-local circuits of size $T \simeq \frac{2nk}{\log_2 (n)}$ generate (sufficiently accurate) approximate $2k$-designs. Under this assumption, $\mathsf{G}$-local random circuits of size $T$ contain at least
$d^{2k}/(k!)^2$ elements with circuit complexity $r \simeq T$. 
If we assume that $T \leq \tfrac{2n}{\log_2 (n)} \sqrt{d}$, then this bound can be simplified further as
\begin{equation}
K \gtrsim \frac{d^{2k}}{(k!)^2} 
= 2^{2nk -2 \log (k!)}
\gtrsim 2^{2k (n  - \log (k) )}
= d^{\frac{k}{2}}= 2^{\frac{nk}{2}} \simeq 2^{\log_2(n)T} \geq 2^T\,.
\end{equation}
This is essentially 
\autoref{conj:lenny}: the set of all $\mathsf{G}$-local circuits of size $T$ contains an exponentially growing set of elements with complexity $r \simeq T$.
This observation provides a relation between Conjecture~\ref{conj:lenny} (linear growth in complexity) to an existing Conjecture in quantum information \cite{brandao_local_2016}:

\begin{conjecture}[Linear growth in design]
$\mathsf{G}$-local circuits of size $T=O(nk)$ form approximate $k$-designs.
\label{conj:lineark}
\end{conjecture}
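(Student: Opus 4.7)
The plan is to reduce the conjecture to a spectral gap problem for the moment operator of one step of the random circuit, and then bound that gap via frustration-free many-body techniques with constants that are uniform in $k$. Concretely, let $M_k = \mathbb{E}_{i,g}\bigl[ g_{(i)}^{\otimes k} \otimes \bar{g}_{(i)}^{\otimes k}\bigr]$, where $i$ is a uniformly random edge and $g \sim \mu_{\mathsf{G}}$ acts as $g$ on edge $i$ and trivially elsewhere. A size-$T$ local random circuit realizes $M_k^T$, and it forms an $\epsilon$-approximate $k$-design exactly when $\| M_k^T - \Pi_{\mathrm{Haar}}\|$ is suitably small, where $\Pi_{\mathrm{Haar}}$ is the projector onto the joint fixed-point space of $U^{\otimes k}\otimes \bar U^{\otimes k}$ under Haar-random $U$. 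Since $\Pi_{\mathrm{Haar}}$ is also the projector onto $\mathrm{fix}(M_k)$, the task reduces to lower-bounding the spectral gap $\Delta_k := 1-\lambda_2(M_k)$ restricted to the orthogonal complement of $\mathrm{range}(\Pi_{\mathrm{Haar}})$. A bound $\Delta_k = \Omega(1/n)$ with $k$-independent constants would immediately yield the conjectured mixing time $T = O(nk)$.

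Second, I would use Schur--Weyl duality to describe the subspace where the analysis really takes place. The Haar-fixed vectors are spanned by $|P_\pi\rangle$, $\pi \in S_k$, associated to the permutation operators; in the large-$q$ regime these are nearly orthonormal, and $M_k$ acts like a stochastic matrix on configurations in $S_k^n$ --- essentially the statistical-mechanics mapping of \cite{NHJ19}. I would then write $I - M_k = \tfrac{1}{n-1}\sum_i \bigl(I - M_k^{(i)}\bigr)$ as a frustration-free local ``Hamiltonian'' and bound its spectral gap by one of the standard many-body tools: Nachtergaele's martingale method or the detectability lemma. These reduce the problem to two ingredients, namely a single-edge spectral gap $g_{\mathrm{loc}} = \Delta\bigl(M_k^{(i)}\bigr)$ governed by how fast a $\mathsf{G}$-random gate approaches the Haar twirl on $U(q^2)$ at order $k$, and a geometric overlap constant between the local kernels on adjacent edges.

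The main obstacle is ensuring that every step of this analysis has $k$-independent constants. In the BHH proof the factor $k^{9.5}$ enters through three routes: the conversion from the $\mathsf{G}$-random local twirl to an exact Haar twirl eats powers of $k$; the single-edge gap $g_{\mathrm{loc}}$ itself degrades as $k$ approaches $d$; and non-trivial inner products $\langle P_\pi | P_\sigma\rangle = d^{n(\pi^{-1}\sigma)}$ spoil the permutation basis unless $d \gg k^2$. The heart of the challenge is to show that for all $k \lesssim d^{1/2}$ these $k$-factors can be replaced by absolute constants, which \cite{NHJ19} does achieve in the large-$q$ limit via a direct transfer-matrix computation. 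A natural route to the full conjecture is therefore a rigorous robustness argument: for every finite $q \geq 2$, show that $M_k$ is a controlled perturbation of its large-$q$ limit, and that the large-$q$ spectral gap persists uniformly in $k$ up to $k \lesssim d^{1/2}$. Establishing such uniform-in-$k$ stability of the gap --- in particular handling the qubit case $q=2$ where permutation overlaps are largest --- is where I expect the principal technical difficulty to lie.
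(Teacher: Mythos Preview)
The statement you are addressing is labeled a \emph{Conjecture} in the paper and is not proved there; the paper offers no proof to compare against. The authors present it precisely because it is open: they observe that a linear relation between circuit size and design order would, via their \autoref{cor:main_complexity_growth}, imply Brown and Susskind's linear complexity growth (\autoref{conj:lenny}), and they note that the only rigorous result in this direction is \cite{NHJ19} (their \autoref{thm:lineark}), which establishes $T=O(n^2 k)$ in the large-$q$ limit. So your proposal is not a proof of something the paper proves --- it is a research plan for an open problem.

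As a plan, your outline is sensible and essentially recapitulates the known architecture: reduce to a spectral gap for the moment operator, invoke Schur--Weyl / the statistical-mechanics mapping, and apply detectability-lemma or martingale techniques to a frustration-free local Hamiltonian. You also correctly identify the crux: every known execution of this scheme leaks powers of $k$ into the constants, and replacing those by $k$-independent bounds --- uniformly for $q=2$ and all $k\lesssim \sqrt{d}$ --- is exactly what nobody has done. Your final paragraph is honest about this, but be aware that ``show $M_k$ is a controlled perturbation of its large-$q$ limit with a gap that persists uniformly in $k$'' is not a step in a proof; it is a restatement of the conjecture. The large-$q$ analysis of \cite{NHJ19} works because the permutation-basis Gram matrix becomes approximately the identity and multi-domain-wall contributions are suppressed; at $q=2$ neither holds, and there is currently no mechanism in your proposal that would force the $k$-dependence to cancel. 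Until you supply such a mechanism, what you have is a credible description of the landscape, not a proof attempt with an identifiable gap to close.
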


\ni To achieve a linear growth in complexity we need a linear growth in design.

\subsection{Linear growth in design for local random circuits at large local dimension}

We again consider a 1d system comprised of $n$ qudits of local dimension $q$, with total dimension $d=q^n$, and evolve the system by a random circuit consisting of local 2-site unitaries drawn Haar-randomly from $U(q^2)$. The results of \cite{brandao_local_2016} also ensure that such random circuits form approximate $k$-designs when the size is $O(n^2 k^{11})$. Although \autoref{conj:lineark}, a linear design growth in $\mathsf{G}$-local random circuits with local qubits, is currently out of reach, progress was made recently in \cite{NHJ19}, improving the $k$-dependence for Haar-local random circuits in the limit of large local dimension and giving a linear growth in the circuit size to form a unitary $k$-design. 

\begin{theorem}[\cite{NHJ19}] Random quantum circuits on $n$ qudits of local dimension $q$ form approximate unitary $k$-designs when the circuit size is $T = O(n^2k)$ for some $q>q_0$, where $q_0$ depends on the size of the circuit.\footnote{We note that \cite{NHJ19} computed the circuit depth, whereas the discussion here involves the circuit size, giving an extra factor of $n$.}
\label{thm:lineark}
\end{theorem}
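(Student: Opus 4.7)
The plan is to analyze the $k$-fold moment superoperator $\mathcal{T}_{\mathcal{E}}^{(k)}$ of the random circuit ensemble via the standard representation-theoretic reduction, and to bound the spectral gap of the resulting transfer matrix by means of a large-$q$ expansion. Write $\mathcal{T}_{\mathcal{E}}^{(k)}=M_{T}\cdots M_{2}M_{1}$, where each $M_{t}$ is the $k$-fold twirl of a single Haar-random $U(q^{2})$ gate applied to a uniformly chosen pair of nearest-neighbor qudits. By Schur--Weyl duality, each single-gate twirl coincides with the orthogonal projector onto the commutant of the diagonal $U(q^{2})$-action on $(\mathbb{C}^{q^{2}})^{\otimes k}\otimes(\mathbb{C}^{q^{2}})^{\otimes k}$, spanned by the (non-orthogonal) permutation vectors $\{|\sigma\rangle\!\rangle:\sigma\in S_{k}\}$ with Gram matrix entries $\langle\!\langle\sigma|\tau\rangle\!\rangle=q^{2\,\mathrm{cyc}(\sigma\tau^{-1})}$.

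Working site by site in this permutation basis, $\mathcal{T}_{\mathcal{E}}^{(k)}$ becomes the evolution operator of a classical stochastic lattice model on $n$ sites whose local degree of freedom is a permutation $\sigma_{i}\in S_{k}$ and whose nearest-neighbor Boltzmann weights are given by products of Weingarten functions. The Haar twirl $\mathcal{T}_{U}^{(k)}$ corresponds to the stationary distribution in which all $\sigma_{i}$ coincide and are uniform on $S_{k}$, and convergence to an $\epsilon$-approximate $k$-design in diamond distance follows from a uniform lower bound on the spectral gap $\Delta$ of the symmetrized one-step operator via the mixing-time estimate $T\gtrsim\Delta^{-1}\bigl(nk\log q+\log(1/\epsilon)\bigr)$.

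Next I would carry out the large-$q$ expansion of each local transfer block using the Weingarten asymptotics $\mathrm{Wg}(\pi,q^{2})=q^{-2k-2|\pi|}\bigl(\mathrm{Mob}(\pi)+O(q^{-2})\bigr)$, where $|\pi|$ is the Cayley distance of $\pi\in S_{k}$ to the identity. To leading order in $1/q$ the two-site transfer block reduces to an explicit nonnegative matrix on $S_{k}\times S_{k}$ that ferromagnetically prefers $\sigma_{i}=\sigma_{i+1}$, and subleading corrections are uniformly $O(1/q^{2})$. For any fixed target size $T$ one can then choose $q\geq q_{0}(T)$ so that the accumulated corrections over all $T$ gate applications stay below any prescribed tolerance, reducing the claim to bounding the gap of the $q=\infty$ limit.

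The final and hardest step is to establish $\Delta=\Omega(1/n)$ \emph{uniformly in $k$}. Because the limiting interaction is ferromagnetic and local on the Cayley graph of $S_{k}$, relaxation should be governed by diffusion of domain walls along a chain of $n$ sites, yielding $\Delta\asymp 1/n$ via a canonical-paths or Cheeger-type argument applied to this $(k!)^{n}$-state Markov chain. Combined with the mixing estimate above this gives the advertised $T=O(n^{2}k)$. The main obstacle is precisely this $k$-uniform gap: enlarging $k$ grows the state space drastically and one must rule out collective permutation-exchange modes that could relax more slowly than single domain walls. In the large-$q$ regime this is plausible because every elementary transposition of neighboring permutations incurs an energy penalty independent of $k$, but making it rigorous requires careful bookkeeping of the full $S_{k}$-block structure of the transfer matrix and, in particular, showing that the second-largest eigenvalue is attained in the sector corresponding to a single domain wall rather than in a sector whose complexity grows with $k$.
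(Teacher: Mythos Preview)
This theorem is not proved in the paper; it is quoted from \cite{NHJ19}, and the paper only summarizes that work's method in the paragraph following the statement. So the relevant comparison is between your sketch and the approach described there.

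Your route and the one in \cite{NHJ19} share the same stat-mech backbone (permutation degrees of freedom on each site, domain-wall picture at large $q$), but diverge at the key analytic step. You aim for a \emph{spectral gap} bound on the one-step transfer operator and then feed it into a mixing-time estimate $T\gtrsim \Delta^{-1}(nk\log q+\log(1/\epsilon))$. By contrast, \cite{NHJ19} works with the \emph{frame potential}, i.e.\ the $2$-norm distance to the Haar twirl, and uses the exact mapping to write it as the partition function of a triangular lattice model. The contributions are organized by domain-wall sectors; at large $q$ the single-domain-wall sector is computed explicitly and the multi-domain-wall sectors are shown to be subleading in $1/q$. This yields the depth bound directly, without ever isolating a spectral gap.

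The distinction matters precisely at the point you flag as the ``main obstacle'': proving $\Delta=\Omega(1/n)$ \emph{uniformly in $k$}. That is genuinely hard by gap methods --- it is essentially what stands between the $k^{9.5}$ of \cite{brandao_local_2016} and the linear-in-$k$ result --- and your canonical-paths/Cheeger sketch does not close it (you would need to control the full $(k!)^n$-state chain and rule out slow modes in every $S_k$-sector, not just argue plausibility from the ferromagnetic form). The frame-potential route sidesteps this entirely: rather than bounding the second-largest eigenvalue, it bounds a particular matrix element (the frame potential) by explicit domain-wall combinatorics, and the $k$-dependence enters only through counting single domain walls, which is where the linear scaling comes from. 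So your proposal is a reasonable program but has a real gap at exactly the step that distinguishes the linear result from the polynomial one; the actual proof avoids that gap by changing the quantity being bounded.
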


The approach of \cite{NHJ19} was to consider the frame potential, capturing the 2-norm distance to forming an approximate design, and make use of an exact statistical mechanical mapping \cite{NVH17,RQCstatmech} in order to write the frame potential as the partition function of a triangular lattice model. The contributions to the partition function can be interpreted as domain walls in the lattice model. In the limit of large $q$, \cite{NHJ19} showed that only a simple sector of domain walls contribute, allowing for the calculation of the $k$-design circuit depth. More precisely, by computing the single domain wall terms and showing that the multidomain wall terms contribute at subleading order in $1/q$, it was proved that local random circuits exhibit a linear growth in design for some $q>q_0$, where $q_0$ depends on the circuit size $T$ and moment $k$.

\autoref{thm:lineark} and \autoref{cor:local_random_complexity} allow us to establish \autoref{conj:lenny} for local random circuits with Haar-random 2-site unitaries in the limit of large $q$. 

\begin{corollary}[Linear complexity growth]
Given the set of local random circuits of size $T$ at large $q$, most circuits have strong complexity $\Omega(T)$, \ie growing linearly in $T$ for a long time.
\end{corollary}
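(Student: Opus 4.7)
The plan is to chain \autoref{thm:lineark} with \autoref{thm:main_circuit} and \autoref{lem:weights-intro}, following the same template as \autoref{cor:local_random_complexity} but substituting the improved, large-$q$ design-growth estimate. Since the conjecture treats the qudit count $n$ as fixed and asks for linear growth in $T$, any polynomial-in-$n$ factor may be absorbed into the implied constants.

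First, given a circuit size $T$ and $q \geq q_0(T)$, I would use \autoref{thm:lineark} in inverted form to extract an integer $k = \lfloor c\, T/n^2 \rfloor$ for which the ensemble of all $\mathsf{G}$-local random circuits of size $T$ forms an $\epsilon$-approximate $2k$-design for some constant $\epsilon \in (0,1)$. Feeding this $k$ into \autoref{thm:main_circuit} with a small fixed $\delta$ and setting $r = c'\, kn/\log n$ with a sufficiently small constant $c'>0$ places $r$ safely inside the admissible window $r \lesssim k(n-4\log k)/\log n$, because $k \leq q^{n/4}$ throughout the regime of interest. The theorem then produces at least $\gtrsim d^{2k}/k!$ distinct unitaries of strong $\delta$-complexity strictly exceeding $r = \Omega(kn/\log n) = \Omega(T/(n \log n))$, which for fixed $n$ is $\Omega(T)$.

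To upgrade this distinct-count statement into one about \emph{most} sampled circuits, I would invoke \autoref{lem:weights-intro}: every unitary in the support of the $2k$-design carries probability at most $\max_i p_i \lesssim (2k)!/d^{4k}$. Since the number of distinct unitaries realizable by circuits of size $\leq r$ is bounded above by $(n^2 |\mathsf{G}|)^r$, the total weight the ensemble places on low-complexity outcomes is at most $(n^2|\mathsf{G}|)^r \cdot (2k)!/d^{4k}$. For $r = \Theta(kn/\log n)$ this is precisely the quantity already shown, in the proof of \autoref{thm:main_circuit} via its correction term $3 d^2 n^{2r} |\mathsf{G}|^r \bigl(1024 k^4/(d(1-\delta)^2)\bigr)^k$, to be exponentially small in $k$, and hence in $T$. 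Therefore a random size-$T$ circuit has strong $\delta$-complexity $\Omega(T)$ with probability approaching one.

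The main obstacle is bookkeeping rather than any new idea: one must check that three parameter constraints --- the large-$q$ hypothesis $q \geq q_0(T)$ of \autoref{thm:lineark}; the smallness of the correction term in \autoref{thm:main_circuit} under $r = \Theta(kn/\log n)$; and the negligibility of $(n^2|\mathsf{G}|)^r \cdot (2k)!/d^{4k}$ --- can be met simultaneously over the ``long time'' window, which extends up to $T \lesssim n^2 q^{n/4}$ via the constraint $k \leq q^{n/4}$. Each individual constraint is polynomial in $n$ and $k$ against exponential slack in $d = q^n$, so all three are easily reconciled once $q$ is chosen sufficiently large in terms of $T$, matching the hypothesis of \autoref{thm:lineark} and of \autoref{cor:RQCcomp}.
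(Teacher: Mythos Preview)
Your chaining of \autoref{thm:lineark} with the machinery behind \autoref{thm:main_circuit} is exactly the route the paper takes; the paper itself offers only a one-line justification, so your steps 1--2 are a faithful expansion of it. (Minor slip: \autoref{thm:lineark} concerns circuits with Haar-random $U(q^2)$ gates, not a finite gate set $\mathsf{G}$; the set $\mathsf{G}$ enters only through the complexity-defining measurements $\mathsf{M}_r$.)

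Step 3, however, is both unnecessary and incorrect as written. You bound the total weight on low-complexity outcomes by $(n^2|\mathsf{G}|)^r \cdot (2k)!/d^{4k}$, reasoning that ``the number of distinct unitaries realizable by circuits of size $\leq r$'' is at most $(n^2|\mathsf{G}|)^r$. But that counts $|\mathsf{G}_r|$, not the number of \emph{design unitaries} $U_i$ with $\mathcal{C}_\delta(U_i)\le r$. Low strong complexity means a short \emph{distinguishing procedure} exists for $U_i$; it does not place $U_i$ in or near $\mathsf{G}_r$, and arbitrarily many design elements could share this property. Hence $|\mathsf{G}_r|\cdot p_{\max}$ does not control the relevant weight.

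The fix is to skip the detour entirely. The ``most circuits'' assertion is precisely the probabilistic bound \eqref{eq:probability_bound_channels}, which is the \emph{input} to \autoref{thm:main_circuit} rather than a downstream consequence to be recovered via \autoref{lem:weights-intro}. That inequality is obtained by a union bound over the finitely many measurement procedures in $\mathsf{M}_r$ --- the correct finite index set --- and already says $\Pr[\mathcal{C}_\delta(U)\le r]$ is exponentially small in $k$ in the stated window. You even quote essentially this expression as ``the correction term''; just combine it directly with \autoref{thm:lineark} and you are done.
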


Although the \autoref{thm:main_circuit} still applies for local Haar-random quantum circuits, giving a lower bound on the number of distinct unitaries with high complexity. Its meaning becomes less clear when we have a continuous ensemble. We can consider an ensemble of finite cardinality by constructing an $\varepsilon$-covering of the set of random circuits. We review the notion of an $\varepsilon$-covering in Sec.~\ref{sec:epcover} and give a bound on the cardinality of a covering for local random circuits. Constructing a coarse net then shows that exponentially many random quantum circuits, with Haar-random 2-site unitaries, have high complexity.

Lastly, we emphasize that we have not proved linear complexity growth up to time scales of order $d$. While taking a large enough $q$ will ensure linear design growth for times exponential in $n$, such a limit still pushes the true exponential time scales of interest, $t\sim d=q^n$, out of reach. Proving an optimal design growth for local random circuits away from the large $q$ limit would allow us to better probe late-time complexity.

\subsection{Brownian circuits/Stochastic quantum Hamiltonians}
There also exist continuous-time models of chaotic dynamics, analogous to random circuits, which scramble in $O(\log n)$ time \cite{FastScrambling}. In a similar spirit to models of random walks on the unitary group, one can define a one-parameter family of Hamiltonians which generate a time-dependent unitary evolution.
The Hamiltonian on $n$ qubits at a time step $s$ is given by a sum of random all-to-all 2-body interactions, meaning we sum over all possible 1 and 2-local interactions with independently chosen Gaussian random couplings
\begin{equation}
H_s = \sum_{i<j}\sum_{\alpha,\beta} J_{s,i,j,\alpha,\beta} S_i^\alpha S_j^\beta\,,
\label{eq:stocH}
\end{equation}
where $S_i^\alpha$ is a Pauli operator acting on site $i$ with $\alpha=\{0,1,2,3\}$. The couplings are each drawn independently from a Gaussian distribution with zero mean and variance $\sigma^2$. Not only are the couplings random in space, but are further chosen randomly at each time-step $s$. 
The time evolution to time $t$ is then given by
\begin{equation}
U_t = \prod_{s=1}^t \mathrm{e}^{-i H_s \delta t}\,,
\end{equation}
where we consider the continuum limit $\delta t\ra 0$ with the variance of the couplings scaling as $\sigma^2 = J/\delta t $ so that the interactions strength increases proportional to the inverse time step and where $J$ is a constant. 

It was shown in \cite{onorati_mixing_2017}, using similar techniques to \cite{brandao_local_2016}, that these Brownian circuits form $k$-designs in polynomial time. 

\begin{theorem}[Corollary~10 in \cite{onorati_mixing_2017}] \label{thm:brownian_designs}
For $d=2^n$ and $\epsilon >0$, Then the ensemble of time-evolutions by stochastic Hamiltonians in Eq.~\eqref{eq:stocH}, forms an $\epsilon$-approximate $k$-design for times
\begin{equation}
t \geq C \lceil \log_2 (k) \rceil^2 k^{9.5} (nk + \log (1/\epsilon))\,, 
\label{eq:brownian_designs}
\end{equation}
where $C>0$ is a constant.
\end{theorem}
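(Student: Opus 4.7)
The plan is to convert the stochastic Hamiltonian evolution into a Markovian semigroup on the $k$-fold replica space and then lower bound its spectral gap. First I would apply Itô's formula (equivalently a second-order Dyson expansion in $\delta t$, using $\mathbb{E}[J_{s,i,j,\alpha,\beta}] = 0$ and $\mathbb{E}[J_{s,i,j,\alpha,\beta}^2] = J/\delta t$) to the averaged $k$-fold twirl
\begin{equation*}
\Phi_t^{(k)}(X) \;=\; \mathbb{E}\bigl[\,U_t^{\otimes k}\, X \,(U_t^\dagger)^{\otimes k}\,\bigr]\,.
\end{equation*}
The linear-in-$\delta t$ terms vanish by Gaussianity, and the surviving quadratic terms give a time-independent Lindbladian
\begin{equation*}
\partial_t \Phi_t^{(k)} \;=\; \mathcal{L}_k\, \Phi_t^{(k)}\,, \qquad \mathcal{L}_k \;=\; -\tfrac{J}{2}\sum_{i<j}\sum_{\alpha,\beta}\bigl(\mathrm{ad}^{(k)}_{S_i^\alpha S_j^\beta}\bigr)^2\,,
\end{equation*}
where $\mathrm{ad}^{(k)}_A$ denotes the adjoint action of the lifted operator $A^{(k)} = \sum_{\ell=1}^k A^{(\ell)}$ on $(\mathbb{C}^d)^{\otimes k}$. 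Thus $\Phi_t^{(k)} = e^{t \mathcal{L}_k}$ is a continuous, time-homogeneous semigroup whose generator is a sum of Hermitian, negative-semidefinite, two-qudit terms.

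Second, I would identify $\ker(\mathcal{L}_k)$ via Schur--Weyl duality. Every summand annihilates exactly those operators that commute with $U^{\otimes k}$ for all $U \in U(d)$; since the two-body Pauli products $S_i^\alpha S_j^\beta$ generate $\mathfrak{su}(d)$, the kernel is precisely the commutant of the diagonal $U(d)$-action, spanned by permutations $P_\pi$ with $\pi \in S_k$. This is exactly the fixed-point subspace of the Haar $k$-th moment projector. Convergence to the Haar moment in diamond norm on the twirl channel is then controlled by the spectral gap $\Delta_k := \Delta(\mathcal{L}_k)$: the standard $2$-norm-to-diamond-norm conversion (as used in \cite{brandao_local_2016}) shows that an $\epsilon$-approximate $k$-design is obtained once $t \gtrsim \Delta_k^{-1}\bigl(nk + \log(1/\epsilon)\bigr)$, where the additive $nk$ absorbs the dimensional overhead $\log d^k$ of the replica Hilbert space.

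The substantive remaining work, and the main obstacle, is the gap estimate. Here I would adapt the strategy of \cite{brandao_local_2016}: rescale each two-local $-(\mathrm{ad})^2$ summand so that it becomes an orthogonal projector onto its own kernel, recognize $\mathcal{L}_k$ as a frustration-free local Hamiltonian on the $k$-replica space, and compare its gap to that of any single pair-interaction via Nachtergaele's martingale method (or a detectability-lemma-type bound). This reduces the problem to an approximate-orthogonality estimate between the local kernels in the irreducible representations of $S_k$, which is the heart of the BHH analysis and where the $k^{9.5}$ dependence originates. Because the model is all-to-all rather than geometrically one-dimensional, the pair-geometry mixing step does not produce the additional factor of $n$ that appears in \autoref{thm:local_random_circuits}, and one inherits $\Delta_k = \Omega\!\bigl(1/(\lceil \log_2 k\rceil^2\, k^{9.5})\bigr)$. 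Substituting this gap into the convergence criterion above yields the stated time. The hard part is faithfully porting the discrete BHH projector-comparison argument to the continuous Lindbladian setting; all other steps (the Itô derivation, the Schur--Weyl characterization of $\ker \mathcal{L}_k$, and the gap-to-design conversion) are essentially structural and follow standard templates.
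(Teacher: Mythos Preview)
The paper does not prove this statement at all: \autoref{thm:brownian_designs} is quoted verbatim as Corollary~10 of \cite{onorati_mixing_2017} and is used as a black-box input to the complexity-growth discussion. There is therefore no ``paper's own proof'' to compare your proposal against.

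That said, your outline is a reasonable sketch of the strategy actually carried out in \cite{onorati_mixing_2017}: derive a Lindbladian generator for the averaged $k$-fold twirl from the stochastic dynamics, identify its kernel via Schur--Weyl duality, and then import the BHH spectral-gap machinery to obtain the $k^{9.5}$ scaling. The one point I would flag is your claim that the all-to-all geometry cleanly removes the factor of $n$ present in \autoref{thm:local_random_circuits}. In the actual argument the pair-interaction structure still enters the Nachtergaele/martingale comparison, and the precise accounting of how the $n$-dependence cancels (or is absorbed into the constant $C$, which depends on $J$) is more delicate than ``no geometric locality, so no $n$.'' If you intend to reconstruct the proof rather than cite it, that step deserves explicit justification.
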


For the Brownian circuit models, the constant prefactor $C$ depends on the local dimension, here chosen to be 2, but also on the interaction strength of the couplings $J$, $C \sim 1/J$, meaning if the interactions are stronger then the depth required to form a design decreases accordingly. 

We can again use the polynomial relation between complexity and design to discuss complexity growth. \autoref{thm:main_circuit} and \autoref{thm:brownian_designs} together give that Brownian circuits have a complexity growing polynomially in time as $\Omega(t^{1/11})$. 

\subsection{Nearly time-independent Hamiltonian dynamics}
There is another random quantum circuit-like construction of a time-dependent Hamiltonian with varying couplings over discrete time steps. This ``nearly time-independent'' model of \cite{Nakata16} forms $k$-designs in a depth $O(n^2 k)$ up to moments $k=o(\sqrt{n})$, achieving the nearly optimal lower bound with a linear growth in design for a short time.

Consider a 1d system of $n$ qudits, with $d=q^n$, and define a time-dependent set of random couplings
\begin{equation}
\CJ(t,g) = \Big\lbrace \lambda/(\lfloor t/2\rfloor +1)\,,~~ \lambda\in [-g/2,g/2]\Big\rbrace\,,
\end{equation}
as well as two ensembles of Hamiltonians with time-dependent couplings
\begin{align}
\CE_Z(t) &= \bigg\lbrace -\sum_{j<k} h_{jk} Z_j Z_k - \sum_j b_j Z_j\,, ~~{\rm with}~~ h_{jk}\in \CJ(t,h)\,,~ b_j\in \CJ(t,b)\bigg\rbrace \\
\CE_X(t) &= \bigg\lbrace -\sum_{j<k} h_{jk} X_j X_k - \sum_j b_j X_j\,, ~~{\rm with}~~ h_{jk}\in \CJ(t,h)\,,~ b_j\in \CJ(t,b)\bigg\rbrace \,,
\end{align}
where $g=\lfloor t/2\rfloor /2$ and $b = \lfloor t/2\rfloor +1/2$. We then define the time-evolution of our system: we evolve by an $X$-type Hamiltonian $H_X \sim  \CE_X $ at even time steps and a $Z$-type Hamiltonian $H_Z \sim \CE_Z$ at odd time steps. Then the unitary time-evolutions form an $\epsilon$-approximate $k$-design for $k = o(n^{1/2})$, after $T$ time steps, where
\begin{equation}
T\geq (k+1/2 + (1/n) \log_2 (1/\epsilon))\,.
\end{equation}

This construction forms unitary $k$-designs almost linearly in time, with the caveat that the time scale is limited to $\sim \sqrt{n}$. Thus we get a linear growth in design at early times, but not exponentially in $n$. Consequently, this implies a linear growth in complexity at (very) early times. 

\subsection{Comment on time-independence}
We have discussed a few explicit models of complexity growth in systems that are random in both space and time. As we emphasized, one of our results is that a polynomial growth in design implies a polynomial growth growth in complexity (\autoref{cor:main_complexity_growth}). Thus, the random circuit and Brownian circuit models, which form approximate $k$-designs in poly$(k)$ depth, are also explicit examples of systems with a long time polynomial growth in complexity.

But for an ensemble of time-evolutions to form a $k$-design, randomness in time is likely essential. For instance, consider an ensemble of time-evolutions generated by an ensemble of Hamiltonians, $\CE_t = \{ \mathrm{e}^{-iHt}, ~H\in\CE_H\}$, where $\CE_H$ could be a disordered spin system or any ensemble of random Hermitian matrices. The rigid structure of eigenvalues then prohibits the late-time Haar-randomness.

Interestingly, the Gaussian unitary ensemble (GUE), an ensemble of $d\times d$ random Hermitian matrices with a unitarily-invariant measure, does come close to an approximate $k$-design in 2-norm for moments $k\ll d$ at a specific time-scale $t\sim \sqrt{d}$ \cite{ChaosRMT}. But at later times, the 2-norm distance between the ensemble of unitaries generated by GUE Hamiltonians and the Haar ensemble becomes large. More generally, one expects that any ensemble of unitary evolutions by time-independent Hamiltonians will not form a $k$-design at late times. A general argument for this is as follows \cite{ChaosDesign}, under the exponential map $\lambda \ra \mathrm{e}^{i\lambda t}$, the eigenvalues of a Hamiltonian are distributed as time-evolving phases on the unit circle. In the limit $t\ra \infty$, the phases become uncorrelated and uniformly distributed, unlike the correlated and logarithmically repelling eigenvalues of Haar random unitaries. Thus, to understand the complexity growth of (ensembles of) time-independent Hamiltonian evolution, we would need to look beyond their design properties for an alternative approach, for instance, by studying the approximate invariance of the ensemble \cite{ChaosRMT,kinvchaos}.

\section{Complexity in holographic systems} \label{sec:holography}

Much of the recent interest in quantum complexity in the high-energy literature has centered on the conjectured relation between complexity growth and the long-time growth of black holes interiors \cite{SusskindCCBH14,SScomp14,SusskindEnt14}.
More specifically in the context of the AdS/CFT correspondence, the region behind the horizon of an eternal AdS-Schwarzschild black hole grows linearly in time for an exponential time ($t\sim \mathrm{e}^n)$. 
The holographic picture is a two-sided geometry connected by a wormhole, where the throat of the wormhole is growing in time. 
The claim is that the quantum complexity of the dual CFT state is the long-time linearly increasing quantity which captures the wormhole growth. There have been a number of proposals for what bulk quantity actually computes the complexity, including the volume and action of the AdS wormhole. The complexity/volume conjecture states the the computational complexity of the boundary state is equal to the volume of the wormhole. More precisely, the complexity of time-evolved thermofield double state of the two boundary CFTs is equal to spatial volume behind the horizon of the two-sided geometry on a maximal time slice \cite{SScomp14}. The `complexity equals action' conjecture posits that the action computed on a certain region of the bulk geometry which extends behind the horizon (the Wheeler-DeWitt patch), is the quantity which is dual to the complexity \cite{CABH15,CA15}. A lot of progress has been made checking these conjectures and studying complexity growth in holographic systems, see for instance \cite{holocomp1,holocomp2,holocomp3,holocomp4,holocomp5,holocomp6,holocomp7}. 

In this work we have rigorously computed the complexity growth in a number of random circuit models, by relating the growth in design to the growth of complexity, and were able to prove a linear growth in complexity for local random circuits in the limit of large local dimension (albeit, not for an exponentially long time). As we mentioned, the connection between unitary designs and quantum complexity will likely not inform complexity growth in holography as evolution by time-independent Hamiltonians will not converge to approximate designs. Thus, to study complexity growth in holography we need to explore properties beyond the Haar-randomness of the evolution. 

\subsubsection*{Strong complexity in the bulk}
We will briefly discuss why we believe our proposed strong definition of complexity (in terms of a distinguishing measurement), is congruent with expectations from the bulk and might be more suited for holography than the standard definition in terms of the circuit complexity. 

One feature we expect complexity growth will exhibit in holography, and fast scrambling systems more generally, is the switchback effect \cite{SScomp14}. Consider a time-evolved local operator $\op(t) = \mathrm{e}^{-iHt} \op \mathrm{e}^{iHt}$ (sometimes called a precursor), where $\op$ might be a single site Pauli. For such an operator, we anticipate a delay in the onset of the linear complexity growth. For the traditional definition of complexity, consider the minimal circuit approximating the evolution operator $\mathrm{e}^{-iHt}$. The reason for this delay is the exact cancellation of gates outside the lightcone of the spreading operator. Once the operator grows to be the size of the system (more precisely, to have support on weight $n$ Pauli operators) after a timescale called the scrambling time, we expect the complexity of $\op(t)$ to begin its long time linear growth. Such an effect is also present in the bulk for both complexity-volume and action conjectures. This feature is also present in complexity growth of $\op(t)$ under the strong definition of complexity in \autoref{def:state_complexity}. To be concrete, consider a system of $n$ qubits and the evolved state $\mathrm{e}^{-iHt} \op \mathrm{e}^{iHt}\ket{\psi_0}$, where $H$ is a chaotic but local Hamiltonian and we take $\ket{\psi_0}$ to be an unentangled product state. Prior to the scrambling time, the optimal measurement to distinguish the evolving state from the maximally mixed state is a simple measurement of a qubit outside the lightcone of the evolving operator. It is not until the scrambling time, when operator has grown to have support on all sites, that the complexity of the distinguishing measurement begins to grow. 

A more interesting example, where the strong and weak definitions of complexity differ, is that of one-clean qubit. This is essentially the argument given in \autoref{lem:state_comp_stronger}, to prove that measurement complexity is a stronger definition than standard circuit complexity. Consider a simple initial state $\ket{\psi_0}$, which has been evolved for an exponential time such that $\ket{\psi(t)}$ is maximally complex. If we add a single unentangled qubit to the state $\ket{\psi(t)}\otimes \ket{0}$, then the minimal circuit complexity will be unchanged, but maximal potential complexity increases and the complexity of the state can continue to grow for a long time until it saturates at the new maximal value. For the complexity of a distinguishing measurement, adding a single clean qubit resets the complexity to an order one value, as the optimal measurement is simply the projection onto the clean qubit.  Ref.~\cite{brown_second_2018} proposed the notion of uncomplexity as the difference of the complexity of a state or unitary from its maximal complexity, where uncomplexity can be thought of as a resource to do useful computation. As we described, our strong definition of complexity directly encodes this potential for useful quantum computation.

\subsubsection*{Entanglement growth by design}
The suggestion that complexity be the dual of the long-time geometric growth in the bulk was motivated by the observation that the wormhole grows long past the time-scales at which entropic quantities saturate. Given that we have discussed long-time growth in complexity from a long-time growth in design, it is worth commenting on the saturation of entropies after a short growth in design order.

The entanglement entropies for $k$-designs were studied in \cite{Liu2018}.
Specifically, they looked at the R\'enyi-$\alpha$ entropies of a density matrix $\rho$: $S^{(\alpha)}(\rho) = \tfrac{1}{1-\alpha} \log \left(\Tr (\rho^\alpha) \right)$. For any state, the R\'enyis are bounded above and below by the min-entropy $S_{\min}(\rho) := \lim_{\alpha \to \infty}S^{(\alpha)} (\rho) = -\log (\|\rho \|_\infty)$.\footnote{To see this, recall the relation between Schatten-$\alpha$ norms in $d$ dimensions: $\|\rho \|_\infty \leq \|\rho\|_\alpha \leq d^{1/\alpha}\|\rho \|_\infty$. This ensures that for any state $\rho$, we have
$S_{\rm min} (\rho) \leq S^{(\alpha)}(\rho) \leq S_{\rm min}(\rho) + \tfrac{\log d}{\alpha}$. Note that as we take $\alpha$ to be greater than $n$, these R\'enyi entropies concentrate ever sharper around the min-entropy.} For an $n$-qubit system, consider the reduced density matrix $\rho_A = \Tr_{\bar A} \ketbra{\psi}$ on a subsystem $A$ consisting of half the qubits, so that $d_A = d_{\bar A}$. Ref.~\cite{Liu2018} showed that for states $\ket\psi$ drawn from a $(k>\log d)$-design, the min-entropy of $\rho_A$ is nearly maximal.
Therefore, all entropies are nearly maximal once the design order is $k\approx n$. Considering then the time-evolved states of a fast-scrambling system which forms unitary designs linearly in time, all entropies will saturate at a time of order $n$. Our arguments ensure complexity growth of approximate $k$-designs well beyond this entropy saturation threshold. 

\section{Proof of the main results}
\label{sec:results}

\subsection{Motivating example computations for Haar random states}\label{sub:haar_complexity}

\subsubsection{Most states have high complexity}

The Hilbert space of $n$ qudits is enormous, $d=q^n$. Nonetheless, only a tiny fraction of all possible (pure) quantum states seems to be useful for quantum computation, see e.g.\  \cite{gross_most_2009}. 
Strong state complexity (\autoref{def:state_complexity}) captures this curious aspect. In order to get a quantitative handle on the set of all pure states we endow it with the uniform measure $\mathrm{d} \psi$ that is induced by the Haar measure on the unitary group $U(d)$. 
Then, random pure states $| \psi \rangle \! \langle \psi|$ behave like the maximally mixed state $\rho_0$ in expectation. This behavior extends to the outcome statistics of arbitrary (fixed) measurements:
\begin{equation}
\mathbb{E}_{\ket{\psi}} \left[ \Tr \left( M | \psi \rangle \! \langle \psi| \right) \right] = \Tr \left( M \mathbb{E}_{\ket{\psi}} \left[ | \psi \rangle \! \langle \psi| \right] \right) = \Tr \left( M \rho_0 \right)\,.
\end{equation}
Concentration of measure (Levy's lemma) ensures that deviations from this average case behavior are exponentially suppressed in concrete instances:
\begin{equation}
\mathrm{Pr} \left[ \left| \Tr \left( M (| \psi \rangle \! \langle \psi| - \rho_0 ) \right) \right| \geq \tau\right] \leq 2 \exp \left( - \frac{d \tau^2}{9 \pi^3} \right) \quad \textrm{for any} \quad \tau \geq 0.
\label{eq:haar_concentration}
\end{equation}
We refer to \autoref{prop:measure_concentration} in the appendix for a proof of this well-known result.
We can combine this assertion with a union bound (Boole's inequality) to conclude for any $r \in \mathbb{N}$ and $\delta \in (0,1)$
\begin{align}
\mathrm{Pr} \left[ \mathcal{C}_\delta (| \psi \rangle) \leq r \right]
&= \mathrm{Pr} \left[ \max_{M \in \mathsf{M}_r} \left| \Tr \left( M (| \psi \rangle \! \langle \psi| - \rho_0) \right) \right| \geq 1- d^{-1} -\delta \right] \nn
&\leq 2.0072 |\mathsf{M}_r| \exp \left( - \frac{d (1- \delta)^2}{9 \pi^3} \right)\,.
\label{eq:haar-probability}
\end{align}
Suppose that $\mathsf{M}_r$ arises from combining at most $r$ elements of a fixed universal gate set $\mathsf{G} \subset U(q^2)$. A naive counting argument reveals $\left| \mathsf{M}_r \right| \leq 2d n^r | \mathsf{G}|^r$. We conclude that the $\mathrm{Pr} \left[ \mathcal{C}_\delta (| \psi \rangle) \leq r \right]$ remains exponentially suppressed (in $d=q^n$) until 
\begin{equation}
r \simeq \frac{q^n}{\log (n)}\,.
\label{eq:haar_complexity}
\end{equation}
To summarize, a random state is exceedingly likely to have an exponentially large strong $\delta$-state complexity. 

The Haar measure has another desirable property. It is fair in the sense that it assigns the same (infinitesimal) weight to each pure state. Such perfectly flat probability distributions allow for turning the probabilistic statement \eqref{eq:haar-probability} into a quantitative one. From the definition of probability, $\mathrm{Pr}\left[ C_\delta (|\psi \rangle) \leq r \right]$ corresponds to the ratio of low complexity states over all states.
Thus, Eq.~\eqref{eq:haar-probability} ensures that the fraction of low complexity states remains exponentially tiny until $r \simeq q^n/\log (n)$. In other words: \emph{most pure states have exponentially large complexity.}

\subsubsection{Most high-complexity states are far apart}\label{subsec:far-apart}

In the previous subsection, we saw that concentration of measure \eqref{eq:haar_concentration} allows us to conclude that most quantum states have exponentially high state complexity. 
This argument, however, does not (yet) tell us anything about the geometric separation between high complexity states. In principle, a large fraction of high-complexity states could result from tiny perturbations of only a few well-separated high-complexity core states. In other words, high-complexity states could come in few tightly packed clusters, in which case the effective number of high-complexity regions could still be comparatively small. 

The probabilistic method \cite{alon_probabilistic_2016} allows us to prove that extreme clustering cannot occur: \emph{there are exponentially many high complexity states whose pairwise distance is almost maximal.}

We show this statement by induction based on two features of Haar random states. 
Firstly, we use the main result from the previous subsection. Choose $r \lesssim q^n/\log (n)$ such that Eq.~\eqref{eq:haar-probability} ensures
\begin{equation}
\mathrm{Pr} \left[ \CC_\delta (| \psi \rangle) \leq r \right] \leq 2.0072 | \mathsf{M}_r| \exp \left(- \frac{d(1-\delta)^2}{9 \pi^3} \right) \leq \frac{1}{2}\,.
\label{eq:probabilistic-method1}
\end{equation}
The parameter $r$ is chosen such that Haar random states exceed this complexity with probability $1/2$. 
Concentration of measure also implies that a Haar-random state is very likely to be far away from any fixed state $| \phi \rangle \! \langle \phi|$. For any $\Delta \in (0,1)$,
\begin{align}
\mathrm{Pr} \left[ \tfrac{1}{2} \left\| \ketbra{\psi} -\ketbra{\phi} \right\|_1 \leq 1-\Delta \right] = \mathrm{Pr} \left[ | \langle \psi | \phi \rangle|^2 \geq \Delta^2 \right]
\leq
3\exp \left(-\frac{\Delta^2d}{9 \pi^3}\right). \label{eq:probabilistic-method2}
\end{align}
This bound readily follows from Eq.~\eqref{eq:haar_concentration} (set $M=| \phi \rangle\! \langle \phi|$) and elementary modifications.

The first step in our inductive argument is simple. Eq.~\eqref{eq:probabilistic-method1} asserts that the probability of Haar-randomly sampling a low complexity (at most $r$) state is smaller than $1/2$. This is equivalent to stating that the probability of Haar-randomly sampling a high complexity (larger than $r$) is at least $1/2$. Importantly, this assertion implies  that such a state exists, because the probability of sampling one is strictly positive.
Choose one such state $| \phi_1 \rangle$ as the first state in our list.

To construct the second state in our list, we refine this probabilistic existence argument. 
The probability of Haar-randomly sampling a low complexity state \emph{or} a state that is too close to $| \phi_1 \rangle$ is bounded by
\begin{align}
&\mathrm{Pr} \left[ \CC_\delta (|\psi \rangle) \leq r \cup \tfrac{1}{2} \| \ketbra{\psi}-\ketbra{\phi_1}\|_1 \leq 1- \Delta \right] \nn
&\qquad \leq
\mathrm{Pr} \left[ \CC_\delta (|\psi \rangle) \leq r \right] 
+ \mathrm{Pr} \left[ \tfrac{1}{2} \| \ketbra{\psi} - \ketbra{\phi_1} \|_1 \leq 1-\Delta \right] 
\leq  \frac{1}{2}+3 \exp \left( - \frac{\Delta^2 d}{9 \pi^3}\right).
\end{align}
By contraposition, the probability of sampling a state that has high complexity \emph{and} is simultaneously far away from $| \phi_1 \rangle$ is at least $\tfrac{1}{2}-3 \exp (-\tfrac{\Delta^2d}{9 \pi^3})>0$. 
This implies the existence of such a state. Choose one such state $| \phi_2 \rangle$ and append it to the list: $\left\{ \ket{\phi_1},\ket{\phi_2}\right\}$.

We can now inductively repeat this probabilistic existence argument and iteratively append distant high-complexity states to the list $\left\{ \ket{\phi_1},\ldots,\ket{\phi_N} \right\}$. This construction only breaks down once the list cardinality $N$ counterbalances exponential suppression:
$\frac{1}{2}-3N \exp (-\tfrac{\Delta^2 d}{9 \pi^3}) \leq 0$,
or equivalently
$N \geq  \tfrac{1}{6}\exp (\tfrac{\Delta^2d}{9 \pi^3})$.
Beyond this threshold, we cannot use simple union bounds and concentration of measure to ensure existence of another list element.
Such a threshold, however, scales exponentially in the Hilbert space dimension. We conclude that the list $\left\{ \ket{\phi_1},\ldots,\ket{\phi_N}\right\}$ contains $N = \tfrac{1}{6} \exp (\tfrac{\Delta^2 d}{9 \pi^3})$ 
high complexity states whose pairwise trace distance is at least $1-\Delta$.

We conclude this subsection with providing a bit of context. Existence proofs based on strictly positive probabilities date back to Erd\H{o}s who developed them to solve several important problems in graph theory. Today, this technique is known as the \emph{probabilistic method}  and does constitute an important tool in applied math, combinatorics, and theoretical computer science  \cite{alon_probabilistic_2016}.

\subsection{Proof of \autoref{thm:main_state}}

Haar-random states result from applying a Haar-random unitary $U \in U(d)$ to an arbitrary starting state, say $| \psi_0 \rangle$. 
Now suppose that this unitary $U$ is not chosen from the Haar measure, but from an approximate $2k$-design. By definition, this ensures that the first $2k$ moments of $| \psi \rangle \! \langle \psi| = U | \psi_0 \rangle \! \langle \psi_0| U^\dagger$ accurately approximate the corresponding Haar moments.
While this is too coarse to deduce exponential concentration \eqref{eq:haar_concentration} (this would require matching behavior for \emph{all} moments), polynomial concentration arguments do apply. 
Fix a measurement $M \in \mathbb{H}_d$ and let $\bar{M} = M - \frac{\Tr(M)}{d} \mathbb{I}$ denote its traceless part. Markov's inequality then implies that for any $\tau >0$
\begin{align*}
\mathrm{Pr} \left[ \left| \Tr \left( M (\psi \rangle \! \langle \psi|-\rho_0) \right)\right| \geq \tau \right] = \mathrm{Pr} \left[ \left( \Tr \left( \bar{M} | \psi \rangle \! \langle \psi| \right) \right)^{2k} \geq \tau^{2k} \right]
\leq \tau^{-2k} \mathbb{E} \left[ \Tr \left( \bar{M} | \psi \rangle \! \langle \psi| \right)^{2k} \right]\,.
\end{align*}
The final expectation value corresponds to a moment of order $2k$. This is the largest moment that still approximately exhibits Haar-random behavior. Explicit bounds can be derived by exploiting this similarity and we refer to \autoref{prop:state_moments} below for a technical derivation:
\begin{equation}
\mathrm{Pr} \left[ \left| \Tr \left( M (\ketbra{\psi}-\rho_0) \right)\right| \geq \tau \right]
\leq (1+\epsilon) \left( \frac{2k}{\tau \sqrt{d}}\right)^{2k}\,.
\end{equation}
Qualititatively, this deviation bound is proportional to $d^{-k}$ and becomes ever more stringent as the design order $2k$ increases.  
We can now combine this tail bound with a union bound and a counting argument for the measurement set $ \mathsf{M}_r$ in a fashion analogous to the Haar random case. For any $r \in \mathbb{N}$ and any $\delta \in (0,1)$ this yields
\begin{align}
\mathrm{Pr} \left[ \mathcal{C}_\delta (|\psi \rangle) \leq r \right]
\leq & | \mathsf{M}_r | (1+\epsilon) \left( \frac{2k}{\sqrt{d} \left( 1- d^{-1} - \delta \right)} \right)^{2k} 
\leq   
2(1+\epsilon) d n^r |\mathsf{G}|^r \left( \frac{16k^2}{d(1-\delta)^2} \right)^{k}\,,
\label{eq:probability_bound_states}
\end{align}
where we have tacitly assumed $(1-\delta) \geq 2 d^{-1}$ in the last step.
Qualitatively, this probability remains tiny until 
\begin{equation}
r \simeq \frac{(k-1) n - 2k \log (k)}{\log (n)+\log (| \mathsf{G}|)}\simeq \frac{k (n- 2\log (k))}{\log (n)}\,,
\end{equation}
provided that $n \geq | \mathsf{G}|$ and $k < d/2$.
So far, this is a purely probabilistic statement. In contrast to the Haar-uniform case it is a priori not clear whether it is possible to transform it into a quantitative one. The reason for this is twofold: (i) the weights $p_j$ associated with different elements from an approximate $2k$-design are typically \emph{not} uniform. This non-uniformity extends to the distribution over the different states $| \psi_i \rangle$. (ii) collisions in the state generation: two (or more) distinct design unitaries can produce the same state.

Fortunately, the defining properties of designs ensure that these deviations cannot be too radical: the weights associated with \emph{distinct} states $| \psi_i \rangle$ must obey $q_j \leq (1+\epsilon) \binom{d+k-1}{k}^{-1}$ -- see \autoref{lem:state_weights} below. 
This extra condition does allow for drawing quantitative conclusions.
Recall that the probability of an event $E$ is the expected value of its indicator function $\1den \left\{E \right\}$. Therefore,
\begin{equation}
\mathrm{Pr} \left[ \mathcal{C}_\delta (| \psi \rangle ) >r \right]
= \sum_j q_j \1den \left\{ \mathcal{C}_\delta (|\psi \rangle) > r \right\}
\leq (1+\epsilon) \binom{d+k-1}{k}^{-1} \sum_j \1den \left\{ \mathcal{C}_\delta (| \psi \rangle) >r\right\}\,.
\end{equation}
The sum on the r.h.s.\ is simply the cardinality $N$ of the set of states $| \psi \rangle$ with $\delta$-state complexity greater than $r$ and the l.h.s. is $1- \mathrm{Pr} \left[ \mathcal{C}_\delta (| \psi \rangle)\leq r \right]$. Substituting the bound \eqref{eq:probability_bound_states} into this expression establishes the claim:
\begin{equation}
N \geq \binom{d+k-1}{k} \left( \frac{1}{1+\epsilon}- 2 d n^r |\mathsf{G}|^r \left( \frac{16k^2}{d(1-\delta)^2} \right)^{k} \right)\,.
\end{equation}

\subsection{Proof of \autoref{thm:main_circuit}}

The proof is largely analogous to the proof of \autoref{thm:main_state}.
Fix a measurement $M \in \mathbb{H}_d \otimes \mathbb{H}_d$ and an input state $| \phi \rangle \in \mathbb{C}^d \otimes \mathbb{C}^d$. Recall that the bias of distinguishing a unitary channel $\mathcal{U}: \mathbb{H}_d \to \mathbb{H}_d$ from the depolarizing channel $\mathcal{D}$ via this measurement procedure is
$
\Tr \left( M \left( \mathcal{U} \otimes \mathcal{I} - \mathcal{D} \otimes \mathcal{I} \right) ( | \phi \rangle \! \langle \phi|) \right)
$. Moreover, the depolarizing channel corresponds to the Haar average over all unitary channels: $\mathbb{E}_U (\mathcal{U})=\mathcal{D}$, see e.g.\ \autoref{lem:expectation} below.
Now suppose that the corresponding unitary $U \in U(d)$ is chosen randomly from an $\epsilon$-approximate $2k$-design. Markov's inequality yields
\begin{align}
&\mathrm{Pr} \left[ \left|\Tr \left( M \mathcal{U} \otimes \mathcal{I}(| \phi \rangle \! \langle \phi| \right)
-\Tr \left( M \mathcal{D} \otimes \mathcal{I}(| \phi \rangle \! \langle \phi| \right)
\right|\geq \tau\right] \nn
&\leq \tau^{-2k}\, \mathbb{E} \left[ \left(\Tr \left( M \mathcal{U} \otimes \mathcal{I}(| \phi \rangle \! \langle \phi| \right)
-\Tr \left( M \mathcal{D} \otimes \mathcal{I}(| \phi \rangle \! \langle \phi| \right)
\right)^{2k} \right]\,.
\end{align}
The final expectation value corresponds to the highest $2k$-design moment that still approximates Haar-random behavior. Our main technical contribution in \autoref{thm:main-technical-intro} establishes tight bounds on such Haar random moments. These generalize approximate $2k$-design ensembles $\mathcal{E}$ in a relatively straightforward fashion:
\begin{equation}
\mathbb{E}_{\mathcal{E}} \left[ \left(\Tr \left( M \mathcal{U} \otimes \mathcal{I}(| \phi \rangle \! \langle \phi| \right)
-\Tr \left( M \mathcal{D} \otimes \mathcal{I}(| \phi \rangle \! \langle \phi| \right)
\right)^{2k} \right]
\leq \frac{((2k)!)^2}{d^k} \left( C_{2k} + \frac{\epsilon}{(2k)!d^{3k}}\right)\,.
\end{equation}
See \autoref{cor:master} below for a precise statement and proof.
Next, we emphasize that the crude bound $| \mathsf{M}_r| \leq (2d^2+1) n^{2r} |\mathsf{G}|^r$ applies to circuit measurements. Combining the above concentration inequality with a union bound over all measurements $M \in \mathsf{M}_r$ ensures
\begin{align}
\mathrm{Pr} \left[ \mathcal{C}_\delta (U) \leq r \right] 
\leq 3\left(C_{2k}+\frac{\epsilon}{(2k)!d^{3k}} \right) d^2 n^{2r} |\mathsf{G}|^r \left( \frac{64 k^4}{d(1-\delta)^2}\right)^k
\,,
\label{eq:probability_bound_channels}
\end{align}
where we have tacitly assumed $(1-\delta) \geq 2d^{-1}$.
Qualitatively, this probability remains tiny until
\begin{equation}
r \lesssim \frac{(k-2)n - 4k \log (k))}{\log (n)+ \log| \mathsf{G}|}
\simeq \frac{k (n - 4\log (k))}{\log (n)}\,,
\end{equation}
provided that $n \geq | \mathsf{G}|$ and $k \leq d/2$. 
The definition of an approximate $2k$-design also imposes constraints on the weight fluctuations. \autoref{lem:weights-intro} asserts that weights associated with distinct ensemble unitaries must obey $p_j \leq (1+\epsilon) \frac{k!}{d^{2k}}$.
This approximate flatness allows us to turn the probabilistic statement from above into a quantitative one:
\begin{equation}
\mathrm{Pr} \left[ \mathcal{C}_\delta (U) >r \right]
= \sum_j p_j \1den \left\{ \mathcal{C}_\delta (U)>r \right\}
\leq (1+\epsilon) \frac{k!}{d^{2k}} \sum_j \1den \left\{ \mathcal{C}_\delta (U)>r \right\}\,.
\end{equation}
The sum on the right counts the cardinality $N$ of distinct unitaries with $\delta$-unitary complexity at least $r+1$, while the l.h.s.\ may be lower-bounded by \eqref{eq:probability_bound_channels}:
\begin{equation}
N \geq \frac{d^{2k}}{k!} \left( \frac{1}{1+\epsilon} - 3d^2 n^{2r} |\mathsf{G}|^r \left( \frac{1024 k^4}{d(1-\delta)^2}\right)^k \right)\,.
\end{equation}

\subsection{Distant and distinct design elements}
We have shown that unitary and state designs contain an exponential number ($\Omega(d^k)$) of distinct high complexity elements. But to really capture the ergodic nature of chaotic evolution over the unitary group, these distinct high complexity elements should be pairwise far apart. Here we address this subtlety and show that unitary and state designs contain an exponential number of distant high complexity unitaries or states.

\subsubsection*{Distant and distinct state design elements}
Consider an element drawn at random from an $\epsilon$-approximate spherical $k$-design $\ket{\psi}$. Eq.~\eqref{eq:probability_bound_states} gives that the probability the state has $\delta$-state complexity less than $r$, $\CC_\delta(\ket{\psi})\leq r$, is bounded to be $O(d^{-k})$ when $r\lesssim kn$. We can also show that the probability an element drawn at random from an $\epsilon$-approximate spherical $k$-design is close to a fixed reference state $\ket{\phi}$ is polynomially suppressed in $k$. Choose $\Delta \in (0,1)$ and combine $\tfrac{1}{2} \| \ketbra{\psi}-\ketbra{\phi}\|_1=\sqrt{1-|\langle \psi,\phi \rangle|^2}$ 
with Markov's inequality to conclude
\begin{align}
\Pr \left[ \tfrac{1}{2} \left\| \ketbra{\psi}-\ketbra{\phi} \right\|_1 \leq 1-\Delta \right] 
&= \Pr \left[ | \langle \psi, \phi \rangle|^2 \geq \Delta^2 \right] 
= \Pr \left[ | \langle \psi, \phi \rangle|^{2k} \geq \Delta^{2k} \right] \nn
&\leq \Delta^{-2k} \mathbb{E}_{\ket{\psi}}\left[ |\langle \psi, \phi \rangle|^{2k}\right]
\leq \frac{1+\epsilon}{\Delta^{2k}}\binom{d+k-1}{k}^{-1}\,.
\end{align}
The last inequality follows from $k$-design moment bound similar to Eq.~\eqref{eq:state-moment-intro}. We refer to
the proof of \autoref{lem:state_weights} for a detailed derivation. Qualitatively, this bound is of order $O(d^{-k})$.
We can now use a union bound to limit the probability of a random $k$-design state to have either low complexity \emph{or} to be close to the reference state,
\begin{align}
& \Pr \left[ \CC_\delta (\ket{\psi}) \leq r \; \cup \; \tfrac{1}{2} \| \ketbra{\psi}-\ketbra{\phi} \|_1 \leq 1- \Delta \right] \nn
&\quad \leq \mathrm{Pr} \left[ \CC_\delta (\ket{\psi}) \leq r \right] + \mathrm{Pr} \left[ \tfrac{1}{2} \| \ketbra{\psi}-\ketbra{\phi} \|_1 \leq 1-\Delta \right] \nn
&\quad \leq 2(1+\epsilon) dn^r |\mathsf{G}|^r \left( \frac{16k^2}{d(1-\delta)^2}\right)^k
+ \frac{1+\epsilon}{\Delta^{2k}}\binom{d+k-1}{k}^{-1}\,.
\end{align}
As long as $r \lesssim nk$, this bound is also of order $O(d^{-k})$ and, in turn, strictly smaller than one.
We know that if the probability of the state having low complexity or being close to our fixed state is strictly less than 1, then there is a nonzero probability of a design element that is of high complexity and is far away from the fixed state. Simply stated: if $\Pr [A\cup B] < 1$ then $\Pr [\bar A \cap \bar B] > 0$. 

We can iterate this procedure to construct a set of high complexity states that are pairwise separated. As long as the probability that the design element is of low complexity or is close to all elements of the set is less than one, then there exists a design element which is of high complexity and far away from all other design elements in the set. To construct the set $\{\ket{\psi_1},\ldots,\ket{\psi_N}\}$, we simply need that
\begin{equation}
\Pr \left[ \CC_\delta( \ket{\psi_N} \leq r \; \bigcup_{i=1}^{N-1} \tfrac{1}{2} \| \ketbra{\psi_N}-\ketbra{\psi_i} \|_1 \leq 1- \Delta \right] <1.
\end{equation}
A union bound then converts this requirement into the following sufficient condition on the set cardinality $N$:
\begin{equation}
N < \Delta^{2k} \binom{d+k-1}{k} \left( \frac{1}{1+\epsilon} - 2 d n^r |\mathsf{G}|^r \left( \frac{16k^2}{d(1-\delta)^2} \right)^k \right)
\end{equation}
For constant $\Delta \in (0,1)$, this threshold is exponential as long as the complexity obeys $r\lesssim k$,
\begin{equation}
N \approx O(d^k) \for \CC_\delta(\ket\psi) \leq r \approx k\,.
\end{equation}
We note the similarity of this bound to the bound derived for the number of distinct design elements. 

\subsubsection*{Distant and distinct unitary design elements}
Now we consider a unitary $U$ drawn from an $\epsilon$-approximate unitary $k$-design $\mathcal{E}$. Eq.~\eqref{eq:probability_bound_channels} bounds the probability of the unitary having $\delta$-unitary complexity less than $r$, $\CC_\delta(U)\leq r$, to be $O(d^{-2k})$ when the complexity is roughly $r\lesssim n k$. 

Randomly chosen $k$-design elements also tend to land far away from any fixed unitary. For some $V \in U(d)$ and $\Delta \in (0,1)$, Markov's inequality implies
\begin{align}
\Pr\left[ |\Tr (U^\dagger V)|^2 \geq d^2 \Delta^2 \right] &= \Pr\left[ |\Tr (U^\dagger V)|^{2k} \geq d^{2k}\Delta^{2k} \right]\nn &\leq \frac{\mathbb{E}_{\mathcal{E}} \big[ |\Tr (U^\dagger V)|^{2k}\big]}{d^{2k}\Delta^{2k}} \leq \frac{1+\epsilon}{\Delta^{2k}} \frac{k!}{d^{2k}}\,,
\end{align}
where the last inequality follows from a $k$-design moment bound. We refer to the proof of \autoref{lem:weights-intro} for a detailed derivation. Next, we apply a trick from the proof of \autoref{lem:unitary_complexity_stronger}:
$|\Tr (U^\dagger V)|^2 \geq d^2 \Delta^2$ is a necessary condition for $\|\mathcal{U}-\mathcal{V}\|_\diamond <1-\Delta$. This allows us to conclude
\begin{equation}
\Pr\big[ \| \mathcal{U} - \mathcal{V} \|_\diamond \leq 1-\Delta \big] \leq (1+\epsilon) \frac{k!}{d^{2k}} \frac{1}{\Delta^{2k}}\,.
\end{equation}
Qualitatively, this is of order $O(d^{-2k})$.

We now have all the ingredients in place to repeat the argument from the state case.
The probability of sampling a unitary that has either low complexity \emph{or} is close to any reference unitary $V$ is
\begin{align}
\Pr\left[ \CC_\delta(U)\leq r \cup \| \mathcal{U} - \mathcal{V} \|_\diamond 
\leq 1-\Delta \right]
&\leq 3 (1+\epsilon) d^2 n^{2r} |\mathsf{G}|^r \left(\frac{1024 k^4}{d(1-\delta)^2}\right)^k + \frac{1+\epsilon}{\Delta^{2k}}\frac{k!}{d^{2k}}\,,
\end{align}
according to a union bound. This is on the order of $O(d^{-2k})<1$ as long as the complexity $r\lesssim n k$. 
By contraposition, this ensures that there exists a design element $U_1$ that has both high complexity \emph{and} is far away from $V$. 
We can use this insight to iteratively construct a set of $N$ high-complexity design unitaries with large pairwise distances. Explicitly, to construct a set of unitaries $\{U_1,\ldots,U_N\}$, we need that 
\begin{equation}
\Pr \left[ \CC_\delta(U_N)\leq r \bigcup_{i=1}^{N-1} \| \mathcal{U}_N - \mathcal{U}_i\|_\diamond\leq 1-\Delta \right] < 1\,.
\end{equation}
A union bound relates this condition to a sufficient upper bound on the set cardinality $N$:
\begin{equation}
N < \Delta^{2k} \frac{d^{2k}}{k!} \left( \frac{1}{1+\epsilon} - 3d^2 n^{2r} |\mathsf{G}|^r \left( \frac{1024k^4}{d(1-\delta)^2}\right)^k \right)\,.
\end{equation}
This threshold is exponential as long as the complexity obeys $r\lesssim k$:
\begin{equation}
N \approx O(d^{2k}) \for \CC_\delta(\ket\psi) \leq r \approx k\,.
\end{equation}

\section{Conceptual background and contributions}

\subsection{Distinguishing states and channels}

This conceptual section will review one fundamental question in probability theory, as well as two 
quantum generalizations. We refer to \cite{Watrous2018,Kueng2019} for details.
The underlying question is: \emph{what is the best strategy to distinguish two (biased) coins based on a single toss?}
More precisely, we consider the following game: there are two identically-looking coins with different biases towards coming up heads when being tossed. These biases are known to the player.
A referee then picks one of these coins uniformly at random and hands it to the player. The player is allowed to perform a single toss. Based on the result she must guess which coin she obtained and wins if this guess was correct.

\subsubsection{Distinguishing classical probability distributions}

Consider two (discrete) $d$-variate random variables. Then, we may represent the associated probability distributions by $d$-dimensional vectors $p,q\in \mathbb{R}^d$ which are entry-wise positive ($p_i,q_i \geq 0$) and whose entries sum up to one.
Likewise, a collection of events $E_1,\ldots,E_m$ can be also  represented by vectors $e_1,\ldots,e_m \in \mathbb{R}^d$ that are entry-wise non-negative 
and obey the following normalization condition: $\sum_{i=1}^m e_i = \vec{1}$. Here, $\vec{1}=(1,\ldots,1)^T \in \mathbb{R}^d$ denotes the all-ones vector.
The probability of observing the event associated with index $i$ is
\begin{equation}
\mathrm{Pr} \left[ i \right] = \langle e_i, p \rangle\,.
\label{eq:probability_rule}
\end{equation}
The properties of probability and event vectors then assure $\mathrm{Pr} \left[ i \right] \geq 0$ and $\sum_{i=1}^m \mathrm{Pr} \left[ i \right] =1$. 
Let us now return to the motivating question: what is the best strategy to distinguish two random variables, characterized by known probability vectors $p$ and $q$ in the single-shot limit? This is a binary question and without loss of generality we can restrict our attention to binary events. Let $e_1$ denote the event that leads us to guess that we observed the first random variable. The complementary event $e_2 = \vec{1}-e_1$ is then fully characterized as well. 
Under the additional assumption that either random variable is handed to us with equal prior probability, the probability of success becomes
\begin{align}
p_{\mathrm{cl}}
&= \frac{1}{2} \mathrm{Pr} \left[ 1| p \right] + \frac{1}{2} \mathrm{Pr} \left[ 2 | q \right] 
= \frac{1}{2} \left( \langle e_1, p \rangle + \langle e_2, q \rangle \right) \nn
&= \frac{1}{2} \left( \langle e_1, p - q \rangle + \langle \vec{1}, q \rangle \right) 
= \frac{1}{2} + \frac{1}{2} \langle e_1, p-q \rangle\,.
\end{align}
This expression may now be optimized over all possible events $e_1$ in order to determine the optimal guessing strategy.
The only constraints on $e_1$ are non-negativity and normalization. Together, they demand $0 \leq e_1 \leq \vec{1}$, where the inequality signs are to be understood component-wise. The resulting optimization problem is a \emph{linear program} \cite{Boyd04,Bar02:Course-Convexity}
\begin{align}
&\textrm{maximize} \quad \tfrac{1}{2} + \langle e_1, p - q \rangle \\
&\textrm{subject to} \quad \vec{1} \geq e_1 \geq 0\,. \nonumber
\end{align}
and can be solved in a computationally tractable way. In fact, this problem is simple enough to solve analytically. The optimal $e_1$ is the indicator function for $p_i \geq q_i$, i.e.\ $e_i = \1den \left\{ p_i \geq q_i \right\}$. This is the \emph{maximum likelihood estimator} from statistics. Opt for the distribution that is most likely to produce the outcome that has been observed. This choice achieves an optimal success probability of
\begin{equation}
p_{\mathrm{cl}}^\sharp = \frac{1}{2} + \frac{1}{4} \| p -q \|_{\ell_1}\,.
\end{equation}
Note that a success probability of $1/2$ can be trivially achieved by mere guessing. The remaining factor (multiplied by two)
\begin{equation}
\beta^\sharp_{\mathrm{cl}} = \frac{1}{2} \| p - q \|_{\ell_1}= \frac{1}{2} \sum_{i=1}^d \left| p_i - q_i \right|
\end{equation}
is called the \emph{bias} and corresponds to the \emph{total variational distance} between $p$ and $q$.

\subsubsection{Distinguishing quantum states} \label{sub:distinguishing_states}

It is useful to think of quantum states $\rho$ as matrix generalizations of probability vectors. 
Similarly, (POVM) measurements with $m$ outcomes are characterized by a collection of psd matrices $\left\{M_i \right\}_{i=1}^m \in \mathbb{H}_d$ that sum up to the identity matrix $\mathbb{I}$. Born's rule states that the probability of observing certain outcomes is
\begin{equation}
\mathrm{Pr} \left[ i \right] = \Tr \left( M_i \rho \right)\quad \textrm{for all} \quad 1 \leq i \leq m.
\label{eq:born}
\end{equation}
This may be viewed as a non-commutative analogue of the classical probability rule in Eq.~\eqref{eq:probability_rule}. 
One may also adapt the distinguishability game to the quantum setting: what is the probability of correctly distinguishing two quantum states $\rho,\sigma$ by performing a single measurement? 
Once more, this is a binary question. We can without loss restrict attention to 2-outcome measurements: $M_1$ and $M_2 = \mathbb{I} - M_1$. We associate the first outcome with opting for $\rho$ while the second outcome flags $\sigma$. Similar to the classical case, the probability of success is
\begin{equation}
p_{\mathrm{qs}} = \frac{1}{2} + \frac{1}{2} \left( M_1, \rho - \sigma \right)
\label{eq:prob_success}
\end{equation}
which corresponds to a bias of $\beta_{\mathrm{qs}} = \left( M_1, \rho- \sigma\right)$.
We may now optimize over all possible measurements $M_1$ to obtain the best bias possible:
\begin{align}
\beta^\sharp_{\mathrm{qs}} = \;&\textrm{maximize} \quad \left( M_1, \rho - \sigma \right) \\
&\textrm{subject to} \quad \mathbb{I} \succeq M_1 \succeq 0\,. \nonumber
\end{align}
The constraint denotes the positive semidefinite order ($A \succeq B$ if and only if $A-B$ is positive semidefinite). This is a semidefinite program \cite{Boyd04,Bar02:Course-Convexity} that is simple enough to solve analytically.
The optimal measurement $M_1$ corresponds to the orthogonal projection onto the positive range of $\rho - \sigma$. The associated optimal bias is
\begin{equation}
\beta^\sharp_{\mathrm{qs}} = \frac{1}{2}\| \rho - \sigma \|_1\,,
\end{equation}
which is the \emph{trace distance} of the density matrices $\rho$ and $\sigma$. 
This result is known as \emph{Holevo-Helstrom Theorem} \cite{Holevo1973,Helstrom1976}. 

\begin{example}
Choose $\rho = | \psi \rangle \! \langle \psi|$ and $\sigma = \rho_0 = \tfrac{1}{d}\mathbb{I}$. Then, the (unique) optimal measurement is $M_1 = | \psi \rangle \! \langle \psi|$ and achieves a bias of
\begin{equation}
\beta^\sharp_{\mathrm{qs}}=\frac{1}{2}\| | \psi \rangle \! \langle \psi|  - \rho_0 \|_1 = 1-\frac{1}{d}\,.
\label{eq:trace_distance}
\end{equation}
\end{example}

\subsubsection{Distinguishing quantum channels} \label{sub:distinguishing_channels} 

\emph{Quantum channels} decribe evolutions of quantum mechanical systems. They are linear maps $\mathcal{A}: \mathbb{H}_d \to \mathbb{H}_{d'}$ that map density operators to density operators of potentially different dimension $d'$.

Suppose that we wish to distinguish two channels, say $\mathcal{A}$ and $\mathcal{B}$ based on a single channel use. I.e.\ input a concrete quantum state and perform a measurement on the outcome state.
This indicates more freedom to maximize the probability of correct distinction by opitmizing over potential input states and measurements of the channel output. 
The laws of quantum mechanics allow for further improving this strategy. It is possible to entangle the input state with a quantum memory: $\rho_{\mathrm{in}} \in \mathbb{H}_d \otimes \mathbb{H}_d$. We then apply the channel to the first quantum system, while the second one is left unchanged in the memory. A final two-outcome measurement $M_1 \in \mathbb{H}_{d'} \otimes \mathbb{H}_d$  on both output and memory state potentially reveals additional information. 
The outcome state depends on the channel in question. A priori there are two possibilities. Either $\rho_{\mathrm{out}}= \mathcal{A} \otimes \mathcal{I}(\rho_{\mathrm{in}})$, or $\rho_{\mathrm{out}}=\mathcal{B} \otimes \mathcal{I}(\rho_{\mathrm{in}})$. Here, $\mathcal{I}(X) = X$ denotes the identity channel acting trivially on the memory.
The probability of correctly distinguishing these states -- and thus the underlying channels -- with a singe measurement $M_1 \in \mathbb{H}_{d'} \otimes \mathbb{H}_d$ becomes
\begin{equation}
p_{\mathrm{qc}} = \frac{1}{2} + \Tr\big( M_1 \left( \mathcal{A} \otimes \mathcal{I} (\rho_{\mathrm{in}}) - \mathcal{B} \otimes \mathcal{I} (\rho_{\mathrm{in}}) \right) \big)\,.
\end{equation}
We may now optimize over all degrees of freedom to maximize the value of $p_{\mathrm{qc}}$. Optimizing the measurement $M_1$ results in a bias that is proportional to the trace distance of the outcome states.
Because of convexity, optimization over potential input states can without loss of generality be restricted to pure states:
\begin{equation}
\beta^\sharp_{\mathrm{qc}} = \frac{1}{2}\max_{| \psi \rangle \! \langle \psi|} \big\| \mathcal{A} \otimes \mathcal{I}(| \psi \rangle \! \langle \psi|) - \mathcal{B} \otimes \mathcal{I}(| \psi \rangle \! \langle \psi|) \big\|_1\,.
\label{eq:diamond_distance}
\end{equation}
This optimal bias is called the \emph{diamond distance} between channels $\mathcal{A}$ and $\mathcal{B}$ \cite{Kitaev1997}. 
This metric is more complicated than their vector and matrix counterparts and does highlight genuine quantum advantages.
It can be difficult to compute it analytically,  but does admit a computationally tractable reformulation (SDP) \cite{Watrous2009,Aroya2010,Watrous2013}. 

\begin{example}
Consider a unitary channel $\mathcal{U}(\rho) = U \rho U^\dagger \in \mathbb{H}_d$ and the completely depolarizing channel $\mathcal{D} (\rho) = \frac{\Tr(\rho)}{d} \mathbb{I} \in \mathbb{H}_d$.
Then,
\begin{equation}
\frac{1}{2} \left\| \mathcal{U}-\mathcal{D} \right\|_\diamond = 1 - \frac{1}{d^2} \label{eq:unitary_depolarizing}
\end{equation}
and optimal strategies are based on maximally entangling the input with the memory: Let $| \Omega \rangle = \frac{1}{\sqrt{d}} \sum_{i=1}^d|i \rangle \otimes |i \rangle \in \mathbb{C}^d \otimes \mathbb{C}^d$ be the maximally entangled (Bell) state. Set $\rho_{\mathrm{in}} = \ketbra{\Omega}$ and measure $M_1 = (U^\dagger \otimes \mathbb{I}) \ketbra{\Omega} (U \otimes \mathbb{I})$.
\end{example}

It is easy to check that this strategy achieves the diamond distance in Eq.~\eqref{eq:unitary_depolarizing}. Proving optimality is less trivial. For instance, this claim follows from relating the diamond distance to another norm that is easier to compute. We refer to \cite[Theorem 7]{kliesch_improving_2016} and \cite{michel_comments_2018} for details.

\subsection{Conceptual contributions}

\subsubsection{Cornering ``easy'' unitary transformations}

Fix $d=q^n$.
The evolution of a closed, $d$-dimensional quantum mechanical system is unitary: 
$\mathcal{U}(\rho) = U \rho U^\dagger$ with $U \in U(d)$.
While evolutions may represent natural processes, they can also be engineered to perform certain tasks, such as quantum computing. 
Scalability of quantum computing hinges on the important observation that complicated evolutions (quantum gate architectures) can be decomposed into sequences of simple building blocks. 
A universal gate set $\mathsf{G} \subset U(q^2)$ acting on two (neighboring) qudits forms such a basic set of building blocks. 
For technical reasons, we shall assume that $\mathsf{G}$ contains the identity (doing nothing), as well as inverses: $g \in \mathsf{G}$ implies $g^\dagger \in \mathsf{G}$. 
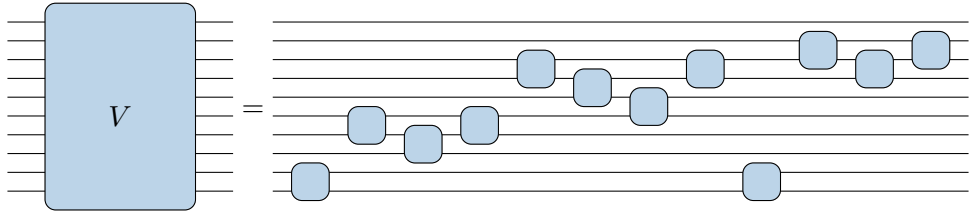
\begin{figure}
\begin{center}
\begin{equation*}
\begin{tikzpicture}[baseline, scale=0.5]
 \foreach \y in {-2,-1.5,...,2.5}
	\draw (-3,\y) -- (3,\y);
\draw[rounded corners, fill=white] (-2,-2.5) rectangle (2,3);
\draw[rounded corners, fill=Blue,opacity=0.3] (-2,-2.5) rectangle (2,3);
\node at (0,0) {$V$};
\end{tikzpicture}
=
\begin{tikzpicture}[baseline,scale=0.5]
 \foreach \y in {-2,-1.5,...,2.5}
	\draw (0,\y) -- (18.5,\y);
\foreach \x / \y in {1/9, 2.5/6, 4/7, 5.5/6, 7/3, 8.5/4, 10/5, 11.5/3, 13/9,14.5/2,16/3,17.5/2} 
{
\draw[fill=white,rounded corners] (\x-0.5, 3.25-0.5*\y) rectangle (\x+0.5,2.25-0.5*\y);
\draw[fill=Blue, opacity=0.3, rounded corners] (\x-0.5, 3.25-0.5*\y) rectangle (\x+0.5,2.25-0.5*\y);
}
\end{tikzpicture}
\end{equation*}
\end{center}
\caption{\emph{Illustration of elementary gate decompositions.} 
A unitary $V$ on $n=10$ qudits is comprised of $12$ geometrically local 2-qudit gates at random positions, i.e.\ $\mathrm{size}(V)=12$.}
\label{fig:circuit-decomposition}
\end{figure}

Universality then means that any unitary $U \in U(d)$ may be accurately approximated by a finite sequence of $r$ unitaries chosen from $\mathsf{G}$. We refer to \autoref{fig:circuit-decomposition} for an illustrative example.
Such decompositions into sequences of elementary gates provide us with a notion of simplicity. Intuitively, a quantum cicuit $V$ is simple if it may be generated by a $\mathsf{G}$-local circuit of short size. In contrast do depth, size counts the total number of elementary gates in a circuit.
For $r \in \mathbb{N}$ we define
\begin{equation}
\mathsf{G}_r := \left\{ V \in U(d): \textrm{$V$ is generated by a $\mathsf{G}$-local circuit of size $\leq r$} \right\}\,.
\end{equation}
We set $\mathsf{G}_0 = \left\{ \mathbb{I} \right\}$ and the following inclusion relation follows from $\mathbb{I} \in \mathsf{G}$:
\begin{equation}
\mathsf{G}_0 \subseteq \mathsf{G}_1 \subseteq \cdots \subseteq \mathsf{G}_r\,.
\end{equation}
The cardinality of $\mathsf{G}_r$ may be bounded by a simple counting argument:
\begin{equation}
\left| \mathsf{G}_r \right| \leq \left( n |\mathsf{G}| \right)^r = \log_q (d)^r |\mathsf{G}|^r\,.
\end{equation}
The fact that $\mathsf{G}$ is a universal gate set ensures that $\mathsf{G}_r$ becomes dense in $U(d)$ provided that $r \to \infty$.
A priori $\mathsf{G}_r$ depends on the particular choice of universal gate set $\mathsf{G}$. However, the Solayev-Kitaev theorem also asserts that other universal gate sets can be accurately compiled at the cost of a constant overhead only \cite{dawson_kitaev_2005}.

\subsubsection{Cornering ``easy'' measurements} \label{sub:easy_measurement}

The conceptual question underlying our definition of complexity is binary. Are we facing a pure state (unitary channel), or a maximally mixed state (depolarizing channel)? 
This allows us to restrict attention to two-outcome measurements, where we associate one outcome with each possibility.

Two-outcome measurements are described by a single matrix: $(M,\mathbb{I}-M)$ that obeys $\mathbb{I} \succeq  M \succeq 0$.
Measuring a quantum state $\rho \in \mathbb{H}_d$ results in two potential outcomes, say ``yes'' and ''no''. The probability of observing either is characterized by Born's rule \eqref{eq:born}:
\begin{equation}
\mathrm{Pr} \left[\textrm{``yes''} \right] = \Tr \left( M \rho \right) \and
\mathrm{Pr} \left[ \textrm{``no''} \right] = \Tr \left( (\mathbb{I}-M) \rho \right)=
1 - \mathrm{Pr} \left[ \textrm{``yes''} \right]\,.
\end{equation}
A \emph{projective} two-outcome measurement is one for which $M$ is an orthogonal projection:
\begin{equation}
M = V P_l V \with P_l = \sum_{i=1}^l |i \rangle \! \langle i|,\quad \textrm{and} \quad V \in U(d)\,.
\label{eq:projective_simple}
\end{equation}
Here $l \in \left[ d \right]$ characterizes the rank of the measurement $M$ and $V$ is a unitary basis change to the eigenbasis of $M$.
\emph{Naimark's theorem}, see e.g.\ \cite{Paulsen2002,Watrous2018}, provides a powerful connection between arbitrary two-outcome measurements $M$ and projective measurements of the form \eqref{eq:projective_simple}. 
Every two-outcome measurement on $\rho \in \mathbb{H}_d$ corresponds to a projective measurement on $\rho \otimes |a \rangle \! \langle a| \in \mathbb{H}_d \otimes \mathbb{H}_2$, where $|a \rangle \! \langle a| \in \mathbb{H}_2$ is an ancilla system prepared in a pure state $| a \rangle \in \mathbb{C}^2$. Pictorially (see Sec.~\ref{sub:wiring} for an introduction of wiring diagrams),

\begin{equation}
\begin{tikzpicture}[scale=0.55, baseline=-1mm]
\draw[rounded corners] (-0.5,-0.5) rectangle (0.5,0.5);
\node at (0,0) {$M$};
\draw[thick] (-1,0) -- (-0.5,0);
\draw[thick] (0.5,0) -- (1,0);
\end{tikzpicture}
=
\begin{tikzpicture}[scale=0.55,baseline=-4mm]
\draw[rounded corners] (-0.5,-1.5) rectangle (0.5,0.5);
\node at (0,-1) {$P_l$};
\draw[thick] (-2,0) -- (-0.5,0);
\draw[thick] (0.5,0) -- (2,0);
\draw[thick] (-1,-1) -- (-0.5,-1);
\draw[thick] (0.5,-1) -- (1,-1);
\draw[rounded corners] (-1.5,-1.5) rectangle (-1,-0.5);
\node at (-1.25,-1) {$a$};
\draw[rounded corners] (1,-1.5) rectangle (1.5,-0.5);
\node at (1.25,-1) {$a$};
\end{tikzpicture}
\end{equation}

\ni Based on this reformulation of general 2-outcome measurements, we model limited resources in the following way:
\begin{enumerate}
\item The ancilla state $|a \rangle \in \mathbb{C}^2$ is corresponds to a (fixed) simple state, e.g. $|a \rangle = |0 \rangle$.
\item The unitary $V \in U(2d)$ must be feasible to implement. More concretely we assume that it is comprised of at most $r$ 2-qubit gates chosen from a (fixed) universal gate set $\mathsf{G} \subset U(q^2)$.
\item The projective measurement $P_l = \sum_{i=1}^l |i \rangle \! \langle i|$ is diagonal in the computational basis. 
\end{enumerate}

\ni For fixed $r \in \mathbb{N}$ (circuit size for $V$), this framework defines the following class of measurements:
\begin{equation}
\mathsf{M}_{r} = \left\{ \Tr_2 \left( \mathbb{I} \otimes |a \rangle \! \langle a|\; V P_{l'} V^\dagger\right):\; V \in \mathsf{G}_r,\; l' \in \left[2d \right] \right\} \subset \mathbb{H}_d\,.
\end{equation}
Here, $\Tr_2: \mathbb{H}_d \otimes \mathbb{H}_2 \to \mathbb{H}_d$ denotes the partial trace.
By construction, this set is finite and obeys
\begin{equation}
\left| \mathsf{M}_{r} \right| \leq  2d \left|\mathsf{G}_r \right| \leq 2d \log_q (d)^r |\mathsf{G}|^r\,.
\label{eq:measurement_cardinality}
\end{equation}
For the class of measurements applied to channel (and memory) outputs, we slightly modify this definition. We include a single Bell measurement $| \Omega \rangle \! \langle \Omega| \in \mathbb{H}_d^{\otimes 2}$
with $| \Omega \rangle = \frac{1}{\sqrt{d}} \sum_{i=1}^d |i \rangle \otimes |i \rangle$ in the definition:
\begin{equation*}
\mathsf{M}_r = \left\{ \Tr_2 \left( \mathbb{I} \otimes |a \rangle \! \langle a|\; V P_{l'} V^\dagger\right):\; V \in \mathsf{G}_r,\; l' \in \left[2d^2 \right] \right\} 
\cup \left\{ V | \Omega \rangle \! \langle \Omega|V^\dagger:\; V \in \mathsf{G}_r \right\} \subset \mathbb{H}_d^{\otimes 2}\,.
\end{equation*}
This modification simplifies exposition and is comparatively benign. 
A simple counting argument reveals
\begin{equation}
|\mathsf{M}_r| \leq (2d^2+1) | \mathsf{G}_r| \leq (2d^2+1) d^2 2 \log_q (d)^r |\mathsf{G}|^r\,.
\end{equation}

\section{Technical background and contributions}

\subsection{Notation and basic facts from matrix analysis}

Endow the vector space $\mathbb{C}^d$ with the standard inner product $\braket{x|y} $. 
A pure quantum state is a vector $\psi \in \mathbb{C}^d$ normalized to (Euclidean) unit length, i.e.\ $\langle \psi,\psi \rangle =1$. We succinctly denote this by identifying normalized vectors with kets:
\begin{equation}
\ket{\psi}  \quad\textrm{denotes}\quad \psi \in \mathbb{C}^d \quad \textrm{with} \quad  \langle \psi| \psi \rangle = 1\,.
\end{equation}
Let $\mathbb{H}_d$ denote the space of Hermitian $d \times d$ matrices. 
This is a real-valued subspace of the space of all (complex-valued) $d \times d$ matrices $\mathbb{M}_d$.
Fix an orthonormal basis $|1 \rangle,\dots,|d \rangle$ of $\mathbb{C}^d$. Then, the trace of a matrix $X$ is $\Tr(X) = \sum_{i=1}^d \langle i| X |i \rangle$.
The trace is cyclic, i.e.\ $\Tr(XY) = \Tr(YX)$ and forms the basis for defining the Schatten-$p$ norms. In particular,
\begin{align*}
&\| X \|_1 = \Tr(|X|),\; |X| = \sqrt{X^2} & \textrm{(trace norm)} \\
&\| X \|_2 = \sqrt{\Tr(X^2)} & \textrm{(Frobenius norm)} \\
&\| X \|_\infty = \max_{\ket y} \left| \langle y | X |y \rangle \right| & \textrm{(operator norm)}\,.
\end{align*}
Schatten-norms obey the following order relations:
\begin{equation}
\| X \|_\infty \leq \| X \|_2 \leq \| X \|_1 \quad \textrm{and} \quad \| X \|_1 \leq \sqrt{d} \| X \|_2 \leq d \| X \|_\infty\quad \textrm{for all} \quad X \in \mathbb{H}_d\,.
\end{equation}
A variant of H\"older's inequality applies to traces of inner products, see e.g.\ \cite[Ex. IV.2.12]{Bhatia1997}:
\begin{equation}
\left|\Tr(XY) \right|\leq \| X \|_1 \| Y \|_\infty \quad \textrm{for all} \quad X,Y \in \mathbb{H}_d\,.
\label{eq:hoelder}
\end{equation}
The trace corresponds to a full index contraction. Partial contractions are possible for tensor products and \emph{partial traces} are concrete examples. For $X,Y \in \mathbb{H}_d$ define 
\begin{equation}
\Tr_1 \left( X \otimes Y \right) = \Tr(X) Y \quad \textrm{and} \quad \Tr_2 \left( X \otimes Y \right) = \Tr(Y) X\,,
\end{equation}
and extend this definition linearly to the tensor product $\mathbb{H}_d^{\otimes 2} \simeq \mathbb{H}_{d^2}$. This definition naturally extends to tensor products of higher order. 
The following tight bound connects partial traces and operator norms:
\begin{equation}
\max \left\{ \left\| \Tr_1 (X) \right\|_\infty, \left\| \Tr_2 (X) \right\|_\infty \right\} \leq d \| X \|_\infty \quad \textrm{for all} \quad X \in \mathbb{H}_d^{\otimes 2}\,.
\label{eq:partial_operator_norm}
\end{equation}
A matrix $X \in \mathbb{H}_d$ is positive semidefinite (psd) if $\langle y|X|y \rangle \geq 0$ for all $y \in \mathbb{C}^d$. We denote this feature by $X \succeq 0$. Positive semidefiniteness is preserved under partial traces:
\begin{equation}
X \in \mathbb{H}_d^{\otimes 2}, \;X \succeq 0\quad \textrm{implies} \quad \Tr_1 (X) \succeq 0,\; \Tr_2 (X) \succeq 0\,.
\end{equation}
The trace norm of psd matrices is particularly simple: $\|X \|_1 = \Tr(X)$ whenever $X \succeq 0$.

\subsection{Convex geometry and optimization}

The main technical contributions of this paper are based on bounds that follow from a fundamental argument in convex optimization. Comprehensive references for convex geometry and optimization include \cite{Roc70:Convex-Analysis,Bar02:Course-Convexity}.
A function $f: \mathbb{H}_d \to \mathbb{R}$ is \emph{convex} if
\begin{equation}
f\left( \tau X+ (1-\tau) Y \right) \leq \tau f (X) + (1-\tau) f (Y) \quad \textrm{for all} \quad X,Y \in \mathbb{H}_d, \; \tau \in \left[0,1 \right]\,.
\end{equation}
Linear transformations in the argument preserve this feature.
Similarly, a set $\mathcal{K} \subseteq \mathbb{H}_d$ is \emph{convex} if 
\begin{equation}
X,Y \in \mathcal{K} \quad \textrm{imply} \quad \tau X + (1-\tau) Y \in \mathcal{K} \quad \textrm{for all} \quad \tau \in \left[0,1\right]\,.
\end{equation}
Let $\mathcal{K} \subseteq \mathbb{H}_d$ be a convex set. A point $X \in \mathcal{K}$ is an \emph{extreme point} if $Y,Z \in \mathcal{K}$ and $X= \tau Y + (1-\tau) Z$ for some $\tau \in (0,1)$ necessarily imply $Y=Z=X$. 
Extreme points form the boundary of a convex set.

\begin{example}
The set of all quantum states in $\mathbb{H}_d$ is the convex hull (i.e.\ the set of all convex combinations) of pure states:
\begin{equation}
\left\{ \rho \in \mathbb{H}_d:\; \Tr(\rho)=1,\; \rho \succeq 0 \right\} = \mathrm{conv} \left\{ \ketbra{\psi}: \; \ket{\psi} \in \mathbb{C}^d \right\}\,.
\end{equation}
All extreme points are pure states.
\end{example}

\begin{fact}[convex functions achieve their maximum at an extreme point] \label{fact:convex_maximum}
Let $\mathcal{K} \subseteq \mathbb{H}_d$ be a convex set and let $f: \mathcal{K} \to \mathbb{R}$ be a convex function. 
Then, there exists an extreme point $X_\sharp$ of $\mathcal{K}$ such that
\begin{equation}
\max_{X \in \mathcal{K}} f(X) \leq f(X_\sharp)\,.
\end{equation}
\end{fact}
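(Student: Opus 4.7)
The plan is to reduce the statement to the classical principle that, in a finite-dimensional ambient space, every element of a (compact) convex set admits a representation as a convex combination of extreme points, and then to invoke Jensen's inequality.

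First, I would fix a point $X^\star \in \mathcal{K}$ that attains the maximum, $f(X^\star) = \max_{X \in \mathcal{K}} f(X)$ (the statement implicitly assumes this maximum exists; otherwise one would additionally assume $\mathcal{K}$ is compact and $f$ continuous, so that the maximum is attained by Weierstrass's theorem). Since $\mathcal{K} \subseteq \mathbb{H}_d$ sits inside the finite-dimensional real vector space $\mathbb{H}_d$ of dimension $d^2$, Carath\'eodory's theorem (combined with the Krein--Milman theorem applied to the compact case) asserts that $X^\star$ can be written as a finite convex combination of extreme points of $\mathcal{K}$:
\begin{equation*}
X^\star = \sum_{i=1}^m \lambda_i X_i, \qquad \lambda_i \geq 0, \quad \sum_{i=1}^m \lambda_i = 1,
\end{equation*}
with $m \leq d^2 + 1$ and each $X_i$ an extreme point of $\mathcal{K}$.

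Next, I would apply convexity of $f$ iteratively (Jensen's inequality for finite convex combinations), which follows directly from the defining two-point convexity relation by induction on $m$:
\begin{equation*}
f(X^\star) = f\Big( \sum_{i=1}^m \lambda_i X_i \Big) \leq \sum_{i=1}^m \lambda_i f(X_i) \leq \max_{1 \leq i \leq m} f(X_i).
\end{equation*}
Picking an index $i^\star \in \arg\max_i f(X_i)$ and setting $X_\sharp := X_{i^\star}$ yields an extreme point with $f(X_\sharp) \geq f(X^\star) = \max_{X \in \mathcal{K}} f(X)$, which is the desired conclusion.

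The only genuine subtlety is the Krein--Milman/Carath\'eodory step: one needs a compact convex set to guarantee that extreme points exist and that every point is a convex combination of them. In the finite-dimensional Hermitian ambient space this is standard, and the convex sets encountered later in the paper (quantum states, POVM effects, Choi matrices of channels) are all compact, so this mild technical hypothesis does not restrict the intended applications. The remainder of the argument is essentially just Jensen's inequality, so no combinatorial or analytic obstacle arises beyond invoking these two classical ingredients.
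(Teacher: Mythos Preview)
The paper does not actually prove this statement: it is recorded as a \emph{Fact} with references to standard convex analysis texts (Rockafellar, Barvinok), and is then immediately applied without further justification. Your argument via Krein--Milman/Carath\'eodory followed by Jensen's inequality is the standard textbook route and is correct as written.

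You were also right to flag the compactness hypothesis: the statement as printed omits it, but without compactness both the existence of the maximizer and the Krein--Milman representation can fail (an open segment in $\mathbb{H}_d$ has no extreme points at all). The paper's later applications are all to compact convex sets (density operators, POVM effects), so, as you observe, this is a harmless gap in the formulation rather than in the use.
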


This result justifies the presentation of the diamond distance in Eq.~\eqref{eq:diamond_distance}. The function $X \mapsto \left\| \mathcal{A} \otimes \mathcal{I} (X) - \mathcal{B}\otimes \mathcal{I}(X) \right\|_1$ is convex (norms are convex and the channel acts like a linear transformation of the argument) and pure states are the extreme points of the set of all quantum states. Hence,
\begin{equation}
\max_{\rho} \left\| \mathcal{A} \otimes \mathcal{I} (\rho) - \mathcal{B}\otimes \mathcal{I}(\rho) \right\|_1
= \max_{| \psi \rangle \! \langle \psi|}  \left\| \mathcal{A} \otimes \mathcal{I} (\ketbra{\psi}) - \mathcal{B}\otimes \mathcal{I}(\ketbra{\psi}) \right\|_1\,.
\end{equation}

The following technical result will prove highly valuable for establishing bounds on very general Haar moments.

\begin{lemma} \label{lem:convex_function}
Fix $A \in \mathbb{H}_d$ psd ($A 
\succeq 0)$. Then, the function $h(X) = \Tr \left( XAXA\right)$ is nonnegative and convex for all $X \in \mathbb{H}_d$.
\end{lemma}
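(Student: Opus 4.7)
The plan is to reduce both claims to the Hilbert-Schmidt (Frobenius) norm of a single Hermitian matrix, via the positive square root $B := A^{1/2}$. Since $A \succeq 0$, such a $B \in \mathbb{H}_d$ exists, is Hermitian, and satisfies $B^2 = A$. Using cyclicity of the trace, I will rewrite
\begin{equation*}
h(X) = \Tr(XAXA) = \Tr\bigl(X B^2 X B^2\bigr) = \Tr\bigl((BXB)(BXB)\bigr) = \|BXB\|_2^2 .
\end{equation*}
Since $X \in \mathbb{H}_d$ implies $BXB \in \mathbb{H}_d$, this is the squared Frobenius norm of a Hermitian matrix, which settles nonnegativity immediately.

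For convexity, I would expand $h(\tau X + (1-\tau)Y)$ and collect terms. The cross terms are equal by cyclicity: $\Tr(XAYA) = \Tr(YAXA)$. A short computation shows that
\begin{equation*}
\tau h(X) + (1-\tau) h(Y) - h\bigl(\tau X + (1-\tau)Y\bigr) = \tau(1-\tau)\bigl[h(X) + h(Y) - 2\Tr(XAYA)\bigr] ,
\end{equation*}
so convexity is equivalent to the pointwise inequality $h(X) + h(Y) \geq 2\Tr(XAYA)$. Substituting $B = A^{1/2}$ and using cyclicity once more,
\begin{equation*}
h(X) + h(Y) - 2\Tr(XAYA) = \Tr\bigl((BXB)^2\bigr) + \Tr\bigl((BYB)^2\bigr) - 2\Tr\bigl((BXB)(BYB)\bigr) = \bigl\|B(X-Y)B\bigr\|_2^2 \geq 0 .
\end{equation*}
Because $B(X-Y)B$ is Hermitian, the right-hand side is a genuine squared norm, completing the argument.

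I do not anticipate a serious obstacle here: once one introduces $B = A^{1/2}$, the function $h$ is literally the squared Frobenius norm of the linear map $X \mapsto BXB$ on $\mathbb{H}_d$, and such ``squared norm of a linear image'' functions are automatically nonnegative and convex. The only mild care needed is the existence and Hermiticity of $A^{1/2}$, which is guaranteed by $A \succeq 0$, and keeping the cyclic manipulations straight so that the cross term $\Tr(XAYA)$ really does equal $\Tr((BXB)(BYB))$.
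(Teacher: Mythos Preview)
Your proof is correct and takes essentially the same route as the paper: both establish nonnegativity from the psd structure of $A$ (you via $B=A^{1/2}$ and $h(X)=\|BXB\|_2^2$, the paper via eigendecomposition), and both reduce convexity to the identity $\tau h(X)+(1-\tau)h(Y)-h(\tau X+(1-\tau)Y)=\tau(1-\tau)\,h(X-Y)\geq 0$. Your square-root reformulation is a slightly cleaner packaging of the same idea.
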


\begin{proof}
Apply an eigenvalue decomposition: $A = U (\sum_{i=1}^d \alpha_i |i \rangle \! \langle i|) U^\dagger$. The assumption that $A$ is psd ensures $\alpha_1,\ldots,\alpha_d \geq 0$. Next, fix $X \in \mathbb{H}_d$ arbitrary, set $\tilde{X} = U^\dagger X U$ and compute
\begin{equation}
\Tr(XAXA) = \sum_{i,j=1}^d \alpha_i \alpha_j \left| \langle i| \tilde{X} |j \rangle \right|^2 \geq 0\,.
\end{equation}
This establishes non-negativity of $h(X)$.
For convexity, fix $X,Y \in \mathbb{H}_d$ and $\tau \in \left[0,1\right]$. Set $\bar{\tau} =1-\tau$ and note that $\tau \bar{\tau} = \tau - \tau^2 = \bar{\tau} - \bar{\tau}^2 \geq 0$.
Nonnegativity moreover implies $h(X-Y)\geq 0$ and we can readily deduce convexity:
\begin{align}
h (\tau X + \bar{\tau} Y )
&= \tau^2 \Tr(XAXA) + 2 \tau \bar{\tau} \Tr(XAYA) + \bar{\tau}^2 \Tr(YAYA) \nn
&= \tau h(X) - \tau \bar{\tau} \left( \Tr(XAXA) - 2 \Tr (XAYA) + \Tr(YAYA) \right) + \bar{\tau} h(Y) \nn
&= \tau h(X)- \tau \bar{\tau} h(X-Y) + \bar{\tau} h(Y) \leq \tau h(X) + \bar{\tau}h(Y)\,.
\end{align}
\end{proof}

\subsection{Wiring calculus} \label{sub:wiring}

Wiring diagrams, sometimes also known as tensor network diagrams, provide a graphical way for computing contractions between tensors. 
Here we only provide a brief overview and refer to the recent survey \cite{bridgeman_handwaving_2017} and lecture notes \cite{Kueng2019} for a detailed introduction. The wiring formalism associates a box with every tensor and a line emanating from the box with every index.
Connected lines represent contracted indices. More precisely, we place contravariant indices of a tensor on the left of the box and covariant ones on the right. \autoref{tab:wiring_basics} contains all the essential rules necessary for the scope of this work.
\begin{table}
\centering
\taburulecolor{gray}
\begin{tabu}{|l|c|c|}
\hline
\begin{tikzpicture}[baseline,scale=0.5]
\node at (0,0)
{ket vector};
\end{tikzpicture}
&
 \begin{tikzpicture}[baseline,scale=0.5]
 \node at (0,0)
{$| \psi \rangle \in \mathbb{C}^d$};
\end{tikzpicture}
& 
\begin{tikzpicture}[scale=0.5,baseline]
\draw[white] (0,-0.7) -- (0,0.7);
\draw[rounded corners] (-0.5,-0.5) rectangle (0.5,0.5);
\node at (0,0) {$\psi$};
\draw[thick] (-1,0) -- (-0.5,0);
\end{tikzpicture} \\
\hline \hline 
\begin{tikzpicture}[baseline,scale=0.5]
\node at (0,0)
{bra vector};
\end{tikzpicture}
&
 \begin{tikzpicture}[baseline,scale=0.5]
\node at (0,0) {$\langle \phi | \in \left(\mathbb{C}^d \right)^* \simeq \mathbb{C}^d$};
\end{tikzpicture}
&
\begin{tikzpicture}[scale = 0.5,baseline]
\draw[white] (0,-0.7) -- (0,0.7);
\draw[rounded corners] (-0.5,-0.5) rectangle (0.5,0.5);
\node at (0,0) {$\phi$};
\draw[thick] (0.5,0) -- (1,0);
\end{tikzpicture} \\
\hline \hline
\begin{tikzpicture}[baseline,scale=0.5]
\node at (0,0)
{inner product (contraction)};
\end{tikzpicture}
&
\begin{tikzpicture}[baseline,scale=0.5]
\node at (0,0)
{$\langle \phi| \psi \rangle$};
\end{tikzpicture}

&
\begin{tikzpicture}[scale =0.5,baseline]
\draw[white] (0,-0.7) -- (0,0.7);
\draw[rounded corners] (-1.25,-0.5) rectangle (-0.25,0.5);
\node at (-0.75,0) {$\phi$};
\draw[thick] (-0.25,0) -- (0.25,0);
\draw[rounded corners] (0.25,-0.5) rectangle (1.25,0.5);
\node at (0.75,0) {$\psi$};
\end{tikzpicture} \\
\hline \hline 
\begin{tikzpicture}[baseline,scale=0.5]
\node at (0,0)
{matrix};
\end{tikzpicture}
&
\begin{tikzpicture}[baseline,scale=0.5]
\node at (0,0)
{$A \in \mathbb{M}_d$};
\end{tikzpicture}
 &
\begin{tikzpicture}[scale=0.5,baseline]
\draw[white] (0,-0.7) -- (0,0.7);
\draw[rounded corners] (-0.5,-0.5) rectangle (0.5,0.5);
\node at (0,0) {$A$};
\draw[thick] (0.5,0) -- (1,0);
\draw[thick] (-1,0) -- (-0.5,0);
\end{tikzpicture} \\
\hline \hline 
\begin{tikzpicture}[baseline,scale=0.5]
\node at (0,0)
{matrix product of $A,B \in \mathbb{M}_d$};
\end{tikzpicture}
&
\begin{tikzpicture}[baseline,scale=0.5]
\node at (0,0)
{$AB \in \mathbb{M}_d$};
\end{tikzpicture}
&
\begin{tikzpicture}[scale=0.5,baseline]
\draw[white] (0,-0.7) -- (0,0.7);
\draw[thick] (-1.75,0) -- (-1.25,0);
\draw[rounded corners] (-1.25,-0.5) rectangle (-0.25,0.5);
\node at (-0.75,0) {$A$};
\draw[thick] (-0.25,0) -- (0.25,0);
\draw[rounded corners] (0.25,-0.5) rectangle (1.25,0.5);
\node at (0.75,0) {$B$};
\draw[thick] (1.25,0) -- (1.75,0);
\end{tikzpicture} \\
\hline \hline 
\begin{tikzpicture}[baseline,scale=0.5]
\node at (0,0.25)
{matrix trace (contraction)};
\end{tikzpicture}
&
\begin{tikzpicture}[baseline,scale=0.5]
\node at (0,0.25)
{$\Tr(A) \in \mathbb{C}$};
\end{tikzpicture}
&
\begin{tikzpicture}[scale=0.5,baseline]
\draw[white] (0,-0.7) -- (0,1.2);
\draw[rounded corners] (-0.5,-0.5) rectangle (0.5,0.5);
\node at (0,0) {$A$};
\draw[thick,rounded corners] (0.5,0) -- (1,0) -- (1,1) -- (-1,1) -- (-1,0) -- (-0.5,0);
\end{tikzpicture} \\
\hline \hline 
\begin{tikzpicture}[baseline,scale=0.5]
\node at (0,0)
{tensor product (vectors)};
\end{tikzpicture}
&
\begin{tikzpicture}[baseline,scale=0.5]
\node at (0,0)
{$| \psi \rangle \otimes | \phi \rangle \in (\mathbb{C}^d)^{\otimes 2}$};
\end{tikzpicture}
&
\begin{tikzpicture}[scale = 0.5,baseline]
\draw[white] (0,-1.45) -- (0,1.45);
\draw[rounded corners] (-0.5,0.25) rectangle (0.5,1.25);
\node at (0,0.75) {$\psi$};
\draw[thick] (-1,0.75) -- (-0.5,0.75);
\draw[rounded corners] (-0.5,-1.25) rectangle (0.5,-0.25);
\node at (0,-0.75) {$\phi$};
\draw[thick] (-1,-0.75) -- (-0.5,-0.75);
\end{tikzpicture} \\
\hline \hline 
\begin{tikzpicture}[baseline,scale=0.5]
\node at (0,0)
{tensor product (matrices)};
\end{tikzpicture}
& 
\begin{tikzpicture}[baseline,scale=0.5]
\node at (0,0)
{$A \otimes B \in \mathbb{H}_d^{\otimes 2}$};
\end{tikzpicture}
&
\begin{tikzpicture}[scale = 0.5,baseline]
\draw[white] (0,-1.45) -- (0,1.45);
\draw[rounded corners] (-0.5,0.25) rectangle (0.5,1.25);
\node at (0,0.75) {$A$};
\draw[thick] (-1,0.75) -- (-0.5,0.75);
\draw[thick] (0.5,0.75) -- (1,0.75);
\draw[rounded corners] (-0.5,-1.25) rectangle (0.5,-0.25);
\node at (0,-0.75) {$B$};
\draw[thick] (-1,-0.75) -- (-0.5,-0.75);
\draw[thick] (0.5,-0.75) -- (1,-0.75);
\end{tikzpicture} \\
\hline
\end{tabu}
\caption{\emph{Basic building blocks of wiring calculus.}} \label{tab:wiring_basics}
\end{table}

Importantly lines can be bent at will without changing the value of an equation.\footnote{Technically, this is only true for bending lines an even number of times. A single bend corresponds to transposition, which is basis dependent and not equivalent to conjugation. This subtlety, however, will rarely feature in our arguments.}
For instance, let $\rho =\ketbra{\psi} \in \mathbb{H}_d$ be a pure quantum state and suppose that $M \in\mathbb{H}_d$ is measurement. We can then represent Born's rule pictographically as
\begin{equation}
\Tr \left( M \rho \right)
=
\begin{tikzpicture}[scale=0.7,baseline=-1mm]
\draw[rounded corners] (-1.25,-0.5) rectangle (-0.25,0.5);
\node at (-0.75,0) {$M$};
\draw[thick] (-0.25,0) -- (0.25,0);
\draw[rounded corners] (0.25,-0.5) rectangle (1.25,0.5);
\node at (0.75,0) {$\rho$};
\draw[thick,rounded corners] (1.25,0) -- (1.75,0)--(1.75,1) -- (-1.75,1) -- (-1.75,0) -- (-1.25,0);
\end{tikzpicture}
=
\begin{tikzpicture}[scale=0.7,baseline=-1mm]
\draw[rounded corners] (-1.25,-0.5) rectangle (-0.25,0.5);
\node at (-0.75,0) {$M$};
\draw[thick] (-0.25,0) -- (0.25,0);
\draw[rounded corners] (0.25,-0.5) rectangle (0.75,0.5);
\node at (0.5,0) {$\psi$};
\draw[rounded corners] (1,-0.5) rectangle (1.5,0.5);
\node at (1.25,0) {$\psi$};
\draw[thick, rounded corners] (1.5,0) -- (2,0) -- (2,1) -- (-1.75,1) -- (-1.75,0) -- (-1.25,0);
\end{tikzpicture}
=
\begin{tikzpicture}[scale=0.7,baseline=-1mm]
\draw[rounded corners] (-0.5,-0.5) rectangle (0.5,0.5);
\node at (0,0) {$M$};
\draw[thick] (0.5,0) -- (1,0);
\draw[thick] (-1,0) -- (-0.5,0);
\draw[rounded corners] (1,-0.5) rectangle (1.5,0.5);
\node at (1.25,0) {$\psi$};
\draw[rounded corners] (-1.5,-0.5) rectangle (-1,0.5);
\node at (-1.25,0) {$\psi$};
\end{tikzpicture}
= \braket{\psi |M | \psi}\,.
\end{equation}

\ni Partial traces also assume a simple form. For $X \in \mathbb{H}_d \otimes \mathbb{H}_d$
\begin{equation}
\Tr_1 (X) = 
\begin{tikzpicture}[baseline=-1mm,scale=0.55]
\draw[thick, rounded corners] (0,0.5) -- (1.15,0.5) -- (1.15,1.25) -- (-1.15,1.25) -- (-1.15,0.5) -- (0,0.5);
\draw[thick] (-1.15,-0.5) -- (1.15,-0.5);
\draw[rounded corners,fill=white] (-0.75,-0.9) rectangle (0.75,0.9);
\node at (0,0) {$X$};
\end{tikzpicture}
\quad \textrm{and} \quad 
\Tr_2 (X) = 
\begin{tikzpicture}[baseline=-1mm,scale=0.55]
\draw[thick, rounded corners] (0,-0.5) -- (1.15,-0.5) -- (1.15,-1.25) -- (-1.15,-1.25) -- (-1.15,-0.5) -- (0,-0.5);
\draw[thick] (-1.15,0.5) -- (1.15,0.5);
\draw[rounded corners,fill=white] (-0.75,-0.9) rectangle (0.75,0.9);
\node at (0,0) {$X$};
\end{tikzpicture}\,.
\end{equation}
Wiring calculus is exceptionally well suited to keep track of \emph{flip operators}. Define $\mathbb{F} |i \rangle | \otimes |j \rangle = |j \rangle \otimes |i \rangle$ via its action on computational basis elements and extend this definition linearly to $\mathbb{C}^d \otimes \mathbb{C}^d$. Then,
\begin{equation}
\begin{tikzpicture}[baseline=-1mm,scale=0.55]
\draw[thick] (-1.15,0.5) -- (1.15,0.5);
\draw[thick] (-1.15,-0.5) -- (1.15,-0.5);
\draw[rounded corners,fill=white] (-0.75,-0.9) rectangle (0.75,0.9);
\node at (0,0) {$\mathbb{F}$};
\end{tikzpicture}
=
\begin{tikzpicture}[baseline=-1mm,scale=0.55]
\draw[thick, rounded corners] (-3,0.5) -- (-2,0.5) -- (-1.5,-0.5) --(-0.5,-0.5);
\draw[thick, rounded corners] (-3,-0.5) -- (-2,-0.5) -- (-1.5,0.5) -- (-0.5,0.5);
\end{tikzpicture} \,.
\end{equation}
\emph{Vectorization} is a linear map  vec: $\mathbb{M}_d \to \mathbb{C}^d \otimes \mathbb{C}^d$ defined by its action on computational basis elements
\begin{equation}
|\mathrm{vec}\left(|i \rangle \! \langle j| \right) \rangle := |i \rangle \otimes |j \rangle\,,
\end{equation}
and linearly extended to all of $\mathbb{M}_d$. 
In wiring calculus, $|\phi \rangle = |\mathrm{vec}(\Phi) \rangle$ corresponds to bending the right (covariant) index of a matrix $A$ to the left (into a contravariant one):
\begin{equation}
\begin{tikzpicture}[baseline=-1mm,scale=0.55]
\draw[rounded corners] (0,-0.9) rectangle (0.5,0.9);
\node at (0.25,0) {$\phi$};
\draw[thick] (-0.5,0.5) -- (0,0.5);
\draw[thick] (-0.5,-0.5) -- (0,-0.5);
\end{tikzpicture}
=
\begin{tikzpicture}[baseline=-1mm,scale=0.55]
\draw[thick,rounded corners] (-0.5,0.5) -- (1.5,0.5) -- (1.5,-0.5) -- (-0.5,-0.5);
\draw[rounded corners, fill=white] (0.05,-0.95) rectangle (0.95,-0.05);
\node at (0.5,-0.5) {$\Phi$};
\end{tikzpicture}
\quad \textrm{and} \quad
\begin{tikzpicture}[baseline=-1mm,scale=0.55]
\draw[rounded corners] (0,-0.9) rectangle (0.5,0.9);
\node at (0.25,0) {$\phi$};
\draw[thick] (0.5,0.5) -- (1,0.5);
\draw[thick] (0.5,-0.5) -- (1,-0.5);
\end{tikzpicture}
=
\begin{tikzpicture}[baseline=-1mm,scale=0.55]
\draw[thick,rounded corners] (0.5,0.5) -- (-1.5,0.5) -- (-1.5,-0.5) -- (0.5,-0.5);
\draw[rounded corners, fill=white] (-0.95,-0.95) rectangle (-0.05,-0.05);
\node at (-0.5,-0.5) {$\Phi^\dagger$};
\end{tikzpicture}~.
\end{equation}
It is easy to see that vectorization is an isometry:
\begin{equation}
\langle \phi | \phi \rangle
=
\begin{tikzpicture}[baseline=-1mm,scale=0.55]
\draw[rounded corners] (-0.5,-0.9) rectangle (0,0.9);
\node at (-0.25,0) {$\phi$};
\draw[rounded corners] (1,-0.9) rectangle (1.5,0.9);
\node at (1.25,0) {$\phi$};
\draw[thick] (0,0.5) -- (1,0.5);
\draw[thick] (0,-0.5) -- (1,-0.5);
\end{tikzpicture}
=
\begin{tikzpicture}[baseline=-1mm,scale=0.55]
\draw[thick, rounded corners] (0,0.5) -- (1.75,0.5) -- (1.75,-0.5) -- (-1.75,-0.5) -- (-1.75,0.5) -- (0,0.5);
\draw[rounded corners, fill=white] (-1.2,-0.95) rectangle (-0.3,-0.05);
\node at (-0.75,-0.5) {$\Phi^\dagger$};
\draw[rounded corners, fill=white] (0.3,-0.95) rectangle (1.2,-0.05);
\node at (0.74,-0.5) {$\Phi$};
\end{tikzpicture}
= \Tr \left( \Phi^\dagger \Phi \right) = \| \Phi \|_2^2\,.
\end{equation}

\subsection{Random unitaries and \ktitle-designs} \label{sub:designs}
Here we introduce a few essential concepts from quantum information theory, including a discussion of random unitaries and the notion of a design. First, recall that the Haar measure is the unique left/right invariant measure on the unitary group $U(d)$. We will often be interested in moments of the Haar ensemble. Consider an operator $X$ acting on the $k$-fold Hilbert space $(\C^d)^{\otimes k}$, the $k$-fold channel, or $k$-fold twirl, of the operator with respect to the Haar measure on the unitary group is
\begin{equation}
\CT^{(k)}_{U}(X) = \int dU\, U^{\otimes k} (X) U^\dagger{}^{\otimes k}\,.
\end{equation}
Similarly, we can average an operator over an ensemble of unitaries $\CE = \{p_i,U_i\}$, a weighted subset of the full unitary group. The $k$-fold channel with respect to $\CE$ is
\begin{equation}
\CT^{(k)}_{\CE}(X) = \sum_i p_i U_i^{\otimes k} (X) U_i^\dagger{}^{\otimes k}\,,
\end{equation}
here written for a discrete ensemble, but such an ensemble might be discrete or continuous. 

\paragraph{Unitary $k$-designs}
We will often be interested in how well an average over an ensemble captures an average over the full unitary group, \ie how random the ensemble is with respect to the Haar measure on $U(d)$. 
A {\it unitary $k$-design} is an ensemble of unitaries $\CE=\{ p_i,U_i\}$, for which the $k$-fold twirl equals its Haar-random counterpart:
\begin{equation}
\CT^{(k)}_{\CE}(X) = \CT^{(k)}_{U}(X)\quad \textrm{for all} \quad X \in \mathbb{H}_d^{\otimes k}.
\end{equation}
This means that the ensemble $\CE$ exactly captures the first $k$ moments of the Haar ensemble. Unitary operator bases, such as the $n$-qubit Pauli group, form an exact 1-design. But very little is known about the construction of exact designs for higher $k$, with the notable exception of $k=3$ and the $n$-qubit Clifford group \cite{zhu_multiqubit_2017,webb_clifford_2015,Kueng15}. We will return to this point when discussing approximate designs. 

\paragraph{Schur-Weyl duality}
Many of the important analytic expressions for Haar averages rely on {\it Schur-Weyl duality} \cite{Fulton91, christandl_phd_2006}, a deep connection between irreducible representations (irreps) of the unitary group $U(d)$ and the symmetric group $S_k$. First, when thinking about $k$-fold Hilbert spaces, there is a useful set of operators that acts on this space, namely permutations of the $k$ copies. A permutation operator $P_\sigma$ acts on the computational basis of $(\C^d)^{\otimes k}$ as
\begin{equation}
P_\sigma \ket{i_1,\ldots,i_k} = \ket{i_{\sigma^{-1}(1)},\ldots,i_{\sigma^{-1}(k)}}\,.
\end{equation}
This action can be extended linearly to all of $(\mathbb{C}^d)^{\otimes k}$.
Schur-Weyl duality is the statement that an operator acting on $(\C^d)^{\otimes k}$ commutes with all $k$-fold unitaries $U^{\otimes k}$ if and only if it is a linear combination of permutation operators
\begin{equation}
U^{\otimes k}X U^\dagger{}^{\otimes k} = X \quad\longleftrightarrow\quad X = \sum_{\sigma\in S_k} c_\sigma P_\sigma\,.
\end{equation}
Many of the exact expressions for Haar moments and random unitary averages in the following subsection follow directly from this powerful result. 

\subsection{Haar-integration over the unitary group}
We now introduce the general formalism for integrating arbitrary moments of random unitaries over the full unitary group with respect to the Haar measure, often referred to as Weingarten calculus. The exact expression \cite{Collins02,Collins04} for integrating the $k$-th moment of $U(d)$ is
\begin{equation}
\int dU\, U_{i_1 j_1}\ldots U_{i_k j_k} U^\dagger_{\ell_1 m_1} \ldots U^\dagger_{\ell_k m_k} = \sum_{\sigma,\tau \in S_k} \delta_\sigma(\vec \imath\,|\vec m) \delta_\tau(\vec \jmath\,|\vec \ell\,) \Wg(\sigma^{-1}\tau, d)\,,
\label{eq:Uint}
\end{equation}
where we sum over elements of the permutation group $S_k$ and define a contraction of indices with respect to a permutation $\sigma \in S_k$ as
\begin{equation}
\delta_\sigma(\vec \imath\,|\vec \jmath\,) := \prod_{s=1}^k \delta_{i_s j_{\sigma(s)}} = \delta_{i_1 j_{\sigma(1)}} \ldots \delta_{i_k j_{\sigma(k)}}\,.
\end{equation}
Mixed moments of $U$ and $U^\dagger$, \ie averages of $U^{\otimes k}\otimes U^\dagger{}^{\otimes k'}$ with $k\neq k'$, vanish identically. 

It will often be convenient to interpret the index contraction $\delta_\sigma(\vec \imath\,|\vec \jmath\,)$ as a permutation operator acting on the computational basis of the $k$-fold space,
\begin{equation}
\delta_\sigma(\vec \imath\,|\vec \jmath\,) = P_\sigma\,.
\end{equation}
For instance, two examples of contractions for $k=4$ are
\begin{equation}
\delta_{\{2,1,4,3\}} (\vec \imath\,|\vec \jmath\,) = 
\begin{tikzpicture}[baseline=-1mm,scale=0.55]
\draw[thick, rounded corners] (-1,1.5) -- (-0.5,1.5) -- (0.5,0.5) -- (1,0.5);
\draw[thick, rounded corners] (-1,0.5) -- (-0.5,0.5) -- (0.5,1.5) -- (1,1.5);
\draw[thick, rounded corners] (-1,-0.5) -- (-0.5,-0.5) -- (0.5,-1.5) -- (1,-1.5);
\draw[thick, rounded corners] (-1,-1.5) -- (-0.5,-1.5) -- (0.5,-0.5) -- (1,-0.5);
\node[anchor=east] at (-0.8,1.5) {$i_1$};
\node[anchor=east] at (-0.8,0.5) {$i_2$};
\node[anchor=east] at (-0.8,-0.5) {$i_3$};
\node[anchor=east] at (-0.8,-1.5) {$i_4$};
\node[anchor=west] at (0.8,1.5) {$j_1$};
\node[anchor=west] at (0.8,0.5) {$j_2$};
\node[anchor=west] at (0.8,-0.5) {$j_3$};
\node[anchor=west] at (0.8,-1.5) {$j_4$};
\end{tikzpicture}
\and 
\delta_{\{2,3,4,1\}} (\vec \imath\,|\vec \jmath\,) = 
\begin{tikzpicture}[baseline=-1mm,scale=0.55]
\draw[thick, rounded corners] (-1,1.5) -- (-0.5,1.5) -- (0.5,0.5) -- (1,0.5);
\draw[thick, rounded corners] (-1,0.5) -- (-0.5,0.5) -- (0.5,-0.5) -- (1,-0.5);
\draw[thick, rounded corners] (-1,-0.5) -- (-0.5,-0.5) -- (0.5,-1.5) -- (1,-1.5);
\draw[thick, rounded corners] (-1,-1.5) -- (-0.5,-1.5) -- (0.5,1.5) -- (1,1.5);
\node[anchor=east] at (-0.8,1.5) {$i_1$};
\node[anchor=east] at (-0.8,0.5) {$i_2$};
\node[anchor=east] at (-0.8,-0.5) {$i_3$};
\node[anchor=east] at (-0.8,-1.5) {$i_4$};
\node[anchor=west] at (0.8,1.5) {$j_1$};
\node[anchor=west] at (0.8,0.5) {$j_2$};
\node[anchor=west] at (0.8,-0.5) {$j_3$};
\node[anchor=west] at (0.8,-1.5) {$j_4$};
\end{tikzpicture}\,.
\end{equation}

The weight associated to a given contraction is called the the Weingarten function, $\Wg(\sigma, d)$. It is a function on elements of $S_k$ and admits an expansion in terms of characters of the symmetric group
\begin{equation}
\Wg(\sigma, d) = \frac{1}{k!} \sum_{\lambda\vdash k} \frac{f_\lambda \chi_\lambda(\sigma)}{c_\lambda(d)},
\label{eq:WgU}
\end{equation}
where we sum we sum over the integer partitions of $k$ that label the irreps of $S_k$; $\chi_\lambda(\sigma)$ is an irreducible character of $\lambda$, and $f_\lambda$ is the dimension of the irrep $\lambda$. The polynomial in the denominator is defined as
\begin{equation}
c_\lambda(d) = \prod_{(i,j)\in \lambda} (d+j-1)\,,
\end{equation}
where we take a product over the coordinates $(i,j)$ of the Young diagram of $\lambda$. Writing $\lambda$ as an integer partition of $k$, with elements $\lambda_i$, the product is taken over $i$ from 1 to $\ell(\lambda)$, the length of the partition, and $j$ from 1 to $\lambda_i$. The expression for the Weingarten function in Eq.~\eqref{eq:WgU}, is valid for $k\geq d$ by restricting the sum over partitions of length $\ell(\lambda)\leq d$ (such that the polynomial $c_\lambda(d)$ in the denominator is free of zeroes). 

The Weingarten functions only depend on the cycle type of the permutation, where the cycle type of $\sigma\in S_k$ is an integer partition of $k$. We end this brief exposition by listing the first few unitary Weingarten functions, labeled by cycle type. For $k=1$, $\Wg((1),d) = \frac{1}{d}$, and for $k=2$, we have
\begin{equation}
\Wg((1,1),d) = \frac{1}{d^2-1}\,, \and \Wg((2),d) = -\frac{1}{d(d^2-1)}\,.
\end{equation}

\paragraph{$k$-fold twirl over $U(d)$}
The $k$-fold twirl of an operator over the unitary group can be written using Eq.~\eqref{eq:Uint} as
\begin{equation}
\CT^{(k)}_{U} (X) = \mathbb{E}_{U} \left[ U^{\otimes k} (X) U^\dagger{}^{\otimes k} \right] = \sum_{\sigma,\tau \in S_k} \Wg(\sigma^{-1}\tau, d) P_\sigma \Tr(X P_\tau)\,.
\label{eq:Utwirl}
\end{equation}
This expression equivalently follows from noting that, by the invariance of the Haar measure, the $k$-fold twirl $\CT^{(k)}_{U} $ is invariant both under $k$-fold unitary conjugation and under $k$-fold conjugation of $X$. 

We also note that the $k$-fold twirl of a permutation operator is $\CT^{(k)}_{U} (P_\rho) = P_\rho$. Eq.~\eqref{eq:Utwirl}, then gives that $\Wg(\sigma^{-1}\tau, d) \Tr(P_\tau P_\rho) = \delta_{\sigma,\rho}$. Viewed as a matrix equation, the matrix of Weingarten functions $\Wg_{(k)}$ is the pseudoinverse of the $k! \times k!$ matrix $G_{(k)}$ of inner products of permutation operators $P_\sigma$ (the Gram matrix of $P_\sigma$'s). The elements of $G_k $ are the inner products between permutation operators, $\Tr(P_\sigma P_\tau) = d^{\ell (\sigma^{-1} \tau)}$, where $\ell (\sigma^{-1} \tau)$ simply counts the number of closed cycles in the permutation product (equivalently, the length of the cycle type of the product):
\begin{equation}
\Wg_{(k)} =G_{(k)}^{-1} \with \Wg_{(k)} = \big(\Wg(\sigma^{-1} \tau, d)\big)_{\sigma,\tau\in S_k} \and G_{(k)} = \big(\Tr(P_\sigma P_\tau)\big)_{\sigma,\tau\in S_k} \,.
\label{eq:Wginv}
\end{equation}
The matrix inverse exists for $k\leq d$. Although elegant, this derivation of the Weingarten functions quickly becomes intractable as we need to invert a $k!\times k!$ matrix. The representation theoretic definition in Eq.~\eqref{eq:WgU} is straightforward to use in computing high moments. 

\paragraph{Wiring diagrams for the first few Haar moments}
To set up the calculations that will follow in the next section, we explicitly write out the wiring diagrams in the first two moments, detailing the index contractions one must take. For $k=1$, we simply have
\begin{align}
\mathbb{E}_U \left[
\begin{tikzpicture}[baseline=-1mm,scale=0.55]
\draw[thick, rounded corners] (-3,0.5) -- (-0.5,0.5) -- (-0.5,-0.75) -- (-3,-0.75);
\draw[thick, rounded corners] (3,0.5) -- (0.5,0.5) -- (0.5,-0.75) -- (3,-0.75);
\draw[rounded corners, fill=white] (-2,0) rectangle (-1,1);
\draw[rounded corners, fill=white] (1,0) rectangle (2,1);
\node at (-1.5,0.5) {$U$};
\node at (1.5,0.5) {$U^\dagger$};
\end{tikzpicture}
\right]
=
\sum_{\sigma,\tau \in S_1} \Wg (\sigma^{-1}\tau,d)
\begin{tikzpicture}[baseline=-2mm,scale=0.55]
\draw[thick] (-1.5,0.5) -- (1.5,0.5);
\draw[thick] (-1.5,-0.75) -- (1.5,-0.75);
\draw[rounded corners, fill=white] (-0.5,0) rectangle (0.5,1);
\draw[rounded corners, fill=white] (-0.5,-1.25) rectangle (0.5,-0.25);
\node at (0,0.5) {$P_\sigma$};
\node at (0,-0.75) {$P_\tau$};
\end{tikzpicture}
~=~
\frac{1}{d}~
\begin{tikzpicture}[baseline=-1mm,scale=0.55]
\foreach \y in {0.5,-0.5}
{
\draw[thick, rounded corners] (-1,\y)--(1,\y);
}
\end{tikzpicture}
\end{align}
For $k=2$, we sum over elements of $S_2$, separately permuting the internal and external indices as 
\begin{align}
&\mathbb{E}_U \left[
\begin{tikzpicture}[baseline,scale=0.55]
\foreach \y in {-0.5,1.5}
{
\draw[thick, rounded corners] (-3,\y) -- (-0.5,\y) -- (-0.5,\y-0.5);
\draw[thick, rounded corners] (0.5,\y-0.5) -- (0.5,\y) -- (3,\y);
\draw[rounded corners,fill=white] (-2,\y-0.5) rectangle (-1,\y+0.5);
\node at (-1.5,\y) {$U$};
\draw[rounded corners,fill=white] (1,\y-0.5) rectangle (2,\y+0.5);
\node at (1.5,\y) {$U^\dagger$};
}
\foreach \y in {-1.5,0.5}
{
\draw[thick, rounded corners] (-3,\y) -- (-0.5,\y) -- (-0.5,\y+0.5);
\draw[thick, rounded corners] (0.5,\y+0.5) -- (0.5,\y) -- (3,\y);
}
\end{tikzpicture}
\right]
=
\sum_{\sigma,\tau \in S_2} \Wg (\sigma^{-1} \tau,d)~
\begin{tikzpicture}[baseline=-1mm,scale=0.55]
\draw[thick] (-3,1.5) -- (3,1.5);
\draw[thick] (-3,-1.5) -- (3,-1.5);
\draw[thick, rounded corners] (-3,0.5) -- (-2,0.5) -- (-1.5,-0.5) -- (1.5,-0.5) -- (2,0.5) -- (3,0.5);
\draw[thick, rounded corners] (-3,-0.5) -- (-2,-0.5) -- (-1.5,0.5) -- (1.5,0.5) -- (2,-0.5) -- (3,-0.5);
\draw[rounded corners,fill=white] (-0.75,0.25) rectangle (0.75,1.75);
\node at (0,1) {$P_\sigma$};
\draw[rounded corners,fill=white] (-0.75,-1.75) rectangle (0.75,-0.25);
\node at (0,-1) {$P_\tau$};
\end{tikzpicture} \vspace*{3pt}\nn
&\qquad\qquad = \frac{1}{d^2-1}
\left(~
\begin{tikzpicture}[baseline,scale=0.55]
\draw[thick, rounded corners] (-1.75,1.5) -- (-0.5,1.5);
\draw[thick, rounded corners] (-1.75,0.5) -- (-1.25,0.5) -- (-0.75,-0.5) -- (-0.5,-0.5);
\draw[thick, rounded corners] (-1.75,-0.5) -- (-1.25,-0.5) -- (-0.75,0.5) -- (-0.5,0.5);
\draw[thick, rounded corners] (-1.75,-1.5) -- (-0.5,-1.5);
\foreach \y in {1.5,0.5,-0.5,-1.5}
{
\draw[thick, rounded corners] (-0.5,\y) -- (0.5,\y);
}
\draw[thick, rounded corners] (0.5,1.5) -- (1.75,1.5);
\draw[thick, rounded corners] (0.5,0.5) -- (0.75,0.5) -- (1.25,-0.5) -- (1.75,-0.5);
\draw[thick, rounded corners] (0.5,-0.5) -- (0.75,-0.5) -- (1.25,0.5) -- (1.75,0.5);
\draw[thick, rounded corners] (0.5,-1.5) -- (1.75,-1.5);
\end{tikzpicture}
~+~
\begin{tikzpicture}[baseline,scale=0.55]
\draw[thick, rounded corners] (-1.75,1.5) -- (-0.5,1.5);
\draw[thick, rounded corners] (-1.75,0.5) -- (-1.25,0.5) -- (-0.75,-0.5) -- (-0.5,-0.5);
\draw[thick, rounded corners] (-1.75,-0.5) -- (-1.25,-0.5) -- (-0.75,0.5) -- (-0.5,0.5);
\draw[thick, rounded corners] (-1.75,-1.5) -- (-0.5,-1.5);
\draw[thick, rounded corners] (-0.5,1.5) -- (-0.25,1.5) -- (0.25,0.5) -- (0.5,0.5);
\draw[thick, rounded corners] (-0.5,0.5) -- (-0.25,0.5) -- (0.25,1.5) -- (0.5,1.5);
\draw[thick, rounded corners] (-0.5,-0.5) -- (-0.25,-0.5) -- (0.25,-1.5) -- (0.5,-1.5);
\draw[thick, rounded corners] (-0.5,-1.5) -- (-0.25,-1.5) -- (0.25,-0.5) -- (0.5,-0.5);
\draw[thick, rounded corners] (0.5,1.5) -- (1.75,1.5);
\draw[thick, rounded corners] (0.5,0.5) -- (0.75,0.5) -- (1.25,-0.5) -- (1.75,-0.5);
\draw[thick, rounded corners] (0.5,-0.5) -- (0.75,-0.5) -- (1.25,0.5) -- (1.75,0.5);
\draw[thick, rounded corners] (0.5,-1.5) -- (1.75,-1.5);
\end{tikzpicture}
~- \frac{1}{d}~
\begin{tikzpicture}[baseline,scale=0.55]
\draw[thick, rounded corners] (-1.75,1.5) -- (-0.5,1.5);
\draw[thick, rounded corners] (-1.75,0.5) -- (-1.25,0.5) -- (-0.75,-0.5) -- (-0.5,-0.5);
\draw[thick, rounded corners] (-1.75,-0.5) -- (-1.25,-0.5) -- (-0.75,0.5) -- (-0.5,0.5);
\draw[thick, rounded corners] (-1.75,-1.5) -- (-0.5,-1.5);
\draw[thick, rounded corners] (-0.5,1.5) -- (-0.25,1.5) -- (0.25,0.5) -- (0.5,0.5);
\draw[thick, rounded corners] (-0.5,0.5) -- (-0.25,0.5) -- (0.25,1.5) -- (0.5,1.5);
\draw[thick, rounded corners] (-0.5,-0.5) -- (0.5,-0.5);
\draw[thick, rounded corners] (-0.5,-1.5) -- (0.5,-1.5);
\draw[thick, rounded corners] (0.5,1.5) -- (1.75,1.5);
\draw[thick, rounded corners] (0.5,0.5) -- (0.75,0.5) -- (1.25,-0.5) -- (1.75,-0.5);
\draw[thick, rounded corners] (0.5,-0.5) -- (0.75,-0.5) -- (1.25,0.5) -- (1.75,0.5);
\draw[thick, rounded corners] (0.5,-1.5) -- (1.75,-1.5);
\end{tikzpicture}
~- \frac{1}{d}~
\begin{tikzpicture}[baseline,scale=0.55]
\draw[thick, rounded corners] (-1.75,1.5) -- (-0.5,1.5);
\draw[thick, rounded corners] (-1.75,0.5) -- (-1.25,0.5) -- (-0.75,-0.5) -- (-0.5,-0.5);
\draw[thick, rounded corners] (-1.75,-0.5) -- (-1.25,-0.5) -- (-0.75,0.5) -- (-0.5,0.5);
\draw[thick, rounded corners] (-1.75,-1.5) -- (-0.5,-1.5);
\draw[thick, rounded corners] (-0.5,1.5) -- (1.5,1.5);
\draw[thick, rounded corners] (-0.5,0.5) -- (0.5,0.5);
\draw[thick, rounded corners] (-0.5,-0.5) -- (-0.25,-0.5) -- (0.25,-1.5) -- (0.5,-1.5);
\draw[thick, rounded corners] (-0.5,-1.5) -- (-0.25,-1.5) -- (0.25,-0.5) -- (0.5,-0.5);
\draw[thick, rounded corners] (0.5,1.5) -- (1.75,1.5);
\draw[thick, rounded corners] (0.5,0.5) -- (0.75,0.5) -- (1.25,-0.5) -- (1.75,-0.5);
\draw[thick, rounded corners] (0.5,-0.5) -- (0.75,-0.5) -- (1.25,0.5) -- (1.75,0.5);
\draw[thick, rounded corners] (0.5,-1.5) -- (1.75,-1.5);
\end{tikzpicture} ~\right) \vspace*{3pt} \nn
&\qquad\qquad = \frac{1}{d^2-1} 
\left(~
\begin{tikzpicture}[baseline,scale=0.55]
\foreach \y in {1.5,0.5,-0.5,-1.5}
{
\draw[thick] (-1,\y) -- (1,\y);
}
\end{tikzpicture}
~+~
\begin{tikzpicture}[baseline,scale=0.55]
\draw[thick, rounded corners] (-1,1.5) -- (-0.5,1.5) -- (0.5,-0.5) -- (1,-0.5);
\draw[thick, rounded corners] (-1,0.5) -- (-0.5,0.5) -- (0.5,-1.5) -- (1,-1.5);
\draw[thick, rounded corners] (-1,-0.5) -- (-0.5,-0.5) -- (0.5,1.5) -- (1,1.5);
\draw[thick, rounded corners] (-1,-1.5) -- (-0.5,-1.5) -- (0.5,0.5) -- (1,0.5);
\end{tikzpicture}
~- \frac{1}{d}~
\begin{tikzpicture}[baseline,scale=0.55]
\draw[thick, rounded corners] (-1,1.5) -- (-0.5,1.5) -- (0.5,-0.5) -- (1,-0.5);
\draw[thick, rounded corners] (-1,0.5) --(1,0.5);
\draw[thick, rounded corners] (-1,-0.5) -- (-0.5,-0.5) -- (0.5,1.5) -- (1,1.5);
\draw[thick, rounded corners] (-1,-1.5) -- (1,-1.5);
\end{tikzpicture}
~- \frac{1}{d}~
\begin{tikzpicture}[baseline,scale=0.55]
\draw[thick, rounded corners] (-1,1.5) -- (1,1.5);
\draw[thick, rounded corners] (-1,0.5) -- (-0.5,0.5) -- (0.5,-1.5) -- (1,-1.5);
\draw[thick, rounded corners] (-1,-0.5) --  (1,-0.5);
\draw[thick, rounded corners] (-1,-1.5) -- (-0.5,-1.5) -- (0.5,0.5) -- (1,0.5);
\end{tikzpicture}~
\right)\,.
\end{align}

\paragraph{Moments of traces}
We can use the formalism introduced above to compute a few simple expressions averaged over the unitary group, which will be of use in later sections. Consider the $2k$-th moment of the trace of a random unitary, $|\Tr (U)|^{2k}$, which we integrate over the unitary group as
\begin{equation}
\mathbb{E}_{U} \big[ |\Tr (U)|^{2k}\big] = \sum_{\sigma, \tau \in S_k} \Wg(\sigma^{-1}\tau,d)\Tr(P_\sigma P_\tau)\,,
\end{equation}
with $ \Tr(P_\sigma P_\tau) = d^{\ell(\sigma\tau)}$.
View this as a matrix equation, and recall that for $k\leq d$ the Weingarten functions are the inverse of the inner products Eq.~\eqref{eq:Wginv}.  Then, we simply have the trace of the identity matrix, a sum over $S_k$:
\begin{equation}
\mathbb{E}_{U} \left[ |\Tr (U)|^{2k} \right] = k!\,. \label{eq:haar_trace}
\end{equation}
This quantity is essentially the same as the frame potential \cite{gross_evenly_2007}, a quantity which quantifies the 2-norm distance between an ensemble of unitaries $\mathcal{E}$ and the Haar ensemble. The frame potential for any ensemble is lower bounded by this Haar value.

\paragraph{Averages of pure states}
Consider a Haar random state $\ket \psi = U \ket 0$, with $\ket 0 \in \C^d$ and $U \in U(d)$, and take the $k$-fold average with respect to the unitary group. Then,
\begin{equation}
\CT^{(k)}_{U}(\ketbra{\psi}^{\otimes k}\big) = \sum_{\sigma,\tau\in  S_k} \Wg(\sigma^{-1}\tau,d) P_\sigma \Tr(P_\tau \ketbra{\psi}^{\otimes k}) = \sum_{\sigma,\tau\in  S_k} \Wg(\sigma^{-1}\tau,d) P_\sigma\,,
\label{eq:chstate}
\end{equation}
as permuting and contracting the pure state moments is the same for any permutation. This also follows from Schur-Weyl duality by noting that the $k$-fold average is invariant under $k$-fold unitary conjugation and may thus be expressed as a sum of permutations. Fixing $\sigma$ above, the sum over $\tau$ just gives the sum over Weingarten functions, which is
\begin{equation}
\sum_{\tau\in S_k} \Wg(\tau, d) = \frac{1}{k!} \binom{k+d-1}{k}^{-1}\,.
\label{eq:Wgsum}
\end{equation}
Equivalently, we can fix this coefficient by taking the trace of Eq.~\eqref{eq:chstate}. Thus we find the the $k$-fold average of a pure state is
\begin{equation}
\CT^{(k)}_{U}(\ketbra{\psi}^{\otimes k}\big) = \binom{k+d-1}{k}^{-1}\Pi_{\rm sym}\,,
\end{equation}
where $\Pi_{\rm sym} = \frac{1}{k!}\sum_{\sigma\in S_k} P_\sigma $ is the projector onto the symmetric subspace and $\binom{k+d-1}{k}$ is the corresponding dimension.

A similar calculation is to consider the moments of the expectation value of a conjugated operator $\braket{\psi|U^\dagger M U|\psi}$, where $\ket\psi\in \C^d$ and a a Hermitian operator $M \in \mathbb{H}_d$. We find
\begin{equation}
\mathbb{E}_{U} \left[ |\!\braket{\psi | U^\dagger M U|\psi}\!|^{k} \right] = \sum_{\sigma, \tau \in S_k} \Wg(\sigma^{-1}\tau,d) \Tr(P_\sigma \ketbra{\psi} )\Tr (P_\tau M^{\otimes k})\,.
\end{equation}
Again, as permuting and contracting tensor products of a pure state just gives one, for any $\tau$ the $\sigma$ sum is just a sum over Weingarten functions. Using  Eq.~\eqref{eq:Wgsum} and recalling the definition of the projector onto the symmetric subspace, we conclude
\begin{equation}
\mathbb{E}_{U} \left[ \braket{\psi | U^\dagger M U|\psi}^{k} \right] = \binom{d+k-1}{k}^{-1} \Tr \left( \Pi_{\rm sym}M^{\otimes k}\right)\,.
\label{eq:state_moment}
\end{equation}

\subsection{Approximate \ktitle-designs and bounds on weight distributions} \label{sub:designs_technical}

Weingarten calculus is a powerful tool. It characterizes twirls over the diagonal representation of the unitary group for arbitrary tensor powers $k \in \mathbb{N}$. In turn, this formula allows for computing moments of random variables that involve Haar random unitaries.  These then can be used to establish \emph{generic} features, such as concentration of measure.
However, full control of \emph{all} moments comes at a price. It is excessively difficult to sample unitaries directly from the Haar measure. Simple dimension counting highlights that circuits of exponential size are required to implement a Haar random unitary circuit on $n$ qudits. 

The notion of $k$-designs introduced in Sec.~\ref{sub:designs} addresses this issue by allowing one to interpolate between Haar-random ($k=\infty$) and highly structured ($k=1$) ensembles.
Unfortunately, very few explicit constructions of $k$-designs are known. 
This lack of efficient constructions can be overcome by relaxing the defining property of a $k$-design.

\begin{definition}[approximate $k$-design] \label{def:approximate_design}
Fix $k \in \mathbb{N}$ and $\epsilon >0$. A unitary ensemble $\CE = \left\{p_i,U_i \right\}_{i=1}^N$ is an $\epsilon$-approximate (unitary) $k$-design if the associated twirling channel
$\mathcal{T}_\CE^{(k)}(X) = \sum_{i=1}^n p_i\, U_i^{\otimes k} X (U_i^\dagger)^{\otimes k}$ obeys
\begin{equation}
\left\| \mathcal{T}^{(k)}_\CE - \mathcal{T}^{(k)}_{U} \right\|_\diamond \leq \frac{k!}{d^{2k}} \epsilon\,.
\end{equation}
Here, $\mathcal{T}^{(k)}_{U}$ denotes the twirl over the full unitary group \eqref{eq:Utwirl} (with respect to the Haar measure).
\end{definition}

This definition readily extends to ensembles of infinite cardinality. Several different definitions of approximate $k$-designs can be found in the literature. By and large these differ in terms of the metric that is used to quantify closeness. We have defined an approximate design up to additive error, but have chosen $\epsilon$ to scale with $d$ in a manner that mimics relative error, similar to the strong definition of a design used in \cite{brandao_local_2016}. This will also simplify exposition considerably. 

The approximate $k$-design property imposes severe restrictions on associated distribution of weights and the ensemble size.

\begin{lemma}[Restatement of \autoref{lem:weights-intro}] \label{lem:weights-restatement}
Let $\CE = \left\{ p_i, U_i \right\}_{i=1}^N$ be an $\epsilon$-approximate $k$ design for $U(d)$. Then,
\begin{equation}
\max_{1 \leq j \leq N} p_j \leq (1+\epsilon) \frac{k!}{d^{2k}} \and N \geq \frac{d^{2k}}{(1+\epsilon)k!}\,.
\end{equation}
\end{lemma}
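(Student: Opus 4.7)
The plan is to probe the twirling channel with an input state and measurement carefully tailored so that the resulting expectation value picks up the quantity $\sum_i p_i |\Tr(U_j^\dagger U_i)|^{2k}$, which is lower bounded by $p_j d^{2k}$ (coming from the $i=j$ term). This strategy exploits the diamond norm bound in the design definition, so we are free to use an ancilla system and perform a Bell-like measurement at the end.

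Concretely, enlarge the Hilbert space by a copy of $(\mathbb{C}^d)^{\otimes k}$ and feed the maximally entangled state $|\Omega_k\rangle = d^{-k/2} \sum_{i} |i\rangle\otimes|i\rangle \in (\mathbb{C}^d)^{\otimes k} \otimes (\mathbb{C}^d)^{\otimes k}$ into $\mathcal{T}^{(k)}_{\mathcal{E}} \otimes \mathcal{I}$. A standard vectorization identity yields
\begin{equation}
\left( \mathcal{T}^{(k)}_{\mathcal{E}} \otimes \mathcal{I} \right) ( |\Omega_k\rangle\!\langle \Omega_k|) = \frac{1}{d^k}\sum_i p_i\, |U_i^{\otimes k}\rangle\!\rangle\!\langle\!\langle U_i^{\otimes k}|,
\end{equation}
where $|A\rangle\!\rangle$ denotes the vectorization of a matrix $A$. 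Next, fix any design element $V = U_j$ and pick the rank-one projective measurement $M = d^{-k}\,|V^{\otimes k}\rangle\!\rangle\langle\!\langle V^{\otimes k}|$ (which satisfies $\|M\|_\infty = 1$). A direct computation gives
\begin{equation}
\Tr\!\Big( M\, \mathcal{T}^{(k)}_{\mathcal{E}} \otimes \mathcal{I}( |\Omega_k\rangle\!\langle \Omega_k|) \Big) = \frac{1}{d^{2k}} \sum_i p_i\, |\Tr(V^\dagger U_i)|^{2k},
\end{equation}
while the analogous quantity for the Haar twirl evaluates to $k!/d^{2k}$ by the moment identity $\mathbb{E}_U\bigl[|\Tr(U)|^{2k}\bigr] = k!$ (Eq.~\eqref{eq:haar_trace}) together with left-invariance of the Haar measure.

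Now invoke the diamond norm bound in \autoref{def:approximate_design}: H\"older's inequality $|\Tr(M\Delta)| \leq \|M\|_\infty \|\Delta\|_1$ applied to the difference of the two output states yields
\begin{equation}
\left| \sum_i p_i\, |\Tr(V^\dagger U_i)|^{2k} - k! \right| \leq k!\, \epsilon.
\end{equation}
Dropping all terms other than $i = j$ on the left, and using $|\Tr(U_j^\dagger U_j)|^{2k} = d^{2k}$, gives $p_j\, d^{2k} \leq (1+\epsilon)\, k!$, which is precisely the first claimed bound. The bound on $N$ follows immediately from normalization: $1 = \sum_i p_i \leq N \max_i p_i \leq N (1+\epsilon) k!/d^{2k}$, so $N \geq d^{2k}/\bigl((1+\epsilon)k!\bigr)$.

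The one genuinely inventive step is choosing the probe; everything else is bookkeeping. I expect the main obstacle to be convincing the reader that the Bell-state input plus $|V^{\otimes k}\rangle\!\rangle$ measurement is a legitimate ancilla-assisted distinguisher for the $k$-fold twirl (i.e., that it fits the diamond norm framework). Once this is set up, the vectorization identity $\Tr(A^\dagger B) = \langle\!\langle A | B\rangle\!\rangle$ collapses the entire argument to the single Haar moment $\mathbb{E}_U[|\Tr U|^{2k}] = k!$, after which the conclusion is immediate.
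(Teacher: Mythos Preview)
Your proof is correct and follows essentially the same route as the paper's: both feed the maximally entangled state into the $k$-fold twirl, extract $\sum_i p_i |\Tr(U_j^\dagger U_i)|^{2k}$ via a rank-one Bell-type measurement, compare to the Haar value $k!$ using the diamond-norm design definition, and then drop all but the $i=j$ term. The only cosmetic difference is that the paper phrases the vectorization identity as $\Tr(U) = d\langle\Omega|U\otimes\mathbb{I}|\Omega\rangle$ and bounds via the operator norm of the output-state difference, whereas you use Hölder with the trace norm; these are equivalent.
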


Lower bounds on approximate $k$-design cardinality are known, see e.g.\ \cite[Lemma~26]{brandao_local_2016} for a similar result. We are not aware of any weight bounds  in the literature.

We also consider orbits of approximate $k$-designs $\CE = \left\{p_i, U_i \right\}_{i=1}^N$. Fix $|x \rangle \in \mathbb{C}^d$ arbitrary and define $| y_i \rangle = U_i |x \rangle$ for $i \in \left[N \right]$. Doing so results in a weighted set 
of unit vectors. 
These sets are called approximate complex-projective $k$-designs \cite{Renes2004,Ambainis2007}. They approximately reproduce the first $k$ moments of the uniform distribution on the complex unit sphere. 
Lower bounds on the cardinality of exact spherical $k$-designs are known, see e.g.\ \cite{scott_tight_2006}, but we are not aware of any statement that bounds the associated weights. 

\begin{lemma} \label{lem:state_weights}
Let $\left\{q_i, |y_i \rangle \right\}_{i=1}^{N'} \subset \mathbb{C}^d$ be the weighted set of \emph{distinct} states contained in an orbit of an $\epsilon$-approximate $k$-design. Then,
\begin{equation}
\max_{j \in \left[N' \right]} q_j \leq (1+\epsilon) \binom{d+k-1}{k}^{-1} \quad \textrm{and} \quad N' \geq \frac{1}{1+\epsilon}\binom{d+k-1}{k}\,.
\end{equation}
\end{lemma}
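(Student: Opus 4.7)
The plan is to mirror the strategy used for \autoref{lem:weights-restatement} (the unitary weight bound) but apply it to the symmetric, rather than commutant, subspace. The key observation is that an orbit ensemble $\{q_j, |y_j\rangle\}_{j=1}^{N'}$ arises by applying the twirling channel $\mathcal{T}_\mathcal{E}^{(k)}$ to the rank-one input $X = \ketbra{x}^{\otimes k}$, collecting identical outputs:
\begin{equation*}
\mathcal{T}_\mathcal{E}^{(k)}(X) = \sum_{i} p_i \, \bigl(U_i \ketbra{x} U_i^\dagger\bigr)^{\otimes k} = \sum_{j=1}^{N'} q_j \, \ketbra{y_j}^{\otimes k}\,.
\end{equation*}
The Haar counterpart was computed in Eq.~\eqref{eq:chstate}: $\mathcal{T}_U^{(k)}(X) = \binom{d+k-1}{k}^{-1} \Pi_{\mathrm{sym}}$.

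To extract an individual weight $q_j$, I will probe both sides with the test operator $M_j = \ketbra{y_j}^{\otimes k}$, which is a valid two-outcome measurement ($0 \preceq M_j \preceq \mathbb{I}$). On the ensemble side, since $|\langle y_j | y_j \rangle|^{2k} = 1$ and all cross-overlaps are non-negative, one gets $\Tr(M_j \, \mathcal{T}_\mathcal{E}^{(k)}(X)) = \sum_\ell q_\ell |\langle y_j | y_\ell \rangle|^{2k} \geq q_j$. On the Haar side, $M_j$ lies in the symmetric subspace, so $\Tr(M_j \, \mathcal{T}_U^{(k)}(X)) = \binom{d+k-1}{k}^{-1}$. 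Subtracting and using H\"older's inequality together with the operator bound $\|M_j\|_\infty = 1$, together with $\|X\|_1 = 1$, yields
\begin{equation*}
q_j - \binom{d+k-1}{k}^{-1} \leq \bigl\| (\mathcal{T}_\mathcal{E}^{(k)} - \mathcal{T}_U^{(k)})(X) \bigr\|_1 \leq \bigl\| \mathcal{T}_\mathcal{E}^{(k)} - \mathcal{T}_U^{(k)} \bigr\|_\diamond \leq \frac{k!}{d^{2k}}\,\epsilon\,.
\end{equation*}

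To convert this additive bound into the claimed relative one, I will use the elementary estimate $\binom{d+k-1}{k}^{-1} = \frac{k!}{d(d+1)\cdots(d+k-1)} \geq \frac{k!}{d^{2k}}$ (valid whenever $k \leq d^2 - d + 1$, which is essentially unrestrictive in our regime). This implies $\frac{k!}{d^{2k}}\epsilon \leq \epsilon \binom{d+k-1}{k}^{-1}$ and hence
\begin{equation*}
q_j \leq (1+\epsilon) \binom{d+k-1}{k}^{-1}\,,
\end{equation*}
which is the first claim. The cardinality bound is then a one-line consequence: normalization $\sum_{j=1}^{N'} q_j = 1$ combined with the pointwise weight bound forces $1 \leq N' \cdot (1+\epsilon)\binom{d+k-1}{k}^{-1}$, i.e.\ $N' \geq \frac{1}{1+\epsilon}\binom{d+k-1}{k}$.

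The only subtle step is justifying the inequality $\Tr(M_j \, \mathcal{T}_\mathcal{E}^{(k)}(X)) \geq q_j$: this uses that we have already passed to \emph{distinct} states $|y_j\rangle$ with collision weights lumped together, so no double-counting occurs, and all overlap terms $|\langle y_j|y_\ell\rangle|^{2k}$ are manifestly non-negative. The rest is a direct application of the diamond-norm definition of an approximate $k$-design, so I do not foresee any substantial obstacle.
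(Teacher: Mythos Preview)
Your proposal is correct and is essentially the same argument as the paper's proof. Both probe the twirled rank-one state $\ketbra{x}^{\otimes k}$ with the test operator $\ketbra{y_j}^{\otimes k}$, use that the Haar side equals $\binom{d+k-1}{k}^{-1}$, bound the deviation by $\|M^{\otimes k}\|_\infty \cdot \|\mathcal{T}_\mathcal{E}^{(k)}-\mathcal{T}_U^{(k)}\|_\diamond \leq \epsilon\, k!/d^{2k}$, absorb this into $\epsilon\binom{d+k-1}{k}^{-1}$ via the same elementary inequality you state, and then deduce the cardinality bound from normalization.
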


The emphasis on distinct states is justified. Two or more distinct unitaries can give rise to the same state.

\begin{proof}[Proof of \autoref{lem:weights-restatement}]

Fix $j \in [N]=\left\{1,\ldots,N \right\}$ and use Eq.~\eqref{eq:haar_trace} to conclude
\begin{equation}
\sum_{i=1}^N p_i \left| \Tr (U_j^\dagger U_i) \right|^{2k} = \mathbb{E}_\CE \left[ | \Tr(U_j^\dagger U ) |^{2k} \right]
\leq k! + \underset{\Delta}{\underbrace{\mathbb{E}_\CE \left[ | \Tr(U_j^\dagger U ) |^{2k} \right]- \mathbb{E}_{U}\left[ | \Tr(U_j^\dagger U ) |^{2k} \right]}}.
\label{eq:unitary_design_bound}
\end{equation}
The approximate $k$-design property implies that the mismatch on the r.h.s.\ remains small. Let $| \Omega \rangle = \frac{1}{\sqrt{d}} \sum_{i=1}^d |i \rangle \otimes |i \rangle$ denote the maximally entangled state. Then, $\Tr(U) = d \langle \Omega | U \otimes \mathbb{I} | \Omega \rangle$ and we apply \autoref{def:approximate_design} to bound
\begin{align}
\Delta &= \EE_\CE \left[ | \Tr(U_j^\dagger U ) |^{2k} \right]- \mathbb{E}_{U}\left[ | \Tr(U_j^\dagger U ) |^{2k} \right] \nn
&= d^{2k} \langle \Omega|^{\otimes k} \left( \EE_\CE \left[ \left( (U \otimes \mathbb{I} ) | \Omega \rangle \! \langle \Omega | (U \otimes \mathbb{I})^\dagger \right)^{\otimes k} \right] - \mathbb{E}_{U} \left[ \left( (U \otimes \mathbb{I} ) | \Omega \rangle \! \langle \Omega | (U \otimes \mathbb{I})^\dagger \right)^{\otimes k} \right] \right) | \Omega \rangle^{\otimes k} \nn
&\leq d^{2k} \left\| \EE_\CE \left[ \left( \mathcal{U} \otimes \mathcal{I} (| \Omega \rangle \! \langle \Omega|)\right)^{\otimes k} \right]-
\EE_{U} \left[ \left( \mathcal{U} \otimes \mathcal{I} (| \Omega \rangle \! \langle \Omega|)\right)^{\otimes k} \right] \right\|_\infty \nn
&\leq d^{2k} \left\| \mathcal{T}^{(k)}_\CE - \mathcal{T}^{(k)}_{U} \right\|_\diamond \leq \epsilon k!\,.
\end{align}
Combining both arguments implies
$\sum_{i=1}^N p_i \left| \Tr (U_j^\dagger U_i) \right|^{2k}  \leq (1+\epsilon ) k! $. This allows us to conclude
\begin{equation}
(1+\epsilon) k! \geq \sum_{i=1}^N p_i \left| \Tr \left( U_j^\dagger U_i \right) \right|^{2k}
= \sum_{i \neq j} p_i \left| \Tr \left( U_j^\dagger U_i \right) \right|^{2k} + p_j \left| \Tr \left( U_j^\dagger U_j \right) \right|^{2k} \geq p_j d^{2k}
\end{equation}
for $j \in \left[N\right]$ arbitrary. The lower bound on the cardinality $N$ is an immediate consequence of this weight restriction:
\begin{equation}
1 = \sum_{i=1}^N p_i \leq \sum_{i=1}^N (1+\epsilon) \frac{k!}{d^{2k}} = N (1+\epsilon) \frac{k!}{d^{2k}}\,.
\end{equation}
\end{proof}

\begin{proof}[Proof of \autoref{lem:state_weights}]
The argument is very similar to the proof of \autoref{lem:weights-restatement}.
Fix $j \in \left[N'\right]$, set $M=\ketbra{y_j}$ and use Eq.~\eqref{eq:state_moment} to conclude
\begin{align}
\label{eq:state_design_bound}
&\sum_{i=1}^{N'} q_i \left| \langle y_j, y_i \rangle \right|^{2k} =\sum_{i=1}^N p_i \left| \langle y_j| U_i |x \rangle \right|^2 
= \mathbb{E}_\CE \left[ \langle x|U M U^\dagger |x \rangle\right] \\
&= \binom{d+k-1}{k}^{-1} \Tr \left( \Pi_{\mathrm{sym}} M^{\otimes k} \right) + \underset{\Delta}{\underbrace{\Tr \left( M^{\otimes k} \left( \mathbb{E}_\CE \left[ (U \ketbra{x}U^\dagger)^{\otimes k}\right] - \mathbb{E}_{U} \left[ (U \ketbra{x} U^\dagger)^{\otimes k} \right]\right) \right)}}\,. \nonumber
\end{align}
Next, observe that the Haar average obeys$\Tr \left( \Pi_{\mathrm{sym}} M^{\otimes k} \right) = \Tr \left( \Pi_{\mathrm{sym}} \ketbra{y_j}^{\otimes k} \right) =1$.
The approximate $k$-design property in addition implies that the deviation from this ideal value remains small. Matrix Hoelder asserts
\begin{align}
\Delta &= \Tr \left( M^{\otimes k} \left( \mathbb{E}_\CE \left[ (U \ketbra{x}U^\dagger)^{\otimes k}\right] - \mathbb{E}_{U} \left[ (U \ketbra{x} U^\dagger)^{\otimes k} \right]\right) \right) \nn
&\leq \| M^{\otimes k} \|_\infty  \left\| \mathcal{T}^{(k)}_\CE \left( (|x \rangle \! \langle x|)^{\otimes k} \right) -\mathcal{T}^{(k)}_{U} \left( (|x \rangle \! \langle x|)^{\otimes k} \right) \right\|_1 \nn
&\leq \| M \|_\infty^k \left\| \mathcal{T}^{(k)}_\CE - \mathcal{T}^{(k)}_{U}\right\|_\diamond \leq \epsilon \frac{k!}{d^{2k}} \leq \binom{d+k-1}{k}^{-1} \epsilon\,,
\end{align}
because $\|M\|_\infty = \| \ketbra{y_j}\|_\infty =1$.
This allows us to conclude
\begin{equation}
(1+\epsilon) \binom{d+k-1}{k}^{-1} \geq \sum_{i=1}^{N'} q_i \left| \langle y_j, y_i \rangle \right|^{2k} = q_j | \langle y_j, y_j r\rangle|^{2k} + \sum_{i \neq j} q_j | \langle y_j, y_i \rangle|^{2k} \geq q_j
\end{equation}
for any $j \in \left[N'\right]$.
Both weight and cardinality bound readily follow from this assertion.
\end{proof}

\subsection{A general moment bound for Haar random unitaries}

\begin{theorem}[Detailed restatement of \autoref{thm:main-technical-intro}] \label{thm:master}
Fix $\ket{\phi} \in (\mathbb{C}^d )^{\otimes 2}$ and $M \in \mathbb{H}_d^{\otimes 2}$ such that $\mathbb{I} \succeq M \succeq 0$.
Set
\begin{equation}
S_U(M, \phi) := \Tr \big( M \mathcal{U} \otimes \mathcal{I} ( \ketbra{\phi}) \big) 
=
\begin{tikzpicture}[baseline=-1mm,scale=0.55]
\draw[thick] (-2.5,0.5) -- (2.5,0.5);
\draw[thick] (-2.5,-0.5) -- (2.5,-0.5);
\draw[rounded corners,fill=white] (-0.75,-0.9) rectangle (0.75,0.9);
\node at (0,0) {$M$};
\draw[rounded corners,fill=white] (-2.125,0) rectangle (-1.125,1);
\node at (-1.625,0.5) {$U^\dagger$};
\draw[rounded corners,fill=white] (1.125,0) rectangle (2.125,1);
\node at (1.625,0.5) {$U$};
\draw[rounded corners] (-3,-0.9) rectangle (-2.5,0.9);
\node at (-2.75,0) {$\phi$};
\draw[rounded corners] (2.5,-0.9) rectangle (3,0.9);
\node at (2.75,0) {$\phi$};
\end{tikzpicture}\,,
\end{equation}
where $U \in U(d)$ is chosen uniformly from the Haar measure. Then,
\begin{equation}
\mu (M,\phi) := \mathbb{E}_U \left[ S_U (M,\phi) \right]
 = \frac{1}{d}
\begin{tikzpicture}[baseline,scale=0.55]
\draw[thick, rounded corners] (-1.55,0.5) -- (-1.9,0.5) -- (-1.9,1.25) -- (-0.1,1.25) -- (-0.1,0.5) -- (-0.95,0.5);
\draw[thick, rounded corners] (0.5,0.5) -- (0.1,0.5) -- (0.1,1.25) -- (2.4,1.25) -- (2.4,0.5) -- (2,0.5);
\draw[thick, rounded corners] (0, -0.5) -- (2.4,-0.5) -- (2.4,-1.25) -- (-1.9,-1.25) -- (-1.9,-0.5) -- (0,-0.5);
\draw[rounded corners,fill=white] (0.5,-0.9) rectangle (2,0.9);
\node at (1.25,0) {$M$};
\draw[rounded corners, fill=white] (-0.95,-0.9) rectangle (-0.45,0.9);
\node at (-0.7,0) {$\phi$};
\draw[rounded corners, fill=white] (-1.55,-0.9) rectangle (-1.05,0.9);
\node at (-1.3,0) {$\phi$};
\end{tikzpicture}
= \Tr \left( M \mathcal{D} \otimes \mathcal{I}(| \phi \rangle \! \langle \phi|) \right)\,,
\end{equation}
where $\mathcal{D}(X) = \frac{\Tr(X)}{d} \mathbb{I}$ is the depolarizing channel. Moreover, the following bounds apply to all centered moments of order $k=1,\ldots,d^{2/3}$:
\begin{equation}
\mathbb{E}_U \left[  \left( S_U (M,\phi) - \mu (M,\phi)\right)^{k} \right]
\leq  \frac{C_k (k!)^2}{d^{k/2}}\,.
\end{equation}
Here, $C_k=\tfrac{1}{k+1}\binom{2k}{k}$ is the $k$-th Catalan number. 
\end{theorem}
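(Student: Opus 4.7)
The plan has two components: an easy mean computation and a more intricate moment bound based on Weingarten calculus.

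The mean $\mu(M,\phi)$ is immediate. The $k=1$ case of the Haar twirl formula \autoref{eq:Utwirl} reads $\mathbb{E}_U[U\rho U^\dagger]=\Tr(\rho)\mathbb{I}/d$. Applied to the ``$U$-side'' of $|\phi\rangle\langle\phi|$, this replaces $\mathcal{U}\otimes\mathcal{I}$ by $\mathcal{D}\otimes\mathcal{I}$, and tracing against $M$ yields the claimed depolarizing-channel expression.

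For the centered moments, I would set $A_U := (\mathcal{U}\otimes\mathcal{I})(|\phi\rangle\langle\phi|)$, $\bar A := \mathbb{E}_U[A_U]$, and use the identity $\prod_i \Tr(X_i) = \Tr\bigl(\bigotimes_i X_i\bigr)$ to lift the scalar $k$-th power to a tensor trace:
\begin{equation*}
\mathbb{E}_U\bigl[(S_U-\mu)^k\bigr] = \Tr\bigl(M^{\otimes k}\,\mathbb{E}_U\bigl[(A_U-\bar A)^{\otimes k}\bigr]\bigr).
\end{equation*}
Expand $(A_U-\bar A)^{\otimes k}$ as an alternating sum of $A_U^{\otimes j}\otimes \bar A^{\otimes (k-j)}$ tensor products, and apply the Weingarten formula \autoref{eq:Utwirl} to each $\mathbb{E}_U[A_U^{\otimes j}]$. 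This produces a double sum over $\sigma,\tau\in S_k$ weighted by $\Wg(\sigma^{-1}\tau,d)$, with the permutations $P_\sigma, P_\tau$ acting only on the ``$U$-sides'' of the $k$-fold bipartite tensor while the memory factors are left untouched. The centering subtraction precisely kills those contributions whose permutation structure splits off a singleton cycle (these factor as lower-order mean products), leaving only ``connected'' permutation pairs whose combined cycle structure has no fixed points. Each surviving term is then bounded by combining the operator-norm estimates $\|M\|_\infty\leq 1$ (from $\mathbb{I}\succeq M\succeq 0$) and $\||\phi\rangle\langle\phi|\|_\infty=1$ via H\"older's inequality \autoref{eq:hoelder} with the Weingarten asymptotic $|\Wg(\sigma,d)|\lesssim d^{-2k+\#\mathrm{cycles}(\sigma)}$ (whose validity requires $k \leq d^{2/3}$), producing per-term magnitude $d^{-k/2}$. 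At the leading order in $1/d$ the surviving pairs are in bijection with non-crossing pair partitions of $[k]$, of which there are exactly $C_k$; a coarse $(k!)^2$ bound on the remaining permutation sums then yields the claimed $C_k(k!)^2/d^{k/2}$. To make the estimate uniform over admissible $M$, I would invoke \autoref{fact:convex_maximum} together with \autoref{lem:convex_function}: the centered $k$-th moment is a polynomial in $M$ assembled from convex trace expressions of the type $\Tr(M^{\otimes k}P)$, hence maximized at extreme points of $\{\mathbb{I}\succeq M\succeq 0\}$, namely rank-one projectors, where $\|M\|_\infty=1$ is saturated.

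The main obstacle will be the combinatorial bookkeeping: tracking exactly which $(\sigma,\tau)$ pairs survive the centering cancellation and matching the surviving leading-order contributions to the non-crossing pair-partition count so that the Catalan factor $C_k$ emerges cleanly rather than a cruder $4^k$. The $(k!)^2$ prefactor is almost certainly an artifact of bounding the full permutation sum by its cardinality; a more refined free-probabilistic analysis would likely tighten this, but the dimensional scaling $d^{-k/2}$ is tight and is exactly what drives the exponential concentration used in \autoref{thm:main_circuit}.
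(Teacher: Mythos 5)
The mean computation is fine, but the heart of your moment bound rests on an unjustified step: the claim that each surviving $(\sigma,\tau)$ term can be bounded by $d^{-k/2}$ using only H\"older with $\|M\|_\infty\leq 1$, $\|\ketbra{\phi}\|_\infty=1$ and the Weingarten asymptotics. After Weingarten-expanding in your bipartite formulation, a generic term has the shape $\Wg(\sigma^{-1}\tau,d)\,\Tr\bigl[M^{\otimes k}\bigl(P_\sigma\otimes R_\tau\bigr)\bigr]$, where $P_\sigma$ acts on the $k$ ``$U$-side'' factors and $R_\tau$ is a memory-side operator with $\|R_\tau\|_1\leq 1$. Operator-norm/H\"older bookkeeping only gives $\bigl|\Tr[M^{\otimes k}(P_\sigma\otimes R_\tau)]\bigr|\leq \|P_\sigma\|_1\|R_\tau\|_1\leq d^k$, which together with $|\Wg|\sim d^{-(2k-\ell)}$ leaves worst-case terms of order one, not $d^{-k/2}$. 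The genuine $d^{-k/2}$ scaling requires Frobenius-norm control of the partially contracted objects, and even then it is not uniform: in the paper's formulation the networks containing $\Tr_1(\bar M_\Phi)$ have 2-norm up to $d$ rather than $\sqrt d$, so they individually exceed $d^{-k/2}$ and are only admissible because such factors can occur at most $2\bigl(k-\ell(\sigma^{-1}\tau)\bigr)$ times, so their excess powers of $d$ are exactly absorbed by the Weingarten decay. This pairing of ``dangerous'' contractions with Weingarten suppression is the crux of \autoref{thm:master}, and your proposal contains no mechanism for it; it is not mere combinatorial bookkeeping left for later.

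Two secondary points. First, your convexity reduction over $M$ is both unnecessary (only $\|M\|_\infty\leq1$ is ever used, and it is assumed) and flawed: the extreme points of $\{0\preceq M\preceq\mathbb{I}\}$ are projectors of arbitrary rank, not rank-one projectors, and odd centered moments are not convex in $M$. The paper uses \autoref{lem:convex_function} and \autoref{fact:convex_maximum} for a different purpose: after vectorizing $\ket{\phi}=\mathrm{vec}(\Phi)$ and absorbing the input into $M_\Phi=(\mathbb{I}\otimes\Phi^\dagger)M(\mathbb{I}\otimes\Phi)$, convexity in the input state $\rho=\Phi^\dagger\Phi$ yields the key norm bounds $\|M_\Phi\|_2,\|\Tr_2(M_\Phi)\|_2\leq\sqrt d$ and $\|\Tr_1(M_\Phi)\|_2\leq d$ (\autoref{lem:norm_bound}), which feed a tensor-network estimate (value bounded by the product of constituent 2-norms, valid because tracelessness of $\bar M_\Phi$ after centering removes self-contractions). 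Second, the Catalan factor in the statement does not arise from a bijection with non-crossing pairings at leading order; it comes from the uniform Weingarten bound $|\Wg(\sigma,d)|\leq\tfrac32 C_{k-1}d^{-(2k-\ell(\sigma))}$, valid for $k\leq d^{2/3}$, applied to every term, with the $(k!)^2$ counting the full double sum. So even granting your leading-order heuristic, your assembly of the final constant would not reproduce the claimed bound.
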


\subsection{Moment bounds for approximate designs}

\begin{corollary} \label{cor:master}
With the same assumptions in \autoref{thm:master}, but suppose that $U \in U(d)$ is chosen from an  $\epsilon$-approximate unitary $k$-design $\CE$. 
Then,
\begin{equation}
\mathbb{E}_\CE
\left[\Bigg(^{\vphantom{k}}\right.
\underset{S_U (M,\phi)}{\underbrace{
\begin{tikzpicture}[baseline=-1mm,scale=0.55]
\draw[thick] (-2.5,0.5) -- (2.5,0.5);
\draw[thick] (-2.5,-0.5) -- (2.5,-0.5);
\draw[rounded corners,fill=white] (-0.75,-0.9) rectangle (0.75,0.9);
\node at (0,0) {$M$};
\draw[rounded corners,fill=white] (-2.125,0) rectangle (-1.125,1);
\node at (-1.625,0.5) {$U^\dagger$};
\draw[rounded corners,fill=white] (1.125,0) rectangle (2.125,1);
\node at (1.625,0.5) {$U$};
\draw[rounded corners] (-3,-0.9) rectangle (-2.5,0.9);
\node at (-2.75,0) {$\phi$};
\draw[rounded corners] (2.5,-0.9) rectangle (3,0.9);
\node at (2.75,0) {$\phi$};
\end{tikzpicture}
}}
-
\underset{\mu (M,\phi)}{\underbrace{
\frac{1}{d}
\begin{tikzpicture}[baseline=-1mm,scale=0.55]
\draw[thick, rounded corners] (-1.55,0.5) -- (-1.9,0.5) -- (-1.9,1.25) -- (-0.1,1.25) -- (-0.1,0.5) -- (-0.95,0.5);
\draw[thick, rounded corners] (0.5,0.5) -- (0.1,0.5) -- (0.1,1.25) -- (2.4,1.25) -- (2.4,0.5) -- (2,0.5);
\draw[thick, rounded corners] (0, -0.5) -- (2.4,-0.5) -- (2.4,-1.25) -- (-1.9,-1.25) -- (-1.9,-0.5) -- (0,-0.5);
\draw[rounded corners,fill=white] (0.5,-0.9) rectangle (2,0.9);
\node at (1.25,0) {$M$};
\draw[rounded corners, fill=white] (-0.95,-0.9) rectangle (-0.45,0.9);
\node at (-0.7,0) {$\phi$};
\draw[rounded corners, fill=white] (-1.55,-0.9) rectangle (-1.05,0.9);
\node at (-1.3,0) {$\phi$};
\end{tikzpicture}
}}\,
\left. \Bigg)^k \right]
\leq  \frac{(k!)^2}{d^{k/2}} \left( C_k + \frac{\epsilon}{k! d^{3k/2}} \right)\,.
\end{equation}
\end{corollary}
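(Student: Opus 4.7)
The plan is to deduce Corollary~\ref{cor:master} by combining Theorem~\ref{thm:master} with the defining property of an $\epsilon$-approximate $k$-design (Definition~\ref{def:approximate_design}). The key trick is to avoid expanding $(S_U-\mu)^k$ binomially in powers of $S_U$, which would pick up a spurious factor of $2^k$; instead I would rewrite the centered moment as a single trace of $\rho_U^{\otimes k}$ against a tensor power of a \emph{centered} measurement operator whose operator norm is still at most one.

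First I would center the measurement by setting $\bar M := M - \mu(M,\phi)\,\mathbb{I}$. Because $0\preceq M\preceq \mathbb{I}$ and $\mu = \mu(M,\phi)=\Tr(M(\mathcal{D}\otimes \mathcal{I})(\ketbra{\phi}))\in[0,1]$, every eigenvalue of $\bar M$ lies in $[-\mu,1-\mu]\subseteq[-1,1]$; in particular, $\|\bar M\|_\infty\leq 1$ and hence $\|\bar M^{\otimes k}\|_\infty\leq 1$. Writing $\rho_U := (U\otimes \mathbb{I})\ketbra{\phi}(U^\dagger\otimes \mathbb{I})$, we have $\Tr(\bar M\rho_U)=S_U(M,\phi)-\mu(M,\phi)$ for every $U$, so
\begin{equation*}
(S_U-\mu)^k = \bigl(\Tr(\bar M\rho_U)\bigr)^k = \Tr\bigl(\bar M^{\otimes k}\rho_U^{\otimes k}\bigr).
\end{equation*}

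Second I would identify $\mathbb{E}[\rho_U^{\otimes k}]$ with a $k$-fold twirl. After permuting tensor factors so that the $k$ channel halves sit on the left and the $k$ memory halves on the right, the map $\rho\mapsto \mathbb{E}_V[((V\otimes \mathbb{I})\rho(V^\dagger\otimes \mathbb{I}))^{\otimes k}]$ coincides with $\mathcal{T}^{(k)}_V\otimes \mathcal{I}^{\otimes k}$ applied to the permuted copy of $\ketbra{\phi}^{\otimes k}$. Since the diamond norm is stable under tensoring with an identity channel and permutations preserve trace norm, Definition~\ref{def:approximate_design} yields
\begin{equation*}
\bigl\|\mathbb{E}_\CE[\rho_U^{\otimes k}]-\mathbb{E}_U[\rho_U^{\otimes k}]\bigr\|_1 \leq \bigl\|\mathcal{T}^{(k)}_\CE-\mathcal{T}^{(k)}_U\bigr\|_\diamond \leq \frac{\epsilon\, k!}{d^{2k}}.
\end{equation*}
H\"older's inequality~\eqref{eq:hoelder} combined with the factorisation of step one then gives
\begin{equation*}
\bigl|\mathbb{E}_\CE[(S_U-\mu)^k]-\mathbb{E}_U[(S_U-\mu)^k]\bigr| \leq \|\bar M^{\otimes k}\|_\infty \cdot \frac{\epsilon\, k!}{d^{2k}} \leq \frac{\epsilon\, k!}{d^{2k}}.
\end{equation*}

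Finally I would plug in the Haar bound $\mathbb{E}_U[(S_U-\mu)^k]\leq C_k (k!)^2/d^{k/2}$ from Theorem~\ref{thm:master} and add the two contributions:
\begin{equation*}
\mathbb{E}_\CE[(S_U-\mu)^k]\leq \frac{C_k(k!)^2}{d^{k/2}} + \frac{\epsilon\, k!}{d^{2k}} = \frac{(k!)^2}{d^{k/2}}\left(C_k+\frac{\epsilon}{k!\,d^{3k/2}}\right),
\end{equation*}
which is exactly the claim. The only conceptual obstacle is the first step: centering the measurement (rather than subtracting $\mu^k$ after a binomial expansion) is what keeps the relevant operator norm bounded by one and preserves the sharp $k!/d^{2k}$ scaling of the design error without combinatorial overhead. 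Every other step is a direct application of the diamond-norm definition of an approximate design and H\"older's inequality.
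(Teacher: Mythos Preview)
Your proposal is correct and follows essentially the same route as the paper: rewrite the centered quantity as $\Tr(\tilde M\,\rho_U)$ for some $\tilde M$ with $\|\tilde M\|_\infty\le 1$, express the $k$-th power as $\Tr(\tilde M^{\otimes k}\rho_U^{\otimes k})$, and bound the design--Haar discrepancy via H\"older together with the diamond-norm definition of an approximate $k$-design. The only cosmetic difference is your choice of centering operator: you subtract the scalar $\mu(M,\phi)\,\mathbb{I}$, whereas the paper subtracts $\frac{1}{d}\mathbb{I}\otimes\Tr_1(M)$; both satisfy $\Tr(\tilde M\rho_U)=S_U-\mu$ for every $U$ and both have operator norm at most one, so the remainder of the argument is identical.
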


\begin{proof}
We can rewrite random variable and (Haar) expectation as
\begin{equation}
S_U (M,\phi) = \Tr \big( M \mathcal{U} \otimes \mathcal{I} ( \ketbra{\phi} \big) \quad \textrm{and} \quad 
\mu (M,\phi) 
= \Tr \left( \frac{\mathbb{I}}{d} \otimes \Tr_1 (M) \mathcal{U} \otimes \mathcal{I} (\ketbra{\phi}) \right)\,.
\end{equation}
Combine them to obtain
\begin{equation}
\bar{S}_U (M,\phi) = S_U (M,\phi) - \mu (M,\phi) = \Tr \left(\tilde{M} \mathcal{U} \otimes \mathcal{I} (\ketbra{\phi} ) \right)\,,
\end{equation}
where $\tilde{M} = M - \frac{1}{d} \mathbb{I} \otimes \Tr_1 (M) \in \mathbb{H}_d \otimes \mathbb{H}_d$ is a traceless difference of two psd matrices.
Next, fix $k \in \mathbb{N}$ and compare the $k$-th centered moment to its Haar-averaged counter-part:
\begin{equation}
\mathbb{E}_\CE \left[ \bar{S}_U (M,\phi)^k \right]
\leq \mathbb{E}_{U} \left[ \bar{S}_U (M,\phi)^k \right]
+ \underset{\Delta}{\underbrace{\left( \mathbb{E}_\CE \left[ \bar{S}_U (M,\phi)^k \right] - \mathbb{E}_{U} \left[ \bar{S}_U (M,\phi)^k \right] \right)}}\,.
\end{equation}
The first contribution is bounded by \autoref{thm:master} and the approximate $k$-design property (\autoref{def:approximate_design}) ensures that the mismatch $\Delta$ remains controlled:
\begin{align}
\Delta &=
\Tr \left( \tilde{M}^{\otimes k} \left( \mathbb{E}_\CE \left[ \left( \mathcal{U}\otimes \mathcal{I} \right)^{\otimes k} \right] - \mathbb{E}_{U} \left[ \left(\mathcal{U} \otimes \mathcal{I}\right)^{\otimes k} \right] \right) \left( ( \ketbra{\phi})^{\otimes k} \right) \right) \nn
&\leq \| \tilde{M}^{\otimes k} \|_\infty \left\| \left(\mathbb{E}_\CE \left[ \mathcal{U}^{\otimes k} \otimes \mathcal{I} \right] - \mathbb{E}_{U} \left[ \mathcal{U}^{\otimes k} \otimes \mathcal{I} \right]\right) \left( (\ketbra{\phi})^{\otimes k} \right) \right\|_1 \nn
&\leq \| \tilde{M} \|_\infty^k \left\| \mathbb{E}_\CE \left[ \mathcal{U}^{\otimes k} \right] - \mathbb{E}_{U} \left[ \mathcal{U}^{\otimes k} \right] \right\|_\diamond
= \| \tilde{M} \|_\infty^k \left\| \mathcal{T}^{(k)}_\CE - \mathcal{T}^{(k)}_{U} \right\|_\diamond 
\leq \| \tilde{M} \|_\infty^k \frac{k!}{d^{2k}} \epsilon\,.
\end{align}
Finally, use the fact that $\tilde{M}$ is the difference of two psd matrices to conclude
\begin{equation}
\| \tilde{M} \|_\infty \leq \max \left\{ \| M \|_\infty, \| \frac{1}{d} \mathbb{I} \otimes \Tr_1 (M) \|_\infty \right\}
= \max \left\{ \| M \|_\infty, \frac{1}{d} \| \Tr_1 (M) \|_\infty \right\} \leq 1\,,
\end{equation}
where we have also used Eq.~\eqref{eq:partial_operator_norm}.
\end{proof}

\begin{corollary}[Moments of $k$-design orbits]\label{prop:state_moments}
For $|x \rangle \in \mathbb{C}^d$ and a measurement $M \in \mathbb{H}_d$ ($\mathbb{I} \succeq M \succeq 0)$ define
\begin{equation}
\bar{Q}_U(M,x) = \langle x| U^\dagger M U |x \rangle - \frac{\Tr(M)}{d}\,,
\end{equation}
where $U$ is sampled from an $\epsilon$-approximate $k$-design. Then,
\begin{equation}
\mathbb{E}_\CE \left[ \bar{Q}_U (M,x)^k \right] \leq \binom{d+k-1}{k}^{-1} \left( d^{k/2} + \epsilon \right) \leq (1+\epsilon) \left(\frac{k^2}{d}\right)^{k/2}\,.
\end{equation}
\end{corollary}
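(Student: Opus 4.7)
The plan is to mirror the reduction used in the proof of \autoref{cor:master}: split the $k$-design moment into a Haar moment plus a diamond-norm error, then bound each piece separately. First, rewrite $\bar{Q}_U(M,x) = \langle x | U^\dagger \bar{M} U | x \rangle$ where $\bar{M} = M - \frac{\Tr(M)}{d}\mathbb{I}$ is traceless and still obeys $\|\bar{M}\|_\infty \leq \max\{\|M\|_\infty, \|\Tr(M)/d \cdot \mathbb{I}\|_\infty\} \leq 1$. Expanding the $k$-th moment as the trace of $\bar{M}^{\otimes k}$ against the $k$-fold twirl of $(|x\rangle\!\langle x|)^{\otimes k}$, the difference between design and Haar expectations is
\begin{equation*}
\Delta = \Tr\!\left( \bar{M}^{\otimes k} \big( \mathcal{T}_\CE^{(k)} - \mathcal{T}_U^{(k)}\big)\big((|x\rangle\!\langle x|)^{\otimes k}\big) \right),
\end{equation*}
and matrix H\"older together with \autoref{def:approximate_design} gives $|\Delta| \leq \|\bar{M}\|_\infty^{k} \cdot \epsilon k!/d^{2k} \leq \epsilon \binom{d+k-1}{k}^{-1}$, where the last step uses the elementary inequality $\binom{d+k-1}{k} \leq (2d)^k/k!$ valid for $k \leq d$.

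For the Haar piece, apply Eq.~\eqref{eq:state_moment} with $M$ replaced by $\bar{M}$ (the derivation there is agnostic to positivity) to obtain
\begin{equation*}
\mathbb{E}_U\!\left[\bar{Q}_U(M,x)^k\right] = \binom{d+k-1}{k}^{-1} \Tr\!\left(\Pi_{\mathrm{sym}} \bar{M}^{\otimes k}\right).
\end{equation*}
The main work is bounding this last trace. Expanding $\Pi_{\mathrm{sym}} = \tfrac{1}{k!}\sum_\sigma P_\sigma$ and using the standard identity $\Tr(P_\sigma \bar{M}^{\otimes k}) = \prod_{c \in \mathrm{cyc}(\sigma)} \Tr\!\big(\bar{M}^{|c|}\big)$, the tracelessness of $\bar{M}$ kills every permutation with a fixed point; only $\sigma$ whose cycles all have length $\geq 2$ survive, so in particular such $\sigma$ have at most $k/2$ cycles. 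For $j \geq 2$ the standard Schatten bound yields $|\Tr(\bar{M}^{j})| \leq \|\bar{M}\|_\infty^{j-2} \|\bar{M}\|_2^2 \leq d$, where $\|\bar{M}\|_2^2 = \Tr(M^2) - \Tr(M)^2/d \leq \Tr(M) \leq d$ (using $M^2 \preceq M$). Consequently every surviving permutation contributes at most $d^{\ell(\sigma)} \leq d^{k/2}$, and bounding the number of such permutations by $k!$ gives $|\Tr(\Pi_{\mathrm{sym}} \bar{M}^{\otimes k})| \leq d^{k/2}$.

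Combining the two pieces yields $\mathbb{E}_\CE[\bar{Q}_U(M,x)^k] \leq \binom{d+k-1}{k}^{-1}(d^{k/2} + \epsilon)$, which is the first inequality. The second then follows from the crude estimate $\binom{d+k-1}{k}^{-1} \leq k!/d^k \leq k^k/d^k$, so that $\binom{d+k-1}{k}^{-1} d^{k/2} \leq (k^2/d)^{k/2}$ and similarly $\binom{d+k-1}{k}^{-1} \leq (k^2/d)^{k/2}$ for $k \leq d$. I expect the one delicate step to be the cycle/tracelessness bookkeeping in the middle paragraph; the design-error reduction and the final simplification are routine and essentially identical to arguments already spelled out in the proofs of \autoref{cor:master} and \autoref{lem:state_weights}.
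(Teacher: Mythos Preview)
Your proposal is correct and follows essentially the same approach as the paper: split into Haar moment plus diamond-norm error, bound the error via H\"older and \autoref{def:approximate_design} (exactly as in \autoref{lem:state_weights}), and reduce the Haar piece to $\binom{d+k-1}{k}^{-1}\Tr(\Pi_{\mathrm{sym}}\bar{M}^{\otimes k})$ via Eq.~\eqref{eq:state_moment}. The only cosmetic difference is in the last trace bound: the paper invokes \cite[Lemma~17]{kueng_low_2017} to get $\Tr(\Pi_{\mathrm{sym}}\bar{M}^{\otimes k}) \leq \|\bar{M}\|_2^k \leq d^{k/2}$ directly, whereas you spell out the same cycle-decomposition argument by hand (tracelessness kills fixed-point permutations, each surviving cycle contributes at most $d$, at most $k/2$ cycles survive). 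Both arrive at $d^{k/2}$.
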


\begin{proof}
Let $\bar{M}=M-\frac{\Tr(M)}{d} \mathbb{I}$ denote the traceless part of $M$ and note that this reformulation cannot increase the operator norm:$\| \bar{M} \|_\infty \leq \|M\|_\infty \leq 1$.
Moreover,
\begin{equation}
\mathbb{E}_\CE \left[ \bar{Q}_U (M,x)^k \right] \leq \mathbb{E}_{U} \left[ \bar{Q}_U (M,x)^k \right] + \underset{\Delta}{\underbrace{ \mathbb{E}_\CE \left[ \bar{Q}_U (M,x)^k \right]- \mathbb{E}_{U} \left[ \bar{Q}_U (M,x)^k \right]}}
\end{equation}
and $\Delta \leq \| \bar{M} \|_\infty^k \binom{d+k-1}{k}^{-1} \epsilon$ follows from arguments that are analogous to the ones presented in the proof of \autoref{lem:state_weights}. Next, apply Eq.~\eqref{eq:state_moment} to the remaining Haar expectation:
\begin{equation}
\mathbb{E}_{U} \left[ \bar{Q}_U (\bar{M},x) \right]
= \mathbb{E}_{U} \left[ \bra{x} U^\dagger M U \ket{x}^k \right]
= \binom{d+k-1}{k}^{-1} \Tr \left( \Pi_{\rm sym} \bar{M}^{\otimes k} \right)\,.
\end{equation}
This trace can be bounded using $\mathrm{tr}(\bar{M})=0$, $\mathrm{tr}(\bar{M}^l) \leq \mathrm{tr}(\bar{M}^2)^{l/2}$ for $l \geq 2$ and $\mathrm{tr}(\bar{M}^2) = \| \bar{M} \|_2^2 \leq \| M \|_2^2$,  see e.g.\ \cite[Lemma~17]{kueng_low_2017}:
\begin{equation}
\Tr \left( \Pi_{\rm sym} \bar{M}^{\otimes k} \right) \leq  \| \bar{M} \|_2^k \leq \| M \|_2^k \leq  d^{k/2}  \|M \|_\infty^k \leq d^{k/2}\,.
\end{equation}
\end{proof}

\subsection{Proof of the general moment bound}

This section is devoted to proving the general moment bound presented in \autoref{thm:master}.

\paragraph{Reformulation and basic norm bounds}

Fix $M \in \mathbb{H}_d \otimes \mathbb{H}_d$ psd with $\| M \|_\infty \leq 1$ and a state $\ket\phi \in \mathbb{C}^d \otimes \mathbb{C}^d$. Use the vectorization correspondence $| \phi \rangle = \mathrm{vec}(\Phi)$ with $\Phi \in \mathbb{M}_{d \times d}$ to rewrite the random variable defined in \autoref{thm:master}:
\begin{equation}
S_U (M,\phi)=
\begin{tikzpicture}[baseline=-1mm,scale=0.55]
\draw[thick] (-2.5,0.5) -- (2.5,0.5);
\draw[thick] (-2.5,-0.5) -- (2.5,-0.5);
\draw[rounded corners,fill=white] (-0.75,-0.9) rectangle (0.75,0.9);
\node at (0,0) {$M$};
\draw[rounded corners,fill=white] (-2.125,0) rectangle (-1.125,1);
\node at (-1.625,0.5) {$U^\dagger$};
\draw[rounded corners,fill=white] (1.125,0) rectangle (2.125,1);
\node at (1.625,0.5) {$U$};
\draw[rounded corners] (-3,-0.9) rectangle (-2.5,0.9);
\node at (-2.75,0) {$\phi$};
\draw[rounded corners] (2.5,-0.9) rectangle (3,0.9);
\node at (2.75,0) {$\phi$};
\end{tikzpicture}
=
\begin{tikzpicture}[baseline=-1mm,scale=0.55]
\draw[thick,rounded corners] (-2,0.5) -- (2.5,0.5) -- (2.5,-0.5) -- (-2.5,-0.5) -- (-2.5,0.5) -- (-2,0.5);
\draw[rounded corners,fill=white] (-0.75,-0.9) rectangle (0.75,0.9);
\node at (0,0) {$M$};
\draw[rounded corners,fill=white] (-2.125,0.05) rectangle (-1.125,0.95);
\node at (-1.625,0.5) {$U^\dagger$};
\draw[rounded corners, fill=white] (-2.125,-0.95) rectangle (-1.125,-0.05);
\node at (-1.625,-0.5) {$\Phi^\dagger$};
\draw[rounded corners,fill=white] (1.125,0.05) rectangle (2.125,0.95);
\node at (1.625,0.5) {$U$};
\draw[rounded corners,fill=white] (1.125,-0.95) rectangle (2.125,-0.05);
\node at (1.625,-0.5) {$\Phi$};
\end{tikzpicture}
= 
\begin{tikzpicture}[baseline=-1mm,scale=0.55]
\draw[thick,rounded corners] (-2,0.5) -- (2.5,0.5) -- (2.5,-0.5) -- (-2.5,-0.5) -- (-2.5,0.5) -- (-2,0.5);
\draw[rounded corners,fill=white] (-0.75,-0.9) rectangle (0.75,0.9);
\node at (0,0) {$M_\Phi$};
\draw[rounded corners,fill=white] (-2.125,0.05) rectangle (-1.125,0.95);
\node at (-1.625,0.5) {$U^\dagger$};
\draw[rounded corners,fill=white] (1.125,0.05) rectangle (2.125,0.95);
\node at (1.625,0.5) {$U$};
\end{tikzpicture}\,.
\end{equation}
Here, we have implicitly defined $M_\Phi := (\mathbb{I} \otimes \Phi^\dagger) M (\mathbb{I} \otimes \Phi)$.
Also, recall that vectorization is an isometry, i.e.\ $\| \Phi \|_2 = \braket{\phi|\phi}=1$.
The following auxiliary result bounds the 2-norm of $M_\Phi$ and its partial contractions.

\begin{lemma} \label{lem:norm_bound}
Fix a psd matrix $M \in \mathbb{H}_d^{\otimes 2}$ with $\|M \|_\infty \leq 1$ and a matrix $\Phi \in \mathbb{M}_d$ obeying $\| \Phi \|_2 = 1$. Then, $M_\Phi = (\mathbb{I} \otimes \Phi^\dagger) M (\mathbb{I} \otimes \Phi) \in \mathbb{H}_d^{\otimes 2}$ obeys
\begin{equation}
\| \Tr_1 (M_\Phi) \|_2 \leq d
\quad \textrm{and} \quad 
\| M_\Phi \|_2 \leq \sqrt{d}\,, 
\quad \textrm{as well as} \quad 
\| \Tr_2 (M_\Phi) \|_2 \leq \sqrt{d}\,.
\end{equation}
\end{lemma}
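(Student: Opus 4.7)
\medskip

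\noindent\textbf{Proof plan.} The three bounds follow from a common template: reduce each quantity to a psd operator whose operator norm and trace can be controlled separately, and then invoke the elementary inequality $\|A\|_2^2=\Tr(A^2)\leq \|A\|_\infty \Tr(A)$ valid for any psd $A$. Throughout, the starting point is that $M_\Phi=(\mathbb{I}\otimes \Phi^\dagger)M(\mathbb{I}\otimes\Phi)$ inherits positive semidefiniteness from $M$, and that vectorization is an isometry so $\|\Phi\|_2=1$ (which in particular gives $\|\Phi\|_\infty\leq\|\Phi\|_2=1$).

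\smallskip

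\noindent\emph{Bound on $\|M_\Phi\|_2$.} First I would show $\|M_\Phi\|_\infty\leq \|\Phi\|_\infty^2\|M\|_\infty\leq 1$ by sub-multiplicativity of the operator norm. Next, cyclicity of the (full) trace together with Eq.~\eqref{eq:hoelder} yields
\begin{equation*}
\Tr(M_\Phi)=\Tr\!\bigl(M(\mathbb{I}\otimes \Phi\Phi^\dagger)\bigr)\leq \|M\|_\infty\,\Tr(\mathbb{I}\otimes\Phi\Phi^\dagger)=d\|\Phi\|_2^2=d.
\end{equation*}
The template inequality then gives $\|M_\Phi\|_2^2\leq \|M_\Phi\|_\infty\Tr(M_\Phi)\leq d$.

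\smallskip

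\noindent\emph{Bound on $\|\Tr_2(M_\Phi)\|_2$.} The key observation is that partial traces interact nicely with operators that act on one tensor factor only: a short index computation (or the wiring diagram obtained by bending the relevant lines) gives the cyclic-type identity
\begin{equation*}
\Tr_2(M_\Phi)=\Tr_2\!\bigl((\mathbb{I}\otimes\Phi^\dagger)\,M\,(\mathbb{I}\otimes\Phi)\bigr)=\Tr_2\!\bigl(M(\mathbb{I}\otimes \Phi\Phi^\dagger)\bigr).
\end{equation*}
Writing $\rho:=\Phi\Phi^\dagger$, which is psd with $\Tr(\rho)=1$, the symmetric form $\Tr_2[(\mathbb{I}\otimes\rho^{1/2})M(\mathbb{I}\otimes\rho^{1/2})]$ makes $\Tr_2(M_\Phi)$ manifestly psd. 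The operator bound $M\preceq \mathbb{I}$ is then preserved under conjugation by $\mathbb{I}\otimes\rho^{1/2}$ and under partial trace, so $\Tr_2(M_\Phi)\preceq \Tr_2(\mathbb{I}\otimes\rho)=\mathbb{I}$; meanwhile $\Tr(\Tr_2(M_\Phi))=\Tr(M_\Phi)\leq d$ by the previous paragraph. Applying the psd-norm inequality once more yields $\|\Tr_2(M_\Phi)\|_2^2\leq 1\cdot d=d$.

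\smallskip

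\noindent\emph{Bound on $\|\Tr_1(M_\Phi)\|_2$.} Here I use the analogous identity $\Tr_1(M_\Phi)=\Phi^\dagger\,\Tr_1(M)\,\Phi$, which again is transparent from wiring pictures (the left wires of $M$ are contracted off, leaving the $\Phi^\dagger,\Phi$ on the second factor untouched). Then a single application of the non-commutative Hölder inequality combined with Eq.~\eqref{eq:partial_operator_norm} gives
\begin{equation*}
\|\Tr_1(M_\Phi)\|_2\leq \|\Phi\|_\infty\,\|\Tr_1(M)\|_\infty\,\|\Phi\|_2\leq 1\cdot d\|M\|_\infty\cdot 1\leq d.
\end{equation*}

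\smallskip

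There is no genuine obstacle: the only content beyond routine Schatten-norm manipulation is recognizing the two ``partial cyclicity'' identities $\Tr_1(M_\Phi)=\Phi^\dagger\Tr_1(M)\Phi$ and $\Tr_2(M_\Phi)=\Tr_2[M(\mathbb{I}\otimes\Phi\Phi^\dagger)]$, which are immediate from the wiring calculus of Sec.~\ref{sub:wiring}. The mildly delicate step is (2), where one must remember that $\|A\|_2^2\leq \|A\|_\infty\Tr(A)$ requires psd-ness of $A$, supplied here by $M\succeq 0$.
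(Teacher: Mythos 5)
Your proof is correct: all three inequalities come out with the right constants, and the two structural identities you rely on, $\Tr_1(M_\Phi)=\Phi^\dagger\,\Tr_1(M)\,\Phi$ and $\Tr_2(M_\Phi)=\Tr_2\bigl(M(\mathbb{I}\otimes\Phi\Phi^\dagger)\bigr)$, are valid partial-cyclicity facts, as is the order-preservation step $\Tr_2(M_\Phi)\preceq\Tr_2(\mathbb{I}\otimes\Phi\Phi^\dagger)=\mathbb{I}$. Your route is, however, genuinely different from the paper's. The paper writes each squared 2-norm as a function of the state $\rho=\Phi^\dagger\Phi$ of the form treated in \autoref{lem:convex_function} (e.g.\ $\|\Tr_1(M_\Phi)\|_2^2=\Tr\bigl(\rho\,\Tr_1(M)\,\rho\,\Tr_1(M)\bigr)$), invokes convexity together with \autoref{fact:convex_maximum} to reduce the maximization to pure states $\rho=\ketbra{\psi}$, and then bounds the resulting expressions using \eqref{eq:partial_operator_norm}, $\|X\|_2\leq\sqrt{d}\,\|X\|_\infty$, and matrix H\"older. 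You avoid convexity and the extreme-point reduction entirely: you use that $M_\Phi$ (and, after symmetrizing with $\rho^{1/2}$, its partial trace $\Tr_2(M_\Phi)$) is psd, control the trace via $\Tr(M_\Phi)\leq d$ and the operator norms via submultiplicativity and operator-order arguments, and close with the psd inequality $\|A\|_2^2\leq\|A\|_\infty\Tr(A)$; the $\Tr_1$ bound follows from $\|XY\|_2\leq\|X\|_\infty\|Y\|_2$ together with \eqref{eq:partial_operator_norm}. Your argument is shorter and more elementary — it needs neither \autoref{lem:convex_function} nor \autoref{fact:convex_maximum} — whereas the paper's convexity template is a slightly more flexible tool for quantities of the form $\Tr(XAXA)$ that do not split into a trace-times-operator-norm estimate; for this lemma the two approaches yield identical constants.
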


\begin{proof}
Observe 
\begin{equation*}
\| \Tr_1 (M_\Phi) \|_2^2
= \!
\begin{tikzpicture}[baseline,scale=0.55]
\draw[thick, rounded corners] (-2,0.5) -- (-3.5,0.5) -- (-3.5,1.25) -- (-1,1.25) -- (-1,0.5) -- (-2,0.5);
\draw[thick, rounded corners] (2,0.5) -- (3.5,0.5) -- (3.5,1.25) -- (1,1.25) -- (1,0.5) -- (2,0.5);
\draw[thick, rounded corners] (0,-0.5) -- (-5,-0.5) -- (-5,-1.25) -- (5,-1.25) -- (5,-0.5) -- (0,-0.5);
\draw[rounded corners, fill=white] (-0.95,-0.95) rectangle (-0.05,-0.05);
\node at (-0.5,-0.5) {$\Phi^\dagger$};
\draw[rounded corners, fill=white] (0.05,-0.95) rectangle (0.95,-0.05);
\node at (0.5,-0.5) {$\Phi$};
\draw[rounded corners, fill=white] (-3,-0.9) rectangle (-1.5,0.9);
\node at (-2.25,0) {$M$};
\draw[rounded corners, fill =white] (1.5,-0.9) rectangle (3,0.9);
\node at (2.25,0) {$M$};
\draw[rounded corners, fill=white] (-4.45,-0.95) rectangle (-3.55,-0.05);
\node at (-4,-0.5) {$\Phi$};
\draw[rounded corners,fill=white] (3.55,-0.95) rectangle (4.45,-0.05);
\node at (4,-0.5) {$\Phi^\dagger$};
\end{tikzpicture}\!
= \Tr \left( \Phi^\dagger \Phi \Tr_1 (M) \Phi^\dagger \Phi \Tr_1 (M) \right)=:h_1 (\Phi^\dagger \Phi)\,.
\end{equation*}
The function $X \mapsto h_1(X)$ is convex, according to \autoref{lem:convex_function} ($M \succeq 0$ implies $\Tr_1 (M) \succeq 0$). Moreover, $\rho = \Phi^\dagger \Phi \in \mathbb{H}_d$ is guaranteed to be a quantum state: $\rho =\Phi^\dagger \Phi \succeq 0$ and $\Tr(\rho) = \| \Phi \|_2^2 =1$.
 The extreme points of the convex set of all quantum states are pure states. The convex function $h_1$ achieves its maximum value at such an extreme point (\autoref{fact:convex_maximum}) and we infer
\begin{equation}
h_1 (\Phi^\dagger \Phi) \leq \max_{\rho \textrm{ state}} h_1 (\rho) = \max_{\ket\psi}\, h_1 (| \psi \rangle \! \langle \psi|) = \max_{\ket\psi}\, \langle \psi| \Tr_1 (M) | \psi \rangle^2 = \| \Tr_1 (M) \|_\infty^2\,.
\end{equation}
Apply Eq.~\eqref{eq:partial_operator_norm} to conclude the first estimate:
$
\| \Tr_1 (M) \|_\infty^2 \leq d^2 \| M \|_\infty \leq d^2.
$
The second bound can be derived in a similar fashion. Observe,
\begin{equation}
\left\| M_\Phi \right\|_2^2
=
\begin{tikzpicture}[baseline,scale=0.55]
\draw[thick, rounded corners] (0,0.5) -- (-3.5,0.5) -- (-3.5,1.25) -- (3.5,1.25) -- (3.5,0.5) -- (0,0.5);
\draw[thick, rounded corners] (0,-0.5) -- (-5,-0.5) -- (-5,-1.25) -- (5,-1.25) -- (5,-0.5) -- (0,-0.5);
\draw[rounded corners, fill=white] (-0.95,-0.95) rectangle (-0.05,-0.05);
\node at (-0.5,-0.5) {$\Phi^\dagger$};
\draw[rounded corners, fill=white] (0.05,-0.95) rectangle (0.95,-0.05);
\node at (0.5,-0.5) {$\Phi$};
\draw[rounded corners, fill=white] (-3,-0.9) rectangle (-1.5,0.9);
\node at (-2.25,0) {$M$};
\draw[rounded corners, fill =white] (1.5,-0.9) rectangle (3,0.9);
\node at (2.25,0) {$M$};
\draw[rounded corners, fill=white] (-4.45,-0.95) rectangle (-3.55,-0.05);
\node at (-4,-0.5) {$\Phi$};
\draw[rounded corners,fill=white] (3.55,-0.95) rectangle (4.45,-0.05);
\node at (4,-0.5) {$\Phi^\dagger$};
\end{tikzpicture}
= \Tr \left( \Phi^\dagger \Phi \otimes \mathbb{I} M \mathbb{I} \otimes \Phi^\dagger \Phi M \right) = h_2 (\Phi^\dagger \Phi)\,.
\end{equation}
The function $h_2 (X)$ is again convex, because $X\mapsto \mathbb{I} \otimes X$ is a linear transformation and $M \succeq 0$.
Moreover, $\rho = \Phi^\dagger \Phi$ is again a quantum state. We infer
\begin{equation}
h_2 (\Phi^\dagger \Phi)
\leq \max_{\rho \textrm{ state}} h_2 (\rho) = \max_{\ket{\psi}} h_2 (| \psi \rangle \! \langle \psi|) = \max_{\ket{\psi}}\left\| \Tr_2 \left( \mathbb{I} \otimes | \psi \rangle \! \langle \psi| M \right) \right\|_2^2\,,
\end{equation}
because convex functions achieve their maximum at the boundary of convex sets (\autoref{fact:convex_maximum}).
Applying the relation $\| X \|_2 \leq \sqrt{d} \| X \|_\infty$ for Schatten norms in $\mathbb{H}_d$, we conclude
\begin{align}
\| M_\Phi \|_2^2 \leq &d\max_{\ket{\psi}} \| \Tr_2 \left( \mathbb{I} \otimes | \psi \rangle \! \langle \psi| M \right) \|_\infty^2  \nonumber \\ 
=& d \left(\max_{\ket{\psi},\ket{x}} (\langle x | \otimes \langle \psi|) M (| x \rangle  \otimes | \psi \rangle)\right)^2 \leq d \| M \|_\infty^2\,.
\end{align}
The final bound can be established directly. Set $\rho = \Phi^\dagger \Phi$ and observe
\begin{equation}
\| \Tr_2 (M_\Phi) \|_2^2
= \begin{tikzpicture}[baseline,scale=0.55]
\draw[thick, rounded corners] (0,0.5) -- (-3.5,0.5) -- (-3.5,1.25) -- (3.5,1.25) -- (3.5,0.5) -- (0,0.5);
\draw[thick, rounded corners] (-2,-0.5) -- (-6,-0.5) -- (-6,-1.25) -- (-1,-1.25) -- (-1,-0.5) -- (-2,-0.5);
\draw[thick, rounded corners] (2,-0.5) -- (6,-0.5) -- (6,-1.25) -- (1,-1.25) -- (1,-0.5) -- (2,-0.5);
\draw[rounded corners, fill=white] (-3,-0.9) rectangle (-1.5,0.9);
\node at (-2.25,0) {$M$};
\draw[rounded corners, fill =white] (1.5,-0.9) rectangle (3,0.9);
\node at (2.25,0) {$M$};
\draw[rounded corners, fill=white] (-4.45,-0.95) rectangle (-3.55,-0.05);
\node at (-4,-0.5) {$\Phi$};
\draw[rounded corners, fill=white] (-5.45,-0.95) rectangle (-4.55,-0.05);
\node at (-5,-0.5) {$\Phi^\dagger$};
\draw[rounded corners, fill=white] (3.55,-0.95) rectangle (4.45,-0.05);
\node at (4,-0.5) {$\Phi^\dagger$};
\draw[rounded corners, fill=white] (4.55,-0.95) rectangle (5.45,-0.05);
\node at (5,-0.5) {$\Phi$};
\end{tikzpicture}
= \left\| \Tr_2 \left( \mathbb{I} \otimes \rho M \right) \right\|_2^2.
\end{equation}
Apply $\| X \|_2 \leq \sqrt{d} \|X \|_\infty$ to simplify further:
\begin{align}
\| \Tr_2 (\mathbb{I} \otimes \rho M ) \|_2
&\leq \sqrt{d} \| \Tr_2 (\mathbb{I} \otimes \rho M) \|_\infty
\leq \sqrt{d} \max_{\ket{x}} \big| \langle x| \Tr_2 (\mathbb{I} \otimes \rho M ) |x \rangle \big| \nn
&= \sqrt{d} \max_{\ket{x}} \big| \Tr \left( |x \rangle \! \langle x| \otimes \rho M \right) \big|\,.
\end{align}
Finally, use matrix Hoelder \eqref{eq:hoelder} to infer the advertised bound:
\begin{equation}
\sqrt{d} \max_{\ket{x}} \left| \Tr \left( |x \rangle \! \langle x| \otimes \rho M \right) \right|
\leq \sqrt{d} \max_{\ket{x}} \| |x \rangle\! \langle x| \otimes \rho \|_1 \| M \|_\infty = \sqrt{d} \| M \|_\infty\,.
\end{equation}
\end{proof}

\paragraph{Expectation value and centering}

The following result is well known in the literature, see e.g.\ \cite{emerson05}. We include a self-contained derivation based on wiring diagrams for the sake of completeness.

\begin{lemma}[Averaging unitary channels produces the depolarizing channel] \label{lem:expectation}
Fix a psd matrix $M \in \mathbb{H}_d^{\otimes 2}$ and $\ket\phi \in \mathbb{C}^d \otimes \mathbb{C}^d$. Let $\mathcal{U}(X) = UXU^\dagger$ be a Haar-random unitary channel. Then,
\begin{equation}
\mathbb{E}_U \left[ \Tr \left( M \mathcal{U} \otimes \mathcal{I} (| \phi \rangle \! \langle \phi| ) \right) \right] = \Tr \left(M \mathcal{D} \otimes \mathcal{I} (| \phi \rangle \! \langle \phi|) \right)
\quad \textrm{with} \quad \mathcal{D}(\rho) = \tfrac{\Tr (\rho)}{d} \mathbb{I}\,.
\end{equation}
\end{lemma}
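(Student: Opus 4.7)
The plan is to reduce the claim to the standard $k=1$ Haar integration formula. By linearity of the trace and of expectation, I can pull $\mathbb{E}_U$ inside the trace, so it suffices to establish the operator-valued identity
\begin{equation*}
\mathbb{E}_U\bigl[(U\otimes \mathbb{I})\,|\phi\rangle\!\langle\phi|\,(U^\dagger\otimes \mathbb{I})\bigr] \;=\; (\mathcal{D}\otimes \mathcal{I})(|\phi\rangle\!\langle\phi|)\,,
\end{equation*}
after which multiplying by $M$ and tracing yields the lemma. The right-hand side equals $\tfrac{1}{d}\mathbb{I}\otimes \Tr_1(|\phi\rangle\!\langle\phi|)$, which is how I will recognize the target of the computation.

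Next, I will invoke the $k=1$ case of Weingarten calculus, which was already displayed in the wiring-diagram warm-up: for every $A\in \mathbb{M}_d$,
\begin{equation*}
\int dU\; U A U^\dagger \;=\; \frac{\Tr(A)}{d}\,\mathbb{I}\,.
\end{equation*}
This is the only analytic input required; it follows from Schur's lemma applied to the irreducible defining representation of $U(d)$ (together with trace-preservation to fix the constant), or equivalently from the $k=1$ Weingarten function $\mathrm{Wg}((1),d)=1/d$.

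Because the Haar average acts only on the first tensor factor, I will expand $|\phi\rangle\!\langle\phi|=\sum_{i,j,k,l}\phi_{ij}\bar{\phi}_{kl}\,|i\rangle\!\langle k|\otimes |j\rangle\!\langle l|$ and push the expectation past the second factor, obtaining
\begin{equation*}
\sum_{i,j,k,l}\phi_{ij}\bar{\phi}_{kl}\,\mathbb{E}_U\bigl[U|i\rangle\!\langle k|U^\dagger\bigr]\otimes |j\rangle\!\langle l|
\;=\;\sum_{i,j,k,l}\phi_{ij}\bar{\phi}_{kl}\,\frac{\delta_{ik}}{d}\,\mathbb{I}\otimes |j\rangle\!\langle l|
\;=\;\frac{\mathbb{I}}{d}\otimes \Tr_1(|\phi\rangle\!\langle\phi|)\,.
\end{equation*}
This matches the desired form of $(\mathcal{D}\otimes\mathcal{I})(|\phi\rangle\!\langle\phi|)$ exactly. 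The paper advertises a wiring-diagram derivation, which simply depicts the same manipulation graphically: the two $U$ and $U^\dagger$ legs get contracted by the $k=1$ twirl into a closed loop carrying a factor of $1/d$, leaving the second tensor factor untouched and producing a cap that realizes $\Tr_1$ on $|\phi\rangle\!\langle\phi|$. There is no real obstacle here — the result is a one-line application of Schur's lemma — so the only drafting choice is whether to present the algebraic computation above or translate it into the graphical calculus of Section 5.3; I would opt for the latter for consistency with the surrounding text.
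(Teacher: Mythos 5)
Your proposal is correct and follows essentially the same route as the paper: the only analytic input is the $k=1$ Haar twirl $\mathbb{E}_U[UAU^\dagger]=\tfrac{\Tr(A)}{d}\mathbb{I}$ applied to the first tensor factor, which the paper carries out graphically (via the vectorization $M_\Phi$ and wiring diagrams) while you do it algebraically in index notation. The resulting identity $\mathbb{E}_U[(\mathcal{U}\otimes\mathcal{I})(\ketbra{\phi})]=\tfrac{1}{d}\mathbb{I}\otimes\Tr_1(\ketbra{\phi})=(\mathcal{D}\otimes\mathcal{I})(\ketbra{\phi})$ is exactly the paper's conclusion, so no gap remains.
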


\begin{proof}
Averaging over a single unitary $U$ and its adjoint decouples the register in question. Combine this with the reformulation from the previous paragraph to conclude
\begin{align}
\mathbb{E}_U \left[ \Tr \left( M \mathcal{U} \otimes \mathcal{I} (| \phi \rangle \! \langle \phi|) \right)\right] &=
\mathbb{E} \left[ \begin{tikzpicture}[baseline=-1mm,scale=0.55]
\draw[thick] (-2.5,0.5) -- (2.5,0.5);
\draw[thick] (-2.5,-0.5) -- (2.5,-0.5);
\draw[rounded corners,fill=white] (-0.75,-0.9) rectangle (0.75,0.9);
\node at (0,0) {$M$};
\draw[rounded corners,fill=white] (-2.125,0) rectangle (-1.125,1);
\node at (-1.625,0.5) {$U^\dagger$};
\draw[rounded corners,fill=white] (1.125,0) rectangle (2.125,1);
\node at (1.625,0.5) {$U$};
\draw[rounded corners] (-3,-0.9) rectangle (-2.5,0.9);
\node at (-2.75,0) {$\phi$};
\draw[rounded corners] (2.5,-0.9) rectangle (3,0.9);
\node at (2.75,0) {$\phi$};
\end{tikzpicture}
\right] 
= \mathbb{E} \left[
\begin{tikzpicture}[baseline=-1mm,scale=0.55]
\draw[thick,rounded corners] (-2,0.5) -- (2.5,0.5) -- (2.5,-0.5) -- (-2.5,-0.5) -- (-2.5,0.5) -- (-2,0.5);
\draw[rounded corners,fill=white] (-0.75,-0.9) rectangle (0.75,0.9);
\node at (0,0) {$M_\Phi$};
\draw[rounded corners,fill=white] (-2.125,0.05) rectangle (-1.125,0.95);
\node at (-1.625,0.5) {$U^\dagger$};
\draw[rounded corners,fill=white] (1.125,0.05) rectangle (2.125,0.95);
\node at (1.625,0.5) {$U$};
\end{tikzpicture}
\right] \nn
&=
\frac{1}{d}\,
\begin{tikzpicture}[baseline=-1mm,scale=0.55]
\draw[thick, rounded corners] (0,0.5) -- (1.15,0.5) -- (1.15,1.25) -- (-1.15,1.25) -- (-1.15,0.5) -- (0,0.5);
\draw[thick, rounded corners] (0,-0.5) -- (1.15,-0.5) -- (1.15,-1.25) -- (-1.15,-1.25) -- (-1.15,-0.5) -- (0,-0.5);
\draw[rounded corners,fill=white] (-0.75,-0.9) rectangle (0.75,0.9);
\node at (0,0) {$M_\Phi$};
\end{tikzpicture} 
=
\frac{1}{d}\,
\begin{tikzpicture}[baseline=-1mm,scale=0.55]
\draw[thick, rounded corners] (0.5,0.5) -- (0.1,0.5) -- (0.1,1.25) -- (2.4,1.25) -- (2.4,0.5) -- (2,0.5);
\draw[thick, rounded corners] (0, -0.5) -- (2.4,-0.5) -- (2.4,-1.25) -- (-2.4,-1.25) -- (-2.4,-0.5) -- (0,-0.5);
\draw[rounded corners,fill=white] (0.5,-0.9) rectangle (2,0.9);
\node at (1.25,0) {$M$};
\draw[rounded corners, fill=white] (-0.95,-0.95) rectangle (-0.05,-0.05);
\node at (-0.5,-0.5) {$\Phi^\dagger$};
\draw[rounded corners, fill=white] (-1.95,-0.95) rectangle (-1.05,-0.05);
\node at (-1.5,-0.5) {$\Phi$};
\end{tikzpicture}
= \frac{1}{d}\,\begin{tikzpicture}[baseline=-1mm,scale=0.55]
\draw[thick, rounded corners] (-1.55,0.5) -- (-1.9,0.5) -- (-1.9,1.25) -- (-0.1,1.25) -- (-0.1,0.5) -- (-0.95,0.5);
\draw[thick, rounded corners] (0.5,0.5) -- (0.1,0.5) -- (0.1,1.25) -- (2.4,1.25) -- (2.4,0.5) -- (2,0.5);
\draw[thick, rounded corners] (0, -0.5) -- (2.4,-0.5) -- (2.4,-1.25) -- (-1.9,-1.25) -- (-1.9,-0.5) -- (0,-0.5);
\draw[rounded corners,fill=white] (0.5,-0.9) rectangle (2,0.9);
\node at (1.25,0) {$M$};
\draw[rounded corners, fill=white] (-0.95,-0.9) rectangle (-0.45,0.9);
\node at (-0.7,0) {$\phi$};
\draw[rounded corners, fill=white] (-1.55,-0.9) rectangle (-1.05,0.9);
\node at (-1.3,0) {$\phi$};
\end{tikzpicture}\,.
\end{align}
The connection to the depolarizing channel readily follows from $\mathcal{D} \otimes \mathcal{I}(| \phi \rangle \! \langle \phi|) = \frac{\mathbb{I}}{d} \otimes \Tr_2 (| \phi \rangle \! \langle \phi|)$.
\end{proof}

\begin{corollary}[Reformulation of the centered random variable] \label{cor:centering}
Fix $\ket\phi \in \mathbb{C}^d \otimes \mathbb{C}^d$ (state) and $M \in \mathbb{H}_d^{\otimes 2}$ such that $\mathbb{I} \succeq M \succeq 0$ (measurement). For channels $\mathcal{U}(X) = U XU^\dagger$ and  $\mathcal{D}(X) = \frac{\Tr(X)}{d} \mathbb{I}$ define
\begin{equation*}
S_U (M,\phi) = \Tr \left( M \mathcal{U}\otimes \mathcal{I} (| \phi \rangle \! \langle \phi|) \right), \quad \textrm{as well as} \quad 
\mu (M,\phi) =\Tr \left( M \mathcal{D}\otimes \mathcal{I} (| \phi \rangle \! \langle \phi|) \right).
\end{equation*}
Then, we may rewrite the difference of these variables as
\begin{equation}
\bar{S}_U (M,\phi) = S_U (M,\phi) - \mu (M,\Phi) = 
\begin{tikzpicture}[baseline=-1mm,scale=0.55]
\draw[thick,rounded corners] (-2,0.5) -- (2.5,0.5) -- (2.5,-0.5) -- (-2.5,-0.5) -- (-2.5,0.5) -- (-2,0.5);
\draw[rounded corners,fill=white] (-0.75,-0.9) rectangle (0.75,0.9);
\node at (0,0) {$\bar{M}_\Phi$};
\draw[rounded corners,fill=white] (-2.125,0.05) rectangle (-1.125,0.95);
\node at (-1.625,0.5) {$U^\dagger$};
\draw[rounded corners,fill=white] (1.125,0.05) rectangle (2.125,0.95);
\node at (1.625,0.5) {$U$};
\end{tikzpicture}\,,
\end{equation}
where $\bar{M}_\Phi = M_\Phi - \frac{\Tr(M_\Phi)}{d} \mathbb{I} \in \mathbb{H}_d^{\otimes 2}$ is the traceless part of $M_\Phi$ (i.e.\ $\Tr(\bar{M}_\Phi)=0$).
\end{corollary}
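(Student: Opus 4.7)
The proof reduces, via linearity of the wiring diagram in its central tensor, to two short identities. Denote by $F_U(N)$ the scalar obtained by inserting an arbitrary $N \in \mathbb{H}_d^{\otimes 2}$ into the central slot of the diagram appearing on the right-hand side of the claim. The reformulation established immediately before the statement of the corollary says $F_U(M_\Phi) = S_U(M,\phi)$, and $F_U$ is manifestly linear. Hence
\[
F_U(\bar{M}_\Phi) \;=\; F_U(M_\Phi) - \tfrac{\Tr(M_\Phi)}{d}\, F_U(\mathbb{I}) \;=\; S_U(M,\phi) - \tfrac{\Tr(M_\Phi)}{d}\, F_U(\mathbb{I}),
\]
so the target $F_U(\bar{M}_\Phi) = \bar{S}_U(M,\phi)$ reduces to verifying $\tfrac{\Tr(M_\Phi)}{d}\, F_U(\mathbb{I}) = \mu(M,\phi)$.

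The first factor $F_U(\mathbb{I})$ I would evaluate directly from the diagram. Substituting the identity in the central slot decouples the two legs: the top leg collapses via $U^\dagger\, \mathbb{I}\, U = \mathbb{I}_d$, and the outer loop closes into a single trace over $\mathbb{C}^d$. With the normalization $\Tr(\mathbb{I}) = d$ (which is forced by the requirement $\Tr(\bar{M}_\Phi) = 0$), the evaluation gives $F_U(\mathbb{I}) = 1$, independently of the random unitary $U$.

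For the second factor I would invoke \autoref{lem:expectation}, which yields
\[
\mu(M,\phi) \;=\; \Tr\!\left(M\,\mathcal{D}\otimes\mathcal{I}(|\phi\rangle\langle\phi|)\right) \;=\; \tfrac{1}{d}\Tr\!\left(M \cdot \mathbb{I}\otimes \Tr_1(|\phi\rangle\langle\phi|)\right).
\]
A short wiring-calculus computation with the vectorization $|\phi\rangle = |\mathrm{vec}(\Phi)\rangle$ identifies $\Tr_1(|\phi\rangle\langle\phi|) = \Phi\Phi^\dagger$. Combining this with cyclicity of the trace applied to $M_\Phi = (\mathbb{I}\otimes\Phi^\dagger)M(\mathbb{I}\otimes\Phi)$, which gives $\Tr(M_\Phi) = \Tr(M\cdot \mathbb{I}\otimes\Phi\Phi^\dagger)$, we arrive at $\mu(M,\phi) = \Tr(M_\Phi)/d$, matching the required quantity. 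The accompanying traceless condition $\Tr(\bar{M}_\Phi) = 0$ then follows from $\Tr(\mathbb{I}) = d$ and linearity of the trace.

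The main obstacle is purely bookkeeping: keeping the vectorization conventions consistent. One has to check carefully that the paper's wiring-calculus convention — where the top/first factor of $|\phi\rangle \in \mathbb{C}^d\otimes\mathbb{C}^d$ corresponds to the covariant (right-hand) index of $\Phi$ — indeed produces $\Tr_1(|\phi\rangle\langle\phi|) = \Phi\Phi^\dagger$ rather than one of the three neighbouring candidates $\Phi^\dagger\Phi$, $\overline{\Phi^\dagger\Phi}$, or $\Phi^T\bar{\Phi}$, and that the tracing on subsystem~$1$ (rather than subsystem~$2$) is the one that appears after applying $\mathcal{D}\otimes\mathcal{I}$. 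Once this identification is pinned down, the remainder of the proof is a single line of linearity plus the two evaluations above.
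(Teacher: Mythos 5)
Your proposal is correct and takes essentially the same route as the paper's own (very terse) argument: linearity of the wiring diagram in its central slot, the previously established identity $F_U(M_\Phi)=S_U(M,\phi)$, and the identification $\mu(M,\phi)=\Tr(M_\Phi)/d$, which the paper obtains by rewriting $\mu$ as $\tfrac{\Tr(M_\Phi)}{d^{2}}$ times the closed single-loop diagram $\Tr(U^\dagger U)=d$ from the proof of \autoref{lem:expectation}. Your insistence on the trace-$d$ normalization of the subtracted identity (equivalently, subtracting $\tfrac{\Tr(M_\Phi)}{d^{2}}\,\mathbb{I}_{d^{2}}$, which is what $\Tr(\bar{M}_\Phi)=0$ and the paper's own factor $\tfrac{\Tr(M_\Phi)}{d^{2}}$ force) correctly resolves the normalization slip in the displayed definition of $\bar{M}_\Phi$, and your bookkeeping caveats about $\Tr_1$ versus $\Tr_2$ and the vectorization convention are exactly the right points to pin down.
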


\ni This reformulation immediately follows from the proof of \autoref{lem:expectation}, provided that we rewrite
\begin{equation}
\mu (M,\phi) =\frac{1}{d} \Tr(M_\Phi) = \frac{\Tr (M_\Phi)}{d^2}\, \begin{tikzpicture}[baseline=-1mm,scale=0.55]
\draw[thick,rounded corners] (0,0.5) -- (1.75,0.5) -- (1.75,-0.5) -- (-1.75,-0.5) -- (-1.75,0.5) -- (0,0.5);
\draw[rounded corners,fill=white] (-1.45,0.05) rectangle (-0.55,0.95);
\node at (-1,0.5) {$U^\dagger$};
\draw[rounded corners,fill=white] (0.55,0.05) rectangle (1.45,0.95);
\node at (1,0.5) {$U$};
\end{tikzpicture}\,.
\end{equation}

\paragraph{Bounds on centered moments}

\begin{lemma}
With the same assumptions and notation as in \autoref{cor:centering}, suppose that $U \in U(d)$ is chosen uniformly from the Haar measure. Then, for any $k \leq d^{2/3}$
\begin{equation}
\mathbb{E}_{U} \left[ \bar{S}_U (M,\phi)^k \right] \leq C_k \frac{(k!)^2}{d^{k/2}}\,,
\end{equation}
where $C_k=\tfrac{1}{k-1}\binom{2k}{k}$ is the $k$-th Catalan number.
\end{lemma}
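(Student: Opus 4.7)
The plan is to compute $\mathbb{E}_U[\bar{S}_U^k]$ directly via Weingarten calculus and then bound the resulting double sum by exploiting the norm estimates of \autoref{lem:norm_bound} together with the tracelessness $\Tr(\bar M_\Phi)=0$ enforced in \autoref{cor:centering}.

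\textbf{Step 1 (Weingarten expansion).}
Writing $\bar{S}_U=\langle\mathrm{vec}(U)|\,\bar M_\Phi\,|\mathrm{vec}(U)\rangle$ as in \autoref{cor:centering}, the $k$-th power is a polynomial of degree $(k,k)$ in the entries of $(U,U^\dagger)$. Applying Eq.~\eqref{eq:Uint} term by term yields
\begin{equation*}
\mathbb{E}_U\!\left[\bar{S}_U^{\,k}\right] = \sum_{\sigma,\tau\in S_k}\!\Wg(\sigma^{-1}\tau,d)\;\mathcal{T}_{\sigma,\tau},
\end{equation*}
where $\mathcal{T}_{\sigma,\tau}=\Tr\bigl[\bar M_\Phi^{\,\otimes k}\,(P_\sigma^{(1)}\otimes P_\tau^{(2)})\bigr]$ is the tensor-network trace obtained by contracting the first register of $k$ copies of $\bar M_\Phi$ with $\sigma$ and the second register with $\tau$.

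\textbf{Step 2 (Bounding each tensor contraction).}
For fixed $(\sigma,\tau)$, read $\mathcal{T}_{\sigma,\tau}$ as a trace of a product of $\bar M_\Phi$'s with partial traces prescribed by the joint cycle structure of $(\sigma,\tau)$. Repeated applications of matrix H\"older \eqref{eq:hoelder} together with the estimates $\|M_\Phi\|_2\leq\sqrt d$, $\|\Tr_1(M_\Phi)\|_2\leq d$, $\|\Tr_2(M_\Phi)\|_2\leq\sqrt d$, $\|\bar M_\Phi\|_\infty\leq 1$ from \autoref{lem:norm_bound} reduce $\mathcal{T}_{\sigma,\tau}$ to a product of Schatten factors whose combined size is at most $d^{\,k-\tfrac12\#(\sigma,\tau)}$, where $\#(\sigma,\tau)$ counts the connected components of the ribbon graph defined by $(\sigma,\tau)$. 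Crucially, $\Tr(\bar M_\Phi)=0$ annihilates every component that reduces to a bare trace $\Tr(\bar M_\Phi)$, enforcing that each surviving component has size $\ge 2$ and so contributes an extra factor of $d^{-1/2}$ relative to the untracelessed bound.

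\textbf{Step 3 (Weingarten asymptotics and combinatorial sum).}
Insert the leading-order Weingarten asymptotics $\Wg(\pi,d)= \mathrm{Moeb}(\pi)\,d^{-k-|\pi|}\bigl(1+O(k^2/d^2)\bigr)$, where $|\pi|$ denotes the minimal number of transpositions in $\pi\in S_k$. The error term is negligible under the hypothesis $k\leq d^{2/3}$. Group terms by the cycle type of $\pi=\sigma^{-1}\tau$: the pairings for which Step~2 is tight correspond to \emph{non-crossing} pair patterns on $2k$ elements, whose enumeration is precisely the Catalan number $C_k=\tfrac{1}{k+1}\binom{2k}{k}$; free sums over $\sigma,\tau\in S_k$ supply the $(k!)^2$ prefactor; the tracelessness-induced $d^{-k/2}$ improvement combines with the $\Wg$-prefactor $d^{-k}$ and the contraction scale $d^{\,k/2}$ from Step~2 to leave an overall $d^{-k/2}$. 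Putting everything together yields the claimed bound.

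\textbf{Main obstacle.}
The delicate ingredient is \emph{Step~2}: individual $\mathcal{T}_{\sigma,\tau}$ can be as large as $d^{k}$ when many partial traces stack up, and one must show that such large contractions only occur for $\pi=\sigma^{-1}\tau$ whose Weingarten weight $|\Wg(\pi,d)|$ is small enough to more than compensate. Keeping track of this trade-off is cleanest diagrammatically, treating the ribbon graph attached to $(\sigma,\tau)$ as a planar contraction pattern and invoking convexity (\autoref{lem:convex_function}) to replace $\Phi^\dagger\Phi$ by a worst-case pure state throughout. The Catalan constant, which one might hope to sharpen, appears to be a genuine artefact of the non-crossing diagrams that dominate the expansion.
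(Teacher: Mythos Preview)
Your scaffolding (Weingarten expansion, norm bounds from \autoref{lem:norm_bound}, tracelessness from \autoref{cor:centering}) matches the paper, but Steps~2--3 do not close.

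The central issue is the asymmetry in the norm bounds that you quote but do not exploit: $\|\Tr_1(\bar M_\Phi)\|_2\le d$ is a factor $\sqrt d$ \emph{larger} than the bounds for $\bar M_\Phi$ and $\Tr_2(\bar M_\Phi)$. Your uniform estimate $|\mathcal{T}_{\sigma,\tau}|\le d^{\,k-\tfrac12\#(\sigma,\tau)}$ in terms of a single component count does not capture this; the contraction can be as large as $d^{k/2+\nu_3/2}$, where $\nu_3$ is specifically the number of $\Tr_1(\bar M_\Phi)$ building blocks appearing in the network. Tracelessness only kills full self-contractions $\Tr(\bar M_\Phi)$; it does \emph{not} force all surviving components to have size $\ge 2$ in the sense you need, because $\Tr_1(\bar M_\Phi)\neq 0$. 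The paper's missing ingredient is a combinatorial bound $\nu_3(\sigma,\tau)\le 2\bigl(k-\ell(\sigma^{-1}\tau)\bigr)$: a $\Tr_1$ block requires $\sigma$ to have a fixed point where $\tau$ does not, and each such mismatch increases the Cayley distance $d(\sigma,\tau)=k-\ell(\sigma^{-1}\tau)$. This is precisely what makes the dangerous $d^{\nu_3/2}$ factor cancel against the Weingarten suppression $d^{\ell(\sigma^{-1}\tau)-2k}$, leaving a clean $d^{-k/2}$ for \emph{every} pair $(\sigma,\tau)$.

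Your Step~3 attribution of the Catalan number to non-crossing contraction diagrams is also off. In the paper the Catalan constant enters through the non-asymptotic Weingarten bound $|\Wg(\pi,d)|\le \tfrac32\,C_{k-1}\,d^{-(2k-\ell(\pi))}$ valid for $k\le d^{2/3}$; the $(k!)^2$ is then simply the number of pairs $(\sigma,\tau)$, with no restriction to non-crossing patterns. Using only the leading M\"obius asymptotics of $\Wg$, as you propose, would not give a uniform bound over all $(\sigma,\tau)$ without additional work on the subleading corrections.
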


\begin{proof}
It is instructive to first analyze and understand the second moment:
\begin{align}
\mathbb{E}_U \left[ \bar{S}_U (M,\phi)^2 \right]
=\mathbb{E}_U \left[
\begin{array}{c}
\begin{tikzpicture}[baseline,scale=0.55]
\draw[thick,rounded corners] (-2,0.5) -- (2.5,0.5) -- (2.5,-0.5) -- (-2.5,-0.5) -- (-2.5,0.5) -- (-2,0.5);
\draw[rounded corners,fill=white] (-0.75,-0.9) rectangle (0.75,0.9);
\node at (0,0) {$\bar{M}_\Phi$};
\draw[rounded corners,fill=white] (-2.125,0.05) rectangle (-1.125,0.95);
\node at (-1.625,0.5) {$U^\dagger$};
\draw[rounded corners,fill=white] (1.125,0.05) rectangle (2.125,0.95);
\node at (1.625,0.5) {$U$};
\end{tikzpicture}
\\
\begin{tikzpicture}[baseline,scale=0.55]
\draw[thick,rounded corners] (-2,0.5) -- (2.5,0.5) -- (2.5,-0.5) -- (-2.5,-0.5) -- (-2.5,0.5) -- (-2,0.5);
\draw[rounded corners,fill=white] (-0.75,-0.9) rectangle (0.75,0.9);
\node at (0,0) {$\bar{M}_\Phi$};
\draw[rounded corners,fill=white] (-2.125,0.05) rectangle (-1.125,0.95);
\node at (-1.625,0.5) {$U^\dagger$};
\draw[rounded corners,fill=white] (1.125,0.05) rectangle (2.125,0.95);
\node at (1.625,0.5) {$U$};
\end{tikzpicture}
\end{array}
\right]
=
 \mathbb{E}_U \left[
\begin{tikzpicture}[baseline,scale=0.55]
\foreach \y in {-0.5,1.5}
{
\draw[thick, rounded corners] (-2,\y) -- (-0.5,\y) -- (-0.5,\y-0.5);
\draw[thick, rounded corners] (0.5,\y-0.5) -- (0.5,\y) -- (3,\y);
\draw[rounded corners,fill=white] (-2,\y-0.5) rectangle (-1,\y+0.5);
\node at (-1.5,\y) {$U$};
\draw[rounded corners,fill=white] (1,\y-0.5) rectangle (2,\y+0.5);
\node at (1.5,\y) {$U^\dagger$};
}
\foreach \y in {-1.5,0.5}
{
\draw[thick, rounded corners] (-2,\y) -- (-0.5,\y) -- (-0.5,\y+0.5);
\draw[thick, rounded corners] (0.5,\y+0.5) -- (0.5,\y) -- (3,\y);
}
\draw[rounded corners] (3,0.1) rectangle (4.5,1.9);
\node at (3.75,1) {$\bar{M}_\Phi$};
\draw[rounded corners] (3,-1.9) rectangle (4.5,-0.1);
\node at (3.75,-1) {$\bar{M}_\Phi$};
\draw[thick, rounded corners] (4.5,1.5) -- (5,1.5) -- (5,2.25) -- (-2.5,2.25) -- (-2.5,1.5) -- (-2,1.5);
\draw[thick, rounded corners] (4.5,-1.5) -- (5,-1.5) -- (5,-2.25) -- (-2.5,-2.25) -- (-2.5,-1.5) -- (-2,-1.5);
\draw[thick, rounded corners] (4.5,0.5) -- (5.5,0.5) -- (5.5,2.5) -- (-3,2.5)  -- (-3,0.5) -- (-2,0.5);
\draw[thick, rounded corners] (4.5,-0.5) -- (5.5,-0.5) -- (5.5,-2.5) -- (-3,-2.5) -- (-3,-0.5) -- (-2,-0.5);
\end{tikzpicture}
\right].
\end{align}
For $k=2$ there are two permutations: the identity permutation $\iden = \{1,2\}$ and swap (or flip) $S = \{2,1\}$. 
This results in $(k!)^2=4$ different contributions to the formula: $(\iden,\iden)$, $(S,S)$, $(S,\iden)$ and $(\iden,S)$ contribute each. 
The associated Weingarten functions are $\Wg((1),d) = \frac{1}{d^2-1}$ and $\Wg ((2),d) = -\frac{1}{d(d^2-1)}$.
Ignoring the common factor $\frac{1}{d^2-1}$, the individual contributions become
\begin{align}
&\begin{tikzpicture}[baseline=-1mm,scale=0.55]
\foreach \y in {1.5,0.5,-0.5,-1.5}
{
\draw[thick, rounded corners] (-1,\y) -- (1,\y);
}
\draw[rounded corners] (1,0.1) rectangle (2.5,1.9);
\node at (1.75,1) {$\bar{M}_\Phi$};
\draw[rounded corners] (1,-1.9) rectangle (2.5,-0.1);
\node at (1.75,-1) {$\bar{M}_\Phi$};
\draw[thick, rounded corners] (2.5,1.5) -- (3,1.5) -- (3,2.25) -- (-1.25,2.25) -- (-1.25,1.5) -- (-1,1.5);
\draw[thick, rounded corners] (2.5,-1.5) -- (3,-1.5) -- (3,-2.25) -- (-1.25,-2.25) -- (-1.25,-1.5) -- (-1,-1.5);
\draw[thick, rounded corners] (2.5,0.5) -- (3.5,0.5) -- (3.5,2.5) -- (-1.5,2.5) -- (-1.5,0.5) -- (-1,0.5);
\draw[thick, rounded corners] (2.5,-0.5) -- (3.5,-0.5) -- (3.5,-2.5) -- (-1.5,-2.5) -- (-1.5,-0.5) -- (-1,-0.5);
\end{tikzpicture}
+
\begin{tikzpicture}[baseline=-1mm,scale=0.55]
\draw[thick, rounded corners] (-1,1.5) -- (-0.5,1.5) -- (0.5,-0.5) -- (1,-0.5);
\draw[thick, rounded corners] (-1,0.5) -- (-0.5,0.5) -- (0.5,-1.5) -- (1,-1.5);
\draw[thick, rounded corners] (-1,-0.5) -- (-0.5,-0.5) -- (0.5,1.5) -- (1,1.5);
\draw[thick, rounded corners] (-1,-1.5) -- (-0.5,-1.5) -- (0.5,0.5) -- (1,0.5);
\draw[rounded corners] (1,0.1) rectangle (2.5,1.9);
\node at (1.75,1) {$\bar{M}_\Phi$};
\draw[rounded corners] (1,-1.9) rectangle (2.5,-0.1);
\node at (1.75,-1) {$\bar{M}_\Phi$};
\draw[thick, rounded corners] (2.5,1.5) -- (3,1.5) -- (3,2.25) -- (-1.25,2.25) -- (-1.25,1.5) -- (-1,1.5);
\draw[thick, rounded corners] (2.5,-1.5) -- (3,-1.5) -- (3,-2.25) -- (-1.25,-2.25) -- (-1.25,-1.5) -- (-1,-1.5);
\draw[thick, rounded corners] (2.5,0.5) -- (3.5,0.5) -- (3.5,2.5) -- (-1.5,2.5) -- (-1.5,0.5) -- (-1,0.5);
\draw[thick, rounded corners] (2.5,-0.5) -- (3.5,-0.5) -- (3.5,-2.5) -- (-1.5,-2.5) -- (-1.5,-0.5) -- (-1,-0.5);
\end{tikzpicture}
- \frac{1}{d}
\begin{tikzpicture}[baseline=-1mm,scale=0.55]
\draw[thick, rounded corners] (-1,1.5) -- (-0.5,1.5) -- (0.5,-0.5) -- (1,-0.5);
\draw[thick, rounded corners] (-1,0.5) --(1,0.5);
\draw[thick, rounded corners] (-1,-0.5) -- (-0.5,-0.5) -- (0.5,1.5) -- (1,1.5);
\draw[thick, rounded corners] (-1,-1.5) -- (1,-1.5);
\draw[rounded corners] (1,0.1) rectangle (2.5,1.9);
\node at (1.75,1) {$\bar{M}_\Phi$};
\draw[rounded corners] (1,-1.9) rectangle (2.5,-0.1);
\node at (1.75,-1) {$\bar{M}_\Phi$};
\draw[thick, rounded corners] (2.5,1.5) -- (3,1.5) -- (3,2.25) -- (-1.25,2.25) -- (-1.25,1.5) -- (-1,1.5);
\draw[thick, rounded corners] (2.5,-1.5) -- (3,-1.5) -- (3,-2.25) -- (-1.25,-2.25) -- (-1.25,-1.5) -- (-1,-1.5);
\draw[thick, rounded corners] (2.5,0.5) -- (3.5,0.5) -- (3.5,2.5) -- (-1.5,2.5) -- (-1.5,0.5) -- (-1,0.5);
\draw[thick, rounded corners] (2.5,-0.5) -- (3.5,-0.5) -- (3.5,-2.5) -- (-1.5,-2.5) -- (-1.5,-0.5) -- (-1,-0.5);
\end{tikzpicture}
-
\frac{1}{d}
\begin{tikzpicture}[baseline=-1mm,scale=0.55]
\draw[thick, rounded corners] (-1,1.5) -- (1,1.5);
\draw[thick, rounded corners] (-1,0.5) -- (-0.5,0.5) -- (0.5,-1.5) -- (1,-1.5);
\draw[thick, rounded corners] (-1,-0.5) --  (1,-0.5);
\draw[thick, rounded corners] (-1,-1.5) -- (-0.5,-1.5) -- (0.5,0.5) -- (1,0.5);
\draw[rounded corners] (1,0.1) rectangle (2.5,1.9);
\node at (1.75,1) {$\bar{M}_\Phi$};
\draw[rounded corners] (1,-1.9) rectangle (2.5,-0.1);
\node at (1.75,-1) {$\bar{M}_\Phi$};
\draw[thick, rounded corners] (2.5,1.5) -- (3,1.5) -- (3,2.25) -- (-1.25,2.25) -- (-1.25,1.5) -- (-1,1.5);
\draw[thick, rounded corners] (2.5,-1.5) -- (3,-1.5) -- (3,-2.25) -- (-1.25,-2.25) -- (-1.25,-1.5) -- (-1,-1.5);
\draw[thick, rounded corners] (2.5,0.5) -- (3.5,0.5) -- (3.5,2.5) -- (-1.5,2.5) -- (-1.5,0.5) -- (-1,0.5);
\draw[thick, rounded corners] (2.5,-0.5) -- (3.5,-0.5) -- (3.5,-2.5) -- (-1.5,-2.5) -- (-1.5,-0.5) -- (-1,-0.5);
\end{tikzpicture}
\nn
&=
\begin{tikzpicture}[baseline=-1mm,scale=0.55]
\draw[rounded corners] (-2,-0.9) rectangle (-0.5,0.9);
\node at (-1.25,0) {$\bar{M}_\Phi$};
\draw[rounded corners] (0.5,-0.9) rectangle (2,0.9);
\node at (1.25,0) {$\bar{M}_\Phi$};
\draw[thick,rounded corners] (-0.5,0.5) -- (-0.25,0.5) -- (-0.25,1.25) -- (-2.25,1.25) -- (-2.25,0.5) -- (-2,0.5);
\draw[thick, rounded corners] (2,0.5) -- (2.25,0.5) -- (2.25,1.25) -- (0.25,1.25) -- (0.25,0.5) -- (0.5,0.5);
\draw[thick,rounded corners] (-0.5,-0.5) -- (-0.25,-0.5) -- (-0.25,-1.25) -- (-2.25,-1.25) -- (-2.25,-0.5) -- (-2,-0.5);
\draw[thick, rounded corners] (2,-0.5) -- (2.25,-0.5) -- (2.25,-1.25) -- (0.25,-1.25) -- (0.25,-0.5) -- (0.5,-0.5);
\end{tikzpicture}
+
\begin{tikzpicture}[baseline=-1mm,scale=0.55]
\draw[rounded corners] (-2,-0.9) rectangle (-0.5,0.9);
\node at (-1.25,0) {$\bar{M}_\Phi$};
\draw[rounded corners] (0.5,-0.9) rectangle (2,0.9);
\node at (1.25,0) {$\bar{M}_\Phi$};
\draw[thick, rounded corners] (2,0.5) -- (2.25,0.5) -- (2.25,1.25) -- (-2.25,1.25) -- (-2.25,0.5) -- (-2,0.5);
\draw[thick] (-0.5,0.5) -- (0.5,0.5);
\draw[thick, rounded corners] (2,-0.5) -- (2.25,-0.5) -- (2.25,-1.25) -- (-2.25,-1.25) -- (-2.25,-0.5) -- (-2,-0.5);
\draw[thick] (-0.5,-0.5) -- (0.5,-0.5);
\end{tikzpicture}
- \frac{1}{d}
\begin{tikzpicture}[baseline=-1mm,scale=0.55]
\draw[rounded corners] (-2,-0.9) rectangle (-0.5,0.9);
\node at (-1.25,0) {$\bar{M}_\Phi$};
\draw[rounded corners] (0.5,-0.9) rectangle (2,0.9);
\node at (1.25,0) {$\bar{M}_\Phi$};
\draw[thick, rounded corners] (2,0.5) -- (2.25,0.5) -- (2.25,1.25) -- (-2.25,1.25) -- (-2.25,0.5) -- (-2,0.5);
\draw[thick] (-0.5,0.5) -- (0.5,0.5);
\draw[thick,rounded corners] (-0.5,-0.5) -- (-0.25,-0.5) -- (-0.25,-1.25) -- (-2.25,-1.25) -- (-2.25,-0.5) -- (-2,-0.5);
\draw[thick, rounded corners] (2,-0.5) -- (2.25,-0.5) -- (2.25,-1.25) -- (0.25,-1.25) -- (0.25,-0.5) -- (0.5,-0.5);
\end{tikzpicture}
- \frac{1}{d}
\begin{tikzpicture}[baseline=-1mm,scale=0.55]
\draw[rounded corners] (-2,-0.9) rectangle (-0.5,0.9);
\node at (-1.25,0) {$\bar{M}_\Phi$};
\draw[rounded corners] (0.5,-0.9) rectangle (2,0.9);
\node at (1.25,0) {$\bar{M}_\Phi$};
\draw[thick,rounded corners] (-0.5,0.5) -- (-0.25,0.5) -- (-0.25,1.25) -- (-2.25,1.25) -- (-2.25,0.5) -- (-2,0.5);
\draw[thick, rounded corners] (2,0.5) -- (2.25,0.5) -- (2.25,1.25) -- (0.25,1.25) -- (0.25,0.5) -- (0.5,0.5);
\draw[thick, rounded corners] (2,-0.5) -- (2.25,-0.5) -- (2.25,-1.25) -- (-2.25,-1.25) -- (-2.25,-0.5) -- (-2,-0.5);
\draw[thick] (-0.5,-0.5) -- (0.5,-0.5);
\end{tikzpicture} 
\label{eq:master-aux}
\end{align}
Each term is a full contraction that is also called a tensor network \cite{bridgeman_handwaving_2017,kliesch_guaranteed_2017}.  
There are three possible constituents for each tensor network: $\bar{M}_\Phi$, $\Tr_2 (\bar{M}_\Phi)$, and $\Tr_1 (\bar{M}_\Phi)$. 
Importantly, no full self-contractions can contribute to the overall sum, because $\bar{M}_\phi$ is traceless. This ensures that networks with self-contractions -- like the first term --  evaluate to zero.
Moreover, \autoref{lem:norm_bound} bounds the 2-norm of each elementary constituent:
\begin{equation}
\left\|
\begin{tikzpicture}[baseline=-1mm,scale=0.55]
\draw[thick] (-1.25,0.5) -- (1.25,0.5);
\draw[thick] (-1.25,-0.5) -- (1.25,-0.5);
\draw[rounded corners,fill=white] (-0.75,-0.9) rectangle (0.75,0.9);
\node at (0,0) {$\bar{M}_\Phi$};
\end{tikzpicture}
\right\|_2 \leq \sqrt{d}\,, \qquad
\left\|
\begin{tikzpicture}[baseline=-2mm,scale=0.55]
\draw[thick,rounded corners] (0,-0.5) -- (1.25,-0.5) -- (1.25,-1.25) -- (-1.25,-1.25) -- (-1.25,-0.5) -- (0,-0.5);
\draw[thick] (-1.25,0.5) -- (1.25,0.5);
\draw[rounded corners,fill=white] (-0.75,-0.9) rectangle (0.75,0.9);
\node at (0,0) {$\bar{M}_\Phi$};
\end{tikzpicture}
\right\|_2 \leq \sqrt{d}\,, \qquad
\left\|
\begin{tikzpicture}[baseline,scale=0.55]
\draw[thick,rounded corners] (0,0.5) -- (1.25,0.5) -- (1.25,1.25) -- (-1.25,1.25) -- (-1.25,0.5) -- (0,0.5);
\draw[thick] (-1.25,-0.5) -- (1.25,-0.5);
\draw[rounded corners,fill=white] (-0.75,-0.9) rectangle (0.75,0.9);
\node at (0,0) {$\bar{M}_\Phi$};
\end{tikzpicture}
\right\|_2 \leq d\,.
\label{eq:norm_bound_technical}
\end{equation}
The final bound is considerably larger than the rest. However, the corresponding contribution in the sum \eqref{eq:master-aux} is also suppressed by an additional dimension factor. This is not a coincidence: term 3 can only arise if the cycle classes of ($\sigma$,$\tau$) differ from each other. This feature reflects itself in the Weingarten function. 
For the second moment, we thus obtain the following simple bound (ignoring signs):
\begin{equation}
\mathbb{E}_U \left[ \bar{S}(M,\phi)^2 \right] \leq \frac{0+d+d/d+d^2/d}{d^2-1} =\frac{2d+1}{d^2} \leq 4 d^{-1}\,.
\end{equation}
It immediately follows from upper-bounding individual terms using Eq.~\eqref{eq:norm_bound_technical}. 

This general strategy also applies to higher moments. Fix $k \geq 3$ arbitrary. Then, Weingarten calculus implies 
\begin{equation}
\mathbb{E}_U \left[ \bar{S}_U (M,\phi)^k \right] = \sum_{\sigma,\tau \in S_k} \Wg_d (\sigma,\tau) N_{\sigma,\tau} \left( \bar{M}_\Phi, \Tr_2 (\bar{M}_\Phi), \Tr_1 (\bar{M}_\Phi)\right)\,.
\end{equation}
Here, each $N_{\sigma,\tau} (\cdot)$ indicates a tensor network diagram that combines (at most) three elementary building blocks according to rules that are dictated by the permutations $\tau$ and $\sigma$:
\begin{equation}
N_{\sigma,\tau} = 
\begin{tikzpicture}[baseline=-1mm,scale=0.55]
\draw[thick,rounded corners] (0,0.6) -- (2.6,0.6) -- (2.6,1.35) -- (-1.225,1.35) -- (-1.225,0.6) -- (0,0.6);
\draw[thick,rounded corners] (0,-0.6) -- (2.6,-0.6) -- (2.6,-1.35) -- (-1.225,-1.35) -- (-1.225,-0.6) -- (0,-0.6);
\draw[rounded corners,fill=white] (-0.85,-1) rectangle (0.85,1);
\node at (0,0) {$\bar{M}_\Phi^{\otimes k}$};
\draw[rounded corners,fill=white] (1.225,0.15) rectangle (2.225,1.05);
\node at (1.725,0.6) {$\sigma$};
\draw[rounded corners,fill=white] (1.225,-0.15) rectangle (2.225,-1.05);
\node at (1.725,-0.6) {$\tau$};
\end{tikzpicture}\,.
\end{equation}
We can without loss restrict summation to tensor networks without self-contractions, because $\Tr (\bar{M}_\Phi)=0$ ensures that such contributions vanish identically. 
Next, we apply a powerful general bound to individual tensor networks. \cite[Proposition~18]{kliesch_guaranteed_2017} states that the value of a tensor network (without self-contractions) is bounded by the product of 2-norms of the individual constituents. For any $\sigma,\tau$ this implies
\begin{equation}
|N_{\sigma,\tau}|= \left| N_{\sigma,\tau} \left( \bar{M}_\Phi, \Tr_2 (\bar{M}_\Phi), \Tr_1 (\bar{M}_\Phi)\right) \right|
\leq \| \bar{M}_{\Phi} \|_2^{\nu_1} \| \Tr_2 (\bar{M}_\Phi) \|_2^{\nu_2} \| \Tr_1 (\bar{M}_\Phi) \|_2^{\nu_3}\;,
\end{equation}
where $\nu_1,\nu_2,\nu_3 \in [k]$ denote the number of times each basic building block occurred in the network. Clearly, $\nu_1 + \nu_2 + \nu_3 = k$ and we can combine this with Eq.\ \eqref{eq:norm_bound_technical} to conclude
\begin{equation}
|N_{\sigma,\tau}|\leq d^{\nu_1/2} d^{\nu_2/2} d^{\nu_3} = d^{k/2+\nu_3/2}.
\end{equation}
The final contribution $d^{\nu_3/2}$ is always counter-balanced by the Weingarten function, \ie the dangerous terms are always suppressed by powers of $1/d$. As we discussed, the Weingarten functions $\Wg(\sigma, d)$ only depend on the cycle type of the permutation $\sigma$. The asymptotic behavior is $\Wg(\sigma,d) \sim 1/d^{2k-\ell(\sigma)}$, where $\ell$ is the length of the cycle type, \ie the number of cycles in the permutation. The leading order terms are those for which the cycle type is $(1,1,\ldots,1)$, the partition of $2k$ into 1's. For $\Wg(\sigma^{-1}\tau,d)$ this corresponds to terms with $\sigma=\tau$. Returning to the problem at hand, we contract the upper indices of $\bar M_\Phi$ with respect to $\sigma$ and the lower indices with $\tau$. The leading order terms, are the terms for which we act on upper and lower indices the same. In order to generate terms in the tensor network contraction of $M$'s containing a dangerous contribution $\Tr_1 (\bar{M}_\Phi)$ , the lengths of the cycle types of the two permutations must differ by at least one, in order to generate a contraction, a length one cycle, upstairs: 
\begin{align}
\mathbb{E}_U \left[ \bar{S}_U (M,\phi)^k \right]
&\leq \sum_{\tau,\sigma \in S_k} | \Wg (\sigma^{-1}\tau,d) | N_{\sigma,\tau}\nn
&\leq \sum_{\sigma \in S_k} \Wg((1,\ldots,1),d) d^{k/2} + \sum_{\tau\neq \sigma \in S_k} \Wg(\sigma^{-1}\tau,d)d^{k/2+\nu_3/2}\,.
\end{align}

Although, the $\Tr_1 (\bar{M}_\Phi)$ terms will only contribute at subleading order, they appear with a larger contribution in powers of $d$. Thus, to rigorously upper bound the expression, we need bounds on the Weingarten functions as well as on the number of terms $\nu_3$ which appear in a given tensor network $N_{\sigma, \tau}$.

Precise upper bounds on the Weingarten functions are known \cite{Collins04,CollinsMat17}. For our purposes, it will be convenient to use the (slightly weaker) bound in \cite{montanaro2013weak}, which states that for $k\leq d^{2/3}$
\begin{equation}
|\Wg(\sigma,d)| \leq \frac{3}{2} \frac{C_{k-1}}{d^{2k-\ell(\sigma)}}
\label{eq:Wgbound}\,,
\end{equation}
where $C_k$ is the $k$-th Catalan number.

Now we establish that $\nu_3(\sigma,\tau)$, the number of dangerous terms $\Tr_1(\bar M_\Phi)$ terms in a given $N_{\sigma,\tau}$, is bounded by the distance between the permutations $\sigma$ and $\tau$ as $\nu_3(\sigma,\tau)\leq 2d(\sigma,\tau)$. First we note a few facts about the symmetric group. $d(\sigma,\tau)$ is defined as the minimal number of transpositions needed to take $\sigma$ to $\tau$, and defines a distance between the permutations. Specifically, $d(\sigma,\tau)$ is a metric on the Cayley graph of the symmetric group with the generating set of transpositions. The length of the cycle type of a permutation $\sigma \in S_k$ is related to the number of transpositions needed to build $\sigma$ from the identity permutation as $\ell(\sigma) = k-d(\sigma,\iden)$. Furthermore, a transposition changes the number of cycles in a permutation by exactly one.

The terms $\Tr_1(\bar M_\Phi)$ only appear when the permutation $\sigma$ has a fixed point where $\tau$ does not, \ie there is a contraction on the upstairs indices of $\bar M_\Phi$, meaning $\sigma$ has a length one cycle at a point where $\tau$ does not.
As $d(\sigma,\tau)$ is the minimal number of transpositions required to take $\sigma$ to $\tau$, and a transposition can only change the number of cycles by exactly 1, then for every two dangerous terms the distance between the permutations $\sigma$ and $\tau$ must increase by at least one. This shows that $\nu_3(\sigma,\tau)$ is bounded as
\begin{equation}
\nu_3(\sigma,\tau) \leq 2 d(\sigma,\tau) = 2(k-\ell(\sigma^{-1}\tau))\,.
\end{equation}

Returning to the general moment bound, we can apply the bound on Weingarten functions in Eq.~\eqref{eq:Wgbound} and the bound on $\nu_3$ to show that
\begin{align}
\mathbb{E}_U \left[ \bar{S}_U (M,\phi)^k \right]
&\leq \sum_{\tau,\sigma \in S_k} | \Wg (\sigma^{-1}\tau,d) | N_{\sigma,\tau} \leq \sum_{\tau, \sigma \in S_k} \frac{3}{2} C_{k-1} d^{\ell(\sigma^{-1}\tau) - 2k + k/2 +\nu_3/2}\nn
&\leq \sum_{\tau, \sigma \in S_k} \frac{3}{2} C_{k-1} d^{-k/2} \leq C_{k} (k!)^2 d^{-k/2}\,,
\end{align}
which establishes the claim. 

\end{proof}

\subsection{\ensuremath{\varepsilon}-coverings of local random circuits}
\label{sec:epcover}
We would like to extend our results in Sec.~\ref{sub:local} on complexity growth to local random circuits, where the gates are chosen Haar-randomly from $U(q^2)$. Obviously, the ensemble of size $T$ circuits is continuous and statements about the number of states of a certain complexity become less meaningful. Nevertheless, we can consider an $\varepsilon$-covering of the ensemble of local RQCs in order to make concrete statements about complexity growth. 

We say that a set of unitaries $\mathsf{V}$ is an $\varepsilon$-covering of a set of unitaries $\mathsf{U}$ if for all $U\in \mathsf{U}$ there is some $V\in \mathsf{V}$ such that $\|U(\cdot)U^\dagger - V(\cdot)V^\dagger\|_\diamond \leq \varepsilon$.

Consider the set of local random circuits of size $T$, where again we act on $n$ local qudits with local dimension $q$ and with local gates chosen Haar-randomly from $U(q^2)$. Following Lemma 27 from \cite{brandao_local_2016}, we can bound the size of an $\varepsilon$-covering of the set $\CE_{\rm RQC}$ size $T$ local RQCs. Approximating each local gate to accuracy $\varepsilon/T$, we construct a covering in diamond norm of each gate with size $\leq \big(10 T/\varepsilon\big)^{q^4}$. For the $n^T$ choices of gates in the circuit, we conclude that there exists an $\varepsilon$-covering $\widetilde \CE_{\rm RQC} $ of size $T$ RQCs with cardinality
\begin{equation}
|\widetilde \CE_{\rm RQC}| \leq n^T \Big(\frac{10T}{\varepsilon}\Big)^{T q^4}\,.
\label{eq:epnet}
\end{equation}

Furthermore, if an ensemble $\CE$ forms an $\epsilon$-approximate unitary $k$-design, then the $\varepsilon$-covering of $\CE$ will form an $\epsilon'$-approximate unitary design with $\epsilon' = \epsilon + 2d^{2k}\varepsilon$ (from Prop.~8 in \cite{brandao_local_2016}). Using the lower bound on the cardinality of an approximate design in \autoref{lem:weights-intro} and the upper bound on the cardinality of an $\varepsilon$-covering of size $T$ local random circuits in Eq.~\ref{eq:epnet}, means that for $\tilde \CE_{\rm RQC} $ to form an approximate design, we must have
\begin{equation}
\frac{1}{1+\epsilon'}\frac{d^{2k}}{k!} \leq |\widetilde \CE_{\rm RQC}| \leq n^T \Big(\frac{10T}{\varepsilon}\Big)^{T q^4}\,.
\end{equation}
This gives a lower bound on the size for local random circuits to form $k$-designs
\begin{equation}
T \geq \frac{2kn \log q}{q^4 \log k}\,.
\end{equation}
Therefore, an optimal random circuit implementation of a unitary design will have at least an essentially linear scaling in both $n$ and $k$.

\subsubsection*{Acknowledgments}
\addcontentsline{toc}{section}{\protect\numberline{}Acknowledgments}
The authors would like thank Dorit Aharonov, Thom Bohdanowicz, Elizabeth Crosson, Felix Haehl, Aram Harrow, Tomas Jochym-O'Connor, Hugo Marrochio, Grant Salton, Eugene Tang, and Thomas Vidick for inspiring discussions and valuable feedback.
All authors acknowledge funding provided by the Institute for Quantum Information and Matter, an NSF Physics Frontiers Center (NSF Grant {PHY}-{1733907}). JP is supported in part by DOE Award {DE}-{SC0018407} and by the Simons Foundation It from Qubit Collaboration.
RK is supported in part by the Office of Naval Research (Award N00014-17-1-2146) and the Army Research Office (Award W911NF121054). NHJ would like to thank the IQIM at Caltech, McGill University, and UC Berkeley for their hospitality during the completion of this work.
Research at Perimeter Institute is supported by the Government of Canada through the Department of Innovation, Science and Economic Development Canada and by the Province of Ontario through the Ministry of Research, Innovation and Science.

\appendix

\section{Concentration of measure for Haar-uniform vectors}

\begin{proposition} \label{prop:measure_concentration}
Fix $M \in \mathbb{H}_d$ with $\|M \|_\infty \leq 1$ and suppose that $|\psi \rangle \in \mathbb{C}^d$ is chosen uniformly from the complex unit sphere. Then,
\begin{equation}
\mathrm{Pr} \left[ \left|\langle \psi| M | \psi \rangle - \mathbb{E} \left[ \langle \psi| M | \psi \rangle \right] \right| \geq \tau \right] \leq 2\exp \left( - \frac{d\tau^2}{9 \pi^3} \right)\quad \textrm{for any} \quad \tau \geq 0\,.
\end{equation}
\end{proposition}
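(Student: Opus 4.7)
The plan is to apply the standard concentration of measure inequality for Lipschitz functions on the real sphere (Levy's lemma) after identifying the complex unit sphere in $\mathbb{C}^d$ with the real unit sphere $S^{2d-1} \subset \mathbb{R}^{2d}$ in the natural way.

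First, I would establish a Lipschitz estimate for the function $f(|\psi \rangle) = \langle \psi | M | \psi \rangle$. For two unit vectors $|\psi\rangle, |\phi\rangle \in \mathbb{C}^d$, a telescoping argument gives
\begin{equation*}
\left| \langle \psi | M | \psi \rangle - \langle \phi | M | \phi \rangle \right|
\leq \left| \langle \psi - \phi | M | \psi \rangle \right| + \left| \langle \phi | M | \psi - \phi \rangle \right|
\leq 2 \| M \|_\infty \| | \psi \rangle - | \phi \rangle \|_2\,,
\end{equation*}
where the final bound uses Cauchy--Schwarz together with $\| M | \psi \rangle \|_2 \leq \| M \|_\infty \leq 1$. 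Hence $f$ is $L$-Lipschitz with $L \leq 2$ with respect to the Euclidean distance inherited from $\mathbb{R}^{2d}$.

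Next, I would invoke Levy's lemma in its standard form (see, e.g., Milman--Schechtman): for any $L$-Lipschitz $f: S^{N-1} \to \mathbb{R}$ and $x$ uniformly distributed on $S^{N-1}$,
\begin{equation*}
\Pr \left[ \left| f(x) - \mathbb{E}[f(x)] \right| \geq \tau \right] \leq 2 \exp \left( - \frac{C N \tau^2}{L^2} \right)
\end{equation*}
with concrete constant $C = 2/(9\pi^3)$ (obtained, after centering around the mean rather than the median, from the sharp isoperimetric inequality on the sphere combined with the positive Ricci curvature bound). The Haar measure on pure states coincides with the uniform measure on $S^{2d-1}$, so substituting $N = 2d$ and $L \leq 2$ yields
\begin{equation*}
\Pr \left[ \left| \langle \psi | M | \psi \rangle - \mathbb{E}_{|\psi\rangle} [\langle \psi | M | \psi \rangle] \right| \geq \tau \right] \leq 2 \exp \left( - \frac{d \tau^2}{9 \pi^3} \right),
\end{equation*}
which is the claim.

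The only subtle point is bookkeeping of constants: different sources state Levy's lemma with slightly different prefactors (some with the median, some with $N-1$ instead of $N$, and some with $9\pi^3$ versus a purely numerical constant like $2$), so I would cite a version that gives the $9\pi^3$ denominator directly or, alternatively, derive it from Gromov--Milman's abstract concentration inequality using the curvature lower bound $\operatorname{Ric}(S^{2d-1}) \geq (2d-2) g$ and bound the expectation--median discrepancy by $O(L/\sqrt{d})$, absorbing this into a slightly worse constant to obtain exactly the stated bound.
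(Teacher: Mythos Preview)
Your proposal is correct and follows essentially the same route as the paper: identify the complex sphere with $\mathbb{S}^{2d-1}\subset\mathbb{R}^{2d}$, establish that $|\psi\rangle\mapsto\langle\psi|M|\psi\rangle$ is $2\|M\|_\infty$-Lipschitz, and apply Levy's lemma with the stated constant. The only cosmetic difference is how the Lipschitz bound is obtained: you use a direct telescoping plus Cauchy--Schwarz, whereas the paper passes through H\"older's inequality and the explicit formula $\|\,|x\rangle\!\langle x|-|y\rangle\!\langle y|\,\|_1^2=4(1-\langle x,y\rangle^2)\leq 4\|x-y\|_2^2$; both yield $L=2$ and the rest is identical.
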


The proof is standard and we include it in this appendix for completion. It is based on Levy's Lemma, i.e.\ concentration of measure on the real-unit sphere $\mathbb{S}^{2d-1} \subset \mathbb{R}^{2d}$. A function $f: \mathbb{S}^{2d-1} \to \mathbb{R}$ is $L$-Lipschitz (with respect to the Euclidean norm $\| \cdot \|_{2}$ on $\mathbb{R}^{2d}$) if
\begin{equation}
\left| f(x) - f (y) \right| \leq L \| x - y \|_{2} \quad \textrm{for all} \quad x,y \in \mathbb{S}^{2d-1}\,.
\end{equation}

\begin{theorem}[Levy's Lemma] \label{thm:levi}
Let $f: \mathbb{S}^{2d-1} \to \mathbb{R}$ be a $L$-Lipschitz function on the unit sphere. Then, the following relation is true if $x$ is chosen uniformly from $\mathbb{S}^{2d-1}$:
\begin{equation}
\mathrm{Pr} \left[ \left| f(x) - \mathbb{E} \left[ f(x) \right] \right| \geq \tau \right] \leq 2 \exp \left( - \frac{4d \tau^2}{9 \pi^3 L^2} \right)\,.
\end{equation}
\end{theorem}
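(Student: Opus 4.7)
The plan is to derive the stated bound as a direct corollary of Levy's Lemma (\autoref{thm:levi}) by verifying that the quantum expectation value is a $2$-Lipschitz function on the appropriate real sphere. Concretely, I will identify $\mathbb{C}^d \cong \mathbb{R}^{2d}$ via the standard real/imaginary coordinate splitting $|\psi \rangle = x + i y \leftrightarrow (x,y) \in \mathbb{R}^{2d}$. Under this identification, the Euclidean norm on $\mathbb{R}^{2d}$ agrees with the Hilbert-space norm on $\mathbb{C}^d$, and the uniform measure on the complex unit sphere pushes forward to the uniform measure on $\mathbb{S}^{2d-1}$. So Levy's Lemma applies verbatim once I exhibit the Lipschitz constant.

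The function of interest is $f(| \psi \rangle) = \langle \psi| M |\psi\rangle$, which is real-valued because $M$ is Hermitian. To bound its Lipschitz constant, I will use the algebraic identity
\begin{equation}
f(| \psi \rangle)-f(| \phi \rangle) = \langle \psi-\phi | M | \psi \rangle + \langle \phi | M | \psi - \phi \rangle,
\end{equation}
followed by the Cauchy-Schwarz / matrix-Hoelder inequality to estimate each term by $\| \psi - \phi \|_2\, \| M \|_\infty\, \max\{\| \psi \|_2,\| \phi \|_2\}$. Since $| \psi \rangle,| \phi \rangle$ both have unit length and $\| M \|_\infty \leq 1$ by assumption, this yields $|f(| \psi \rangle)-f(| \phi \rangle)| \leq 2 \| \psi - \phi \|_2$, i.e.\ $L = 2$.

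Plugging $L=2$ into the conclusion of \autoref{thm:levi} gives
\begin{equation}
\mathrm{Pr} \left[ | f(| \psi \rangle) - \mathbb{E} [ f(| \psi \rangle)] | \geq \tau \right] \leq 2\exp \left( -\frac{4 d \tau^2}{9 \pi^3 \cdot 4} \right) = 2 \exp \left( - \frac{d \tau^2}{9 \pi^3} \right),
\end{equation}
which is precisely the advertised bound. There is no substantial obstacle here: the only subtle point is the bookkeeping between the real and complex descriptions of the sphere (dimension $2d$ real $=$ dimension $d$ complex, so $\mathbb{S}^{2d-1}$ is the correct underlying manifold and the factor $4d$ in Levy combines with the $L^2 = 4$ to produce the $9\pi^3$ denominator as claimed). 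The Lipschitz computation itself is the only nontrivial arithmetic step and is routine; the numerical constant $9\pi^3$ is simply inherited from the statement of \autoref{thm:levi}.
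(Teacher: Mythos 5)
There is a genuine mismatch between what you were asked to prove and what your argument establishes. The target statement is \autoref{thm:levi} itself, i.e.\ Levy's Lemma: a sub-Gaussian concentration bound for an \emph{arbitrary} $L$-Lipschitz function on $\mathbb{S}^{2d-1}$ with the explicit constant $4d/(9\pi^3 L^2)$. Your proposal takes this lemma as given and then derives from it the concentration bound for $\langle \psi | M | \psi \rangle$ over Haar-random $|\psi\rangle$ --- but that derived statement is \autoref{prop:measure_concentration}, not \autoref{thm:levi}. Relative to the statement in question, the argument is circular: the only concentration input you use is the very inequality you were supposed to establish. Nothing in your write-up addresses why a Lipschitz function on the sphere concentrates at all; that requires a genuinely different ingredient, namely the isoperimetric inequality on the sphere (equivalently, measure estimates for spherical caps, or a Gaussian/log-Sobolev route), which is where the dimension factor and the $9\pi^3$ constant actually come from. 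For the record, the paper does not prove \autoref{thm:levi} either --- it imports it as a standard known result and uses it, together with the Lipschitz computation of \autoref{lem:lipschitz}, to prove \autoref{prop:measure_concentration}.

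The content you did write is essentially correct, but it reproduces the paper's proof of \autoref{prop:measure_concentration}: the isometric identification $\mathbb{C}^d \cong \mathbb{R}^{2d}$, the observation that the uniform measure pushes forward correctly, the Lipschitz constant $L \leq 2\|M\|_\infty \leq 2$ (your two-term telescoping identity plus Cauchy--Schwarz is a slightly slicker route than the paper's trace-norm computation in \autoref{lem:lipschitz}, and it is fine provided you note the expression is real so taking absolute values term by term is legitimate), and the substitution of $L=2$ into Levy's bound. If your goal is to support the paper's appendix, the honest framing is: cite Levy's Lemma as external (or sketch its isoperimetric proof), and present your argument as the proof of \autoref{prop:measure_concentration}. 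As a proof of \autoref{thm:levi} itself, the proposal has a missing core: no isoperimetry, no cap estimate, no concentration mechanism is ever invoked or proved.
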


\begin{proof}[Proof of \autoref{prop:measure_concentration}]
The complex unit sphere in $d$-dimensions admits an isometric embedding -- with respect to the Euclidean norm --  onto the real-valued unit sphere $\mathbb{S}^{2d-1} \subseteq \mathbb{R}^{2d}$:
\begin{equation}
| \psi \rangle \mapsto |x \rangle =\mathrm{Re}(| \psi \rangle) \rangle \oplus  \mathrm{Im} (| \psi \rangle) \in \mathbb{S}^{2d-1}\,.
\label{eq:embedding}
\end{equation}
This embedding maps the uniform distribution on the complex unit sphere in $\mathbb{C}^d$
to the uniform distribution on the real-valued unit sphere in $\mathbb{R}^{2d}$.
Under this embedding, the function of interest $\langle \psi| M | \psi \rangle$ becomes
\begin{equation}
\langle \psi| M | \psi \rangle = \langle \mathrm{Re} (\psi) |M| \mathrm{Re}(\psi) \rangle + \langle \mathrm{Im}(\psi) |M| \mathrm{Im} (\psi) \rangle
= \langle x| M \oplus M |x \rangle\,,
\end{equation}
because $M$ is Hermitian. Its expectation is also preserved and \autoref{lem:lipschitz} below states that this function is Lipschitz with constant $2 \| M \|_\infty \leq 2$.
The claim then readily follows from Levy's lemma  (\autoref{thm:levi}).
\end{proof}

\begin{lemma} \label{lem:lipschitz}
Fix $M \in \mathbb{H}_d$. Then, the following relation is true for any pair of unit-norm vectors $x,y \in\mathbb{S}^{2d-1} \subset \mathbb{R}^{2d}$
\begin{equation}
\left| \langle x| M \oplus M |x \rangle - \langle y| M \oplus M | y \rangle \right| \leq 2 \| M \|_\infty \| x -y \|_{\ell_2}\,.
\end{equation}
\end{lemma}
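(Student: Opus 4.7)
The statement is a standard Lipschitz estimate for a quadratic form on the sphere, where the operator is $A := M \oplus M$ acting on $\mathbb{R}^{2d}$. The plan is to reduce the two-variable difference to a linear expression via a telescoping identity and then apply Cauchy--Schwarz together with the operator-norm bound. The argument should be short: I do not foresee any genuine obstacle, only a small bookkeeping observation about direct sums.

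First I would verify the auxiliary fact that $\|A\|_\infty = \|M \oplus M\|_\infty = \|M\|_\infty$. This follows immediately from the block-diagonal structure: the spectrum of $M \oplus M$ is (twice) the spectrum of $M$, so the largest absolute eigenvalue is unchanged. Equivalently, for any $v = v_1 \oplus v_2 \in \mathbb{R}^{2d}$, $\|Av\|_2^2 = \|Mv_1\|_2^2 + \|Mv_2\|_2^2 \leq \|M\|_\infty^2 (\|v_1\|_2^2 + \|v_2\|_2^2) = \|M\|_\infty^2 \|v\|_2^2$.

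Next I would apply the telescoping identity
\begin{equation}
\langle x| A |x \rangle - \langle y| A |y \rangle = \langle x| A |x-y \rangle + \langle x-y| A |y \rangle,
\end{equation}
which holds for any symmetric $A$ (here $A$ is real symmetric since $M \in \mathbb{H}_d$ but the embedding realizes $A$ as a real symmetric operator on $\mathbb{R}^{2d}$). Applying Cauchy--Schwarz to each summand and using the operator-norm bound from the previous step, together with $\|x\|_2 = \|y\|_2 = 1$, gives
\begin{equation}
\left|\langle x| A |x-y \rangle\right| \leq \|x\|_2 \,\|A(x-y)\|_2 \leq \|M\|_\infty \,\|x-y\|_2,
\end{equation}
and analogously for the second summand. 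Summing the two contributions yields the claimed constant $2\|M\|_\infty$.

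The entire argument is a few lines long; the only mild subtlety is the embedding used just above the lemma statement, which ensures that the Hermitian operator $M$ on $\mathbb{C}^d$ corresponds to the real symmetric operator $M \oplus M$ on $\mathbb{R}^{2d}$ so that Cauchy--Schwarz applies in the real inner product. I do not anticipate any obstacle beyond keeping this identification straight.
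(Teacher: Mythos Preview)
Your proof is correct but takes a different route from the paper. The paper rewrites the difference as $\Tr\big((M\oplus M)(\ketbra{x}-\ketbra{y})\big)$, applies matrix H\"older to get $\|M\oplus M\|_\infty\,\|\ketbra{x}-\ketbra{y}\|_1$, and then computes the trace norm of a rank-two difference of pure states explicitly: $\|\ketbra{x}-\ketbra{y}\|_1^2 = 4(1-\langle x,y\rangle^2)\le 4(2-2\langle x,y\rangle)=4\|x-y\|_2^2$. Your telescoping-plus-Cauchy--Schwarz argument is more elementary and avoids the detour through the trace-norm formula for pure states; it also does not actually need symmetry of $A$ (the identity $\langle x|A|x\rangle-\langle y|A|y\rangle=\langle x|A(x-y)\rangle+\langle x-y|Ay\rangle$ is purely algebraic). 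The paper's route, on the other hand, is more in keeping with the quantum-information toolkit used elsewhere in the manuscript and makes the connection to trace distance explicit. Both land on the same constant $2\|M\|_\infty$, and your observation that $\|M\oplus M\|_\infty=\|M\|_\infty$ matches the paper's.
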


\begin{proof}
Fix $x,y \in \mathbb{S}^{2d-1}$ and apply Hoelder's inequality:
\begin{equation}
\left| \langle x| M\!\oplus\! M |x \rangle - \langle y| M\!\oplus\! M| y \rangle \right|^2 =  \Tr \left( M\!\oplus\! M (\ketbra{x} - \ketbra{y})\right)^2
\leq \| M \oplus M \|_\infty^2 \| \ketbra{x} - \ketbra{y} \|_1^2\,.
\end{equation}
The block-structure of $M \oplus M$ ensures $\|M \oplus M \|_\infty = \|M \|_\infty$, while the remaining term is the trace norm of a difference of pure states. 
This can be computed analytically and we obtain
\begin{equation}
\| |x \rangle \! \langle x| - |y \rangle \! \langle y| \|_1^2
= 4 \left( 1-  \langle x,y \rangle^2 \right)
= 4 \left(1+ \langle x,y \rangle \right) \left( 1-  \langle x,y \rangle \right)
\leq 4 \left( 2- 2| \langle x,y \rangle| \right)\,,
\end{equation}
because $\langle x,y \rangle \leq \|x \|_{\ell_2} \| y \|_{\ell_2} \leq 1$
Finally,
\begin{equation}
2 - 2 \langle x,y \rangle = \langle x,x \rangle - \langle x,y \rangle - \langle y,x \rangle + \langle y, y \rangle = \langle x-y,x-y \rangle = \| x-y \|_{2}^2
\end{equation}
and the claim follows.
\end{proof}

\section{Designs and the traditional definition of complexity}
In the bulk of the paper we focused on a stronger notion of complexity than the standard definition, an operational definition involving the complexity of the distinguishing measurement to differentiate the state from the maximally mixed state. A more traditional definition is often considered in the literature, which involves building a quantum circuit which approximates the state when evolved from an initial state. This intuitive notion of complexity is related to the minimal size of such a circuit.

In this appendix, we will work through the counting arguments in Sec.~\ref{sec:results} for the complexity of elements of a $k$-design using the more traditional (albeit weaker) definition of complexity. We will refer to this as the {\it weak complexity} of a state or unitary to distinguish it from the operational definitions presented in Sec.~\ref{sec:defcomp}.

Consider a system of $n$ qudits with local dimension $q$, such that the total dimension is $d = q^n$. Let $\mathsf{G} \subset U(q^2)$ denote a universal gate set of elementary 2-local gates, and let $\mathsf{G}_r$ be the set of circuits of size $r$ built from our gate set $\mathsf{G}$. 

\begin{definition}[weak $\delta$-state complexity] \label{def:weak_state}
For $\delta \in [0,1]$, we say that a state $\ket\psi$ has $\delta$-state complexity of at most $r$ if there exists a unitary circuit $V\in \mathsf{G}_r$ such that
\begin{equation*}
\frac{1}{2} \big\| \ketbra{\psi} - V \ketbra{0} V^\dagger \big\|_1 \leq \delta\,, \quad \textrm{which we denote as} \quad \CC'_\delta(\ket\psi) \leq r\,.
\end{equation*}
\end{definition}

We want to be able to make precise statements about the complexity of sets of states. More specifically, if we consider a complex projective design, the requirement that they form a $k$-design is sufficiently restrictive to deduce a quantitative statement about the complexity of the constituent states. 

\begin{theorem}[weak complexity of state designs] \label{thm:weak_state}
Consider an $\epsilon$-approximate complex projective $k$-design $\CE = \{p_i, \ket{\psi_i}\}_{i=1}^N$. Then there are at least
\begin{equation}
\frac{d^k}{k!} \frac{1}{1+\epsilon} - \frac{n^r |\mathsf{G}|^r}{(1-\delta^2)^k}
\end{equation}
states with weak $\delta$-state complexity  $\CC'_\delta (\ket{\psi_i})>r$.
\end{theorem}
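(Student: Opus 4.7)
The plan is to mirror the proof of \autoref{thm:main_state}, replacing the strong-complexity condition by the weak-complexity condition and exploiting the well-known identity for the trace distance between two pure states. The first step is to reformulate the weak-complexity condition in a way that is amenable to moment computations. Since both $\ket\psi$ and $V\ket 0$ are pure, we have
\begin{equation}
\tfrac{1}{2} \big\| \ketbra{\psi} - V \ketbra{0} V^\dagger \big\|_1 = \sqrt{1 - |\langle \psi | V | 0 \rangle|^2}\,,
\end{equation}
so that $\CC'_\delta (\ket\psi) \leq r$ is equivalent to the existence of some $V \in \mathsf{G}_r$ with $|\langle \psi | V | 0 \rangle|^2 \geq 1 - \delta^2$.

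The second step is to bound the probability that a randomly drawn design element has low weak complexity. For any fixed $V \in \mathsf{G}_r$, let $\ket{\phi_V} = V\ket 0$ be a fixed reference state. Markov's inequality applied to the nonnegative random variable $|\langle \psi | \phi_V \rangle|^{2k}$, combined with the $k$-design moment bound derived in the proof of \autoref{lem:state_weights} (namely $\mathbb{E}_{\ket\psi}[|\langle \psi | \phi_V \rangle|^{2k}] \leq (1+\epsilon)\binom{d+k-1}{k}^{-1}$), yields
\begin{equation}
\Pr\!\left[ |\langle \psi | V | 0 \rangle|^2 \geq 1-\delta^2\right] \leq \frac{1+\epsilon}{(1-\delta^2)^k}\binom{d+k-1}{k}^{-1}.
\end{equation}
A union bound over all $V \in \mathsf{G}_r$, using $|\mathsf{G}_r| \leq n^r |\mathsf{G}|^r$, then gives
\begin{equation}
\Pr\!\left[ \CC'_\delta (\ket\psi) \leq r \right] \leq \frac{(1+\epsilon)\, n^r |\mathsf{G}|^r}{(1-\delta^2)^k}\binom{d+k-1}{k}^{-1}.
\end{equation}

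The third step converts this probabilistic statement into a counting statement using \autoref{lem:state_weights}. Since the weights $q_j$ of distinct states in the orbit satisfy $q_j \leq (1+\epsilon)\binom{d+k-1}{k}^{-1}$, writing the probability of high complexity as an expectation of an indicator function gives
\begin{equation}
\Pr\!\left[ \CC'_\delta (\ket\psi) > r \right] = \sum_j q_j \1den\{\CC'_\delta (\ket{\psi_j}) > r\} \leq (1+\epsilon)\binom{d+k-1}{k}^{-1} N'\,,
\end{equation}
where $N'$ counts the distinct high-complexity states. Rearranging and combining with the previous bound,
\begin{equation}
N' \geq \binom{d+k-1}{k}\left( \frac{1}{1+\epsilon} - \frac{n^r |\mathsf{G}|^r}{(1-\delta^2)^k}\binom{d+k-1}{k}^{-1}\right) \geq \frac{d^k}{k!}\frac{1}{1+\epsilon} - \frac{n^r|\mathsf{G}|^r}{(1-\delta^2)^k}\,,
\end{equation}
where the final inequality uses $\binom{d+k-1}{k} \geq d^k/k!$.

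There is no serious obstacle here: the argument is a direct adaptation of \autoref{thm:main_state}. The only mildly nontrivial ingredient is recognizing that the trace-distance condition in \autoref{def:weak_state} is exactly a statement about fidelity with a fixed pure state, which makes the Haar/design moment tool immediately applicable. Every other step (Markov, union bound over $\mathsf{G}_r$, flatness of weights from \autoref{lem:state_weights}) is already available in the paper.
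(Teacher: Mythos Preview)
Your proposal is correct and follows essentially the same route as the paper's proof: rewrite the weak-complexity condition as a fidelity threshold via the pure-state trace-distance identity, apply Markov's inequality to $|\langle\psi|V|0\rangle|^{2k}$ together with the $k$-design moment bound from \autoref{lem:state_weights}, take a union bound over $\mathsf{G}_r$, and then convert the probabilistic bound into a count using the weight bound from \autoref{lem:state_weights} and $\binom{d+k-1}{k}\geq d^k/k!$. There is no meaningful difference between your argument and the paper's.
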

The number of high complexity states is exponentially large in $k$ for complexity 
\begin{equation}
r \lesssim \frac{k(n-\log k)}{\log n}\,.
\end{equation}

Turning now to the complexity of unitaries, the traditional definition of complexity is the minimal size of a circuit, built from our gate set, which approximates that unitary.

\begin{definition}[weak $\delta$-unitary complexity] \label{def:weak_unitary}
For $\delta \in [0,1]$, we say that a unitary $U$ has $\delta$-unitary complexity of at most $r$ if there exists a circuit $V\in \mathsf{G}_r$ such that
\begin{equation*}
\frac{1}{2} \big\| \mathcal{U} - \mathcal{V} \big\|_\diamond \leq \delta\,, \quad \textrm{which we denote as}\quad \CC'_\delta(U) \leq r\,,
\end{equation*}
where $\mathcal{U}(\rho) = U\rho U^\dagger$ and $\mathcal{V}(\rho) = V\rho V^\dagger$.
\end{definition}

Again, we ask if the structure of a unitary $k$-design allows us to conclude anything about the complexity of unitaries. Once more, we find that we can turn the statement that $k$-design elements have a certain expected complexity into a quantitative one.

\begin{theorem}[weak complexity of unitary designs] \label{thm:weak_unitary}
Consider an $\epsilon$-approximate unitary $k$-design $\CE = \{p_i, U_i\}_{i=1}^N$. Then there are at least
\begin{equation}
\frac{d^{2k}}{k!} \frac{1}{1+\epsilon} - \frac{n^r |\mathsf{G}|^r}{(1-\delta^2)^k}
\end{equation}
unitaries in $\CE$ with weak $\delta$-unitary complexity   $\CC'_\delta (U_i)>r$.
\end{theorem}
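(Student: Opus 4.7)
} The plan is to mirror the structure of the proof of \autoref{thm:main_circuit}, but replace the operational measurement-based criterion with the simpler geometric condition from \autoref{def:weak_unitary}. Concretely, I will (i) show that closeness in diamond norm forces a large value of the overlap $|\Tr(U^\dagger V)|$, (ii) bound the $2k$-th moment of this overlap when $U$ is drawn from an approximate unitary $k$-design, (iii) apply Markov's inequality together with a union bound over all size-$r$ circuits, and finally (iv) convert the resulting tail bound into a counting statement using the weight bound from \autoref{lem:weights-restatement}.

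For step (i), I would reuse the elementary identity already exploited in the proof of \autoref{lem:unitary_complexity_stronger}. Writing $|\Omega\rangle = \tfrac{1}{\sqrt d}\sum_i |i\rangle\otimes|i\rangle$, one has $\langle\Omega|(V^\dagger U\otimes\mathbb{I})|\Omega\rangle = \tfrac{1}{d}\Tr(V^\dagger U)$, and computing the trace distance of the two pure Choi states produced by $\mathcal{U}$ and $\mathcal{V}$ yields
\begin{equation*}
\tfrac{1}{2}\|\mathcal{U}-\mathcal{V}\|_\diamond \;\geq\; \sqrt{1-|\tfrac{1}{d}\Tr(V^\dagger U)|^2}.
\end{equation*}
Consequently, $\CC'_\delta(U)\le r$ implies that there exists $V\in \mathsf{G}_r$ with $|\Tr(U^\dagger V)|^{2}\geq d^2(1-\delta^2)$, and hence $|\Tr(U^\dagger V)|^{2k}\geq d^{2k}(1-\delta^2)^{k}$.

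For step (ii), I would repeat verbatim the moment computation used in the proof of \autoref{lem:weights-restatement}: since $\Tr(U^\dagger V)=d\langle\Omega|(V^\dagger U\otimes\mathbb{I})|\Omega\rangle$ and $\mathbb{E}_{\mathrm{Haar}}[|\Tr(U^\dagger V)|^{2k}]=k!$, the approximate $k$-design property yields
\begin{equation*}
\mathbb{E}_{\CE}\bigl[|\Tr(U^\dagger V)|^{2k}\bigr]\;\leq\;(1+\epsilon)\,k!,
\end{equation*}
for every fixed $V\in U(d)$. Markov's inequality then bounds the probability that a random design element is close to any given $V$:
\begin{equation*}
\Pr_{\CE}\bigl[|\Tr(U^\dagger V)|^{2k}\geq d^{2k}(1-\delta^2)^{k}\bigr]\;\leq\;\frac{(1+\epsilon)\,k!}{d^{2k}(1-\delta^2)^{k}}.
\end{equation*}
Applying a union bound over $\mathsf{G}_r$, whose cardinality is at most $n^{r}|\mathsf{G}|^{r}$, gives
\begin{equation*}
\Pr_{\CE}\bigl[\CC'_\delta(U)\leq r\bigr]\;\leq\; n^{r}|\mathsf{G}|^{r}\,\frac{(1+\epsilon)\,k!}{d^{2k}(1-\delta^2)^{k}}.
\end{equation*}

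Step (iv) is the quantitative conversion that already appears at the end of the proof of \autoref{thm:main_circuit}. Since \autoref{lem:weights-restatement} guarantees $p_j\leq (1+\epsilon)k!/d^{2k}$ for every ensemble element, writing the complementary event as an indicator sum produces
\begin{equation*}
\Pr_{\CE}\bigl[\CC'_\delta(U)>r\bigr]=\sum_{j}p_j\,\1den\{\CC'_\delta(U_j)>r\}\;\leq\;(1+\epsilon)\tfrac{k!}{d^{2k}}\,N,
\end{equation*}
where $N$ is the number of high-complexity design elements. Rearranging and plugging in the tail bound from step (iii) yields the claim $N\geq \tfrac{d^{2k}}{k!}\tfrac{1}{1+\epsilon}-\tfrac{n^{r}|\mathsf{G}|^{r}}{(1-\delta^2)^{k}}$. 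I expect no serious obstacle here: all three ingredients (the diamond-norm to overlap reduction, the design moment bound, and the weight bound) have already been established earlier in the paper, so the argument amounts to assembling them correctly. The only subtlety worth double-checking is that the moment bound in step (ii) holds uniformly in $V$, which follows because the Haar integrand depends only on $U$ and $V$ enters as a fixed unitary that can be absorbed into the Haar-invariant measure.
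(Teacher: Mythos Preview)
Your proposal is correct and follows essentially the same route as the paper's proof: both reduce weak $\delta$-unitary complexity to the overlap condition $|\Tr(U^\dagger V)|^2\geq d^2(1-\delta^2)$ via the Choi-state argument from \autoref{lem:unitary_complexity_stronger}, apply Markov's inequality with the design moment bound $\mathbb{E}_{\CE}[|\Tr(U^\dagger V)|^{2k}]\leq (1+\epsilon)k!$ from \autoref{lem:weights-restatement}, take a union bound over $|\mathsf{G}_r|\leq n^r|\mathsf{G}|^r$, and convert to a count using the same weight bound. Your closing remark about absorbing $V$ into the Haar measure is also the exact mechanism the paper implicitly relies on.
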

The number of high complexity unitaries is again exponentially large in $k$ for complexity less than
\begin{equation}
r \lesssim \frac{k(2n-\log k)}{\log n}\,.
\end{equation}

We now provide details and proofs of the above statements about complexity of spherical and unitary designs. 

\subsection{Weak state complexity for spherical designs}
\begin{proof}[Proof of \autoref{thm:weak_state}]
First, as stated in \autoref{lem:state_comp_stronger}, we note that the definition of weak $\delta$-state complexity in \autoref{def:weak_state} is equivalently written as
\begin{equation}
|\vev{\psi|V|0}|^2 \geq 1-\delta^2\,.
\label{eq:equivcond}
\end{equation}
We can show this by first noting that $X := \ketbra{\psi} - V\ketbra{0}V^\dagger$ has rank at most two. Directly computing the eigenvalues of $X$ from
\begin{equation}
\Tr(X) = \lambda_1+\lambda_2 = 0 \and \Tr(X^2) = \lambda_1^2+\lambda_2^2 = 2 - 2|\vev{\psi|V|0}|^2\,,
\end{equation}
we find $\lambda_{1,2} = \pm \sqrt{1-|\vev{\psi|V|0}|^2}$. Then as $\|X\|_1 = |\lambda_1|+|\lambda_2|$ we have that
\begin{equation}
\frac{1}{2} \big\| \ketbra{\psi} - V\ketbra{0}V^\dagger \big\|_1 = \sqrt{1-|\vev{\psi|V|0}|^2}\,,
\end{equation}
from which the claim follows. 

We want to ask, given some state $\ket\psi$ chosen uniformly from an $\epsilon$-approximate spherical $k$-design, what is the probability that the state has $\delta$-complexity at most $r$: $\CC'_\delta(\ket\psi)\leq r$? We know that the state will have $\delta$-complexity $r$ if there exists a $V\in \mathsf{G}_r$  such that Eq.~\ref{eq:equivcond} holds. A union bound then gives that
\begin{equation}
\Pr\big[ \CC'_\delta(\ket\psi) \leq r\big] = \Pr \left[ \bigcup_{V\in \mathsf{G}_r} \left\lbrace |\vev{\psi|V|0}|^2 \geq 1-\delta^2 \right\rbrace \right] \leq \sum_{V\in \mathsf{G}_r} \Pr\Big[ |\vev{\psi|V|0}|^2 \geq 1-\delta^2\Big]\,. 
\label{eq:scompbound}
\end{equation}
We can bound the probability that a state drawn from a spherical $k$-design satisfies Eq.~\ref{eq:equivcond} as a straightforward consequence of Markov's inequality:
\begin{align}
\Pr\Big[ |\vev{\psi|V|0}|^2 \geq 1-\delta^2\Big] &= \Pr\Big[ |\vev{\psi|V|0}|^{2k} \geq \big(1-\delta^2\big)^k\Big]\nn
&\leq \frac{\EE_{\ket\psi} \big[ |\vev{\psi|V|0}|^{2k}\big]}{(1-\delta^2)^k} \leq \frac{ (1+\epsilon) \binom{d+k-1}{k}^{-1}}{(1-\delta^2)^k}\,.
\end{align}
In the last step here, we use Eq.~\eqref{eq:state_moment} and proceeding similarly as in the proof of \autoref{lem:state_weights}, noting that for a fixed state $\ket\phi$ and $\ket\psi$ averaged over an $\epsilon$-approximate spherical $k$-design, we have
\begin{equation}
\EE_{\ket\psi} \big[ |\vev{\psi|\phi}|^{2k}\big] \leq (1+\epsilon) \binom{d+k-1}{k}^{-1}\,.
\end{equation}
This claim readily follows from an argument similar to the proof of \autoref{lem:state_weights}.
Returning to Eq.~\ref{eq:scompbound}, we find that the probability that a state in a spherical design has complexity of at most $r$ is
\begin{equation}
\Pr\big[ \CC'_\delta(\ket\psi) \leq r\big] \leq (1+\epsilon) \binom{d+k-1}{k}^{-1} \frac{n^r |\mathsf{G}|^r}{(1-\delta^2)^k}\,,
\end{equation}
using the bound on the expectation and a bound on the cardinality of $\mathsf{G}_r$. 

We now turn to proving the primary claim. Negating the above assertion implies that
\begin{equation}
\Pr\big[ \CC'_\delta(\ket\psi) > r\big] \geq 1- (1+\epsilon) \binom{d+k-1}{k}^{-1} \frac{n^r |\mathsf{G}|^r}{(1-\delta^2)^k}\,.
\end{equation}
Furthermore, we may also write this probability as the expectation of the associated event, which yields
\begin{align}
\Pr\big[ \CC'_\delta(\ket\psi) > r\big] &= \EE_{\ket\psi} \left[ \1den \big\lbrace \CC'_\delta(\ket\psi) > r \big\rbrace \right] = \sum_{i=1}^N p_i \,\1den \big\lbrace \CC'_\delta(\ket\psi) > r \big\rbrace\nn
&\leq (1+\epsilon) \binom{d+k-1}{k}^{-1} N\,,
\end{align}
where $\1den$ is the indicator function, and in the last step we use the bound on the weights of an $\epsilon$-approximate spherical $k$-design in \autoref{lem:state_weights}. $M$ denotes the number of states in the spherical design $\ket{\psi_i}$ with weak $\delta$-complexity greater than $r$. Combining the previous two equations, we find that
\begin{equation}
N \geq \frac{d^k}{k!} \frac{1}{1+\epsilon} - \frac{n^r |\mathsf{G}|^r}{(1-\delta^2)^k}\,,
\end{equation}
which completes the proof. 
\end{proof}

\subsection{Weak unitary complexity for unitary designs}
\begin{proof}[Proof of \autoref{thm:weak_unitary}]
We start by noting an equivalent definition of weak $\delta$-unitary complexity as shown in the proof of \autoref{lem:unitary_complexity_stronger}. A necessary, but in general not sufficient, condition for weak unitary complexity in \autoref{def:weak_unitary} is
\begin{equation}
\big| \Tr(V^\dagger U)\big|^2 \geq d^2 (1-\delta^2)\,.
\end{equation}
Now we again ask, given some unitary $U$ chosen uniformly from an $\epsilon$-approximate unitary $k$-design, what is the probability that it has $\delta$-unitary complexity at most $r$: $\CC'_\delta(U)\leq r$? As this holds if there exists a $V\in \mathsf{G}_r$ such that the channels are close in diamond distance, a union bound then gives that
\begin{align}
\Pr\big[ \CC'_\delta(U) \leq r\big] &= \Pr \left[ \bigcup_{V\in G_r} \left\lbrace \frac{1}{2}\big\|\mathcal{U} - \mathcal{V}\big\|_\diamond \leq \delta \right\rbrace \right]\nn
&\leq \sum_{V\in G_r} \Pr\Big[ \big| \Tr(V^\dagger U)\big|^2 \geq d^2 (1-\delta^2)\Big]\,,
\label{eq:ucompbound}
\end{align}
using the reformulation above. We can bound the probability that a unitary drawn from a $k$-design satisfies this condition again by using Markov's inequality:
\begin{align}
\Pr\Big[ \big| \Tr(V^\dagger U)\big|^2 \geq d^2 (1-\delta^2)\Big] &= \Pr\Big[ \big| \Tr(V^\dagger U)\big|^{2k} \geq d^{2k} (1-\delta^2)^k\Big]\nn
&\leq \frac{\EE_{\CE} \big[ \big| \Tr(V^\dagger U)\big|^{2k}}{d^{2k}(1-\delta^2)^k} \leq \frac{(1+\epsilon) k!}{d^{2k}(1-\delta^2)^k}\,,
\end{align}
where in the last step, we use the moments of traces for unitary designs and as in \autoref{lem:weights-restatement} find that for a fixed unitary $V$ and a unitary $U$ averaged over an $\epsilon$-approximate unitary $k$-design, we have
\begin{equation}
\EE_\CE \big[ \Tr(V^\dagger U)\big|^{2k}\big] \leq (1+\epsilon) k!\,.
\end{equation}
Returning to the expression above in Eq.~\ref{eq:ucompbound}, we find that the probability $\CC'_\delta(U)\leq r$ is
\begin{equation}
\Pr\big[ \CC'_\delta(U) \leq r\big] \leq (1+\epsilon) \frac{k!}{d^{2k}} \frac{n^r |\mathsf{G}|^r}{(1-\delta^2)^k}\,,
\end{equation}
using the bound on the expectation and a bound on the cardinality of $\mathsf{G}_r$. Negating the expression gives a lower bound on the probability that a unitary in a $k$-design has complexity greater than $r$. Furthermore, we may also write this probability as the expectation
\begin{align}
\Pr\big[ \CC'_\delta(U) > r\big] &= \sum_{i=1}^N p_i \,\1den \big\lbrace \CC'_\delta(U_i) > r \big\rbrace \leq (1+\epsilon) \frac{k!}{d^{2k}} N\,,
\end{align}
where we use the bound on the unitary design weights in \autoref{lem:weights-restatement}. $M$ denotes the number of untiaries in a $k$-design with weak $\delta$-complexity greater than $r$. Combining the previous two equations, we find that
\begin{equation}
N \geq \frac{d^{2k}}{k!} \frac{1}{1+\epsilon} - \frac{n^r |\mathsf{G}|^r}{(1-\delta^2)^k}\,,
\end{equation}
which completes the proof. 
\end{proof}

\bibliographystyle{utphys}
\bibliography{complexity_growth}

\end{document}